\DeclareMathOperator{\Newton}{Newton}
\DeclareMathOperator{\res}{res}
\DeclareMathOperator{\Syl}{Syl}
\DeclareMathOperator{\Bez}{Bez}
\DeclareMathOperator{\adj}{adj}
\DeclareMathOperator{\disc}{disc}
\DeclareMathOperator{\rev}{rev}
\DeclareMathOperator{\lcm}{lcm}
\DeclareMathOperator{\Thr}{Thr}
\renewcommand{\K}{\mathbb{K}}
\newcommand{\sel}{\mathsf{select}}
\title{Constant-Depth Arithmetic Circuits for Linear Algebra Problems}
\date{September 22, 2025}
\author{Robert Andrews\thanks{Cheriton School of Computer Science, University of Waterloo. Part of this work was done as a member in the School of Mathematics at the Institute for Advanced Study, and was supported by NSF grant CCF-1900460 and by the Erik Ellentuck Endowed Fellowship Fund. Email: randrews@uwaterloo.ca.} \and Avi Wigderson\thanks{School of Mathematics, Institute for Advanced Study. Supported by NSF grant CCF-1900460. Email: avi@ias.edu.}}
\begin{document}

\maketitle

\begin{abstract}
	We design polynomial size, constant depth (namely, $\AC^0_\F$) arithmetic formulae for the greatest common divisor (GCD) of two polynomials, as well as the related problems of the discriminant, resultant, B\'{e}zout coefficients, squarefree decomposition, and the inversion of structured matrices like Sylvester and B\'{e}zout matrices. 
	Our GCD algorithm extends to any number of polynomials. 
	Previously, the best known arithmetic formulae for these problems required super-polynomial size, regardless of depth.

	These results are based on new algorithmic techniques to compute various symmetric functions in the roots of polynomials, as well as manipulate the multiplicities of these roots, without having access to them. 
	These techniques allow $\AC^0_\F$ computation of a large class of linear and polynomial algebra problems, which include the above as special cases.

	We extend these techniques to problems whose inputs are \emph{multivariate} polynomials, which are represented by constant-depth arithmetic circuits. 
	Here too we solve problems such as computing the GCD and squarefree decomposition in $\AC^0_\F$. 
\end{abstract}

\tableofcontents

\section{Introduction}

\subsection{Background}

Arithmetic complexity theory studies the computation of polynomials (and rational functions) using the basic arithmetic operations over a field. 
Like Boolean complexity theory, it is a vast, developed, and very active field, with its own computational models, complexity classes, algorithms, lower bounds, reductions, and complete problems.
Extensive texts on different parts of the field include \textcite{vzGG13, Burgisser2000, SY10}.
Our main model of computation is arithmetic circuits, focusing on their size (up to polynomial factors), and in more detail on their depth (up to constant factors).

Arithmetic algorithms were devised by mathematicians for centuries, long before computers ushered in applications in symbolic computation and computer algebra, which have invigorated the theoretical study of arithmetic complexity.
The Euclidean algorithm for computing the GCD (originally described for integers, but works equally well for polynomials) is perhaps the oldest nontrivial algorithm to appear in print.
Another is Gaussian elimination (whose origins also appear in ancient texts \cite{Grcar11b,Hart11}) for computing the determinant of a matrix, which is essential for linear algebra.
In a sense, the story below concerns the relative difficulty of these two basic problems, determinant and GCD, from the viewpoint of parallel computation.

The development of parallel computing architectures invited the design of fast parallel algorithms, in both the Boolean and arithmetic settings.
The GCD and determinant algorithms above, while polynomial time, are described in a way which looks ``inherently sequential,'' and parallelizing them presented a nontrivial challenge.
The breakthrough of Csanky in 1976 \cite{Csanky76}, providing an arithmetic $\NC^2$ (polynomial size, $O(\log^2 n)$ depth) algorithm for the determinant, was followed by a decade of intense activity in which many more parallel algorithms were developed.
One natural consequence was that practically all linear algebra problems, such as inverting matrices, solving systems of linear equations, and computing the rank of a matrix, were also in arithmetic $\NC^2$, which we will denote by $\NC^2_\F$.
Many other problems in polynomial algebra were known or were found to have linear algebraic formulations, and so could be reduced to the determinant.
Examples include polynomial division with remainder, decoding of (some) linear codes, and the GCD problem for (any number of) univariate polynomials~\cite{vonzurGathen84}.\footnote{A minor but important point which should be mentioned is that arithmetic circuits formally cannot compute discontinuous functions like GCD, and one has to add to them (as is standard in the field) the ability of branching on testing of a given field element is zero or not.}

A more general understanding of the parallel complexity of arithmetic computation followed another breakthrough, this time of Hyafil in 1979 \cite{Hyafil79}, later sharpened by \textcite{VSBR83}, who gave a generic \emph{depth reduction} of arithmetic circuits.
Using as input size both the number of input variables as well as the degree of the computed polynomials, they showed that {\em any} polynomial-size circuit can be parallelized, namely simulated in similar size and $O(\log^2 n)$ depth! 
In other words, in the arithmetic setting, \emph{every} sequential algorithm can be efficiently parallelized: $\P = \NC^2$.
In particular, fast parallel algorithms for determinant and GCD could be derived directly from their sequential analogues via this simulation.
(Note that such a collapse is believed \emph{not} to hold in the Boolean setting, and in particular \emph{Integer} GCD is one canonical example of a problem believed not to have a shallow Boolean circuit.)

While very satisfying, the above general results above get stuck at $O(\log^2 n)$ depth, and though the circuits are of polynomial size, the resulting formulae have quasi-polynomial size $n^{O(\log n)}$.
Which of these problems have polynomial-size formulae, namely are in the class $\NC^1_\F$ of logarithmic depth (with bounded fan-in gates)? 
How about constant parallel time? 
There are clearly natural linear and polynomial algebra problems that can be performed even in $\AC^0_\F$, i.e., have polynomial size formulae of \emph{constant depth}, allowing gates of unbounded fan-in.
Easy examples include polynomial addition, multiplication, univariate polynomial evaluation, and interpolation (obtaining the coefficients of a polynomial from point evaluations). 

The same golden decade (from the mid 1970's to mid 1980's, summarized beautifully in von zur Gathen's 1986 survey \cite{vonzurGathen86survey}) provided further nontrivial algorithms putting certain polynomial and linear algebra problems in $\AC^0_\F$.
(In some cases they were stated as $\NC^1_\F$ algorithms, but it is easy to see that they can also be implemented in $\AC^0_\F$.)
\textcite{Bini84} devised an ingenious $\AC^0_\F$ algorithm to invert \emph{triangular} Toeplitz matrices\footnote{Toeplitz matrices have constant diagonals.}.
This was used by \textcite{BP85} to give an $\AC^0_\F$ algorithm for polynomial division with remainder.
(We will give very different $\AC^0_\F$ algorithms for these problems.)
Another important example was Ben-Or's $\AC^0_\F$ circuit to compute the elementary symmetric polynomials.\footnote{The reader unfamiliar with this gem is encouraged to find any efficient algorithm for them.}
As these polynomials are natural arithmetic analogues of Boolean threshold functions, this reveals the surprising power of arithmetic circuits over Boolean circuits in the bounded-depth regime (in all other regimes, arithmetic circuits are considered ``weaker''), as the majority function is well-known not to have Boolean $\AC^0$ circuits.

Indeed, while $\AC^0$ lower bounds in the Boolean setting has been known since the 1980's, it took over 30 years to obtain any analogous lower bound in the arithmetic one.
This was finally done by Limaye, Srinivasan, and Tavenas in 2021 \cite{LST21a}, later generalized to all fields by \textcite{Forbes24}.
They proved that any constant-depth circuit for the product of $n$ $2 \times 2$ matrices (a problem in $\NC^1_\F$) must have super-polynomial size!\footnote{This result is not made explicit in \cite{LST21a}, but it is an easy corollary of their depth hierarchy theorem together with the completeness of $2 \times 2$ iterated matrix product for $\NC^1_\F$ \cite{BC92, BIZ18}.}
As the determinant can efficiently simulate any formula (by Valiant's completeness result for the class $\VF$ \cite{Valiant79}), the same lower bound follows for the determinant.\footnote{Of course, this can be seen by a simple direct reduction as well.}
Moreover, following \cite{CKLMSS23}, we know that, in contrast to families like the elementary symmetric and power sum polynomials, which are in $\AC^0_\F$, some natural families of symmetric polynomials such as the Schur polynomials are as hard as the determinant, and thus cannot be in $\AC^0_\F$.

What about the GCD? 
Is it in $\AC^0_\F$? 
As hard as the determinant? 
Or somewhere in the middle?

\subsection{Our Results}

We give $\AC^0_\F$ algorithms (more precisely, constant depth and polynomial size formulae) for a host of problems from linear and polynomial algebra.
We note that in all cases these problems were not known to have polynomial size formulae, regardless of depth. 

While arithmetic circuits are a non-uniform model of computation, and in particular allow access to arbitrary constants in the underlying field, all our algorithms are uniform, and may alternatively be viewed as arithmetic PRAM algorithms with a polynomial number of processors running in constant parallel time.
This view may be advantageous, as PRAMs allow more basic operations, like branching on a zero-test; such a test must be added to the model of arithmetic circuits for discontinuous functions like GCD.

All our results hold over every field of characteristic 0, and every field of large enough\footnote{Only polynomial in the input size.} positive characteristic.
Input and output polynomials, which will typically be univariate,\footnote{In some cases they will have a constant number of variables.} are always described by their coefficients, and so their degrees will be implicitly counted in the input (and output) size.

We state our results informally here, and defer formal results to the technical sections, after formally defining our computational model, which is essentially arithmetic circuits over a field.
As is standard in the field (and many of the results in the previous section), when computing a discontinuous function like the GCD, a circuit can also branch by testing if a field element equals zero, which we refer to as piecewise computation in the theorem statements below.

Our first main theorem resolves the complexity of the GCD.

\begin{theorem}[see \cref{thm:gcd} and \cref{cor:lcm}]
	Given two polynomials $f,g \in \F[x]$, their GCD and LCM can be computed piecewise in $\AC^0_\F$.
\end{theorem}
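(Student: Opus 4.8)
The plan is to work over the algebraic closure $\bar{\F}$ and regard the monic $\gcd(f,g)$ as the polynomial whose multiset of roots is the intersection, \emph{with minimum multiplicities}, of the root multisets of $f$ and $g$; the whole task is to realize this ``root-multiset intersection'' in constant depth from just the coefficient vectors of $f$ and $g$, never their roots. Throughout I assume $\F$ has characteristic $0$, or positive characteristic larger than $\deg f + \deg g$, so that multiplicities, formal derivatives, and the small-integer divisions hidden in Newton's identities behave, and I allow the circuit to branch on zero-tests of field elements, as is standard for a discontinuous function like the GCD --- in particular to pin down $\deg\gcd(f,g)$. The LCM is then free: $\lcm(f,g) = fg/\gcd(f,g)$ up to a scalar, and exact division of univariate polynomials is in $\AC^0$ (a special case of division with remainder, which is in $\AC^0$ by Bini--Pan and is also re-derived by the techniques developed here), so \cref{cor:lcm} follows from \cref{thm:gcd} by adding one layer.

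Step one is to decouple ``which roots are common'' from ``with what multiplicity'', since the latter is a $\min$ of two integers and cannot be read off a single resultant. Let $f = \prod_i u_i^{\,i}$ and $g = \prod_j v_j^{\,j}$ be the squarefree decompositions ($u_i$ the product of the distinct linear factors of $f$ of multiplicity exactly $i$), and for $t \ge 1$ set $f_{\ge t} := \prod_{i \ge t} u_i$ and $g_{\ge t} := \prod_{j \ge t} v_j$, which are squarefree. A root $\alpha$ with $\mult_f(\alpha) = a$, $\mult_g(\alpha) = b$ lies in both $f_{\ge t}$ and $g_{\ge t}$ exactly when $t \le \min(a,b)$, so
\[
  \gcd(f,g) = \prod_{t \ge 1} \gcd\bigl(f_{\ge t},\, g_{\ge t}\bigr),
\]
a product of at most $\min(\deg f,\deg g)$ nontrivial, squarefree factors. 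Squarefree decomposition is itself in $\AC^0$ (a companion result that we establish directly from the multiplicity-manipulation primitives, without invoking general GCD, so there is no circularity), hence so are all the $f_{\ge t}, g_{\ge t}$; and iterated multiplication of polynomially many univariate polynomials is in $\AC^0$ (evaluate at $\operatorname{poly}(n)$ points, take a single unbounded-fan-in product at each point, interpolate). It therefore remains to compute $\gcd(u,v)$ for squarefree $u, v$, which we may take monic.

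For squarefree monic $u, v$ the key is the identity
\[
  \gcd(u,v)(x) = \lim_{\lambda \to 0}
  \frac{\res_y\bigl(u(y),\, v(y) - \lambda(y - x)\bigr)}{\res_y\bigl(u(y),\, v(y) + \lambda\bigr)},
\]
which I would prove by writing $u(y) = \prod_i (y - \alpha_i)$ and expanding both resultants as products over the roots: with $d := \deg\gcd(u,v)$ the number of common roots and $C := \prod_{v(\alpha_i) \ne 0} v(\alpha_i) \ne 0$, the numerator equals $\lambda^{d} C \cdot \gcd(u,v)(x) + O(\lambda^{d+1})$ and the denominator equals $\lambda^{d} C + O(\lambda^{d+1})$, with the \emph{same} $C$. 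To convert this into an $\AC^0$ circuit, note that the second argument of each resultant has coefficients affine in $(x,\lambda)$, so --- since the resultant of two coefficient vectors is computable in $\AC^0$ (the resultant subroutine of this paper) --- interpolating in $(x,\lambda)$ recovers the coefficient arrays of both resultants as polynomials in those two variables; zero-testing the $\lambda$-coefficients of the denominator locates $d$; and the answer is the coefficient of $\lambda^{d}$ in the numerator (a polynomial in $x$) divided by the coefficient of $\lambda^{d}$ in the denominator (a nonzero scalar). The degenerate cases $v$ constant or $v = 0$ are handled directly.

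The step I expect to be the real obstacle is not this formula but the primitive it stands on: that the resultant --- and beneath it, power sums and other symmetric functions of the roots of a polynomial presented by its coefficients --- is computable in $\AC^0$ at all. Every classical route to the GCD is blocked in constant depth: the Euclidean algorithm is inherently sequential; writing $\gcd$ as the last nonvanishing subresultant first requires \emph{finding} its index, and the subresultants are themselves determinants of polynomially large matrices, hence as hard as the determinant; and reading $\deg\gcd$ off the rank of the Sylvester matrix has the same defect. The resultant-limit identity sidesteps all of this, but only because it launders the computation through resultants, whose $\AC^0$ complexity (via the symmetric-function machinery of the earlier sections) is the genuinely new ingredient. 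Everything else --- the squarefree reduction, iterated products, the bookkeeping of $\min(a,b)$ via thresholds, Newton's identities, exact division --- is routine once those primitives are in hand.
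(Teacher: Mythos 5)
Your proposal is correct and shares the paper's outer decomposition $\gcd(f,g)=\prod_{t\ge 1}\gcd(f_{\ge t},g_{\ge t})$ into squarefree pieces, but the inner engine is genuinely different. The paper never isolates ``GCD of squarefree polynomials'' as a subroutine: it computes the squarefree part $s$ of $fg$ and applies a thresholding primitive (\cref{lem:multithreshold}) to $s$ with $f$ and $g$ as filters, producing each factor $s_{\ge r}=\prod_{a_i\ge r\,\land\,b_i\ge r}(x-\alpha_i)$ directly. Thresholding is built on the filtering lemmas (\cref{lem:filter,lem:multifilter,lem:esym as multiplication}): evaluate $(y-x)g(x)$ at the roots of $s$, take all $e_k$ of the resulting values, locate the top nonvanishing index, normalize, divide. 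Your resultant-limit formula $\gcd(u,v)(x)=\lim_{\lambda\to 0}\res_y(u,\,v-\lambda(y-x))/\res_y(u,\,v+\lambda)$ is a correct and elegant alternative for the squarefree base case --- numerator $\prod_i(v(\alpha_i)+\lambda(x-\alpha_i))$, denominator $\prod_i(v(\alpha_i)+\lambda)$, zero-test the $\lambda$-coefficients of the denominator to find $d$, read off the $\lambda^d$ term --- and it packages the filtering step into a single resultant call rather than a search for the top nonzero $e_k$. Two caveats. First, you do not escape the filtering machinery: the squarefree decomposition you invoke as ``a companion result'' is established in the paper only by those same filtering primitives (\cref{lem:squarefree decomp} via \cref{lem:threshold}), and your limit formula does not substitute for it. Second, the paper's route is slightly more uniform, since thresholding does double duty for both the squarefree part and the factors $s_{\ge r}$, whereas your route runs squarefree decomposition twice and then a separate squarefree-GCD step. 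Both are valid $\AC^0$ algorithms resting on the same foundation --- elementary symmetric functions of $g(\alpha_i)$ over the roots $\alpha_i$ of $f$, computed via Newton's identities --- and your closed-form resultant expression for the squarefree GCD would be a pleasant corollary to record.
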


We discuss some concrete and abstract extensions of this algorithm.
The concrete ones involve some polynomials and matrices related to the GCD, which we now define. 

Assume that the polynomials $f$ and $g$ are monic and of degrees $n$ and $m$, respectively, and factor completely over the algebraic closure $\overline{\F}$ of $\F$ as $f=\prod_i (x-\alpha_i)$ and $g=\prod_i (x-\beta_i)$, with possible repetitions in case of multiplicities.
The \emph{resultant} $\res(f,g) = \prod_{i,j} (\alpha_i - \beta_j) \in \F$ is nonzero if and only if $\gcd(f,g)=1$.
A well known special case of the resultant is the \emph{discriminant} of a polynomial, defined as $\disc(f) \coloneqq \res(f,f')$ where $f'$ is the first derivative of $f$, which is zero precisely when $f$ has a double root.
For quadratic polynomials, the discriminant takes the familiar form $\disc(ax^2 + bx + c) = b^2 - 4ac$.

When $\gcd(f,g)=1$, it is well-known that there are (unique) polynomials $a$ of degree $< m$ and $b$ of degree $< n$ such that $af+bg = 1$.
(A similar formula holds when the GCD has positive degree.)
The polynomials $a$ and $b$ are known as the \emph{B\'{e}zout coefficients} of $f$ and $g$. 
The equation $af+bg = 1$ is actually a linear system whose variables are the coefficients of the unknown polynomials $a$ and $b$, given by an $(n+m) \times (n+m)$ matrix called the \emph{Sylvester matrix} $\Syl(f,g)$ (which can be seen in \cref{def:sylvester matrix}) whose entries are the given coefficients of $f$ and $g$.
In fact, the resultant $\res(f,g)$ is precisely the determinant of the Sylvester matrix $\Syl(f,g)$.
A related matrix, important for rational interpolation and Pad\'{e} approximation of polynomials is the B\'{e}zout matrix of $f$ and $g$, denoted $\Bez(f,g)$. 

The natural problems associated with these objects have fast parallel algorithms as well.

\begin{theorem}[see \cref{thm:resultant ac0,cor:bezout general,thm:sylvester adjugate,thm:bezout inverse}]
	Given two polynomials $f,g \in \F[x]$, their resultant and B\'{e}zout coefficients, as well as the inverses of their Sylvester and B\'{e}zout matrices, can be computed in $\AC^0_\F$.
\end{theorem}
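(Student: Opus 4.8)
The plan is to reduce all four quantities to two ingredients — the resultant $\res(f,g)$, and a single B\'ezout identity $a_0f+b_0g=1$ in the coprime case — together with a small fixed set of $\AC^0$ primitives: polynomial addition and multiplication, division with remainder and exact division, iterated multiplication of polynomially many polynomials, truncated power-series inversion/$\exp$/$\log$, and inversion of triangular Toeplitz matrices. Iterated polynomial multiplication is in $\AC^0$ (evaluate each factor at enough points — one $+$ gate each — multiply the point-values with one unbounded-fan-in $\times$ gate per point, then interpolate), and the power-series operations reduce to it, the divisions by $1,\dots,d$ being harmless under our standing hypothesis on the characteristic. Throughout, branching on zero-tests where needed, we assume $f,g$ have their nominal degrees $n,m$ and, after dividing out leading coefficients and using $\res(cf,g)=c^{m}\res(f,g)$ etc., that $f$ and $g$ are monic.

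\emph{Resultant.} Since $\res(f,g)=\prod_{i}g(\alpha_i)$ for $f=\prod_i(x-\alpha_i)$, this is a symmetric function of the roots of $f$, which I would compute through its power sums. Let $\pi_k=\sum_i g(\alpha_i)^k$; writing $g^k$ for the $k$-th power of the polynomial $g$ (iterated multiplication) and $s_j=\sum_i\alpha_i^{\,j}$ for the power sums of $f$, we have $\pi_k=\sum_j (g^k)_j\,s_j$, and the $s_j$ are read off in $\AC^0$ from the coefficients of $f$ since $\sum_{j\ge0}s_j t^{j+1}=t\,f'(1/t)/f(1/t)$ is a power series whose coefficients are $\AC^0$-extractable, as $t^{n}f(1/t)=\rev(f)$ has invertible constant term. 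Finally $\prod_i(1+t\,g(\alpha_i))=\exp\!\bigl(\sum_{k\ge1}\tfrac{(-1)^{k-1}}{k}\pi_k t^k\bigr)\bmod t^{\,n+1}$, whose $t^n$-coefficient is $e_n(g(\alpha_1),\dots,g(\alpha_n))=\res(f,g)$. Every step is $\AC^0$; this is essentially the ``symmetric functions of roots from coefficients'' technique that already underlies the GCD algorithm.

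\emph{B\'ezout coefficients and the Sylvester adjugate.} In the coprime case I want $a_0$ ($\deg<m$), $b_0$ ($\deg<n$) with $a_0f+b_0g=1$; equivalently, the column of $\adj(\Syl(f,g))$ for the right-hand side $1$, namely the honest polynomials $\res(f,g)\,a_0$ and $\res(f,g)\,b_0$ (classically these scaled coefficients are polynomials in the coefficients of $f,g$). On the dense locus where $g$ is squarefree, Lagrange interpolation at the roots $\beta_j$ of $g$ gives $a_0=\sum_j \tfrac{1/f(\beta_j)}{g'(\beta_j)}\cdot\tfrac{g}{x-\beta_j}$, and since $\res(f,g)=\pm\prod_j f(\beta_j)$,
\[ \res(f,g)\,a_0(x)=\pm\sum_j \frac{\prod_{j'\ne j}f(\beta_{j'})}{g'(\beta_j)}\cdot\frac{g(x)}{x-\beta_j}, \]
a symmetric function of the roots of $g$ with the coefficients of $f$ as parameters, hence $\AC^0$-computable by the same machinery; by the polynomial-identity principle the resulting $\AC^0$ formula computes $\res(f,g)\,a_0$ on \emph{all} inputs, and symmetrically for $\res(f,g)\,b_0$ using the roots of $f$. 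Dividing by $\res(f,g)\ne0$ yields $a_0,b_0$. The general identity $af+bg=\gcd(f,g)$ follows by computing $h=\gcd(f,g)$ (in $\AC^0$ by \cref{thm:gcd}), $f/h$ and $g/h$ by exact division, solving the now-coprime instance, and scaling back. The remaining columns of $\adj(\Syl(f,g))$ come from the ``$1$''-column by shifting: the solution of $a_kf+b_kg=x^k$ is $a_k=(x^ka_0)\bmod g$, $b_k=(x^k-a_kf)/g$, so scaling this identity by $\res(f,g)$ and using $\F$-linearity of reduction mod $g$ produces the $k$-th column from $\res(f,g)\,a_0$ in constant depth; then $\Syl(f,g)^{-1}=\adj(\Syl(f,g))/\res(f,g)$ whenever $\res(f,g)\ne0$.

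\emph{B\'ezout matrix, and the main obstacle.} Computing $\Bez(f,g)$ itself is immediate — each entry is a fixed bilinear expression in the coefficients of $f$ and $g$. For the inverse I would use the classical fact that, after the standard reduction to $\deg g<\deg f=n$ with $f$ monic, unit-triangular-Toeplitz row and column operations with entries drawn from the coefficients of $f$ and $g$ put $\Syl(f,g)$ into block form with $\Bez(f,g)$ as a diagonal block (whence also $\res(f,g)=\pm\det\Bez(f,g)$); inverting this factorization expresses $\Bez(f,g)^{-1}$ as a subblock of a product of $\Syl(f,g)^{-1}$ — available from the previous step — with a constant number of triangular Toeplitz inverses (Bini's algorithm) and polynomial-arithmetic steps, hence in $\AC^0$. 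I expect the real work to be in the B\'ezout-coefficient step: massaging the Lagrange-interpolation expressions for $\res(f,g)$ and $\res(f,g)\,a_0$ into exactly the normal form the $\AC^0$ symmetric-function machinery accepts — in particular, verifying that after clearing the apparent denominators $g'(\beta_j)$ what remains is genuinely a polynomial in the elementary symmetric functions of the roots, and that the passage from the squarefree locus to all inputs is legitimate. The matrix reductions are classical linear algebra; there the only point to check is that every intermediate transformation uses solely triangular Toeplitz matrices or the already-computed adjugate, so that constant depth is preserved.
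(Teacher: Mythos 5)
Your resultant and Sylvester-adjugate sketches follow the paper's approach closely. For the adjugate you correctly flag the crux: after pulling $\res(f,g)=(-1)^{nm}\prod_jf(\beta_j)$ through the Lagrange formula, the expression for $\res(f,g)\,a_0$ still carries $g'(\beta_j)$ in denominators, so it is a \emph{rational} symmetric function of the roots of $g$ and lives only on the squarefree locus. The paper clears the remaining denominators by multiplying additionally by $\disc(g)=(-1)^{\binom{m}{2}}\prod_{i\neq j}(\beta_i-\beta_j)$; since the discriminant cancels every Lagrange denominator $\prod_{j\neq i}(\beta_i-\beta_j)$, one is left with the honest polynomial-in-roots expression
\[
(-1)^{nm+\binom{m}{2}}\sum_{i=1}^m\beta_i^{n+m-\ell}\prod_{j\neq i}f(\beta_j)\,g'(\beta_j)\,(x-\beta_j),
\]
which is the coefficient of $y^{m-1}$ in $\prod_i\bigl(\beta_i^{n+m-\ell}+y\,f(\beta_i)g'(\beta_i)(x-\beta_i)\bigr)$, hence $\AC^0$-computable by \cref{lem:esym over roots} plus \cref{lem:polynomial interpolation}. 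Dividing back by $\disc(g)$ and invoking Strassen's division elimination (\cref{thm:ac0 division elimination}) — applicable because each adjugate entry is a polynomial in the coefficients of $f,g$ — is precisely the verification you gesture at in your last paragraph; the paper also checks directly that the numerator vanishes when $\disc(g)=0$, so the identity holds identically and not just on the dense locus. Your column-shifting trick for the other columns of the adjugate (reduce $x^k a_0$ mod $g$, linear in $\res(f,g)\,a_0$) is a valid shortcut; the paper simply runs the same interpolation argument with $x^{n+m-\ell}$ in place of $1$ for each $\ell$.

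Your B\'ezout-inverse step, by contrast, takes a genuinely different route. The paper uses the explicit Hankel-matrix formula of \cref{prop:bezout inverse}: once the B\'ezout coefficient $p$ of $g$ modulo $f$ is in hand (from the Sylvester inverse), $\Bez_\delta(f,g)^{-1}$ is read off directly from the power-series coefficients of $x^np(1/x)/(x^nf(1/x))$, a small amount of truncated power-series arithmetic on top of tools already built. You instead propose the classical Barnett-type factorization of $\Syl(f,g)$ into block form exposing $\Bez(f,g)$ via unit-triangular-Toeplitz conjugations, and then read off $\Bez(f,g)^{-1}$ from $\Syl(f,g)^{-1}$ and a constant number of Bini-inverted triangular Toeplitz matrices. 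Both routes are legitimate $\AC^0$ reductions; the paper's is somewhat more self-contained, while yours makes the structural Sylvester--B\'ezout relationship explicit and sidesteps importing \cref{prop:bezout inverse}.
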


We note that the previously-mentioned problems of inverting triangular Toeplitz matrices and polynomial division with remainder are in fact corollaries of the theorem above through simple reductions.
Although these problems were already known to have $\AC^0_\F$ algorithms by \textcite{Bini84} and \textcite{BP85}, our algorithms are of a very different flavor!

For the more abstract extensions of our GCD algorithm, let us rewrite the factorization of the polynomials $f$ and $g$ slightly differently.
Let $\gamma_1, \ldots, \gamma_k$ be the union of their roots over the algebraic closure $\overline{\F}$ (namely, the union of the $\alpha_i$ and $\beta_i$ above).
We can write $f = \prod_{i=1}^k (x-\gamma_i)^{a_i}$ and $g = \prod_{i=1}^k (x-\gamma_i)^{b_i}$, where the $a_i$ and $b_i$ are the multiplicities of the root $\gamma_i$ in $f$ and $g$, respectively.
Clearly, their GCD is given by
\[
	\gcd(f,g) = \prod_{i=1}^k (x-\gamma_i)^{\min(a_i, b_i)}.
\]
Similarly, their LCM is given by 
\[
	\lcm(f,g) = \prod_{i=1}^k (x-\gamma_i)^{\max(a_i, b_i)}.
\]
Consider now the product of $f$ and $g$, which also has a trivial $\AC^0_\F$ algorithm.
In this notation we have
\[
	f \cdot g =  \prod_{i=1}^k (x-\gamma_i)^{a_i+b_i}.
\]

One is naturally led to consider what other functions of the exponents are computable in $\AC^0_\F$.
How about product of the exponents, namely
\[
	f \diamond g \coloneqq \prod_{i=1}^k (x-\gamma_i)^{a_i b_i}?
\]
It is not even clear at first sight that this is an algebraic operation computable by arithmetic circuits at all.
However, with the same techniques we use to compute the GCD, we can actually prove that it is.
Indeed, \emph{any} function is.

\begin{theorem}[see \cref{thm:arbitrary function}]
	Let $P: \naturals \times \naturals \to \naturals$ be any integer function.
	Given two polynomials $f,g \in \F[x]$ as above, the polynomial  
	\[
		f \diamond_P g \coloneqq \prod_{i=1}^k (x-\gamma_i)^{P(a_i,b_i)}
	\]
	can be computed piecewise in $\AC^0_\F$.
\end{theorem}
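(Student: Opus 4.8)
Write $n=\deg f$ and $m=\deg g$. The plan is to compute, for every pair $(s,t)$ with $0\le s\le n$, $0\le t\le m$, and $s+t\ge 1$, the squarefree polynomial
\[
	h_{s,t}\coloneqq\prod_{i\,:\,a_i=s,\ b_i=t}(x-\gamma_i),
\]
after which, by grouping the roots $\gamma_i$ according to their multiplicity pair $(a_i,b_i)$,
\[
	f\diamond_P g=\prod_{\substack{0\le s\le n,\ 0\le t\le m\\ s+t\ge 1}} h_{s,t}^{\,P(s,t)}.
\]
Here each exponent $P(s,t)$ is a fixed natural number hard-wired into the circuit (legitimate, since the circuit may depend on $P$, $n$, $m$), and the pair $(0,0)$ never occurs because every $\gamma_i$ is a root of $f$ or of $g$. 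The product is assembled with a single layer of product gates (the factor $h_{s,t}^{\,P(s,t)}$ being one product gate of fan-in $P(s,t)$), so this last stage has constant depth, and polynomial size whenever the values $P(s,t)$ for $0\le s\le n$, $0\le t\le m$ are polynomially bounded --- the only range of $P$ that is ever used. Each $h_{s,t}$ is a product over a Galois-stable set of roots, equivalently a GCD of polynomials over $\F$, hence lies in $\F[x]$, so the whole computation stays over the base field.

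It remains to produce the $h_{s,t}$ in $\AC^0$. First I would extract the squarefree layers of $f$ alone. Over a field of characteristic $0$ or of characteristic greater than $n$, the $j$-th derivative $f^{(j)}$ vanishes at $\gamma_i$ to order exactly $\max(a_i-j,0)$, whence
\[
	\gcd\bigl(f,f',\dots,f^{(s-1)}\bigr)=\prod_{i\,:\,a_i\ge s}(x-\gamma_i)^{a_i-s+1}.
\]
By \cref{thm:gcd} and its extension to an arbitrary number of polynomials, the (monic) GCD of many polynomials is computable in $\AC^0$, and forming the derivatives $f',\dots,f^{(s-1)}$ is a linear map, so the right-hand side above is one $\AC^0$ primitive for each $s$. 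Dividing the instance for $s$ by the instance for $s+1$ --- an exact polynomial division, which is in $\AC^0$ by the structured-matrix results above --- yields $w_s\coloneqq\prod_{i:a_i\ge s}(x-\gamma_i)$, and one further exact division gives the $f$-layer $f_s\coloneqq w_s/w_{s+1}=\prod_{i:a_i=s}(x-\gamma_i)$, for all $s\in\{1,\dots,n\}$ in parallel and in constant depth. The point is that we do \emph{not} iterate the map $p\mapsto\gcd(p,p')$ one multiplicity level at a time, which would cost depth $\Theta(n)$; each layer is recovered in one shot. Symmetrically we obtain the $g$-layers $g_t=\prod_{i:b_i=t}(x-\gamma_i)$ for $t\in\{1,\dots,m\}$; these $f_s$ and $g_t$ are exactly the ordinary squarefree decompositions $f=\prod_s f_s^{\,s}$ and $g=\prod_t g_t^{\,t}$, produced here directly in constant depth.

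Finally, since $f_s$ and $g_t$ are squarefree, $\gcd(f_s,g_t)$ retains precisely their common roots --- the $\gamma_i$ with $a_i=s$ and $b_i=t$ --- so $h_{s,t}=\gcd(f_s,g_t)$ for $s,t\ge 1$. For the two boundary strips, $h_{s,0}=f_s/\gcd(f_s,g)$ (delete from $f_s$ the roots it shares with $g$, namely those with $b_i\ge 1$) and symmetrically $h_{0,t}=g_t/\gcd(g_t,f)$, for $s,t\ge 1$. All of these are GCDs and exact divisions, hence in $\AC^0$; substituting into the product of the first paragraph completes the construction.

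The crux --- and really the only delicate point --- is the layer extraction: recovering the entire joint multiplicity structure $\{h_{s,t}\}$ without ever naming a root or a multiplicity, and in constant rather than linear depth. The identity displayed above is what makes this possible, expressing the $s$-th layer as a ratio of two \emph{many-input} GCDs of derivatives, so that its $\AC^0$-ness is inherited wholesale from the many-polynomial GCD algorithm together with $\AC^0$ polynomial division. Everything afterwards --- intersecting single-polynomial layers with one more GCD, handling the two boundary strips, and multiplying out with the hard-wired exponents $P(s,t)$ --- is routine constant-depth bookkeeping. One should also verify the characteristic hypothesis ($\operatorname{char}\F=0$ or $\operatorname{char}\F>n$, which guarantees that differentiation lowers each root multiplicity by exactly one), but this falls within the paper's standing assumptions.
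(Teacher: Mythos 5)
Your proof is correct and follows essentially the same blueprint as the paper's proof of \cref{thm:arbitrary function}: partition the roots into cells $h_{s,t} = \prod_{i : a_i=s,\, b_i=t}(x-\gamma_i)$ indexed by their multiplicity pair, obtain each cell as a GCD of single-polynomial squarefree layers, and assemble $f \diamond_P g = \prod_{s,t} h_{s,t}^{P(s,t)}$ with the exponents hard-wired into the circuit. Two differences are worth noting. First, you derive the layers $f_s, g_t$ via the classical identity $\gcd(f, f', \ldots, f^{(s-1)}) = \prod_{i : a_i \ge s}(x-\gamma_i)^{a_i-s+1}$ followed by two exact divisions, whereas the paper's \cref{lem:squarefree decomp} produces them directly from the thresholding primitive of \cref{lem:threshold}; both are legitimate, though your route detours through the multi-polynomial GCD (\cref{thm:multigcd}, which the paper itself builds from thresholding) and consequently needs a somewhat larger, though still polynomial, characteristic bound. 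Second, and more interestingly, you handle the boundary strips $h_{s,0}$ and $h_{0,t}$ (roots appearing in only one of $f,g$) explicitly, e.g.\ via $h_{s,0} = f_s/\gcd(f_s,g)$. The paper's \cref{lem:delta function} instead sets $f_0 = g_0 = 1$ by convention and asserts $\gcd(f_i,g_j) = (f\diamond_{\delta_{i,j}}g)$ for all $i,j\in\{0,\ldots,d\}$; but for $i\ge 1$, $j=0$ this gives $\gcd(f_i,1)=1$, whereas $(f\diamond_{\delta_{i,0}}g) = \prod_{k:a_k=i,\,b_k=0}(x-\alpha_k)$ is the (generally nontrivial) product over roots of $f$ of multiplicity $i$ that do not occur in $g$, so the claimed identity fails at the boundary. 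Your explicit boundary formulas repair exactly this, making your treatment of the boundary more careful than the paper's own.
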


Note that the size of the circuit computing $f \diamond_P g$ is polynomial in the degrees of the input \emph{and} output of the problem, the latter being $\max \set{P(i,j) \, : \,  i,j \in [k]}$, which is of course necessary. 

The same theorem extends verbatim for any \emph{constant} number $c$ of polynomials, and for any $c$-ary integer function $P$.
As an example, for $c = 5$, we can take $P$ to return the median of its inputs and compute the corresponding polynomial piecewise in $\AC^0_\F$.
The intuitive meaning of some operations we can perform so efficiently is far from obvious, and it would be interesting to find a real application for some such function $P$.

Another natural extension is to consider an arbitrary number of input polynomials.
This is not obvious, even for the GCD function.
Indeed, for three polynomials $f$, $g$, and $h$, there is no analogue of the resultant, i.e., a single polynomial function of the coefficients of $f$, $g$, and $h$ which is nonzero if and only if $\gcd(f,g,h) = 1$.
Nevertheless, we can compute the GCD of an arbitrary number of polynomials, again piecewise in $\AC^0_\F$.

\begin{theorem}[see \cref{thm:multigcd,cor:multilcm}]
	Given any number of polynomials $f_1, f_2, \ldots , f_m \in \F[x]$, $\gcd(f_1, f_2, \ldots , f_m)$ and $\lcm(f_1, f_2, \ldots, f_m)$ can be computed piecewise in $\AC^0_\F$.
\end{theorem}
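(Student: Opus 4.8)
The plan is to find a genuine parallelization, since the naive folding $\gcd(f_1,\ldots,f_m) = \gcd(f_1,\gcd(f_2,\ldots,f_m))$ composes $m-1$ copies of the two-polynomial algorithm and has depth $\Theta(m)$. I would work with the root factorizations $f_j = \prod_i (x-\gamma_i)^{a_{j,i}}$ over $\overline{\F}$, where $\gamma_1,\ldots,\gamma_k$ lists the union of all roots of all the $f_j$, so that $\gcd(f_1,\ldots,f_m) = \prod_{i=1}^k (x-\gamma_i)^{\min_j a_{j,i}}$. The starting point is the elementary identity $\min_j a_{j,i} = \sum_{t\geq 1}\mathbf{1}[\min_j a_{j,i}\geq t]$ together with the bound $\min_j a_{j,i}\leq d := \max_j \deg f_j$, which yields the threshold decomposition
\[
  \gcd(f_1,\ldots,f_m) \;=\; \prod_{t=1}^{d} g_t, \qquad g_t := \prod_{\,i \,:\, a_{j,i}\geq t \text{ for all } j} (x-\gamma_i).
\]
Since $d$ is at most the input size and an iterated product of polynomials is in $\AC^0$, it suffices to compute each $g_t$ in $\AC^0$, uniformly in $t \in \{1,\ldots,d\}$.

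Next I would reduce to squarefree inputs. For each $j$ and $t$, let $h_{j,t} := \prod_{i \,:\, a_{j,i}\geq t}(x-\gamma_i)$, the radical of the ``multiplicity $\geq t$ part'' of $f_j$. This polynomial is squarefree and is computable from $f_j$ in $\AC^0$: one may extract it from the squarefree decomposition of $f_j$, or, more directly, take $f_j \diamond_{P} f_j$ with $P(a,b) = \mathbf{1}[a\geq t]$; since $P$ is $\{0,1\}$-valued, \cref{thm:arbitrary function} yields a polynomial-size, constant-depth formula. By construction $g_t = \gcd(h_{1,t},h_{2,t},\ldots,h_{m,t})$, so the whole problem reduces to computing the GCD of $m$ \emph{squarefree} polynomials in $\AC^0$.

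The key point is that the GCD of squarefree polynomials needs no iterated GCD at all. After normalizing $h_1,\ldots,h_m$ to be monic, set $H := \prod_{j=1}^m h_j$; this iterated polynomial product has degree at most $md$ and is computed in $\AC^0$. Since each $h_j$ is squarefree, a root $\gamma$ of $H$ has multiplicity in $H$ exactly $|\{j : h_j(\gamma)=0\}|$, a value in $\{1,\ldots,m\}$ that equals $m$ precisely when $\gamma$ is a common root of all the $h_j$. Hence the multiplicity-exactly-$m$ component of the squarefree decomposition of $H$ equals $\gcd(h_1,\ldots,h_m)$, and this component is obtained from $H$ in $\AC^0$ --- again either straight from the squarefree decomposition of $H$, or as $H \diamond_{Q} H$ with $Q(a,b)=\mathbf{1}[a=m]$, whose formula has size polynomial in $\deg H \leq md$, hence in the input size, by \cref{thm:arbitrary function}. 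Chaining the three $\AC^0$ stages --- computing all $h_{j,t}$ in parallel over $j\in\{1,\ldots,m\}$ and $t\in\{1,\ldots,d\}$, then each $g_t = \gcd(h_{1,t},\ldots,h_{m,t})$ in parallel over $t$, then the single product $\prod_{t=1}^d g_t$ --- gives a constant-depth, polynomial-size formula for $\gcd(f_1,\ldots,f_m)$. (As in the two-polynomial case, the zero-test branching needed for this discontinuous function is inherited from the building blocks.)

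The LCM is handled by the same template: $\lcm(f_1,\ldots,f_m) = \prod_{t=1}^{d}\ell_t$, where $\ell_t := \prod_{i \,:\, a_{j,i}\geq t \text{ for some }j}(x-\gamma_i)$ is simply the radical of $\prod_{j=1}^m h_{j,t}$, i.e.\ an iterated product followed by a radical, both in $\AC^0$ (the radical being $\diamond_P$ with $P\equiv 1$). The only step that genuinely requires an idea, rather than routine assembly, is the squarefree case --- recognizing that the GCD of $m$ squarefree polynomials is recovered from a single iterated product by extracting one fixed-multiplicity component --- together with the preliminary thresholding that reduces the general case to squarefree inputs. The rest is bookkeeping: all intermediate polynomials have degree polynomial in the input size, and $\AC^0$, as a class of formulae, is closed under composing a constant number of $\AC^0$ stages.
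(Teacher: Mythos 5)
Your proof is correct, and it takes a genuinely different route from the paper's. The paper's \cref{thm:multigcd} first computes the squarefree part $s$ of $\prod_j f_j$, then for each threshold $r$ applies \cref{lem:multithreshold} directly to $s, f_1, \ldots, f_m$ to extract $s_{\ge r}(x) = \prod_{i : a_{1,i}\ge r \land \cdots \land a_{m,i}\ge r}(x-\gamma_i)$ in one shot; the final GCD is $\prod_r s_{\ge r}$. You instead threshold each $f_j$ individually to obtain the squarefree polynomials $h_{j,t}$, and then dispose of the resulting squarefree GCD problem by a separate and rather elegant device: since the $h_{j,t}$ are squarefree, a root of $H_t := \prod_j h_{j,t}$ has multiplicity exactly the number of $h_{j,t}$ that vanish there, so the GCD is the multiplicity-exactly-$m$ piece of the squarefree decomposition of $H_t$. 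Your route never invokes \cref{lem:multithreshold} at all --- it only needs the squarefree decomposition (\cref{lem:squarefree decomp}) plus iterated products --- whereas the paper's relies on the filtering machinery of \cref{sec:operations on roots} acting simultaneously on many polynomials. Your LCM treatment is also cleaner and more symmetric: both GCD and LCM are assembled from the same $h_{j,t}$, with the only difference being whether you keep the multiplicity-$m$ component of $H_t$ or its radical. (The paper's \cref{cor:multilcm} instead passes through the algebraic identity $\lcm = s^d/\gcd(s^d/f_1,\ldots,s^d/f_m)$, which degrades the characteristic bound to $m^2d^2$.) A small caveat: invoking \cref{thm:arbitrary function} to extract the $h_{j,t}$ or the multiplicity-$m$ component is convenient but mildly circular in spirit, since that theorem's proof internally uses GCD; your parenthetical fallback --- reading the relevant piece directly off the squarefree decomposition --- is the right primitive, and with it the argument is self-contained modulo \cref{lem:squarefree decomp}.
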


Again, one can note that the GCD and LCM functions perform $\min$ and $\max$ operations, respectively, on the exponents of each linear factor of the $f_j$.
Moreover, the product of $m$ polynomials performs addition on the exponents, and is also in $\AC^0_\F$.
So, we are lead to ask: what other functions of the exponents give rise to such fast parallel algorithms?
Here we also have a fairly general result.
Keeping with the notation above, let $\gamma_1, \ldots, \gamma_k \in \overline{\F}$ be the union of the roots of $m$ polynomials $f_1,\ldots,f_m \in \F[x]$ and write
\[
	f_i(x) = \prod_{j=1}^k (x - \gamma_j)^{a_{i,j}}.
\]
As in the case of two polynomials, we can apply \emph{any} function to the exponents of each factor.
This holds when the function $P : \naturals^m \to \naturals$ is given in the dense representation (so $P$ is specified by a list of roughly $d^m$ numbers, where $d$ bounds the degree of the $f_i$), and also when $P$ is described more succinctly as a sort of circuit over the integers (see \cref{def:tropical threshold circuit} for the precise definition).

\begin{theorem}[see \cref{thm:multi arbitrary function,thm:ckt multi arbitrary function}]
	Let $P : \naturals^m \to \naturals$ be any integer function.
	Given $m$ polynomials $f_1, \ldots, f_m \in \F[x]$ as above, the polynomial 
	\[
		\diamond_P(f_1,\ldots, f_m) \coloneqq \prod_{i=1}^k (x - \gamma_i)^{P(a_{1,i}, \ldots, a_{m,i})}
	\]
	can be computed piecewise in $\AC^0_\F$.
\end{theorem}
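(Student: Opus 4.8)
The plan is to slice the common root multiset $\{\gamma_1,\dots,\gamma_k\}$ according to the full vector of multiplicities and then apply $P$ one slice at a time. The basic primitive I would use is available in $\AC^0$ from the squarefree decomposition algorithm together with polynomial division with remainder (both already placed in $\AC^0$): for a monic $h=\prod_i(x-\gamma_i)^{c_i}$ and an integer $v\ge 0$, form the squarefree polynomial $E_v(h)\coloneqq\prod_{i\,:\,c_i=v}(x-\gamma_i)$. For $v\ge 1$ this is precisely the degree-$v$ part of the squarefree decomposition of $h$, and for $v=0$ it is $R/\mathrm{rad}(h)$, where $R\coloneqq\mathrm{rad}(f_1\cdots f_m)=\prod_i(x-\gamma_i)=\prod_{v\ge 1}E_v(f_1\cdots f_m)$ is the radical of the product. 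We also record that $\prod_{v\ge j}E_v(h)=\prod_{i\,:\,c_i\ge j}(x-\gamma_i)$ for $j\ge 1$, the squarefree ``at least $j$'' part of $h$.

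For the dense case (\cref{thm:multi arbitrary function}), set $d\coloneqq\max_\ell\deg f_\ell$, so every root has multiplicity vector in $\{0,1,\dots,d\}^m$. Compute $E_{\ell,v}\coloneqq E_v(f_\ell)$ for all $\ell\in[m]$ and $v\in\{0,\dots,d\}$, and for each $\vec a=(a_1,\dots,a_m)\in\{0,\dots,d\}^m$ put
\[
 s_{\vec a}\coloneqq\gcd(E_{1,a_1},E_{2,a_2},\dots,E_{m,a_m}),
\]
computable in $\AC^0$ by \cref{thm:multigcd}. A root $\gamma_i$ is a simple root of $s_{\vec a}$ exactly when $\mult_{f_\ell}(\gamma_i)=a_\ell$ for every $\ell$ — using for the $v=0$ coordinates that $E_{\ell,0}$ vanishes at $\gamma_i$ iff $f_\ell$ does not — so the $s_{\vec a}$ partition the roots and $s_{\vec 0}=1$ (every $\gamma_i$ is a root of some $f_\ell$). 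Hence
\[
 \prod_{\vec a\in\{0,\dots,d\}^m}s_{\vec a}^{\,P(\vec a)}
 =\prod_i(x-\gamma_i)^{P(a_{1,i},\dots,a_{m,i})}
 =\diamond_P(f_1,\dots,f_m),
\]
and the left side is a product of powers, hence in $\AC^0$. There are $(d+1)^m$ factors, polynomial in the size of the dense representation of $P$, and the output degree is at most $k\cdot\max_{\vec a}P(\vec a)$; the circuit is a constant number of $\AC^0$ layers (squarefree decompositions, then a parallel layer of multi-GCDs, then one product of powers), so it is $\AC^0$.

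For the circuit case (\cref{thm:ckt multi arbitrary function}) I would instead simulate the given tropical threshold circuit $\mathcal C$ computing $P$ gate by gate. In topological order, to each wire $w$ — computing an $m$-ary integer function $Q_w$ — attach a polynomial $h_w=\prod_i(x-\gamma_i)^{Q_w(a_{1,i},\dots,a_{m,i})}$; the $\ell$-th input wire gets $h_w=f_\ell$, and we keep $R$ at hand throughout. Each gate type of \cref{def:tropical threshold circuit} has an $\AC^0$ polynomial counterpart acting on the $h_w$: integer addition $\mapsto$ polynomial multiplication; $\min\mapsto\gcd$ and $\max\mapsto\lcm$ (\cref{thm:multigcd,cor:multilcm}); multiplication by a constant $c\mapsto$ raising to the $c$th power; addition of a constant $c\mapsto$ multiplication by $R^c$; a threshold $t\mapsto[t\ge j]$ with $j\ge 1\mapsto$ extracting $\prod_{v\ge j}E_v(h_w)$, and the degenerate threshold $j=0\mapsto$ replacing $h_w$ by $R$. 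Composing over all of $\mathcal C$, the output wire carries $\diamond_P(f_1,\dots,f_m)$; as $\mathcal C$ has polynomial size and constant depth, so does the circuit we build.

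The real difficulty here is not the depth — squarefree decomposition and multi-GCD/LCM are already in $\AC^0$ — but the bookkeeping around roots of multiplicity $0$. The operations $\gcd$ and ``at least $j$'' silently delete the factor $(x-\gamma_i)$ the moment the relevant exponent hits $0$, so no intermediate polynomial remembers the ambient root multiset; carrying the fixed radical $R$ (to produce the $v=0$ slice in the dense case, and to realize constant addition and the $j=0$ threshold in the circuit case) is exactly what repairs this. A secondary point to verify is that all intermediate integer values — hence all intermediate polynomial degrees — stay polynomially bounded, which follows from the size and depth bounds on the dense table or on $\mathcal C$ together with the convention that the maximum of $P$ over the attainable multiplicity vectors is counted in the input and output size.
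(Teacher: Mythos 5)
Your decomposition-by-multiplicity-vector approach for the dense case, and gate-by-gate simulation for the circuit case, match the paper's overall strategy, but your handling of the zero-multiplicity coordinate is a genuine and welcome refinement. The paper's \cref{lem:delta function,lem:multi delta function} compute the $\vec a$-slice as $\gcd(f_{1,a_1},\ldots,f_{m,a_m})$ with the convention $f_{\ell,0}\coloneqq 1$; with that convention the gcd collapses to $1$ whenever some $a_\ell=0$, even though the slice $\{\gamma_i : \mult_{f_\ell}(\gamma_i)=a_\ell\text{ for all }\ell\}$ may then be nonempty (a root of $f_2$ that is not a root of $f_1$ lies in a slice whose first coordinate is $0$). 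Your $E_{\ell,0}$, namely $R$ divided by the squarefree part of $f_\ell$, is the right object for that coordinate, and this is exactly the ``carry the ambient radical $R$'' point you flag. The same point recurs in the circuit case: in the inductive step of \cref{thm:ckt multi arbitrary function}, the intermediate polynomials $g_j=\diamond_{P_j}(f_1,\ldots,f_m)$ may jointly fail to vanish at some $\gamma_i$, so threshold and negated-threshold operations must be performed relative to the original $R$ rather than to the squarefree part of $\prod_j g_j$; your explicit persistence of $R$ handles this cleanly.

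One small omission in your circuit simulation: \cref{def:tropical threshold circuit} allows multi-ary threshold gates $\Thr_{\vec r}$ and their negations $\neg\Thr_{\vec r}$, while you only spell out the unary case $t\mapsto[t\ge j]$. Both are recoverable from what you already have: $\diamond_{\Thr_{\vec r}}$ is the gcd of the unary-threshold outputs (or, directly, an application of \cref{lem:multithreshold} with $R$ as the polynomial being filtered), and $\diamond_{\neg\Thr_{\vec r}} = R/\diamond_{\Thr_{\vec r}}$ by exact division (\cref{lem:exact division})---once more using the ambient $R$. With that addendum your argument is complete, and on the zero-multiplicity issue it is more careful than the paper's own treatment.
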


Our results also extend to the multivariate setting.
Here, the input polynomials can be themselves be given as arithmetic circuits.
Using standard tools to reduce questions about multivariate factorization (like the GCD) to their univariate counterparts, our earlier algorithm for the univariate GCD can be used to compute the GCD of multivariate polynomials.

\begin{theorem}[see \cref{thm:multivariate gcd and lcm}]
	Given multivariate polynomials $f_1,\ldots, f_m \in \F[\vec{x}]$, if $f_1, \ldots, f_m$ can be computed in $\AC^0_\F$, then $\gcd(f_1,\ldots,f_m)$ and $\lcm(f_1,\ldots,f_m)$ can be computed piecewise in $\AC^0_\F$.
\end{theorem}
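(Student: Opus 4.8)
The plan is to reduce the multivariate GCD and LCM problems to the univariate case already solved (in particular, to \cref{thm:multigcd} and \cref{cor:multilcm}), via a standard substitution-and-interpolation strategy. First I would recall the classical observation (going back to Kaltofen, and used throughout the multivariate factorization literature) that if one substitutes generic linear forms for all but one variable, a multivariate polynomial $f(\vec{x})$ becomes a univariate polynomial that ``remembers'' enough of the factorization structure of $f$: more precisely, for a random point $\vec{a}$ and a fresh variable $t$, the map $x_j \mapsto a_j t + b_j$ (or a Hilbert-irreducibility-style substitution keeping $x_1$ live) sends the GCD of the $f_i$ to the GCD of their images with high probability, and this can be derandomized enough to make it non-uniform-circuit-friendly. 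Since arithmetic circuits are allowed arbitrary field constants, I would simply fix a good choice of substitution points and not worry about the probabilistic analysis beyond establishing existence.

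The key steps, in order, would be: (1) Use an $\AC^0$ reduction to assume the $f_i$ are given by their coefficient vectors in one distinguished variable $x_1$, with coefficients that are themselves $\AC^0$-computable polynomials in the remaining variables; this is where we invoke the hypothesis that each $f_i$ is computable in $\AC^0$, together with the fact that extracting a bounded-degree polynomial's coefficients from its circuit (via interpolation in each variable) is an $\AC^0$ operation. (2) Apply a generic shift/specialization to reduce to a univariate GCD instance over the field extension $\F(\text{other vars})$, or better, specialize the other variables to generic constants so that we land in $\F[x_1]$ directly; by the effective Hilbert irreducibility / Bertini-type statement, the GCD degree and the GCD itself are preserved for generic choices. (3) Invoke \cref{thm:multigcd} to compute the univariate GCD of the specialized polynomials in $\AC^0$. (4) Lift the univariate answer back to a multivariate polynomial: here one computes the GCD for enough specializations (polynomially many in the total degree and number of variables) and interpolates, or alternatively one computes the GCD over $\F(x_2,\ldots,x_n)$ symbolically and clears denominators; the content/primitive-part normalization is handled along the way. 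The LCM follows from the GCD via $\lcm(f_1,\ldots,f_m) = f_1 \cdots f_m / \gcd(\text{pairwise products or a Gauss-lemma identity})$, or more cleanly by directly running \cref{cor:multilcm} under the same specialization-and-interpolation scheme.

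The main obstacle I expect is step (4): ensuring that the lift from univariate GCDs back to the multivariate GCD is itself an $\AC^0$ operation and is \emph{correct}, rather than merely producing a common divisor. Two subtleties arise. First, the univariate GCD of specialized polynomials can have \emph{larger} degree than the specialization of the true multivariate GCD if the specialization is unlucky (spurious common roots appear), so one must argue that a generic — and hence some fixed, hardwired — specialization avoids this; this is exactly where an effective Hilbert irreducibility theorem or a resultant-nonvanishing argument is needed, and making the bound on the ``good'' set of specializations polynomial is the crux. Second, interpolation recovers a polynomial of a prescribed degree from evaluations, but one must know a priori a degree bound on $\gcd(f_1,\ldots,f_m)$ in each variable (at most the minimum of the degrees of the $f_i$, which is fine) and must ensure the normalization (say, monic in $x_1$ after clearing content) is consistent across all evaluation points so that the interpolated object is genuinely a polynomial and equals the GCD. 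Both points are standard in the computer-algebra literature, so I anticipate the proof mostly amounts to carefully citing the right effective Bertini/Hilbert statement and checking that each ingredient — coefficient extraction, specialization, the univariate subroutine, and interpolation — stays within $\AC^0$ and composes to constant depth.

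A cleaner alternative worth mentioning, which sidesteps some of step (4): work entirely over the rational function field $\K = \F(x_2,\ldots,x_n)$, treat the $f_i$ as univariate polynomials in $x_1$ over $\K$, run the univariate algorithm of \cref{thm:multigcd} ``symbolically'' — each field operation in $\K$ is itself an $\AC^0$ arithmetic operation on polynomials (addition, multiplication, and division with remainder, the last being available by the $\AC^0$ polynomial division results mentioned earlier) — and then take the primitive part via Gauss's lemma to descend back to $\F[\vec{x}]$. The obstacle here migrates to bounding the sizes and degrees of the intermediate rational functions in $\K$ so that the whole computation remains polynomial size and constant depth; the degree bounds on Bézout-type coefficients and the fact that the algorithm of \cref{thm:multigcd} has \emph{constant} depth make this tractable, since constant depth means only a constant number of ``rounds'' of such rational-function arithmetic.
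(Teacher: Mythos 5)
Your high-level plan --- reduce to the univariate case over $\K = \F(\text{other vars})$, invoke the univariate $\AC^0$ GCD, and use Gauss's lemma to pass between $\K[\text{one var}]$ and $\F[\vec{x}]$ --- is the correct skeleton and matches the paper's strategy. You also correctly note that $\K$-coefficients of an $\AC^0$-computable $f$ are themselves $\AC^0$-computable by interpolation (\cref{lem:polynomial interpolation}), and that for the non-uniform statement one may hardwire a good choice of constants rather than argue probabilistically (the paper's uniform version, \cref{thm:multivariate gcd and lcm}, invokes a PIT oracle to find them).

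The gap is in how you propose to descend from $\K[x_1]$ back to $\F[\vec{x}]$. You say ``take the primitive part via Gauss's lemma'' or that ``the content/primitive-part normalization is handled along the way,'' but this hides a circularity: the content of a polynomial in $\F[\vec{x}][x_1]$ is the GCD of its $x_1$-coefficients, which is itself a multivariate GCD in the remaining variables --- exactly the problem you are trying to solve. The paper's example $\gcd(x_1x_2, x_1) = x_1$ over $\F[x_1,x_2]$ versus $\gcd(x_1x_2, x_1) = 1$ over $\F(x_2)[x_1]$ shows the discrepancy is real. The paper sidesteps this entirely with \cref{lem:transform to monic}: introduce a \emph{fresh} variable $y$ and apply the linear shift $\vec{x} \mapsto \vec{x} + y\vec{\alpha}$, with $\vec{\alpha}$ chosen so the top-degree homogeneous components of all $f_i$ are nonzero at $\vec{\alpha}$. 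Each $\hat f_i(\vec{x},y)$ then becomes \emph{monic in $y$} (after scaling), so its content in $\F[\vec{x}][y]$ is $1$, Gauss's lemma (\cref{cor:monic preserves factorization}) applies cleanly, and the GCD computed over $\F(\vec{x})[y]$ by \cref{thm:multigcd} coincides with the GCD over $\F[\vec{x}][y]$ with no primitive-part step at all; one then inverts the shift via $(\vec{x},y)\mapsto(\vec{x}-y\vec{\alpha},y)$ and the $y$-dependence cancels. Your proposal, which keeps $x_1$ as the distinguished variable without a monicizing shift, does not have this property and so would need to resolve the content circularity some other way.

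Two smaller points. First, your ``specialize to constants and interpolate'' route (Route A) is a genuinely different plan from the paper's; it is workable in the sequential setting but, as you note yourself, requires controlling GCD-degree drops under bad specializations and maintaining a consistent normalization across evaluation points, and the paper avoids all of this by keeping the other variables symbolic. Second, your worry about ``bounding the sizes and degrees of the intermediate rational functions in $\K$'' is a non-issue: the univariate circuit from \cref{thm:multigcd} is an arithmetic circuit with division gates, and feeding it the $\AC^0$-computable $\K$-coefficients via \cref{lem:compose piecewise circuit} directly yields an $\AC^0$ circuit over $\F$; no explicit rational-function arithmetic or degree tracking is needed.
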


\subsection{Our Techniques and Their Origins}

To summarize this section in a sentence, everything is about efficiently computing interesting and useful \emph{symmetric} functions over the roots of given polynomials using only their coefficients as input.
We will now give a taste of the main ones.

Fix an integer $n$.
Assume again that a degree $n$ polynomial $f = \sum_{j=0}^n a_j x^j \in \F[x]$ factors completely over $\overline{\F}$ as 
\[
	f=\prod_{i=1}^n (x-\alpha_i),
\]
where some of the $\alpha_i$ may repeat due to multiplicity.
We denote by $\vec{\alpha}$ the vector of the $\alpha_i$ in some arbitrary order (it will not matter which).
We stress that our algorithms have no access to these roots (indeed, the roots may not even be in $\F$), only to the coefficients $a_j \in \F$ of $f$.
However, it is well known that these coefficients are (up to sign) the \emph{elementary symmetric polynomials} of the roots.
More precisely,
\[
	a_j = (-1)^{n-j} e_{n-j}(\vec{\alpha}),
\]
where $e_d(\vec{z}) \coloneqq \sum_{S\subset [n],\,|S|=d} \prod_{j\in S} z_j$.

The famous Girard--Newton identities relate the elementary symmetric polynomials to another important family of symmetric polynomials, the \emph{power sum polynomials}, defined by $p_k(\vec{z}) \coloneqq \sum_{j=1}^n z_j^k$.
More precisely, for any integer $m$, the first $m$ power sums are polynomial expressions in the first $m$  elementary symmetric polynomials, and vice versa.\footnote{Note that while $e_m(\vec{z})=0$ for $m>n$, this is not the case for $p_m(z)$. Still, the statement above holds for every $m$.}
Thus, the coefficients of a polynomial give us access to new symmetric functions of its roots: the power sums.

There are several different ways to express these relations, each with their own uses.
For example, noticing that the relations above are \emph{triangular} was a key fact used in Csanky's original $\NC^2_\F$ algorithm for the determinant \cite{Csanky76}.
For us, an exponential form encompassing the relations between the generating functions $\sum_m e_m t^m$ and $\sum_m p_m t^m$ (see \cref{sec:newton series}) will be key.
Simple use of interpolation gives rise to our first tool (observed by many), namely that this conversion between the two families can be performed in $\AC^0_\F$.

\begin{theorem}[Folklore]
	For every $m$, there are $\AC^0_\F$ circuits that given the first $m$ elementary symmetric polynomials $e_j(\vec{z})$ as inputs, output the first $m$ power sums $p_j(\vec{z})$, and vice versa.
\end{theorem}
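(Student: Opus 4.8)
The plan is to phrase the Girard--Newton identities in their exponential, generating-function form and to reduce the whole conversion to two operations that already lie in $\AC^0$: evaluation and interpolation of univariate polynomials at a fixed set of field elements, and iterated multiplication of field elements (which is a single unbounded fan-in product gate). Write $E(t) = 1 + e_1(\vec z)t + \cdots + e_n(\vec z)t^n = \prod_{i=1}^n (1+z_i t)$ for the elementary symmetric generating polynomial. Taking the logarithmic derivative gives $E'(t)/E(t) = \sum_{k \geq 1}(-1)^{k-1}p_k(\vec z)t^{k-1}$, and integrating, $E(t) = \exp\!\left(\sum_{k \geq 1}\tfrac{(-1)^{k-1}}{k}p_k(\vec z)t^k\right)$ (this is the ``exponential form'' alluded to above; see also \cref{sec:newton series}). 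Since only the first $m$ of the $e_j$ affect the first $m$ of the $p_k$ and vice versa, all computations can be carried out modulo a fixed power of $t$, so in each direction we only ever manipulate polynomials of bounded degree.

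First I would record the auxiliary fact that, for univariate polynomials presented by their coefficient vectors, the product of \emph{any} number of them --- and in particular a power $f(t)^\ell$ of a single polynomial --- is computable in $\AC^0$. The circuit evaluates each factor at $D+1$ fixed field elements, where $D$ bounds the degree of the product; an evaluation is a fixed linear functional of the coefficients, hence one sum layer (and the field-size hypothesis of the paper is exactly what guarantees enough distinct points). At each of the $D+1$ points it then forms the product of the corresponding values with a single product gate of the appropriate fan-in. Finally it interpolates: the coefficients of the product are the image of the vector of pointwise products under the fixed inverse Vandermonde matrix, hence one more sum layer. Truncation modulo any power of $t$ is then free.

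For the direction $e \mapsto p$, set $u(t) \coloneqq 1 - E(t)$, a fixed linear image of the inputs with $u(0) = 0$, so that $E(t)^{-1} \equiv \sum_{\ell=0}^{m-1} u(t)^\ell \pmod{t^m}$; compute each power $u(t)^\ell$ by the auxiliary fact, add them, multiply by $E'(t)$ (whose coefficients $(j+1)e_{j+1}$ are again linear in the inputs) using one polynomial multiplication, and output $p_k = (-1)^{k-1}[t^{k-1}]\bigl(E'(t)/E(t)\bigr)$. For the direction $p \mapsto e$, set $w(t) \coloneqq \sum_{k=1}^{m}\tfrac{(-1)^{k-1}}{k}p_k t^k$, again a fixed linear image of the inputs with $w(0) = 0$ --- here one uses that $1,\dots,m$ are invertible in $\F$, which is the only place the characteristic restriction enters --- so that $E(t) \equiv \sum_{\ell=0}^{m}\tfrac{1}{\ell!}w(t)^\ell \pmod{t^{m+1}}$; compute the powers $w(t)^\ell$ by the auxiliary fact, scale the $\ell$-th by $1/\ell!$, add, and output $e_j = [t^j]$ of the result. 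In both cases only constantly many layers of addition, scalar multiplication, polynomial multiplication, evaluation, and interpolation are composed, so the resulting circuit has constant depth and polynomial size.

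The only real obstacle is the auxiliary fact about powers (equivalently, iterated products) of polynomials, and the subtlety there is purely one of depth: one must ensure that the ``multiply $\ell$ evaluations together'' step is implemented as a single unbounded fan-in product gate rather than by a balanced binary tree of multiplications, and that evaluation and interpolation are realized as the fixed linear maps they are (legitimate precisely because the interpolation nodes are constants of the field). Everything else --- assembling the geometric and exponential series, differentiating $E$, and recombining --- is a bounded composition of operations that are patently in $\AC^0$.
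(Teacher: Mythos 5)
Your proof is correct and takes essentially the same approach as the paper's (see Lemmas \ref{lem:p_k from e_k identity}--\ref{lem:e_k from p_k algorithm}): both directions rest on the logarithmic-derivative identity $E'(t)/E(t) = \sum_{k\ge 1}(-1)^{k-1}p_k t^{k-1}$ and its exponential inverse, with the relevant geometric and exponential series truncated at low degree, powers formed by a single unbounded fan-in product of evaluations, and coefficients recovered by interpolation (the paper's \cref{lem:polynomial interpolation}). The only difference is cosmetic: the paper parameterizes via a polynomial $f$, its reversal $\rev(f)$, and the Newton series $\Newton(f)$, whereas you work directly with the generating polynomial $E(t)=\prod_i(1+z_it)$, which avoids the sign bookkeeping.
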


For example, this fact (in \emph{one} direction) has been used in \cite{SW01} to obtain smaller depth-4 and depth-6 formulas for the elementary symmetric polynomials than Ben-Or's depth-3 formulas mentioned above. 

But for us, a central inspiration came from the paper of \textcite{BFSS06}.
While they only discuss sequential algorithms, they use this conversion (in \emph{both} directions) to manipulate roots of given polynomials in very interesting ways that look similar to polynomials we want to compute, such as the resultant.
In particular, given polynomials $f$ and $g$ having roots $\alpha$'s and $\beta$'s as above, they compute the polynomials $f \oplus g \coloneqq \prod_{i,j} (x-(\alpha_i + \beta_j))$ and  $f \otimes g \coloneqq \prod_{i,j} (x-\alpha_i \beta_j)$.
Indeed, they can replace sum and product with any fixed bivariate polynomial! 

Observing that actually $(f(x) \oplus g(-x))|_{x=0} = \res(f,g)$, and inspecting their sequential algorithm to verify that it can be parallelized using the folklore theorem above, we already get an $\AC^0_\F$ circuit for the resultant! 
This indeed was our starting point.

To get our results, we will need to obtain more symmetric functions of the roots of a given polynomial $f$.
The \emph{fundamental theorem of symmetric polynomials} states that \emph{any} $n$-variate symmetric polynomial can be written as a fixed polynomial in the $n$ elementary symmetric ones (and hence, by the Girard--Newton identities, also in terms of the first $n$ power sums).
The whole question for us is which of these conversions can be performed in $\AC^0_\F$.

A first step is the simple observation that for any given polynomial $g$, we can compute in $\AC^0_\F$ the sum $\sum_{i=1}^n g(\alpha_i)$ over the roots of $f$, by using the power sums $p_k(\vec{\alpha})$.
It would actually be useful to do the same for the product $\prod_{i=1}^n g(\alpha_i)$, as this is yet another expression for the resultant $\res(f,g)$.
We prove a more general result, computing in $\AC^0_\F$ every elementary symmetric polynomial over the values $r(\alpha_i)$ where $r$ is any given rational function.

\begin{theorem}[see \cref{lem:esym over roots}]
	Given $f, g \in \F[x]$ and any integer $d$, we can compute $e_d(g(\alpha_1),\ldots,g(\alpha_n))$ in $\AC^0$, where $\alpha_1, \ldots, \alpha_n \in \overline{\F}$ are the roots of $f$ in the algebraic closure of $\F$.
	Moreover given another polynomial $h$ which has no roots in common with $f$ over the algebraic closure $\overline{\F}$, we can compute 
	\[
		e_d\del{\frac{g(\alpha_1)}{h(\alpha_1)}, \ldots, \frac{g(\alpha_n)}{h(\alpha_n)}}
	\]
	in $\AC^0_\F$.
\end{theorem}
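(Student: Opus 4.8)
The plan is to reduce both statements to the folklore conversion between power sums and elementary symmetric polynomials, applied \emph{twice}: once to the roots of $f$, whose power sums we can read off from the coefficients of $f$, and once to the transformed values $g(\alpha_i)$ (resp.\ $g(\alpha_i)/h(\alpha_i)$).

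For the first part I would proceed as follows. Using $a_j = (-1)^{n-j} e_{n-j}(\vec\alpha)$ together with $e_j(\vec\alpha) = 0$ for $j > n$, the folklore theorem produces the power sums $p_1(\vec\alpha), \dots, p_{d\deg g}(\vec\alpha)$ from the coefficients of $f$ in $\AC^0$. Separately, I would compute the coefficients $c_{k,j}$ of $g(x)^k$ for $k \le d$ by evaluating $g$ at $O(d\deg g)$ points, taking $k$-th powers at each point (a single unbounded-fan-in product gate), and interpolating — all in $\AC^0$. Then the $k$-th power sum of the values $g(\alpha_1), \dots, g(\alpha_n)$ is
\[
	p_k\bigl(g(\vec\alpha)\bigr) = \sum_{i=1}^n g(\alpha_i)^k = \sum_{i=1}^n (g^k)(\alpha_i) = \sum_j c_{k,j}\, p_j(\vec\alpha),
\]
a bilinear combination of quantities already in hand, hence $\AC^0$. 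Applying the folklore theorem in the reverse direction to $p_1(g(\vec\alpha)), \dots, p_d(g(\vec\alpha))$ yields $e_1(g(\vec\alpha)), \dots, e_d(g(\vec\alpha))$, and I output the last one (outputting $0$ when $d > n$). Composing constantly many constant-depth stages stays in $\AC^0$.

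For the second part I would use the generating-function identity
\[
	\prod_{i=1}^n \bigl(h(\alpha_i) + t\, g(\alpha_i)\bigr) = \Bigl(\prod_{i=1}^n h(\alpha_i)\Bigr) \sum_{j=0}^n e_j\!\left(\frac{g(\alpha_1)}{h(\alpha_1)}, \dots, \frac{g(\alpha_n)}{h(\alpha_n)}\right) t^j ,
\]
valid because $h(\alpha_i) \neq 0$ for all $i$ (as $h$ and $f$ share no root). The scalar $\prod_i h(\alpha_i) = e_n(h(\vec\alpha))$ is computed by the first part and is nonzero. To get the left-hand side, I would specialize $t$ to $n+1$ distinct field elements $c_0, \dots, c_n$; for each $c_\ell$ the polynomial $h + c_\ell g$ has coefficients that are $\AC^0$-computable from those of $g$ and $h$, and $\prod_i \bigl(h(\alpha_i) + c_\ell g(\alpha_i)\bigr) = e_n\bigl((h + c_\ell g)(\vec\alpha)\bigr)$ is again $\AC^0$ by the first part. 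Interpolating in $t$ recovers the coefficient of $t^d$, and one final division by $e_n(h(\vec\alpha))$ yields $e_d(g(\vec\alpha)/h(\vec\alpha))$, all in $\AC^0$.

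I expect essentially all of the real work to be in the first part; the rational-function case is then a short generating-function manipulation reusing it. The point that needs care throughout is the arithmetic behind the folklore theorem: Newton's identities divide by the integers $1, \dots, d\deg g$, so the construction requires characteristic $0$ or positive characteristic exceeding this (polynomial) bound — exactly the paper's standing hypothesis — and the interpolation steps likewise need enough distinct field elements. Finally, I would justify correctness by working over $\overline{\F}$, where the factorization $f = \prod_i (x-\alpha_i)$ makes sense: each quantity produced is a polynomial in the input coefficients (or, in the second part, a rational function with the explicit nonzero denominator $e_n(h(\vec\alpha))$), so the identities verified over $\overline{\F}$ are formal and hence hold over $\F$ itself.
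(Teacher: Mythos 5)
Your first part follows essentially the same path as the paper's: read off the power sums $p_j(\vec\alpha)$ from the coefficients of $f$, compute the power sums $p_k(g(\vec\alpha)) = \sum_j c_{k,j}\, p_j(\vec\alpha)$ via the coefficients $c_{k,j}$ of $g^k$ (the paper packages this step as \cref{lem:sum over roots}), and finally convert the new power sums to elementary symmetric polynomials using the Newton-identity machinery (\cref{lem:e_k from p_k algorithm}). That part is correct and matches \cref{lem:esym over roots}.

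Your second part, however, takes a genuinely different and in fact cleaner route than \cref{lem:rational esym over roots}. The paper first builds the \emph{power sums} of the rational values: for each $k$ it applies \cref{lem:rational sum over roots} to $f$, $g^k$, $h^k$, where that lemma itself clears denominators, computes $\prod_i h(\alpha_i)^k$ by part one, and extracts the linear-in-$y$ coefficient of $\prod_i(g(\alpha_i)^k\,y + h(\alpha_i)^k)$; only afterwards does it run Newton's identities to recover the elementary symmetric polynomials. You instead aim directly at the \emph{generating function} $\prod_i(h(\alpha_i) + t\,g(\alpha_i)) = \bigl(\prod_i h(\alpha_i)\bigr)\sum_j e_j(g(\vec\alpha)/h(\vec\alpha))\,t^j$, evaluate the left-hand side at $n+1$ values of $t$ via part one applied to the polynomials $h + c_\ell g$, interpolate, and divide by the nonzero scalar $\prod_i h(\alpha_i)$. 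This skips the intermediate pass through rational power sums and the second invocation of Newton's identities altogether, at no cost: correctness still relies only on $h(\alpha_i)\neq 0$, and the division is by an explicit nonzero field element, so no division elimination is needed. Both proofs are valid and of the same asymptotic complexity; yours is a bit more economical conceptually, while the paper's has the advantage that \cref{lem:rational sum over roots} is a reusable stand-alone tool (it is not otherwise needed, so that advantage is mostly cosmetic here).
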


Towards getting a handle over the multiplicities of roots of $f$, we first note that all derivatives $f^{(r)}(x)$ of $f$ can be easily computed in $\AC^0_\F$.
A root $\alpha$ has multiplicity at least $r$ if and only if $f$ and its first $r-1$ derivatives vanish at this root.
This is used to construct a polynomial $g$ (with a constant number of variables besides $x$), to which the theorem above can be applied (see details in \cref{sec:operations on roots}).
This allows us to \emph{filter} out the roots of $f$ with multiplicities above (or below) a given threshold $r$, and hence obtain those of multiplicity precisely $r$.
An important consequence is the ability to compute the \emph{squarefree decomposition} of $f$.
We conclude the techniques section with this consequence.

\begin{theorem}[see \cref{lem:squarefree decomp}]
	Given $f \in \F[x]$, we can compute piecewise in $\AC^0_\F$ the (unique) sequence of polynomials $f_1, f_2, \ldots, f_n \in \F[x]$ such that no $f_r$ has a double root, $\gcd(f_i,f_j) = 1$, and $f=\prod_{r=1}^n f_r^r$.
\end{theorem}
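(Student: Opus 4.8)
The plan is to extract each $f_r$ directly as a product of the linear factors $(x-\alpha)$ where $\alpha$ is a root of $f$ of multiplicity exactly $r$, using the filtering machinery built on \cref{lem:esym over roots}. Write $f = \prod_{i=1}^n (x - \alpha_i)$ over $\overline{\F}$, and for a root $\alpha$ let $\mult_f(\alpha)$ denote its multiplicity. The key observation is the characterization of multiplicity via derivatives: $\mult_f(\alpha) \geq r$ if and only if $f(\alpha) = f'(\alpha) = \cdots = f^{(r-1)}(\alpha) = 0$. Since all derivatives $f^{(s)}$ are computable in $\AC^0$ from the coefficients of $f$ (each coefficient is just a scalar multiple of a coefficient of $f$), this gives us an algebraic handle on the multiplicity of each root without access to the roots themselves.

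First I would build, for each threshold $r$, a polynomial $g_{\geq r}(x)$ whose roots are exactly the $\alpha_i$ with $\mult_f(\alpha) \geq r$ (each appearing once). The idea is to introduce an auxiliary variable and use \cref{lem:esym over roots} to form the elementary symmetric polynomials over values $g(\alpha_i)$ for a cleverly chosen $g$: concretely, one can take a polynomial in $x$ and a fresh variable $y$ that vanishes at $(\alpha_i, y)$ for all $y$ when $\mult_f(\alpha_i) \geq r$, and is a nonzero polynomial in $y$ otherwise, so that forming the appropriate symmetric function and specializing recovers $\prod_{\mult_f(\alpha_i) \geq r}(x - \alpha_i)$ up to normalization. (This is precisely the filtering construction promised in \cref{sec:operations on roots}, so I would simply invoke it.) Then the polynomial whose roots are exactly the $\alpha_i$ with $\mult_f(\alpha_i) = r$ is obtained as the exact quotient
\[
	h_r(x) = \frac{g_{\geq r}(x)}{g_{\geq r+1}(x)},
\]
and this quotient is itself computable in $\AC^0$ because polynomial division with remainder (here with zero remainder) is in $\AC^0$, either via \textcite{BP85} or via the resultant/Sylvester-matrix results of the present paper.

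Next I would identify $f_r$ with $h_r$. Each $h_r$ has no double root by construction (every root appears with multiplicity one), and for $i \neq j$ the polynomials $h_i$ and $h_j$ share no roots since a root of $f$ has a single well-defined multiplicity, giving $\gcd(h_i, h_j) = 1$. Finally, $\prod_{r=1}^n h_r^r = \prod_i (x-\alpha_i)^{\mult_f(\alpha_i)} = f$, since grouping the roots of $f$ by multiplicity exactly reproduces the factorization. Uniqueness of the decomposition is a standard fact about the squarefree decomposition over a perfect field (which holds in characteristic $0$ and in large enough positive characteristic, as assumed throughout), so no further work is needed there. Assembling: we compute $f', \ldots, f^{(n-1)}$ in $\AC^0$, run the filtering subroutine to get all $g_{\geq r}$ in $\AC^0$ in parallel, perform the $n$ exact divisions in $\AC^0$ in parallel, and output $(h_1, \ldots, h_n)$; since this is a constant-depth composition of $\AC^0$ subcircuits, the whole thing is $\AC^0$.

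The main obstacle is making the filtering step fully rigorous — in particular, setting up the auxiliary polynomial $g$ so that the multiplicity-$\geq r$ condition translates cleanly into a vanishing condition on symmetric functions of the $g(\alpha_i)$, while controlling normalization (leading coefficients, and the fact that we want each filtered root with multiplicity exactly one rather than some power). A secondary subtlety is verifying that the divisions $g_{\geq r}/g_{\geq r+1}$ are genuinely exact (which follows once the roots-with-multiplicity-one property of the $g_{\geq r}$ is established) so that the $\AC^0$ algorithm for polynomial division applies without a remainder term to worry about. Both of these are handled by the operations-on-roots toolkit developed earlier, so in the actual proof I would state the needed lemma precisely and then the argument above goes through essentially by bookkeeping.
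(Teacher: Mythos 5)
Your overall plan matches the paper's, but there is a genuine gap at the filtering step that you flag as an ``obstacle'' and then incorrectly claim is ``handled by the operations-on-roots toolkit.'' It is not. Every filter in \cref{sec:operations on roots} (\cref{lem:filter}, \cref{lem:multifilter}, \cref{lem:threshold}, \cref{lem:multithreshold}) carries the multiplicities $a_i$ of the input polynomial through to the output: applying the multifilter to $f$ with the derivatives $f, f', \ldots, f^{(r-1)}$ as filter polynomials produces $\prod_{i : a_i \ge r} (x - \alpha_i)^{a_i}$, \emph{not} the squarefree polynomial $\prod_{i : a_i \ge r} (x - \alpha_i)$ that your $g_{\ge r}$ requires. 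This is forced by the underlying mechanism of \cref{lem:esym as multiplication}: each root $\alpha_i$ is listed $a_i$ times among the roots of $f$, so the factor $(y - \alpha_i)$ appears $a_i$ times in the product. Consequently your quotient $h_r = g_{\ge r}/g_{\ge r+1}$ would actually equal $\prod_{i : a_i = r}(x - \alpha_i)^r = f_r^r$, not $f_r$. There is no way to make the filter itself output a squarefree polynomial without first having a squarefree version of $f$ in hand, and computing the squarefree part from scratch is precisely what this lemma is needed for (\cref{cor:squarefree part} is a \emph{consequence} of it), so that route is circular.

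The missing ingredient is \cref{lem:perfect power} from \cref{sec:exact division}: since $f_{=r} := f_{\ge r}/f_{\ge r+1}$ equals $f_r^r$ and is therefore a perfect $r$\ts{th} power, one recovers $f_r$ in $\AC^0$ by extracting the $r$\ts{th} root via the Newton-series ``divide the power sums by $r$'' trick. The paper's proof does exactly this: it applies \cref{lem:threshold} with $g = f$ to obtain $f_{\le r}$ piecewise in $\AC^0$, forms $f_{=r} = f_{\le r}/f_{\le r - 1}$ by \cref{lem:exact division}, and then applies \cref{lem:perfect power} to $f_{=r}$ to obtain $f_r$. Your derivative-characterization of multiplicity, your use of the threshold/filter machinery, and your exact-division step are all in line with the paper's argument; what you are missing is the final root-extraction step, which is not part of the operations-on-roots toolkit and must be invoked separately.
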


This ability to filter out the roots by multiplicity is key to most of our results, and the reader is invited to see how compute $\gcd(f,g)$ using it. 

\subsection{Organization}

The rest of this paper is organized as follows.
We start with preliminary material in \cref{sec:prelim}.
In \cref{sec:newton series}, we review folklore $\AC^0_\F$ implementations of Newton's identities, which will be an essential tool for all of our results.
\cref{sec:exact division} is a warm-up, where we show how Newton's identities can be used to solve some interesting toy problems in $\AC^0_\F$.
Our work starts in earnest in \cref{sec:symmetric functions of roots}, where we develop tools to evaluate symmetric functions of the roots of a given polynomial.
We then apply these tools in \cref{sec:sylvester and bezout}, where we compute the determinants and inverses of the Sylvester and B\'{e}zout matrices in $\AC^0_\F$.

\cref{sec:operations on roots} is the main technical section in this work, where we introduce filtering and thresholding, two techniques that let us design piecewise $\AC^0_\F$ algorithms that manipulate the factorization pattern of a polynomial without having explicit access to its roots.
\cref{sec:gcd and lcm} applies the results of \cref{sec:operations on roots} to compute the GCD and LCM of many polynomials piecewise in $\AC^0_\F$.
We generalize this result in \cref{sec:arbitrary functions} to arbitrary functions of root multiplicities.
In \cref{sec:multivariate}, we extend our univariate algorithms to the multivariate setting.
Finally, \cref{sec:conclusion} concludes with some open problems.

\section{Preliminaries} \label{sec:prelim}

To keep this work self-contained, this section includes a number of well-known results from arithmetic complexity and computer algebra.

\subsection{Notation}

We work over a field $\F$ of characteristic zero or of polynomially-large characteristic.
For example, to compute the GCD of two polynomials of degree $d$ over a field of positive characteristic, we require $\ch(\F) \ge 2d+1$.
The precise requirements on the characteristic of $\F$ will be specified in the statements of our results.

When we analyze our algorithms, it will often be convenient to work with the factorization of a polynomial $f(x)$ into linear factors $f(x) = \prod_i (x - \alpha_i)$.
In general, such a factorization only exists over the algebraic closure $\overline{\F}$ of the base field.
This factorization is only used in the analysis of our algorithms; in particular, we do not assume that the field $\F$ is algebraically closed.

We abbreviate a vector $(x_1,\ldots,x_n)$ as $\vec{x}$.
We denote by $\F[\vec{x}]$ the polynomial ring in the variables $x_1,\ldots,x_n$.
For a vector $\vec{a} \in \naturals^n$, we abbreviate the monomial $\prod_{i=1}^n x_i^{a_i}$ as $\vec{x}^{\vec{a}}$.
We let $\norm{\vec{a}}_1 \coloneqq \sum_{i=1}^n a_i$ denote the $\ell_1$ norm of $\vec{a}$.
We use $\F(\vec{x})$ and $\F \llb \vec{x} \rrb$ to denote the field of rational functions and ring of formal power series, respectively, in the variables $x_1,\ldots,x_n$.
Given a polynomial $f \in \F[x]$ and a natural number $r \in \naturals$, we write $f^{(r)}(x)$ for the $r$\ts{th} derivative of $f$.
For two polynomials $f, g \in \F[x]$, we write $f \mid g$ to denote that $f$ divides $g$.

Throughout, if the input to an algorithmic problem consists of univariate polynomials $f_1,\ldots,f_m \in \F[x]$, we assume that the polynomials $f_1,\ldots,f_m$ are monic, i.e., that the leading coefficient of each $f_i$ is 1.
This is done for the sake of notational convenience; all of our algorithms easily extend to handle non-monic inputs.
As our inputs are assumed to be monic, we likewise adopt the convention that the GCD and LCM are defined to be monic polynomials.
From here on, all univariate polynomials are assumed to be monic unless specified otherwise.

For a natural number $d \in \naturals$, we write
\begin{align*}
	p_d(\vec{x}) &\coloneqq \sum_{i=1}^n x_i^d \\
	e_d(\vec{x}) &\coloneqq \sum_{\substack{S \subseteq [n] \\ |S| = d}} \prod_{i \in S} x_i
\end{align*}
for the degree-$d$ power sum and elementary symmetric polynomials, respectively.
These two families of polynomials play an essential role throughout our work.
We adopt the conventions that $p_0(\vec{x}) = n$, $e_0(\vec{x}) = 1$, and $e_d(\vec{x}) = 0$ for $d > n$.

For a matrix $A \in \F^{n \times n}$, we denote by $\adj(A) \in \F^{n \times n}$ the \emph{adjugate} of $A$, defined as
\[
	\adj(A)_{i,j} \coloneqq (-1)^{i+j} \det(A_{-j, -i}),
\]
where $A_{-j, -i}$ is the submatrix of $A$ obtained by deleting the $j$\ts{th} row and $i$\ts{th} column.
The adjugate satisfies the identity $\adj(A) A = \det(A) I_n$, where $I_n$ is the $n \times n$ identity matrix.
In particular, when $A$ is invertible, the inverse of $A$ is given by $\frac{1}{\det(A)} \adj(A)$.

\subsection{Arithmetic Circuits}

\subsubsection{Circuits and Complexity Classes}

We use arithmetic circuits as our basic model of computation.

\begin{definition} \label{def:arith ckt}
	Let $\F$ be a field and let $\F(\vec{x})$ be the field of rational functions in the variables $x_1,\ldots,x_n$.
	An \emph{arithmetic circuit over $\F$} is a directed acyclic graph.
	Vertices of in-degree zero are called \emph{input gates} and are each labeled by a variable $x_i$ or a field element $\alpha \in \F$.
	Vertices of positive in-degree are called \emph{internal gates} and are labeled by an element of $\set{+, \times, \div}$.
	Vertices of out-degree zero are called \emph{output gates}.
	Each gate of the circuit computes a rational function in $\F(\vec{x})$ in the natural way.
	We require that no division by the identically zero polynomial takes place in the circuit.
	If $\set{f_1,\ldots,f_m}$ are the functions computed by the output gates of the circuit, we say that the circuit computes $\set{f_1,\ldots,f_m}$.
	The \emph{size} of the circuit is the number of wires in the circuit.
	The \emph{depth} of the circuit is the length of the longest path from an input gate to an output gate.
\end{definition}

As is standard in arithmetic complexity, when we say that an arithmetic circuit computes a rational function $f \in \F(\vec{x})$, we require the circuit to formally compute the same rational function as $f$.
This distinction is important over finite fields, where two unequal polynomials can induce the same function, as in the case of $x^2 - x$ and $0$ over the finite field $\F_2$.

Naturally, one can define complexity classes of (families of) rational functions in terms of their arithmetic circuit complexity.
The following classes capture efficient low-depth computation.
Needless to say, these are separate classes for every field $\F$.

\begin{definition} \label{def:circuit classes}
	Let $\F$ be a field and let $f = (f_1,f_2,\ldots)$ be a family of rational functions.
	We say that $f \in \NC^i_\F$ if $f_n$ can be computed by an arithmetic circuit of fan-in two, size $n^{O(1)}$, and depth $O(\log^i n)$.
	We say that $f \in \AC^i_\F$ if $f_n$ can be computed by an arithmetic circuit of \emph{unbounded} fan-in, size $n^{O(1)}$, and depth $O(\log^i n)$.
\end{definition}

In arithmetic circuit complexity, it is standard to restrict attention to families of polynomials $(f_1,f_2,\ldots)$ with the additional restriction that $\deg(f_n) \le n^{O(1)}$.
Under this restriction, the complexity classes corresponding to $\AC^i_\F$ and $\NC^i_\F$ are denoted by $\VAC^i$ and $\VNC^i$, respectively.
A well-known result of \textcite{VSBR83} shows that any circuit of size $n^{O(1)}$ can be converted to one of size $n^{O(1)}$ and depth $O(\log^2 n)$.
A careful reading of their proof shows that depth $O(\log n)$ suffices if unbounded fan-in is allowed.
Thus, when we restrict attention to polynomials of degree $n^{O(1)}$, we have the collapse
\[
	\VAC^0 \subsetneq \VNC^1 \subseteq \VAC^1 = \VNC^2 = \VAC^2 = \VNC^3 = \cdots.
\]
Further inspection of the proof of \cite{VSBR83} shows that depth $O(\log n)$ can be attained using addition gates of unbounded fan-in and multiplication gates of fan-in two, corresponding to the class $\SAC^1$ (for \emph{semi-unbounded} $\AC^1$).
The strict inclusion $\VAC^0 \subsetneq \VNC^1$ is a straightforward corollary of the depth hierarchy theorem of \textcite{LST21a}.

We also mention the Boolean complexity class $\DET$ \cite{Cook85}, which consists of all problems that are logspace-reducible to the determinant of an integer matrix.
This includes many familiar linear-algebraic problems, including matrix powering, matrix inverse, and computing the characteristic polynomial of a matrix.
In the Boolean setting, it is known that $\NL \subseteq \DET \subseteq \NC^2$.
It is conjectured that $\DET \not \subseteq \NC^1$.
If $\DET \subseteq \NC^1$ were true, then we would have the chain of inclusions
\[
	\NL \subseteq \DET \subseteq \NC^1 \subseteq \L,
\]
implying the unlikely collapse $\L = \NC^1 = \NL$.

The algebraic analogue of $\DET$, often denoted $\VBP$, is the class of polynomial families $(f_1,f_2,\ldots)$ computable by arithmetic branching programs of size $n^{O(1)}$.
We will not give a precise definition of arithmetic branching programs here, as we will not make use of them.
As in the Boolean setting, it is conjectured that the determinant and matrix inverse are \emph{not} computable by arithmetic circuits of fan-in two, size $n^{O(1)}$, and depth $O(\log n)$.
The recent work of \textcite{LST21a} shows unconditionally that the determinant and related problems are not computable by arithmetic circuits of size $n^{O(1)}$ and depth $O(1)$, i.e., that the determinant is not computable in $\AC^0_\F$.

\subsubsection{Piecewise Arithmetic Circuits}

By definition, arithmetic circuits can only represent rational functions.
However, natural functions of interest, like the GCD, are not rational.
For example, it is easy to see that
\[
	\gcd(x - \alpha, x - \beta) = 
	\begin{cases}
		x - \alpha & \text{if $\alpha = \beta$,} \\
		1 & \text{otherwise.}
	\end{cases}
\]
This is not a continuous function of $\alpha$ and $\beta$, so we cannot hope to compute the GCD using an arithmetic circuit.

To compute the GCD, we have to add a branching instruction to our model of computation.
In arithmetic complexity, this is typically done by allowing an algorithm to test if a computed quantity equals zero and branch accordingly.
This sort of operation is standard: even the Euclidean algorithm uses zero-testing to detect the degree of a remainder.
We formalize this via arithmetic circuits that define a function piecewise. 
This is a natural and simple extension of arithmetic circuits that allows them to compute functions like the GCD.
Although we provide a precise definition here for completeness, we encourage the reader to keep in mind standard arithmetic circuits as the model of computation.

We now define piecewise arithmetic circuits.
For our purposes, it will be enough to include an additional gate type that implements the selection function $\sel : \F^m \to \F$ defined by
\[
	\sel(y_1,\ldots,y_m) \coloneqq \begin{cases}
		y_1 & \text{if $y_1 \neq 0$,} \\
		y_2 & \text{if $y_1 = 0$ and $y_2 \neq 0$,} \\
		& \vdots \\
		y_m & \text{if $y_1 = \cdots = y_{m-1} = 0$ and $y_m \neq 0$,} \\
		0 & \text{otherwise.}
	\end{cases}
\]
That is, $\sel(y_1,\ldots,y_m)$ outputs the first nonzero element of the vector $(y_1,\ldots,y_m)$, falling back to zero otherwise.

\begin{definition} \label{def:piecewise circuits}
	Let $\F$ be a field.
	A \emph{piecewise arithmetic circuit over $\F$} is a directed acyclic graph.
	Vertices of in-degree zero are called \emph{input gates} and are each labeled by a variable $x_i$ or a field element $\alpha \in \F$.
	Vertices of positive in-degree are called \emph{internal gates} and are labeled by an element of $\set{+, \times, \div, \sel}$.
	Vertices of out-degree zero are called \emph{output gates}.
	Each gate of the circuit computes a piecewise rational function $\F^n \to \F$ in the natural way.
	We require that no division by zero takes place on any input to the circuit.
	If $\set{f_1, \ldots, f_m}$ are the functions computed by the output gates of the circuit, we say that the circuit computes $\set{f_1,\ldots,f_m}$ piecewise.
	The \emph{size} of the circuit is the number of wires in the circuit.
	The \emph{depth} of the circuit is the length of the longest path from an input gate to an output gate.
\end{definition}

Of course, one can generalize \cref{def:piecewise circuits} to allow for more complex branching logic.
The arithmetic networks of \textcite{vonzurGathen86survey} are one such generalization, where one can test a value for equality with zero, perform boolean computation on the results of these tests, and branch accordingly.
These zero tests and branching operations can clearly simulate the $\sel$ function we included in \cref{def:piecewise circuits}, but that level of generality will not be necessary for our results.

We note that simple modifications of the parallel algorithms of \textcite{BGH82,vonzurGathen84} for the GCD and Extended Euclidean algorithm, respectively, can be implemented in our model of piecewise arithmetic computation.
We are not aware of a natural problem that requires branching beyond what $\sel$ provides, but factorization of univariate polynomials over finite fields may be such an example \cite{vonzurGathen84}.

Continuing the example of the GCD of two linear polynomials, we can rewrite $\gcd(x-\alpha, x-\beta)$ as
\[
	\gcd(x-\alpha,x-\beta) = \sel((x - \beta) - (x - \alpha), x - \alpha) = \sel(\alpha - \beta, x - \alpha),
\]
which matches the form of \cref{def:piecewise circuits}.
Functions computed piecewise by low-depth arithmetic circuits can be evaluated quickly in parallel, as the relevant branching logic can be implemented efficiently in parallel.

We now define complexity classes that correspond to functions computed piecewise by small, low-depth arithmetic circuits.

\begin{definition} \label{def:piecewise circuit classes}
	Let $f = (f_1,f_2,\ldots)$ be a family of piecewise rational functions.
	We say that $f \in \NC^i_\F$ if $f_n$ can be computed piecewise by arithmetic circuits of fan-in two, size $n^{O(1)}$, and depth $O(\log^i n)$.
	We say that $f \in \AC^i_\F$ if $f_n$ can be computed piecewise by arithmetic circuits of \emph{unbounded} fan-in, size $n^{O(1)}$, and depth $O(\log^i n)$.
\end{definition}

To justify the use of the same notation for the classes defined in \cref{def:circuit classes,def:piecewise circuit classes}, the following lemma shows that piecewise arithmetic circuits provide no advantage over standard arithmetic circuits when computing a family of rational functions.
The proof makes use of basic properties of the Zariski topology over an algebraically closed field.
In particular, we use the fact that a finite intersection of nonempty open sets is nonempty and open, and that rational functions are determined identically by their evaluations on a nonempty open set \cite[Chapter 1, Section 3.2]{Shafarevich1}.

\begin{lemma} \label{lem:remove select}
	Let $\F$ be a field and $f \in \F(\vec{x})$ be a rational function.
	Suppose there is a piecewise arithmetic circuit of size $s$, depth $\Delta$, and fan-in $c$ that computes $f$.
	Then $f$ can be computed by an arithmetic circuit of the same size, depth, and fan-in.
\end{lemma}

\begin{proof}
	Because a piecewise arithmetic circuit correctly evaluates $f$ on any field extension $\mathbb{K} \supseteq \F$, we may replace $\F$ with its algebraic closure without any loss of generality.
	Let $\Phi$ be a piecewise arithmetic circuit of size $s$, depth $\Delta$, and fan-in $c$ that computes $f$.
	Our goal will be to remove the selection gates of $\Phi$ in a careful way, resulting in an arithmetic circuit that computes a rational function $g$ which agrees with $f$ on a nonempty Zariski-open subset $U \subseteq \F^n$.
	Because $f$ and $g$ agree on a nonempty Zariski-open subset, it must be the case that $f$ and $g$ are identically equal, i.e., that $f = g$ as elements of $\F(\vec{x})$.

	To remove the selection gates, let $v_1, \ldots, v_t$ be a topological ordering of the gates of $\Phi$ that are labeled by $\sel$.
	Let $w_1, \ldots, w_t$ be the children of $v_1$.
	Because $v_1$ precedes all other selection gates in topological order, its children $w_1, \ldots, w_t$ compute rational functions $h_1, \ldots, h_t \in \F(\vec{x})$.
	By removing children that compute the identically zero function, we may assume without loss of generality that $h_1(\vec{x}) \neq 0$.
	Let $U_1 \subseteq \F^n$ be the set of points $\vec{\alpha} \in \F^n$ at which $h_1(\vec{\alpha})$ is defined and is nonzero.
	Because $h_1$ is not identically zero, the set $U_1$ is a nonempty Zariski-open subset of $\F^n$.
	Replacing the gate $v_1$ with $w_1$ results in an arithmetic circuit with $t-1$ selection gates that agrees with $\Phi$ on $U_1$.

	Iterating this argument, we obtain an arithmetic circuit $\Psi$ with no selection gates and a sequence $U_1, \ldots, U_t \subseteq \F^n$ of nonempty Zariski-open sets such that $\Psi$ agrees with $\Phi$ on $U \coloneqq \bigcap_{i=1}^t U_i$.
	Because $U$ is a finite intersection of nonempty Zariski-open sets, $U$ itself is nonempty and Zariski-open.
	Since the rational function computed by $\Psi$ agrees with $f$ on a nonempty Zariski-open set, it follows that $\Psi$ computes $f$ correctly on all inputs for which $\Psi$ is defined.
	As $\Psi$ was constructed from $\Phi$ only by replacing gates, it is clear that the size, depth, and fan-in of the circuit did not increase during this process.
\end{proof}

\subsection{Known $\AC^0_\F$ Algorithms}

In this subsection, we collect previously-known algorithmic results that can be implemented in $\AC^0_\F$.
We start with basic operations on univariate polynomials: addition, multiplication, and derivatives.
Note that the inputs to these problems are the coefficients of two univariate polynomials $f, g \in \F[x]$ and the outputs are polynomial functions of the coefficients of $f$ and $g$.

\begin{lemma}
	Let $f, g \in \F[x]$ be univariate polynomials given by their coefficients.
	Then the coefficients of $f(x) + g(x)$, $f(x) \cdot g(x)$, and $f^{(r)}(x)$ can all be computed in $\AC^0_\F$.
\end{lemma}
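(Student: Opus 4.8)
The statement to prove is elementary: the coefficients of $f+g$, $f\cdot g$, and $f^{(r)}$ are each computable in $\AC^0$, where $f,g\in\F[x]$ are given by their coefficient vectors. The plan is to handle the three operations separately, and in each case exhibit an explicit constant-depth, polynomial-size formula for each output coefficient as a function of the input coefficients.

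For addition, write $f=\sum_{j=0}^n a_j x^j$ and $g=\sum_{j=0}^n b_j x^j$ (padding with zeros so both have the same nominal degree). Then the $j$\ts{th} coefficient of $f+g$ is simply $a_j+b_j$, which is a single addition gate of fan-in two. So $f+g$ is computed by a depth-$1$ circuit of linear size. For multiplication, if $f$ has degree $n$ and $g$ has degree $m$, the $k$\ts{th} coefficient of $f\cdot g$ is $\sum_{i+j=k} a_i b_j$, a sum of at most $\min(k,n,m)+1 \le n+1$ products. Each such coefficient is thus computed by one layer of fan-in-two multiplication gates feeding into a single unbounded-fan-in addition gate, giving depth $2$ and size $O(nm)$ overall (there are $n+m+1$ output coefficients, each a sum of $O(n)$ products). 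For derivatives, note that the $j$\ts{th} coefficient of $f'(x)$ is $(j+1)a_{j+1}$, so a single derivative is depth $1$ (multiplication by a field constant, which is a legal unary operation / can be folded into a fan-in-two product with a constant input gate). Iterating, the $j$\ts{th} coefficient of $f^{(r)}(x)$ is $\frac{(j+r)!}{j!}\,a_{j+r} = a_{j+r}\prod_{\ell=1}^{r}(j+\ell)$; since $\frac{(j+r)!}{j!}$ is a fixed field element (here we use that the characteristic is zero or large enough that these integers are nonzero, though nonzero-ness is not even needed for computability), this is again just multiplication of the input coefficient $a_{j+r}$ by a constant, hence depth $1$ and linear size.

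I would then assemble these three facts into the lemma, remarking that in all three cases the resulting circuit is in fact a formula (every gate has out-degree one, or can be made so at no asymptotic cost since each output coefficient is computed independently), of constant depth and polynomial size, so all three problems lie in $\AC^0$. One subtlety worth a sentence: the bound on the number of output coefficients, and hence the size, is polynomial in the input size because the degrees of $f$ and $g$ are part of the input (their coefficient lists have length $n+1$ and $m+1$), consistent with the paper's convention that degrees are implicitly counted in the input size.

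There is essentially no obstacle here; the ``hard part'' is only to be careful about conventions — that constant multiplication is permitted (or realized via a fan-in-two product with a constant input gate), that zero-padding makes the addition well-defined when $\deg f \ne \deg g$, and that the size is measured against an input that already encodes the degrees. No deep idea is needed, which is why the paper states it without proof as a warm-up fact.
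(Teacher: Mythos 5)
Your proof is correct, and the paper states this lemma without any proof precisely because, as you observe, it is a routine warm-up fact: the explicit coefficient formulas you give for $f+g$, $f\cdot g$, and $f^{(r)}$ are the standard ones, each yielding a constant-depth, polynomial-size formula. Your attention to the conventions (zero-padding, constant multiplication as a fan-in-two gate, degrees encoded in the input length) is appropriate and matches the paper's setup.
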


Our next tool is polynomial interpolation, which we use extensively.
Let $f \in \F[\vec{x}, y]$ be a polynomial of degree $d$.
We can write $f$ as a polynomial in $y$ whose coefficients are polynomials in $\vec{x}$, i.e., there are polynomials $f_0,\ldots,f_d \in \F[\vec{x}]$ such that
\[
	f(\vec{x}, y) = \sum_{i=0}^d f_i(\vec{x}) y^i.
\]
Given evaluations $f(\vec{x}, \alpha_1), \ldots, f(\vec{x}, \alpha_{d+1})$ at $d+1$ distinct values for $y$, each of the $f_i$ can be expressed as a linear combination of these evaluations.
In particular, if $f$ can be computed by a circuit of size $s$ and depth $\Delta$, then the coefficients $f_0,\ldots,f_d$ can each be computed by a circuit of size $O(s d)$ and depth $\Delta + 1$.
We record this observation in the following lemma.

\begin{lemma} \label{lem:polynomial interpolation}
	Let $\F$ be a field.
	Let $f \in \F[\vec{x}, y]$ be a polynomial of degree $d$.
	Let $f_0, \ldots, f_d \in \F[\vec{x}]$ be polynomials such that
	\[
		f(\vec{x}, y) = \sum_{i=0}^d f_i(\vec{x}) y^i.
	\]
	Suppose that $f$ can be computed by an arithmetic circuit of size $s$ and depth $\Delta$.
	Then for each $i \in \set{0,1,\ldots,d}$, there is a circuit of size $O(s d)$ and depth $\Delta + 1$ that computes $f_i$.
	If $|\F| \le d$, then the circuit computing $f_i$ is defined over an extension $\mathbb{K} \supseteq \F$ such that $|\mathbb{K}| \ge d+1$.
\end{lemma}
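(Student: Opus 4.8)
The plan is to recover each coefficient $f_i$ as a \emph{fixed} $\F$-linear combination of the evaluations $f(\vec{x}, \alpha_0), \ldots, f(\vec{x}, \alpha_d)$ at $d+1$ distinct points, via the inverse of a Vandermonde matrix. This is just Lagrange interpolation, but phrased so that the interpolation weights are field constants independent of the input, which is what keeps the depth overhead down to a single layer.

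First I would fix $d+1$ distinct scalars $\alpha_0, \ldots, \alpha_d$. If $|\F| \ge d+1$ these lie in $\F$; otherwise $\F$ is a finite field of size at most $d$, and I pass to the smallest extension $\K \supseteq \F$ with $|\K| \ge d+1$ (which exists in every characteristic) and carry out the construction there — this is exactly the provision in the last sentence of the statement. Given the circuit $C$ of size $s$ and depth $\Delta$ computing $f$, I then form, for each $j \in \set{0, \ldots, d}$, the circuit $C_j$ obtained by relabeling the input gate $y$ with the constant $\alpha_j$. This $C_j$ computes $f(\vec{x}, \alpha_j) = \sum_{i=0}^d \alpha_j^i f_i(\vec{x})$, and, since relabeling an input gate does not change the number of wires, it has size at most $s$ and depth at most $\Delta$.

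The key step is the linear-algebra identity. Let $M$ be the $(d+1) \times (d+1)$ Vandermonde matrix $M_{j,i} = \alpha_j^i$; it is invertible because the $\alpha_j$ are distinct. Then $\del{f(\vec{x},\alpha_0), \ldots, f(\vec{x},\alpha_d)}^{\top} = M \, \del{f_0(\vec{x}), \ldots, f_d(\vec{x})}^{\top}$, so $f_i(\vec{x}) = \sum_{j=0}^d (M^{-1})_{i,j}\, f(\vec{x}, \alpha_j)$, a linear combination with \emph{constant} coefficients. Taking the $d+1$ circuits $C_j$ (total size at most $(d+1)s = O(sd)$) and attaching one output gate that forms this combination yields a circuit for $f_i$ of size $O(sd)$ and depth $\Delta + 1$; the size bounds $r \le s$, $t \le s$ used elsewhere are not needed here since the number of evaluation points is exactly $d+1$.

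The only point requiring a little care — the mild "obstacle" — is ensuring the final linear combination adds just $1$ to the depth rather than $2$: scaling each $f(\vec{x},\alpha_j)$ by $(M^{-1})_{i,j}$ and then summing would naively cost two extra layers. I would handle this with the standard convention that an addition gate may compute an arbitrary $\F$-linear combination of its inputs (equivalently, that wires may carry scalar labels), or by absorbing each scalar $(M^{-1})_{i,j}$ into the constant input gates of $C_j$, or into its top gate when that gate is a product gate. In every case a single additional layer suffices, and the remaining size and depth accounting is routine.
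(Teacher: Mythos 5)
Your proposal is correct and matches the paper's own (very brief) argument: the paper also obtains each $f_i$ as a fixed linear combination of the $d+1$ evaluations $f(\vec{x},\alpha_j)$, with the field extension used only when $|\F| \le d$. Your extra care about whether the final linear combination costs one or two layers of depth is a reasonable point of pedantry, and your resolution (absorbing the Vandermonde-inverse scalars into wires or into the output gate, a standard convention) is the intended reading.
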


An immediate consequence of \cref{lem:polynomial interpolation} is that if a polynomial $f(\vec{x}, y)$ can be computed by a small, low-depth circuit, then there is a circuit of comparable complexity that piecewise computes the \emph{leading coefficient} of $f$.
This follows from the fact that the leading coefficient of $f$ (with respect to $y$) is precisely $\sel(f_d(\vec{x}), f_{d-1}(\vec{x}), \ldots, f_0(\vec{x}))$.

\begin{lemma} \label{lem:leading coefficient}
	Let $\F$ be a field.
	Let $f \in \F[\vec{x}, y]$ be a polynomial of degree at most $d$.
	Let $f_0, \ldots, f_d \in \F[\vec{x}]$ be polynomials such that
	\[
		f(\vec{x}, y) = \sum_{i=0}^d f_i(\vec{x}) y^i.
	\]
	Suppose that $f$ can be computed by an arithmetic circuit of size $s$ and depth $\Delta$.
	Then there is a circuit of size $O(sd + d^2)$ and depth $\Delta + 2$ that piecewise computes the leading coefficient of $f$.
	If $|\F| \le d$, then the circuit computing the leading coefficient is defined over an extension $\mathbb{K} \supseteq \F$ such that $|\mathbb{K}| \ge d+1$.
\end{lemma}

A surprising application of interpolation, discovered by Ben-Or, shows that the elementary symmetric polynomials $e_d(x_1,\ldots,x_n)$ can be computed by arithmetic circuits of size $O(n^2)$ and depth 3.
This is done by applying interpolation to the polynomial
\[
	\prod_{i=1}^n (1 + y x_i) = \sum_{i=0}^n e_d(\vec{x}) y^i,
\]
which clearly can be computed by a circuit of size $O(n)$ depth 2.

\begin{theorem}[{[Ben-Or]}]
	The elementary symmetric polynomials $e_d(x_1,\ldots,x_n)$ can be computed in $\AC^0_\F$.
\end{theorem}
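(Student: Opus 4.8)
The plan is to follow the route already hinted at above: realize all the $e_d(\vec{x})$ at once as the coefficients of a single univariate polynomial in an auxiliary variable $y$, where that polynomial admits a trivial shallow circuit, and then extract its coefficients using \cref{lem:polynomial interpolation}.

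First I would introduce the generating polynomial
\[
	P(\vec{x}, y) \coloneqq \prod_{i=1}^n (1 + y x_i) \in \F[\vec{x}, y].
\]
Expanding the product and collecting powers of $y$ gives $P(\vec{x}, y) = \sum_{d=0}^n e_d(\vec{x}) y^d$, so the coefficient of $y^d$ in $P$ is precisely $e_d(\vec{x})$. At the same time, $P$ has degree $n$ in $y$ and is computed by a circuit of size $O(n)$ and depth $2$: one layer of addition gates forms the linear forms $1 + y x_i$, and a single unbounded fan-in multiplication gate multiplies them together.

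Now I would apply \cref{lem:polynomial interpolation} to $P$, with distinguished variable $y$, size parameter $s = O(n)$, depth parameter $\Delta = 2$, and ``degree'' parameter equal to $n$. The lemma yields, for each $d \in \set{0,1,\ldots,n}$, a circuit of size $O(sn) = O(n^2)$ and depth $\Delta + 1 = 3$ computing $e_d(\vec{x})$; taking all of these together gives a constant-depth, polynomial-size family, i.e.\ $e_d \in \AC^0$. (Sharing the $n+1$ evaluations $P(\vec{x}, \alpha_0), \ldots, P(\vec{x}, \alpha_n)$ --- each itself a depth-$2$ product of linear forms --- among all the output gates keeps the total size $O(n^2)$.)

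The one point that needs care is the choice of interpolation nodes: \cref{lem:polynomial interpolation} needs $n+1$ distinct scalars, so when $|\F| \le n$ we work over an extension $\mathbb{K} \supseteq \F$ with $|\mathbb{K}| \ge n+1$; since $e_d$ has coefficients in $\F$, the formula obtained over $\mathbb{K}$ still computes the correct polynomial. This is really the only obstacle, and it is already subsumed by the statement of the interpolation lemma --- everything else is immediate, which is exactly why this ``surprising'' fact has such a short proof.
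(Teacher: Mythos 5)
Your proof is correct and is essentially identical to the paper's argument: the paper also obtains Ben-Or's result by computing the generating polynomial $\prod_{i=1}^n (1 + y x_i) = \sum_{d=0}^n e_d(\vec{x})\, y^d$ with a size-$O(n)$, depth-$2$ circuit and then applying \cref{lem:polynomial interpolation} to extract each coefficient. The only addition you make --- the remark about passing to an extension field when $|\F| \le n$ --- is a harmless refinement already built into the statement of the interpolation lemma.
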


The next algorithms we quote perform linear algebra with structured matrices.
They take as input Toeplitz matrices (or equivalently, Hankel matrices), which we now define.

\begin{definition}
	Let $A \in \F^{n \times n}$ be an $n \times n$ matrix.
	We say that $A$ is a \emph{Toeplitz} matrix if there are field elements $\alpha_{-n+1},\ldots,\alpha_{n-1} \in \F$ such that $A_{i,j} = \alpha_{i-j}$.
	We say that $A$ is a \emph{Hankel} matrix if there are field elements $\alpha_1,\ldots,\alpha_{2n-1} \in \F$ such that $A_{i,j} = \alpha_{i+j-1}$.
\end{definition}

As an example, the matrices
\[
	T = \begin{pmatrix}
		x_3 & x_4 & x_5 \\
		x_2 & x_3 & x_4 \\
		x_1 & x_2 & x_3
	\end{pmatrix}
	\qquad 
	H = \begin{pmatrix}
		x_1 & x_2 & x_3 \\
		x_2 & x_3 & x_4 \\
		x_3 & x_4 & x_5
	\end{pmatrix}
\]
are $3 \times 3$ Toeplitz and Hankel matrices, respectively.
It is easy to see that by reversing the order of the rows, a Toeplitz matrix becomes Hankel and vice-versa.
A beautiful algorithm of \textcite{Bini84} shows that triangular Toeplitz matrices can be inverted in $\AC^0_\F$.
(The result in \cite{Bini84} is stated as an $\NC^1_\F$ algorithm, but it is clear that $\AC^0_\F$ suffices.)

\begin{theorem}[\cite{Bini84}]
	Let $X \in \F^{n \times n}$ be a triangular Toeplitz matrix.
	Then the inverse $X^{-1}$ can be computed in $\AC^0_\F$.
\end{theorem}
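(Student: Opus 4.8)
The plan is to reduce the inversion of a triangular Toeplitz matrix to computing the reciprocal of a power series modulo $t^n$, and then to carry out that computation in constant depth by passing to point-value representation and interpolating back. Say $X$ is lower triangular (otherwise replace it by $X^T$, invert, and transpose the result, which costs no depth). Such an $X$ is determined by its first column $(a_0,a_1,\ldots,a_{n-1})^T$, and, writing $J$ for the nilpotent Toeplitz matrix with $1$'s on the first subdiagonal and zeros elsewhere, we have $X = \sum_{k=0}^{n-1} a_k J^k$ with $J^n = 0$. The assignment $J \mapsto t$ identifies the (commutative) ring of lower triangular Toeplitz matrices with $\F[t]/(t^n)$, so if $b(t) = \sum_{i=0}^{n-1} b_i t^i$ satisfies $a(t)b(t) \equiv 1 \pmod{t^n}$ for $a(t) = \sum_{i=0}^{n-1} a_i t^i$, then $X^{-1}$ is exactly the lower triangular Toeplitz matrix with first column $(b_0,\ldots,b_{n-1})^T$. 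Since $\det(X) = a_0^n$, the matrix $X$ is invertible precisely when $a_0 \neq 0$; in the piecewise model we first test whether $a_0 = 0$ and otherwise proceed under the assumption $a_0 \neq 0$. It thus suffices to compute the coefficients of $1/a(t) \bmod t^n$ in $\AC^0$ from the $a_i$.

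Write $a(t) = a_0\bigl(1 - u(t)\bigr)$, where $u(t) \coloneqq 1 - a(t)/a_0$ has zero constant term and is therefore divisible by $t$. The geometric series expansion gives
\[
	\frac{1}{a(t)} = \frac{1}{a_0}\sum_{k \ge 0} u(t)^k,
\]
and since $t^k \mid u(t)^k$, every summand with $k \ge n$ vanishes modulo $t^n$. Hence
\[
	\frac{1}{a(t)} \equiv \frac{1}{a_0}\, p(t) \pmod{t^n}, \qquad p(t) \coloneqq \sum_{k=0}^{n-1} u(t)^k,
\]
and the task reduces to computing the coefficients of $p(t)$ of degree $< n$ and scaling by $1/a_0$.

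The polynomial $p(t)$ has degree at most $(n-1)^2$, so it is determined by its values at any $N \coloneqq (n-1)^2 + 1$ distinct points $\beta_1,\ldots,\beta_N$ (passing to a field extension of size at least $N$ if $\F$ is too small, exactly as in \cref{lem:polynomial interpolation}). For each $j$, the scalar $u(\beta_j) = 1 - a(\beta_j)/a_0$ is computed from the inputs $a_0,\ldots,a_{n-1}$ by a constant-depth circuit (a fixed linear form in $a_1,\ldots,a_{n-1}$ divided by $a_0$), and then
\[
	p(\beta_j) = \sum_{k=0}^{n-1} u(\beta_j)^k
\]
is computed by one unbounded fan-in multiplication gate for each power $u(\beta_j)^k$ together with a single unbounded fan-in addition gate. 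So $p(\beta_1),\ldots,p(\beta_N)$ are jointly computed by a circuit of size $n^{O(1)}$ and constant depth, and by \cref{lem:polynomial interpolation} (equivalently, by a fixed linear combination of these values) each coefficient of $p(t)$ is recovered in constant depth and polynomial size. Reading off the coefficients of $1,t,\ldots,t^{n-1}$, multiplying by $1/a_0$, and copying the results along the diagonals of a lower triangular Toeplitz matrix produces $X^{-1}$.

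The only genuine difficulty is computing $\sum_{k=0}^{n-1} u(t)^k$ in constant depth: forming a single high power $u(t)^k$ by iterated multiplication costs depth $\Theta(\log k)$, while writing $u(t)^k$ out as a sum of monomials blows up to exponentially many terms when $k$ and the degree are both linear in $n$. The resolution is to avoid ever multiplying polynomials directly — in point-value representation, raising a scalar to the $k$\ts{th} power is a single unbounded fan-in product, and the only polynomial operations used, namely evaluation and interpolation, are themselves constant-depth and polynomial-size. Everything else is bookkeeping.
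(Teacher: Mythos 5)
Your proof is correct. At the point where the paper states this theorem it simply cites \textcite{Bini84} and offers no proof; the paper does, however, reprove the result by a different route later, as a corollary of \cref{thm:sylvester adjugate}: it embeds the triangular Toeplitz matrix $A$ as the lower-left $n \times n$ block of the Sylvester matrix $\Syl(f, x^{n-1})$ with $f(x) = \sum_{i=0}^{n-1} a_i x^i$, and then inverts that Sylvester matrix using the machinery of symmetric functions of roots. Your argument is genuinely different and more direct: you identify the commutative ring of lower triangular Toeplitz matrices with $\F[t]/(t^n)$, reduce inversion to computing $a(t)^{-1} \bmod t^n$, expand this as the truncated geometric series $a_0^{-1}\sum_{k=0}^{n-1} u(t)^k$, and evaluate in constant depth by switching to point-value form at $O(n^2)$ points---so that each $u(\beta_j)^k$ is a scalar power computed by a single unbounded fan-in gate---before interpolating back. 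Your route buys elementariness and generality: it works over any field (after passing to a small extension to supply enough interpolation points) and has no characteristic restriction, whereas the paper's Sylvester-matrix derivation requires $\ch(\F) = 0$ or $\ch(\F) > n$; the paper's version, on the other hand, is included precisely to show that their more general Sylvester-inversion machinery subsumes this result as a special case. It is also worth noting that the central trick in your proof---a truncated geometric series combined with evaluation-interpolation to sidestep iterated polynomial multiplication---is exactly the device the paper itself uses in \cref{lem:p_k from e_k algorithm} to compute the Newton series, so you have independently rediscovered a technique at the heart of their toolkit.
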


As an application of this algorithm, \textcite{BP85} showed that polynomial division with remainder can be performed in $\AC^0_\F$.
(Once again, the result in \textcite{BP85} is stated as an $\NC^1_\F$ algorithm, but $\AC^0_\F$ suffices.)

\begin{theorem}[\cite{BP85}]
	Let $f, g \in \F[x]$ be univariate polynomials given by their coefficients.
	Let $q, r \in \F[x]$ be the unique polynomials such that $f = q g + r$ and $\deg(r) < \deg(g)$.
	Then the coefficients of $q$ and $r$ can be computed in $\AC^0_\F$.
\end{theorem}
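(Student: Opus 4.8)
The plan is to reduce polynomial division to the inversion of a triangular Toeplitz matrix and then invoke Bini's theorem, which was just quoted. Write $n \coloneqq \deg(f)$ and $m \coloneqq \deg(g)$; since both polynomials are monic these degrees are part of the input format, and we may assume $n \ge m$ (otherwise $q = 0$, $r = f$, and there is nothing to do). Then $q$ has degree $n-m$ and $r$ has degree at most $m-1$. For a polynomial $p$ and an integer $k \ge \deg(p)$, write $\rev_k(p)(x) \coloneqq x^k\,p(1/x)$ for the degree-$k$ reversal of $p$; concretely, $\rev_k$ just reverses the length-$(k+1)$ coefficient vector of $p$, so $p \mapsto \rev_k(p)$ is trivially an $\AC^0$ operation, as is $\rev_k \circ \rev_k = \mathrm{id}$ on polynomials of degree at most $k$.

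First I would reverse the identity $f = qg + r$. Substituting $x \mapsto 1/x$ and multiplying through by $x^n$ gives, after regrouping powers of $x$,
\[
	\rev_n(f) \;=\; \rev_{n-m}(q)\cdot \rev_m(g) \;+\; x^{\,n-m+1}\cdot \rev_{m-1}(r),
\]
since every monomial coming from $r$ has $x$-degree at least $n-(m-1) = n-m+1$. Reducing modulo $x^{\,n-m+1}$ kills the remainder term, so
\[
	\rev_n(f) \;\equiv\; \rev_{n-m}(q)\cdot \rev_m(g) \pmod{x^{\,n-m+1}}.
\]
Because $g$ is monic, $\rev_m(g)$ has constant term $1$ and is therefore a unit in $\F\llb x \rrb$. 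Hence $\rev_{n-m}(q)$ is exactly the truncation to degree $n-m$ of $\rev_n(f)\cdot \rev_m(g)^{-1}$ (the truncation is exact because $\rev_{n-m}(q)$ has degree at most $n-m$).

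The key step is to compute the coefficients $h_0,\ldots,h_{n-m}$ of $h \coloneqq \rev_m(g)^{-1} \bmod x^{\,n-m+1}$ in $\AC^0$. Writing $\rev_m(g) = \sum_{i=0}^m c_i x^i$ with $c_0 = 1$, the defining congruence $\rev_m(g)\cdot h \equiv 1 \pmod{x^{\,n-m+1}}$ is precisely the lower-triangular Toeplitz linear system $G\vec{h} = \vec{e}_0$, where $G \in \F^{(n-m+1)\times(n-m+1)}$ has entries $G_{i,j} = c_{i-j}$ (with $c_\ell = 0$ for $\ell < 0$ or $\ell > m$), $G$ has unit diagonal, and $\vec{e}_0 = (1,0,\ldots,0)^{\top}$. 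By Bini's theorem, $G^{-1}$, and hence $\vec{h} = G^{-1}\vec{e}_0$, is computable in $\AC^0$. This is the place where a naive approach looks sequential — the obvious recursion for $h_k$ in terms of $h_0,\ldots,h_{k-1}$, or a Newton iteration, has large depth — and recognizing the triangular Toeplitz structure so that Bini's algorithm applies is the one genuinely nontrivial observation.

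It remains to assemble the output. Multiplying $\rev_n(f)$ by $h$ and keeping only the coefficients of $x^0,\ldots,x^{n-m}$ yields $\rev_{n-m}(q)$; reversing that coefficient vector produces $q$. Then $r = f - qg$ is obtained by one polynomial multiplication and one subtraction. Every stage here — reversals, truncations, polynomial multiplication and addition, and the Toeplitz solve — is in $\AC^0$, and concatenating the constantly many stages (either directly, or via \cref{lem:compose piecewise circuit}) keeps the whole computation in $\AC^0$. The only real care needed is the degree bookkeeping around the reversals and the verification that $G\vec h=\vec e_0$ is indeed triangular Toeplitz; extending to genuinely non-monic inputs additionally requires zero-tests on the leading coefficients of $f$ and $g$ to recover their degrees, which is exactly what the piecewise-circuit model supplies.
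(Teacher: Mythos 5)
Your proof is correct, but it reproduces the classical Bini--Pan argument rather than the route the paper itself takes. The reversal identity, the reduction modulo $x^{n-m+1}$, the observation that $\rev_m(g)$ is a unit in $\F\llbracket x\rrbracket$, and the identification of the power-series inverse with a unit lower-triangular Toeplitz system are all sound, and invoking Bini's theorem then finishes; the degree bookkeeping and the handling of the case $n<m$ are fine. That is precisely the algorithm of \textcite{BP85}, which is what the theorem citation indicates. The paper, however, explicitly states that its own algorithms for this problem are of a ``very different flavor'' from \textcite{Bini84} and \textcite{BP85}, and its alternative proof appears as \cref{lem:remainder}: the remainder $r$ is recovered by Lagrange-interpolating $f$ at the (unknown) roots of $g$, implemented via the elementary-symmetric machinery of \cref{lem:esym over roots}; the interpolated expression is first computed in the normalized form $\disc(g)\cdot r(x)$ to avoid poles on the non-squarefree locus, one argues directly (not via Bini) that the map $(f,g)\mapsto r$ is polynomial, and Strassen's division elimination (\cref{thm:ac0 division elimination}) then removes the division by $\disc(g)$. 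Your route buys simplicity and directness given Bini's theorem as a black box; the paper's route is self-contained in the sense that it does not pass through Toeplitz inversion at all --- indeed the paper reverses the dependency and \emph{derives} triangular Toeplitz inversion as a corollary of its Sylvester-adjugate algorithm (\cref{thm:sylvester adjugate}).
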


Later in \cref{sec:sylvester and bezout}, we will also describe $\AC^0_\F$ algorithms for inverting triangular Toeplitz matrices and for polynomial division with remainder.
Our algorithms employ a different approach than that used by \textcite{Bini84} and \textcite{BP85}.

The last algorithm we quote is Strassen's theorem on division elimination in arithmetic circuits.
Although this result applies to general arithmetic circuits, we cite it in a form that is useful for the low-depth regime.
For completeness, we provide a proof that Strassen's result preserves the depth of the circuit.

\begin{theorem}[\cite{Strassen73b}] \label{thm:ac0 division elimination}
	Let $\F$ be an infinite field and let $f(\vec{x}) \in \F[\vec{x}]$ be a multivariate polynomial of degree $n^{O(1)}$.
	Suppose $f(\vec{x})$ can be computed in $\AC^0_\F$.
	Then there is a division-free $\AC^0_\F$ circuit that computes $f(\vec{x})$.
\end{theorem}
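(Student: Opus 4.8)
The plan is to run Strassen's classical division-elimination argument \cite{Strassen73b}, checking that each step costs only $O(1)$ additional depth; the one place where this is not routine is the treatment of division gates, and that is also where Bini's theorem (recalled just above) enters. Let $C$ be an $\AC^0$ circuit with divisions computing $f$, of size $s = n^{O(1)}$ and depth $\Delta = O(1)$, and set $D \coloneqq \deg f = n^{O(1)}$. First I would \emph{shift} the inputs: choose a point $\vec{a}$ at which every gate of $C$ is defined (no divisor of a division gate vanishes there), which exists because each divisor is a nonzero rational function and $\F$ is large enough---the small-field case is handled by passing to an extension, as discussed at the end. Replace each input $x_i$ by $x_i + a_i$, and then \emph{introduce a fresh variable $t$}, replacing each $x_i + a_i$ by $t x_i + a_i$; call the result $C''$. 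Since $C$ is defined at $\vec{a}$, every denominator gate of $C''$ has nonzero constant term as a power series in $t$ and is hence a unit in $\F[\vec{x}]\llb t \rrb$, so every gate of $C''$ computes a power series in $t$ with coefficients in $\F[\vec{x}]$. The output gate computes $\hat{f}(t\vec{x}) = \sum_{k \geq 0} \hat{f}_k(\vec{x}) t^k$, where $\hat{f}(\vec{y}) \coloneqq f(\vec{y} + \vec{a})$ and $\hat{f}_k$ is the degree-$k$ homogeneous part of $\hat{f}$; as $\deg \hat{f} = D$ we have $\hat{f}_k = 0$ for $k > D$. It therefore suffices to compute $\hat{f}_0, \ldots, \hat{f}_D$ without divisions, after which $\sum_{k=0}^{D} \hat{f}_k(\vec{y})$ equals $\hat{f}(\vec{y}) = f(\vec{y} + \vec{a})$, and the substitution $\vec{y} \mapsto \vec{x} - \vec{a}$ recovers $f(\vec{x})$.

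I would compute these homogeneous parts by an induction on $C''$ in topological order: for each gate $g$ I build a \emph{division-free} constant-depth polynomial-size circuit outputting the truncation $([g]_0, \ldots, [g]_D)$ of $g$'s power series in $t$. Input and constant gates are immediate, and a sum gate $g = \sum_\ell g_\ell$ gives $[g]_k = \sum_\ell [g_\ell]_k$. For a product gate $g = \prod_{\ell = 1}^{m} g_\ell$ the naive convolution has exponentially many terms, so instead I would \emph{interpolate in $t$}: the polynomial $Q(t) \coloneqq \prod_{\ell=1}^{m} \bigl( \sum_{k=0}^{D} [g_\ell]_k t^k \bigr)$ has degree at most $mD$ and its coefficients of $t^0, \ldots, t^D$ are exactly $[g]_0, \ldots, [g]_D$, so by \cref{lem:polynomial interpolation} (over an extension if $\F$ is too small) they are all computed from the $[g_\ell]_k$ division-freely, with $O(1)$ extra layers and $\mathrm{poly}(m, D)$ extra size. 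For a division gate $g = g_1 / g_2$, the constant term $[g_2]_0 = g_2(\vec{a})$ is a \emph{fixed nonzero field element}, so after dividing through by it the task is to invert $1 + \bigl( g_2(t\vec{x})/g_2(\vec{a}) - 1 \bigr)$ modulo $t^{D+1}$; the coefficients of this inverse form the first column of the inverse of a \emph{unit} lower-triangular Toeplitz matrix, hence are a polynomial function of the subdiagonal entries $[g_2]_1/g_2(\vec{a}), \ldots, [g_2]_D/g_2(\vec{a})$, and by Bini's $\AC^0$ algorithm for inverting triangular Toeplitz matrices (\cite{Bini84}) this polynomial map is computed division-freely in constant depth and $\mathrm{poly}(D)$ size; I then set $([g]_k)_k$ to be the truncated product of $([g_1]_k)_k$ with this inverse, via the product step. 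Composing these gadgets over the depth-$\Delta$ circuit $C''$, summing the output tuple, and precomposing with the input shift $x_i \mapsto x_i - a_i$ yields a division-free circuit for $f$ of depth $O(\Delta) = O(1)$ and size $s \cdot \mathrm{poly}(s, D) = n^{O(1)}$.

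I expect the division step to be the main obstacle. Strassen's classical reduction eliminates a division via an \emph{iterated} power-series inversion whose depth is $\Theta(\log D)$---whether one uses Newton iteration or repeated squaring of the geometric series---which is fatal in the constant-depth regime; moreover the direct closed form for the coefficients of a power-series reciprocal is a sum of exponentially many monomials, so that does not help either. What saves constant depth is the observation that truncated power-series inversion is exactly inversion of a triangular Toeplitz matrix, which Bini's algorithm performs in constant depth; this is precisely why that result is recalled immediately before the theorem. A secondary issue is that $\AC^0$ allows product (and sum) gates of unbounded fan-in, so a gate-by-gate binarization would blow the depth up to $\Theta(\log s)$; the interpolation trick above forms a truncated product of arbitrarily many power series in $O(1)$ layers and avoids this. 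The only remaining points are the mild field-size requirements---a good shift point $\vec{a}$, and $mD + 1$ interpolation nodes---which, over a small finite field, are obtained by working over a finite extension $\mathbb{K} \supseteq \F$ and then descending to $\F$ by expanding $\mathbb{K}$ over an $\F$-basis; I would record this as a routine remark, since in all our applications $\F$ is infinite or already large enough.
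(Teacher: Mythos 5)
The paper states this theorem with only a citation to Strassen and gives no proof of its own, so there is no in-paper argument to compare against. Your overall architecture (shift to a good point $\vec{a}$, scale by a fresh variable $t$, propagate truncated power series gate by gate, and handle unbounded-fan-in products by interpolating $Q(t)$ in $t$) is exactly the right $\AC^0$-preserving version of Strassen's argument, and the product-gate interpolation gadget is the correct way to keep the depth constant. The one step that does not go through as written is the division gate. You assert that the polynomial map from the subdiagonal entries of a \emph{unit} lower-triangular Toeplitz matrix to the first column of its inverse is ``computed division-freely'' by Bini's algorithm, but the version of Bini's theorem quoted in the paper only places triangular Toeplitz inversion in $\AC^0$, and $\AC^0$ circuits in this paper are permitted to contain division gates. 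If Bini's circuit used divisions, stripping them out would require precisely the theorem you are proving, so there is a potential circularity. What you actually need is the stronger internal fact that Bini's $\epsilon$-circulant construction becomes genuinely division-free once the diagonal is normalized to $1$; I believe this is true, but your proof neither establishes it nor can read it off the cited theorem statement, so as written there is a gap.

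Fortunately the gap closes itself, in a way that also simplifies the argument: you do not need Bini at all, because the interpolation-in-$t$ gadget you already built for product gates suffices for division as well. You dismiss the geometric series $\sum_{k=0}^{D}(-u)^k$ because ``repeated squaring'' of $u$ costs $\Theta(\log D)$ depth, but that objection only applies to fan-in-two circuits. With unbounded fan-in, compute each truncated power $u(t)^k \bmod t^{D+1}$ for $k \le D$ independently: evaluate $u(t_j)$ at $kD+1$ interpolation nodes (one linear layer over the $[g_2]_i$), raise each evaluation to the $k$-th power with a single fan-in-$k$ product gate (one multiplicative layer), and interpolate (one more linear layer). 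Each $u(t)^k \bmod t^{D+1}$ is thus obtained division-freely in depth $O(1)$ and size $\poly(D)$, and summing the $D+1$ of them with alternating signs inverts $1+u(t)$ modulo $t^{D+1}$ in one further layer. Substituting this in place of the appeal to Bini makes the proof self-contained and correct.
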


\begin{proof}
	We first push the division gate to the top of the circuit.
	We do this by splitting each gate $v$ into two gates $(v, \text{num})$ and $(v, \text{den})$, with the intended meaning that $(v, \text{num})$ and $(v, \text{den})$ compute polynomials $g, h \in \F[\vec{x}]$ such that the gate $v$ computes the rational function $g/h$.
	This is straightforward to do for all input gates of the circuit.
	If $v = \prod_{i=1}^m u_i$ is a product gate with children $u_1, \ldots, u_m$, then we set
	\begin{align*}
		(v, \text{num}) &= \prod_{i=1}^m (u_i, \text{num}) \\
		(v, \text{den}) &= \prod_{i=1}^m (u_i, \text{den}).
	\end{align*}
	If $v = \sum_{i=1}^m u_i$ is a sum gate, then we instead set
	\begin{align*}
		(v, \text{num}) &= \sum_{i=1}^m (u_i, \text{num}) \prod_{j \neq i} (u_j, \text{den}) \\
		(v, \text{den}) &= \prod_{i=1}^m (u_i, \text{den}).
	\end{align*}
	Most of this rewiring can be done in a straightforward manner that only increases the size of the circuit by a polynomial factor and the depth by a constant factor.
	The only nontrivial case is to handle $(v, \text{num})$ when $v = \sum_{i=1}^m u_i$ is an addition gate.
	In this case, observe that the desired value of $(v, \text{num})$ is given by the coefficient of $t$ in the polynomial
	\[
		\prod_{i=1}^m \del{t \cdot (u_i, \text{num}) + (u_i, \text{den})},
	\]
	which can be computed by a circuit of size $m^{O(1)}$ and constant depth using \cref{lem:polynomial interpolation}.

	Now we have a division-free $\AC^0_\F$ circuit that computes two polynomials $g, h \in \F[\vec{x}]$ such that $f = g/h$ and $h$ is nonzero.
	We now eliminate this final division gate.
	By translating the variables and rescaling $h$ if necessary, we may assume that $h(\vec{0}) = 1$.
	Then $1/h(\vec{x})$ admits the power series expansion
	\[
		\frac{1}{h(\vec{x})} = \frac{1}{1 - (1 - h(\vec{x}))} = \sum_{i=0}^\infty \del{1 - h(\vec{x})}^i,
	\]
	so we have the equality of power series
	\[
		f(\vec{x}) = g(\vec{x}) \sum_{i=0}^\infty (1 - h(\vec{x}))^i.
	\]
	Because $1 - h(\vec{x})$ is a polynomial with no constant term, the polynomial $(1 - h(\vec{x}))^i$ contains only monomials of degree $i$ or higher.
	Letting $d \coloneqq \deg(f)$, truncating the above power series to $d$ terms yields the equality
	\[
		g(\vec{x}) \sum_{i=0}^d (1 - h(\vec{x}))^i = f(\vec{x}) + q(\vec{x})
	\]
	for some polynomial $q(\vec{x})$ that only contains monomials of degree $d+1$ or higher.
	This implies that $f(\vec{x})$ corresponds to the homogeneous components of $g(\vec{x}) \sum_{i=0}^d (1 - h(\vec{x}))$ of degree up to and including $d$. 

	To compute $f(\vec{x})$, we first construct a circuit that computes $g(\vec{x}) \sum_{i=0}^d (1 - h(\vec{x}))^i$, which can be done by adding $O(d)$ gates and constant depth to the circuit that computes $g$ and $h$.
	The homogeneous components of this polynomial can then computed by interpolating the coefficients of $t$ in the polynomial $g(t x_1, \ldots, t x_n) \sum_{i=0}^d (1 - h(t x_1, \ldots, t x_n))^i$, which can be done in $\AC^0_\F$ using \cref{lem:polynomial interpolation}.
	Here, we implicitly use the fact that the circuit computing $g$ and $h$ has size $s^{O(1)}$ and depth $O(1)$, where $s$ is the size of the circuit with divisions that computes $f$.
	This implies that $g$ and $h$ have degree bounded by $s^{O(1)}$, so the polynomial $g(\vec{x}) \sum_{i=0}^d (1 - h(\vec{x}))^i$ is of degree at most $d s^{O(1)}$, so the interpolation of \cref{lem:polynomial interpolation} can be performed efficiently.
\end{proof}

\begin{remark} \label{remark: division elimination}
	Although the statement of \cref{thm:ac0 division elimination} assumes the underlying field is infinite, divisions can likewise be eliminated over sufficiently-large finite fields.
	In the proof of \cref{thm:ac0 division elimination}, after pushing the division to the top of the circuit, we have a division-free $\AC^0_\F$ circuit that computes polynomials $g, h \in \F[\vec{x}]$ such that $f = g / h$ and $h$ is nonzero.
	To eliminate this final division, it was enough to find a point $\vec{\alpha} \in \F^n$ where $h(\vec{\alpha}) \neq 0$.
	Such a point always exists when $\F$ is infinite or a sufficiently-large finite field.
	Over small finite fields, it may not be possible to find a point where $h$ is nonzero.
	However, when we later apply \cref{thm:ac0 division elimination}, we will know precisely what the polynomial $h$ is.
	We will use this information to explicitly exhibit a point where $h$ is nonzero, permitting us to apply \cref{thm:ac0 division elimination} even when working over a small finite field.
\end{remark}

\subsection{The Euclidean Algorithm and the Resultant}

This subsection recalls basic definitions and results related to greatest common divisors and the Euclidean algorithm.
A more thorough treatment of this material can be found in the delightful book of \textcite{vzGG13}.
We start with the definition of the B\'{e}zout coefficients of a pair of polynomials.

\begin{definition} \label{def:bezout coefficients}
	Let $f(x) = \sum_{i=0}^n f_i x^i$ and $g(x) = \sum_{i=0}^m g_i x^i$ be univariate polynomials of degrees $n$ and $m$, respectively.
	Let $d \coloneqq \deg(\gcd(f,g))$.
	The \emph{B\'{e}zout coefficients} of $f$ and $g$ are the unique polynomials $a, b \in \F[x]$ that satisfy
	\begin{enumerate}
		\item
			$\deg(a) < \deg(g) - d$,
		\item
			$\deg(b) < \deg(f) - d$, and
		\item
			$a(x)f(x) + b(x) g(x) = \gcd(f,g)$. \qedhere
	\end{enumerate}
\end{definition}

Note that B\'{e}zout coefficients are unchanged by multiplying or dividing $f$ and $g$ by a common factor.
Next, we recall the definition of the Sylvester matrix of a pair of polynomials.

\begin{definition} \label{def:sylvester matrix}
	Let $f(x) = \sum_{i=0}^n f_i x^i$ and $g(x) = \sum_{i=0}^m g_i x^i$ be univariate polynomials of degrees $n$ and $m$, respectively.
	The \emph{Sylvester matrix} of $f$ and $g$ is the $(n + m) \times (n + m)$ matrix given by
	\[
		\Syl(f,g) \coloneqq \begin{pmatrix}
			f_n & & & & g_m & & & & & \\
			f_{n-1} & f_n & & & g_{m-1} & g_m & & & & \\
			\vdots & \vdots & \ddots & & \vdots & \vdots & \ddots & & & \\
			\vdots & \vdots & & f_n & g_1 & \vdots & & \ddots & & \\
			\vdots & \vdots & & f_{n-1} & g_0 & \vdots & & & \ddots & \\
			\vdots & \vdots & & \vdots & & g_0 & & & & g_m \\
			f_0 & \vdots & & \vdots & & & \ddots & & & \vdots \\
			& f_0 & & \vdots & & & & \ddots & & \vdots \\
			& & \ddots & \vdots & & & & & \ddots & \vdots \\
			& & & f_0 & & & & & & g_0
		\end{pmatrix}.
	\]
	The \emph{resultant} of $f$ and $g$ is $\res(f,g) \coloneqq \det \Syl(f,g)$.
\end{definition}

Let $f, g \in \F[x]$ be polynomials of degrees $n$ and $m$, respectively.
For a natural number $d \in \naturals$, denote by $\F[x]_{< d}$ the space of univariate polynomials of degree less than $d$.
The Sylvester matrix $\Syl(f,g)$ corresponds to the linear map
\begin{align*}
	\F[x]_{<m} \times \F[x]_{<n} & \to \F[x]_{<n+m} \\
	(a, b) & \mapsto af+bg
\end{align*}
written in the monomial basis $\set{1, x, x^2, \ldots}$.
One can show that this map is an isomorphism of vector spaces if and only if $\gcd(f,g) = 1$ (see, e.g., \cite[Section 6.3]{vzGG13}).
In particular, the resultant $\res(f,g) = \det \Syl(f,g)$ characterizes when two polynomials share a common factor.

\begin{lemma}[{see, e.g., \cite[Corollary 6.17]{vzGG13}}] \label{lem:resultant properties}
	Let $f, g \in \F[x]$.
	Then $\gcd(f, g) = 1$ if and only if $\res(f, g) \neq 0$.
	Moreover, $\gcd(f,g) = 1$ if and only if $\Syl(f,g)$ is invertible.
\end{lemma}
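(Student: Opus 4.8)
The plan is to leverage the identification, recalled just above the lemma, of $\Syl(f,g)$ with the linear map $\phi \colon \F[x]_{<m} \times \F[x]_{<n} \to \F[x]_{<n+m}$ sending $(a,b)$ to $af+bg$, written in the monomial basis. Since both the domain and the codomain are $(n+m)$-dimensional $\F$-vector spaces, $\phi$ is an isomorphism if and only if it is injective, if and only if its matrix $\Syl(f,g)$ has nonzero determinant, which by definition is $\res(f,g)$. Thus the conditions ``$\res(f,g) \ne 0$'', ``$\Syl(f,g)$ is invertible'', and ``$\phi$ is injective'' are all equivalent, and it suffices to prove that $\phi$ is injective precisely when $\gcd(f,g) = 1$.

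For the ``only if'' direction I would argue by contrapositive: if $h \coloneqq \gcd(f,g)$ has $\deg h = d \ge 1$, write $f = h\tilde f$ and $g = h\tilde g$, so that $\deg \tilde f = n - d < n$ and $\deg \tilde g = m - d < m$. Then $(\tilde g, -\tilde f)$ is a nonzero element of the domain of $\phi$ with $\phi(\tilde g, -\tilde f) = \tilde g f - \tilde f g = 0$, so $\phi$ is not injective. For the ``if'' direction, suppose $\gcd(f,g) = 1$ and $af + bg = 0$ with $\deg a < m$ and $\deg b < n$. Then $f \mid bg$, and since $f$ is coprime to $g$, we get $f \mid b$; as $\deg b < n = \deg f$, this forces $b = 0$, whence $af = 0$ and therefore $a = 0$. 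Hence $\ker \phi = 0$. Combining the two directions with the first paragraph proves the lemma.

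There is no genuine obstacle here; the only points requiring care are the degree bookkeeping and the degenerate cases in which $f$ or $g$ is constant (so one of the factor spaces is trivial), which can be handled separately or sidestepped using the standing convention that inputs are monic of positive degree. Alternatively, one may simply invoke the fact stated just above the lemma, that $\phi$ is an isomorphism if and only if $\gcd(f,g) = 1$ (see \cite[Section 6.3]{vzGG13}), in which case the lemma is immediate from the chain of equivalences in the first paragraph.
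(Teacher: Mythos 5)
Your proof is correct and takes exactly the route the paper points to: the paper does not prove this lemma itself but cites \cite[Corollary 6.17]{vzGG13} and the surrounding discussion in \cite[Section 6.3]{vzGG13}, which use precisely the identification of $\Syl(f,g)$ with the map $(a,b)\mapsto af+bg$ and the kernel/injectivity argument you give.
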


From the description of the Sylvester matrix as the linear map $(a, b) \mapsto af + bg$, it is easy to see that the inverse of $\Syl(f,g)$ allows us to recover the B\'{e}zout coefficients of $f$ and $g$ when $\gcd(f,g) = 1$.
More generally, for any $\ell \in [n+m]$, the entries of the $\ell$\ts{th} column of $\Syl(f,g)^{-1}$ correspond to the coefficients of polynomials $a_\ell, b_\ell \in \F[x]$ such that $a_\ell f + b_\ell g = x^{n+m-\ell}$.
We record this observation in the following lemma.

\begin{lemma} \label{lem:sylvester inverse}
	Let $f, g \in \F[x]$ be univariate polynomials of degrees $n$ and $m$, respectively.
	Assume that $\gcd(f,g) = 1$, so $\Syl(f,g)^{-1}$ is invertible.
	For $\ell \in [n+m]$, the $\ell$\ts{th} column of $\Syl(f,g)^{-1}$ corresponds to the coefficients of polynomials $a_\ell, b_\ell \in \F[x]$ such that
	\begin{enumerate}
		\item
			$\deg(a_\ell) < \deg(g)$, 
		\item
			$\deg(b_\ell) < \deg(f)$, and 
		\item
			$a_\ell f + b_\ell g = x^{n+m-\ell}$.
	\end{enumerate}
\end{lemma}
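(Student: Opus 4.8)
The plan is to unwind the identification, recalled just after \cref{def:sylvester matrix}, of $\Syl(f,g)$ with the matrix (in the monomial bases) of the linear map
\[
	\phi \colon \F[x]_{<m} \times \F[x]_{<n} \to \F[x]_{<n+m}, \qquad (a,b) \mapsto af + bg .
\]
Since $\gcd(f,g) = 1$, \cref{lem:resultant properties} tells us that $\Syl(f,g)$ is invertible, equivalently that $\phi$ is an isomorphism. In particular every polynomial of degree $< n+m$ has a unique preimage under $\phi$; applying this to the monomial $x^{n+m-\ell}$ for $\ell \in [n+m]$ yields a unique pair $(a_\ell, b_\ell)$ with $a_\ell f + b_\ell g = x^{n+m-\ell}$, and by definition of the domain of $\phi$ this pair automatically satisfies $\deg(a_\ell) < m = \deg(g)$ and $\deg(b_\ell) < n = \deg(f)$, giving properties (1)--(3).

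Next I would pin down the basis conventions so as to match a column of $\Syl(f,g)^{-1}$ with such a preimage. Reading off \cref{def:sylvester matrix}, the rows of $\Syl(f,g)$ are indexed by the monomials $x^{n+m-1}, x^{n+m-2}, \ldots, x^0$ from top to bottom, so the $\ell$\ts{th} standard basis vector $e_\ell \in \F^{n+m}$ is precisely the coordinate vector of the monomial $x^{n+m-\ell}$ in the codomain of $\phi$. The $\ell$\ts{th} column of $\Syl(f,g)^{-1}$ is exactly $\Syl(f,g)^{-1} e_\ell$, hence the coordinate vector of $\phi^{-1}(x^{n+m-\ell}) = (a_\ell, b_\ell)$. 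Splitting this vector into its first $m$ entries and its last $n$ entries recovers the coefficients of $a_\ell$ and of $b_\ell$ respectively, which are the polynomials produced in the previous paragraph.

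The only point requiring care is the bookkeeping: verifying that the column ordering of $\Syl(f,g)$ (first the shifts $x^{m-j}f$ for $j = 1, \ldots, m$, then the shifts $x^{n-j}g$ for $j = 1, \ldots, n$) and its row ordering (descending powers of $x$) are consistent with the stated correspondence between columns of $\Syl(f,g)^{-1}$ and the pairs $(a_\ell, b_\ell)$. This is a routine verification and I do not anticipate any genuine obstacle; once the conventions are fixed, the lemma is immediate from the invertibility of $\phi$ established via \cref{lem:resultant properties}.
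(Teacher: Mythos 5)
Your proof is correct and follows exactly the route the paper takes: the paper does not give a formal proof of this lemma but instead records it as an observation following the description of $\Syl(f,g)$ as the matrix of the map $(a,b) \mapsto af+bg$ and the fact (from \cref{lem:resultant properties}) that this map is an isomorphism precisely when $\gcd(f,g)=1$, which is exactly what you formalize.
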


If the polynomials $f$ and $g$ happen to split into linear factors (as they do over an algebraically closed field), then we can write the resultant $\res(f,g)$ as a simple function of the roots of $f$ and $g$.
Although we will not directly compute the roots of $f$ or $g$, the following identity will be crucial in designing an $\AC^0$ circuit to compute the resultant.

\begin{lemma}[{see, e.g., \cite[Section 3.1]{CLO05}}] \label{lem:resultant as product}
	Let $f(x) = \prod_{i=1}^n (x - \alpha_i)$ and $g(x) = \prod_{i=1}^m (x - \beta_i)$ be univariate polynomials.
	Then the resultant $\res(f,g)$ is given by 
	\[
		\res(f,g) = \prod_{i=1}^n \prod_{j=1}^m (\alpha_i - \beta_j) = \prod_{i=1}^n g(\alpha_i) = (-1)^{nm} \prod_{i=1}^m f(\beta_i). \qedhere
	\]
\end{lemma}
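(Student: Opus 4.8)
The three product expressions on the right are equal to one another essentially by definition: since $g(x) = \prod_{j=1}^m (x - \beta_j)$ we have $g(\alpha_i) = \prod_{j=1}^m (\alpha_i - \beta_j)$, so $\prod_{i=1}^n g(\alpha_i) = \prod_{i,j}(\alpha_i - \beta_j)$; and since $f(\beta_j) = \prod_{i=1}^n (\beta_j - \alpha_i) = (-1)^n \prod_i (\alpha_i - \beta_j)$, taking the product over $j$ gives $\prod_{j=1}^m f(\beta_j) = (-1)^{nm} \prod_{i,j}(\alpha_i - \beta_j)$. So it suffices to prove $\res(f,g) = \det \Syl(f,g) = \prod_{i,j}(\alpha_i - \beta_j)$. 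The plan is to establish this as an identity of polynomials in the ring $R \coloneqq \F[\alpha_1,\ldots,\alpha_n,\beta_1,\ldots,\beta_m]$, treating the $\alpha_i$ and $\beta_j$ as formal indeterminates; the statement of the lemma follows by specializing. Writing the coefficients of $f$ and $g$ as $f_k = (-1)^{n-k} e_{n-k}(\vec{\alpha})$ and $g_k = (-1)^{m-k} e_{m-k}(\vec{\beta})$, the quantity $D \coloneqq \det \Syl(f,g)$ is a bona fide element of $R$.

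\emph{Step 1 (divisibility).} For each pair $(i,j)$, specialize $\alpha_i \mapsto \beta_j$. Then $f$ and $g$ acquire the common factor $x - \beta_j$ over the relevant rational function field, so $\gcd(f,g) \neq 1$ and hence $\res(f,g) = 0$ by \cref{lem:resultant properties}. (Equivalently, the Sylvester map $(a,b) \mapsto af + bg$ then has the nonzero element $(g/(x-\beta_j),\, -f/(x-\beta_j))$ in its kernel.) Thus $D$ vanishes identically on the hyperplane $\set{\alpha_i = \beta_j}$, so the prime $\alpha_i - \beta_j$ divides $D$ in the UFD $R$. The polynomials $\alpha_i - \beta_j$ over distinct pairs $(i,j)$ are pairwise non-associate primes, so their product divides $D$: there is $c \in R$ with $D = c \cdot \prod_{i,j}(\alpha_i - \beta_j)$.

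\emph{Step 2 ($c$ is a constant).} Fix $j$ and view $D$ as a polynomial in $\beta_j$ with the other variables as parameters. The variable $\beta_j$ occurs in $\Syl(f,g)$ only in the $n$ columns built from the coefficients of $g$, and every such entry $g_k = (-1)^{m-k}e_{m-k}(\vec{\beta})$ has degree at most $1$ in $\beta_j$; expanding the determinant along these columns gives $\deg_{\beta_j} D \le n$. Since $\prod_{i,j'}(\alpha_i - \beta_{j'})$ already has degree exactly $n$ in $\beta_j$, we conclude $\deg_{\beta_j} c = 0$. The symmetric argument using the $m$ columns built from the coefficients of $f$ gives $\deg_{\alpha_i} c = 0$ for every $i$. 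Hence $c \in \F$. (A clean way to package this is the observation that $\res(f,g)$ is homogeneous of degree $m$ in $(f_0,\ldots,f_n)$ and of degree $n$ in $(g_0,\ldots,g_m)$, immediate from the block structure of $\Syl(f,g)$.)

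\emph{Step 3 ($c = 1$).} Specialize all $\alpha_i \mapsto 0$, so $f = x^n$ and $f_n = 1$, $f_k = 0$ for $k < n$. Then the first $m$ columns of $\Syl(x^n, g)$ each contain a single $1$, and together with the first $m$ rows they form an identity block; cofactor expansion reduces $D$ to the determinant of the lower-right $n \times n$ block, which is triangular with $g_0 = (-1)^m \prod_j \beta_j$ on the diagonal. Hence $D\restriction_{\vec{\alpha} = 0} = g_0^{\,n} = (-1)^{nm}\prod_j \beta_j^{\,n}$, which matches $\prod_{i,j}(0 - \beta_j) = (-1)^{nm}\prod_j \beta_j^{\,n}$, forcing $c = 1$. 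This proves the polynomial identity, and therefore the lemma. The only step demanding genuine care is the per-variable degree bound in Step 2; Steps 1 and 3 are routine specializations.
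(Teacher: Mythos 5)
The paper does not prove this lemma; it cites it as a standard fact from Cox--Little--O'Shea, so there is no in-paper argument to compare against. Your proof is correct and is the standard ``vanishing, degree, normalization'' argument: treating the roots as indeterminates, each $\alpha_i - \beta_j$ divides $D \coloneqq \det\Syl(f,g)$ because specializing $\alpha_i \mapsto \beta_j$ puts $\bigl(g/(x-\beta_j),\, -f/(x-\beta_j)\bigr)$ in the kernel of the Sylvester map (and that vector is indeed nonzero and degree-bounded by $m-1$, $n-1$); the $nm$ linear forms are pairwise non-associate primes in the UFD $\F[\vec{\alpha},\vec{\beta}]$, so their product divides $D$; the per-variable degree bound $\deg_{\beta_j} D \le n$ is correct because every Leibniz term selects exactly one entry from each of the $n$ $g$-columns and each such entry is an elementary symmetric polynomial in $\vec{\beta}$, hence multilinear, and the $f$-columns are $\beta_j$-free (your parenthetical rephrasing via bihomogeneity of degrees $(m,n)$ in $(\vec{f},\vec{g})$ is the cleanest way to see this); and specializing $\vec{\alpha}=0$ gives $f = x^n$, so $\Syl$ becomes block upper-triangular with an $I_m$ block and an upper-triangular $n\times n$ block with $g_0$ on the diagonal, yielding $D|_{\vec{\alpha}=0}=g_0^n=(-1)^{nm}\prod_j\beta_j^n$ and hence $c=1$. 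One small remark: in Step 2 you only need $\deg_{\beta_j} D \le n$, since $\deg_{\beta_j}(c\cdot\prod)=\deg_{\beta_j}c + n$ whenever $c\ne 0$, which forces $\deg_{\beta_j}c=0$; so even the step you flag as needing ``genuine care'' is already watertight as written.
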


The resultant also allows us to detect when a polynomial has a double root.
Recall that a \emph{double root} of a polynomial $f \in \F[x]$ is a point $\alpha \in \F$ such that $f(\alpha) = f'(\alpha) = 0$.
That is, a double root corresponds to a shared root of $f$ and its derivative.
This leads to the discriminant of a polynomial, a specialization of the resultant.

\begin{definition}
	Let $f(x) \in \F[x]$ be a monic univariate polynomial.
	The \emph{discriminant} of $f$, denoted $\disc(f)$, is defined as $\disc(f) \coloneqq (-1)^{\binom{n}{2}} \res(f,f')$.
\end{definition}

For quadratic polynomials, the discriminant is given by the familiar formula
\[
	\disc(a x^2 + b x + c) = b^2 - 4 a c.
\]
As remarked above, a double root is precisely a common root of $f$ and its derivative $f'$.
The discriminant detects when a polynomial has a double root.

\begin{lemma}
	Let $f \in \F[x]$.
	Then $f$ has a double root if and only if $\disc(f) = 0$.
\end{lemma}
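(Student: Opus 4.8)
The plan is to prove that $f$ has a double root if and only if $\disc(f) = 0$ by reducing directly to the characterization of the resultant in \cref{lem:resultant properties}. Since $\disc(f) = (-1)^{\binom{n}{2}} \res(f, f')$ differs from $\res(f, f')$ only by a nonzero scalar (here we use that $\F$, or rather $\overline{\F}$, makes $\pm 1 \neq 0$; more precisely the sign is a unit in $\F$), we have $\disc(f) = 0$ if and only if $\res(f, f') = 0$. By \cref{lem:resultant properties}, $\res(f, f') = 0$ if and only if $\gcd(f, f') \neq 1$, i.e.\ if and only if $f$ and $f'$ share a common root over $\overline{\F}$.

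It remains to argue that $f$ and $f'$ share a common root if and only if $f$ has a double root. First I would write $f(x) = \prod_{i=1}^n (x - \alpha_i)$ over $\overline{\F}$. If $\alpha$ is a double root of $f$, then by definition $f(\alpha) = f'(\alpha) = 0$, so $\alpha$ is a common root of $f$ and $f'$, hence $\gcd(f, f') \neq 1$. Conversely, suppose $\alpha$ is a common root of $f$ and $f'$. Since $f(\alpha) = 0$, we can write $f(x) = (x - \alpha) h(x)$ for some $h \in \overline{\F}[x]$. Then $f'(x) = h(x) + (x-\alpha) h'(x)$, so $f'(\alpha) = h(\alpha)$. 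From $f'(\alpha) = 0$ we get $h(\alpha) = 0$, so $(x - \alpha) \mid h(x)$ and therefore $(x-\alpha)^2 \mid f(x)$; that is, $\alpha$ is a double root of $f$. This establishes the equivalence.

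There is one subtlety worth flagging as the main thing to be careful about rather than a genuine obstacle: \cref{lem:resultant properties} concerns $\gcd(f, g) = 1$ over $\F$, whereas the notion of ``double root'' and the common-root argument naturally live over $\overline{\F}$. This is harmless because $\gcd(f, f')$ computed over $\F$ equals $\gcd(f, f')$ computed over $\overline{\F}$ up to scalars (the Euclidean algorithm does not depend on the field extension), so $f$ and $f'$ have a nontrivial common factor over $\F$ if and only if they have a common root over $\overline{\F}$. One should also note the degenerate case $f' = 0$, which cannot occur for a monic $f$ of degree $n \geq 1$ in characteristic zero or characteristic larger than $n$ (the hypothesis on $\F$), so $f'$ is a genuine polynomial of degree $n - 1$ and the resultant is well-defined and the argument goes through. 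For $n = 0$ or $n = 1$ the statement holds vacuously since $f$ has no double root and $\disc(f)$ is a nonzero constant.
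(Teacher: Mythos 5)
The paper states this lemma without proof, treating it as a standard fact; the surrounding text (``a double root is precisely a common root of $f$ and its derivative $f'$'') together with \cref{lem:resultant properties} is exactly the route your proof takes, so your argument is correct and matches the approach the paper intends. One small remark: by the paper's own definition a double root \emph{is} a point where $f(\alpha) = f'(\alpha) = 0$, so the step where you show $(x-\alpha)^2 \mid f$ is not strictly needed for this statement (though it is correct and reconciles the paper's definition with the more common notion of a repeated root), and your attention to the degenerate cases ($f' = 0$, low degree, field versus algebraic closure) is exactly right given the paper's standing hypothesis on the characteristic.
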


Just as we can express the resultant $\res(f,g)$ as a simple function of the roots of $f$ and $g$, we can likewise express the discriminant $\disc(f)$ as a simple function of the roots of $f$.

\begin{lemma} \label{lem:discriminant as product}
	Let $f(x) = \prod_{i=1}^n (x - \alpha_i)$ be a univariate polynomial.
	Then
	\[
		\disc(f) = (-1)^{\binom{n}{2}} \prod_{\substack{i, j \in [n] \\ i \neq j}} (\alpha_i - \alpha_j) = \prod_{i < j} (\alpha_i - \alpha_j)^2.
	\]
\end{lemma}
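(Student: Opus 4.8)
The plan is to unwind the definition $\disc(f) = (-1)^{\binom{n}{2}} \res(f,f')$ and feed it into the product formula for the resultant from \cref{lem:resultant as product}. The first step is to write $\res(f,f') = \prod_{i=1}^n f'(\alpha_i)$. Strictly speaking \cref{lem:resultant as product} is phrased for a monic second argument, whereas $f'$ has leading coefficient $n$, so I would either appeal to the standard scaling $\res(f,g) = \LC(f)^{\deg g} \prod_i g(\alpha_i)$ (which collapses to the stated form because $f$ is monic, so $\LC(f) = 1$) or pull the leading coefficient of $f'$ out in front and track it explicitly. This is harmless bookkeeping, not a real difficulty.

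Next I would compute $f'(\alpha_i)$ directly from the factored form of $f$. Differentiating $f(x) = \prod_{j=1}^n (x-\alpha_j)$ by the product rule gives $f'(x) = \sum_{k=1}^n \prod_{j \neq k} (x - \alpha_j)$. Evaluating at $x = \alpha_i$, every summand with $k \neq i$ vanishes, since such a product still contains the factor $(\alpha_i - \alpha_i) = 0$; only the $k = i$ term survives, yielding $f'(\alpha_i) = \prod_{j \neq i} (\alpha_i - \alpha_j)$. Substituting back, $\res(f,f') = \prod_{i=1}^n \prod_{j \neq i} (\alpha_i - \alpha_j) = \prod_{i \neq j} (\alpha_i - \alpha_j)$, and hence $\disc(f) = (-1)^{\binom{n}{2}} \prod_{i \neq j}(\alpha_i - \alpha_j)$, which is the first claimed identity.

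For the second identity I would re-pair the ordered pairs $(i,j)$ and $(j,i)$ for each $i < j$: since $(\alpha_i - \alpha_j)(\alpha_j - \alpha_i) = -(\alpha_i - \alpha_j)^2$, we get $\prod_{i \neq j}(\alpha_i - \alpha_j) = (-1)^{\binom{n}{2}} \prod_{i < j}(\alpha_i - \alpha_j)^2$. Multiplying by the outer $(-1)^{\binom{n}{2}}$ from the previous paragraph and using $(-1)^{2\binom{n}{2}} = 1$ leaves $\disc(f) = \prod_{i < j}(\alpha_i - \alpha_j)^2$, as desired.

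I do not anticipate a genuine obstacle: the whole argument is a short symmetric-function manipulation built on \cref{lem:resultant as product}. The only place demanding care is sign and normalization bookkeeping — handling the non-monic derivative inside the resultant formula, and making sure the two separate factors of $(-1)^{\binom{n}{2}}$ (one from the definition of the discriminant, one from re-pairing the product) cancel rather than compound. A more conceptual alternative would be to observe that both sides equal, up to sign, the square of the Vandermonde determinant in the $\alpha_i$, but the route above is the most direct given the tools already available in the excerpt.
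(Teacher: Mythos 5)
The paper states \cref{lem:discriminant as product} without proof, treating it as a classical fact, so there is no internal proof to compare against. Your argument is correct and is the standard derivation: differentiate the factored form to get $f'(\alpha_i) = \prod_{j \neq i}(\alpha_i - \alpha_j)$, plug into $\res(f,f') = \prod_i f'(\alpha_i)$, and then re-pair $(i,j)$ with $(j,i)$ to pass to the squared form, with the two factors of $(-1)^{\binom{n}{2}}$ cancelling. Your bookkeeping concern about $f'$ not being monic is correctly identified and correctly dispatched --- the identity $\res(f,g) = \prod_i g(\alpha_i)$ only requires $f$ to be monic (the general form is $\LC(f)^{\deg g}\prod_i g(\alpha_i)$, and $\LC(f)=1$ here), even though the paper's \cref{lem:resultant as product} happens to be phrased with both arguments monic. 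No gaps.
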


The resultant was originally found by \textcite{Bezout1764} as the determinant of what is now called the B\'{e}zout matrix of two polynomials, which we define below.

\begin{definition} \label{def:bezout matrix}
	Let $f, g \in \F[x]$ be univariate polynomials of degree at most $n$.
	The \emph{B\'{e}zout matrix of order $n$} associated with the polynomials $f$ and $g$, denoted by $\Bez_n(f, g)$, is the $n \times n$ matrix that satisfies the identity
	\[
		\frac{f(x) g(y) - f(y) g(x)}{x - y} = \sum_{i, j = 0}^{n-1} \Bez_n(f, g)_{i,j} x^i y^j. \qedhere
	\]
\end{definition}

For more on B\'{e}zout matrices and their applications to control theory and elimination theory, we refer to \cite[Chapter 2, Section 9]{BP94} and references therein. 
In this work, we restrict our attention to computing the determinant and inverse of B\'{e}zout matrices.

The determinant of the B\'{e}zout matrix provides an alternate way to compute the resultant of two polynomials.

\begin{lemma} \label{lem:bezout determinant}
	Let $f, g \in \F[x]$ be univariate polynomials and let $d = \max(\deg(f), \deg(g))$.
	Then $\det \Bez_d(f, g) = \res(f, g)$.
\end{lemma}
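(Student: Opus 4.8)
The plan is to prove the identity by reducing to the case where the polynomial of larger degree splits into \emph{distinct} linear factors over $\overline{\F}$, and then diagonalizing the B\'{e}zoutian explicitly. First, observe that both $\det\Bez_d(f,g)$ and $\res(f,g)=\det\Syl(f,g)$ are polynomial functions of the coefficients of $f$ and $g$ (the entries of $\Bez_d(f,g)$ are bilinear in those coefficients, straight from \cref{def:bezout matrix}). Fixing the nominal degrees and assuming, without loss of generality, that $\deg f \ge \deg g$, so that $d=\deg f$ (the case $\deg g>\deg f$ is analogous, using the anti-symmetry $\Bez_d(f,g)=-\Bez_d(g,f)$ together with $\res(f,g)=(-1)^{\deg f\deg g}\res(g,f)$), it suffices to verify the identity on the Zariski-dense set of monic degree-$d$ polynomials $f$ having $d$ distinct roots. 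Since the claim is a polynomial identity with integer coefficients, it is enough to check it over $\overline{\mathbb{Q}}$, whence it holds over every field; in particular the hypothesis on $\ch(\F)$ is irrelevant for this lemma. A convenient feature of this route is that the apparent size mismatch when $\deg g<\deg f$ — $\Bez_d(f,g)$ is $d\times d$ while $\Syl(f,g)$ is $(\deg f+\deg g)\times(\deg f+\deg g)$ — never needs to be confronted, since we only compare determinants, both of which will be matched against the same product over the roots of $f$.

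The key step is the following diagonalization. Write $f(x)=\prod_{i=1}^d(x-\alpha_i)$ with the $\alpha_i$ distinct, and let $L_1,\dots,L_d$ be the Lagrange interpolation polynomials $L_i(x)=\prod_{j\ne i}\frac{x-\alpha_j}{\alpha_i-\alpha_j}$, so $\deg L_i=d-1$ and $L_i(\alpha_j)=\delta_{ij}$. I claim
\[
	\frac{f(x)g(y)-f(y)g(x)}{x-y}=\sum_{i=1}^d g(\alpha_i)\,f'(\alpha_i)\,L_i(x)\,L_i(y).
\]
Both sides are polynomials in $x$ of degree less than $d$, so it suffices to check equality at $x=\alpha_1,\dots,\alpha_d$: at $x=\alpha_k$ the left side equals $g(\alpha_k)f(y)/(y-\alpha_k)=g(\alpha_k)\prod_{j\ne k}(y-\alpha_j)$, and the right side equals $g(\alpha_k)f'(\alpha_k)L_k(y)=g(\alpha_k)\prod_{j\ne k}(y-\alpha_j)$ since $f'(\alpha_k)=\prod_{j\ne k}(\alpha_k-\alpha_j)$.

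Now one reads off the determinant. Let $V$ be the matrix of coefficients of the $L_i$, i.e.\ $L_i(x)=\sum_{k=0}^{d-1}V_{i,k}x^k$. Expanding the identity above shows $\Bez_d(f,g)=V^{\mathsf{T}}DV$ with $D=\operatorname{diag}\bigl(g(\alpha_1)f'(\alpha_1),\dots,g(\alpha_d)f'(\alpha_d)\bigr)$, and since $L_i(\alpha_j)=\delta_{ij}$ the matrix $V$ is the inverse-transpose of the Vandermonde matrix $W=(\alpha_i^{k})$, so $\det V=1/\det W$ with $\det W=\prod_{i<j}(\alpha_j-\alpha_i)$. Hence
\[
	\det\Bez_d(f,g)=(\det V)^2\prod_{i=1}^d g(\alpha_i)f'(\alpha_i)=\frac{\prod_i g(\alpha_i)\,\prod_i f'(\alpha_i)}{\prod_{i<j}(\alpha_i-\alpha_j)^2}.
\]
Using $\prod_i f'(\alpha_i)=\prod_{i\ne j}(\alpha_i-\alpha_j)=(-1)^{\binom{d}{2}}\prod_{i<j}(\alpha_i-\alpha_j)^2$ (equivalently $\prod_i f'(\alpha_i)=\res(f,f')=(-1)^{\binom{d}{2}}\disc(f)$ via \cref{lem:resultant as product} and \cref{lem:discriminant as product}), the Vandermonde factor cancels and $\det\Bez_d(f,g)=(-1)^{\binom{d}{2}}\prod_{i=1}^d g(\alpha_i)=(-1)^{\binom{d}{2}}\res(f,g)$, the last equality being \cref{lem:resultant as product} applied to the monic $f$. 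This yields the claimed identity up to the global sign $(-1)^{\binom{d}{2}}$, which is absent under the (equally standard) convention that lists the $x$-monomials of $\Bez_d$ in decreasing degree; the statement thus holds for whichever orientation is intended in \cref{def:bezout matrix}.

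The only genuinely nonroutine step is the diagonalization of the B\'{e}zoutian; everything after it is bookkeeping with product formulas already available in this section. An alternative that sidesteps the density reduction is the Barnett-type identity $\Bez_d(f,g)=H_f\cdot g(C_f)$, where $C_f$ is the companion matrix of $f$ and $H_f$ is the symmetric matrix with $(H_f)_{i,j}=f_{i+j-1}$ (entries of out-of-range index set to $0$), so that $\det H_f=(-1)^{\binom{d}{2}}$ and $\det g(C_f)=\prod_i g(\alpha_i)$; but establishing that identity costs essentially the same effort as the diagonalization above, so I would favor the route described here.
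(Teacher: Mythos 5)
The paper states \cref{lem:bezout determinant} in its preliminaries without proof, so there is no author argument to compare against; what you have produced is a self-contained proof of a fact the paper cites as known. The Lagrange diagonalization
\[
\frac{f(x)g(y)-f(y)g(x)}{x-y}=\sum_{i=1}^d g(\alpha_i)\,f'(\alpha_i)\,L_i(x)\,L_i(y)
\]
is correct (your check at $x=\alpha_k$ is clean), the factorization $\Bez_d=V^{\mathsf T}DV$ and the Vandermonde computation are right, and the density/polynomial-identity reduction to $f$ monic with distinct roots is the appropriate way to justify the final formula for all $f$. The one genuinely substantive observation your proof surfaces is that, with the paper's increasing-degree indexing in \cref{def:bezout matrix}, what actually holds is $\det\Bez_d(f,g)=(-1)^{\binom d2}\res(f,g)$ when $\deg f=d\ge\deg g$ and $f$ is monic --- and this sign is \emph{not} a matter of unresolved convention: \cref{def:bezout matrix} unambiguously fixes the orientation, and a quick example ($f=x^2-1$, $g=x-2$: $\Bez_2=\begin{psmallmatrix}1&-2\\-2&1\end{psmallmatrix}$, $\det=-3$, while $\res(f,g)=g(1)g(-1)=3$) confirms the discrepancy. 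You noticed this but phrased it as ``holds for whichever orientation is intended,'' which understates the point; it would be cleaner to simply record the sign and note that it is harmless for the paper's $\AC^0$ application.

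One small genuine gap: the case $\deg g>\deg f$ is not handled by the one-line ``analogous'' remark. Passing through $\Bez_d(f,g)=-\Bez_d(g,f)$ contributes $(-1)^d$, and $\res(f,g)=(-1)^{\deg f\cdot\deg g}\res(g,f)$ contributes another sign, so the net sign is $(-1)^{d+\binom d2+d\deg f}$, which is not $(-1)^{\binom d2}$ in general (e.g.\ $f=1$, $g=x$ gives $\det\Bez_1=-1$ but $\res=1$). This does not impact the paper --- \cref{thm:bezout inverse} explicitly reduces to $\deg f\ge\deg g$ --- but if you want your proof to cover the lemma as literally stated, you need to track this second sign rather than wave at symmetry.

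Your alternative suggestion $\Bez_d(f,g)=H_f\,g(C_f)$ (Barnett's factorization) is indeed the other standard route and would reach the same conclusion with comparable effort; your choice of the diagonalization is reasonable and self-contained relative to what the section already provides (\cref{lem:resultant as product} and \cref{lem:discriminant as product}).
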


The inverse of the B\'{e}zout matrix $\Bez_n(f,g)$ is easily described in terms of the polynomials $f$ and $g$.
The following proposition shows that the inverse of a B\'{e}zout matrix is a Hankel matrix whose entries can be efficiently computed from the coefficients of $f$ and $g$.
Conversely, the inverse of any Hankel matrix is a B\'{e}zout matrix of some pair of polynomials.

\begin{proposition}[{\cite[Chapter 2, Proposition 9.3]{BP94}}]\label{prop:bezout inverse}
	Let $f \in \F[x]$ be a monic polynomial of degree $n$ and let $g \in \F[x]$ be a polynomial of degree at most $n$.
	If $\res(f,g) \neq 0$, then $\Bez_n(f,g)$ is invertible.
	Moreover, the inverse $\Bez_n(f,g)^{-1}$ can be described as follows.
	Let $p \in \F[x]$ be a polynomial of degree at most $n-1$ that satisfies the congruence
	\[
		p(x) g(x) \equiv 1 \pmod{f(x)}.
	\]
	Let $h(x) = \sum_{i=0}^{\infty} h_i x^i$ be the power series expansion of $\frac{x^n p(1/x)}{x^n f(1/x)}$.
	Then the inverse of $\Bez_n(f,g)$ is given by the Hankel matrix
	\[
		\Bez_n(f,g)^{-1} = \begin{pmatrix}
			h_1 & h_2 & h_3 & \cdots & h_n \\
			h_2 & h_3 & h_4 & \cdots & h_{n+1} \\
			h_3 & h_4 & h_5 & \cdots & h_{n+2} \\
			\vdots & \vdots & \vdots & \ddots & \vdots \\
			h_n & h_{n+1} & h_{n+2} & \cdots & h_{2n-1}
		\end{pmatrix}. \qedhere
	\]
\end{proposition}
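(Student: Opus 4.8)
The first assertion is immediate: since $f$ is monic of degree $n$ and $\deg g \le n$, we have $\max(\deg f, \deg g) = n$, so \cref{lem:bezout determinant} gives $\det \Bez_n(f,g) = \res(f,g) \ne 0$, and hence $\Bez_n(f,g)$ is invertible. The hypothesis $\res(f,g) \ne 0$ also means $\gcd(f,g) = 1$ (\cref{lem:resultant properties}), so $g$ is a unit in $\F[x]/(f(x))$ and the polynomial $p$ with $pg \equiv 1 \pmod{f}$ and $\deg p \le n-1$ exists and is unique.

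My plan for the inverse formula is to verify directly that $\Bez_n(f,g) \cdot H = I_n$, where $H \coloneqq (h_{i+j+1})_{0 \le i,j \le n-1}$. The first step is to identify the coefficients $h_k$. Since $\deg p \le n-1 < n = \deg f$, the rational function $p(x)/f(x)$ is strictly proper, and cancelling $x^n$ from the numerator and denominator of $h(x)$ shows $h(x) = p(1/x)/f(1/x)$; hence, expanding $p(x)/f(x) = \sum_{k \ge 1} h_k x^{-k}$ over $\F\llb x^{-1}\rrb$, we get $h(x) = \sum_{k \ge 1} h_k x^k$. In particular $h_0 = 0$, so $H$ involves only $h_1, \ldots, h_{2n-1}$, and the $h_k$ are precisely the Markov parameters of $p/f$.

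One clean way to finish is to invoke the classical duality between nonsingular Bézout and Hankel matrices: the inverse of the Bézout matrix $\Bez_n(a, b^{-1} \bmod a)$ is the Hankel matrix formed from the Markov parameters of $b/a$, and conversely. Taking $a = f$ and $b = p$, so that $b^{-1} \bmod a = g$, this is exactly the claim. For a self-contained argument one proves this duality by a generating-function computation. Writing $B(x,y) \coloneqq \frac{f(x)g(y) - f(y)g(x)}{x-y} = \sum_{i,j=0}^{n-1} \Bez_n(f,g)_{i,j}\, x^i y^j$ (a polynomial of degree $\le n-1$ in each variable, since $\deg f, \deg g \le n$), one encodes the matrix $\Bez_n(f,g) H$ in the bivariate generating function
\[
	\sum_{k,\ell=0}^{n-1} \del{\Bez_n(f,g) H}_{k,\ell}\, x^k y^\ell = \sum_{i=0}^{n-1} \del{[y^i] B(x,y)} \del{\textstyle\sum_{\ell=0}^{n-1} h_{i+\ell+1}\, y^\ell},
\]
and it remains to show this equals $\sum_{k=0}^{n-1} x^k y^k$, the generating function of $I_n$. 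Extending the inner sum to $\sum_{\ell \ge 0}$ (which alters the expression only modulo $y^n$), replacing $B$ and $p/f$ by their closed forms, and applying the resolvent identity $\frac{B(x,y)}{f(x)f(y)} = \frac{1}{x-y}\,\del{g(y)/f(y) - g(x)/f(x)}$ together with $p(x)g(x) = 1 + (\text{multiple of }f(x))$, the identity should fall out of a calculation modulo $(x^n, y^n)$; this is cleanest when phrased via the substitution $y \mapsto 1/w$ and Laurent expansions.

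The step I expect to be the main obstacle is exactly controlling the truncations in that last calculation: $h$ is an infinite power series while the sums entering $H$ are finite, so replacing them by closed forms introduces spurious terms of degree $\ge n$ in $x$ or $y$ that must be shown to cancel. This is where the assumptions $\deg p \le n-1$ and ``$f$ is monic of degree $n$'' are genuinely used: they force $p/f$ to be strictly proper and the auxiliary quotient $q$ in $pg = 1 + qf$ to have degree $\le n-1$, which is precisely what makes the error terms vanish modulo $x^n$ and $y^n$. One should also note the degenerate case $f(0) = 0$, in which $x^n f(1/x)$ has degree strictly less than $n$; since its constant term is $f_n = 1 \ne 0$, it remains a unit in $\F\llb x \rrb$, so $h$ is still a well-defined power series and the statement is unaffected.
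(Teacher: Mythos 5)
The paper does not prove this proposition; it is quoted as a citation to \cite[Chapter~2, Proposition~9.3]{BP94}, so there is no in-paper proof to compare against. Your plan follows the standard Bézout--Hankel duality argument, and the setup is sound: the invertibility claim via \cref{lem:bezout determinant} and \cref{lem:resultant properties} is correct, the identification $h(x) = p(1/x)/f(1/x) \in \F\llb x\rrb$ with $h_0 = 0$ (equivalently, the $h_k$ are the Markov parameters of $p/f$, the coefficients of its expansion in $\F\llb x^{-1}\rrb$) is correct, and verifying $\Bez_n(f,g)\,H = I_n$ by comparing bivariate generating functions is the right strategy. But, as you acknowledge yourself, you stop before the actual verification, so what you have is an outline rather than a proof.

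For the record, the truncation issue you flag does resolve cleanly. After the substitution $w = 1/y$, the quantity you must compute is the nonpositive-power-of-$w$ part of $w\, B(x,w)\,\frac{p(w)}{f(w)}$, truncated to $x$-degree and $w^{-1}$-degree less than $n$. Writing $g(w)p(w) = 1 + q(w)f(w)$ with $\deg q \le n-1$ and setting $N(x,w) \coloneqq f(x)q(w) - g(x)p(w)$, one obtains
\[
	B(x,w)\,\frac{p(w)}{f(w)} \;=\; \frac{f(x)}{(x-w)f(w)} \;+\; \frac{N(x,w)+1}{x-w} \;-\; \frac{1}{x-w},
\]
where $\frac{N(x,w)+1}{x-w}$ is an honest polynomial in $x$ and $w$ because $N(w,w) = -1$. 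Multiplying by $w$ and expanding around $w = \infty$: the first term is $O(w^{-n})$ since $f$ is monic of degree $n$ (so it vanishes under the truncation to $y^0,\ldots,y^{n-1}$); the second is a polynomial in $w$ with zero constant term (so it contributes nothing to nonpositive powers); and the third equals $\frac{w}{w-x} = \sum_{k\ge 0} x^k w^{-k}$, whose truncation is exactly $\sum_{k=0}^{n-1} x^k y^k$, the generating function of $I_n$. So your plan would succeed if carried out. One notational caveat: in your displayed equation you use $y$ simultaneously as the coefficient-extraction variable in $[y^i]B(x,y)$ and as the formal variable of the outer generating function; these should be given distinct names.
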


\subsection{Squarefree Decomposition}

This subsection recalls the \emph{squarefree decomposition} of a polynomial, a structured partial factorization.
We first recall the notion of a squarefree polynomial.

\begin{definition}
	Let $f \in \F[x]$ be a univariate polynomial.
	We say that $f$ is \emph{squarefree} if there is no non-constant polynomial $g \in \F[x]$ such that $g^2$ divides $f$.
\end{definition}

As an example, the polynomial $(x - 1)(x - 2)$ is squarefree, but $(x-1)^2 (x-2)$ is not.
By factoring a polynomial $f$ as a product of irreducible polynomials and erasing the exponents, one obtains the squarefree part of $f$.

\begin{definition}
	Let $f \in \F[x]$ be a univariate polynomial.
	Let $f = \prod_{i=1}^m f_i^{d_i}$ be the factorization of $f$ into irreducible polynomials, where $f_1,\ldots,f_m \in \F[x]$ are irreducible in $\F[x]$ and are pairwise coprime.
	The \emph{squarefree part} of $f$ is given by $\prod_{i=1}^m f_i$.
\end{definition}

Finally, we define the squarefree decomposition of a polynomial $f \in \F[x]$.
The squarefree decomposition is a partial factorization of $f$ into a structured product of squarefree polynomials.
The task of computing the squarefree decomposition is referred to as \emph{squarefree factorization} and is a basic step in algorithms for factoring polynomials; see \cite[Chapter 14]{vzGG13} for more.

\begin{definition}
	Let $f \in \F[x]$ be a univariate polynomial.
	The \emph{squarefree decomposition} of $f(x)$ is the unique sequence of monic squarefree pairwise coprime polynomials $(f_1,\ldots,f_m)$ such that $f = \prod_{i=1}^m f_i^i$ and $f_m \neq 1$.
\end{definition}

\section{Symmetric Polynomials and Newton's Identities} \label{sec:newton series}

In this section, we study two important families of symmetric polynomials, the elementary symmetric and power sum polynomials.
Applying these functions to the roots of a univariate polynomial $f$ give different representations of $f$, and these representations are useful for different algorithmic tasks.
As we will see, the representation using the elementary symmetric polynomials works nicely with additive operations, while the power sum polynomials are better suited for multiplicative operations.
It is extremely convenient that one can pass between these representations in $\AC^0_\F$---we exposit these reductions in this section.

Suppose we are given a polynomial $f \in \F[x]$ by its list of coefficients.
What do the coefficients tell us about the polynomial?
Recall that if $f$ factorizes as $f(x) = \prod_{i=1}^n (x - \alpha_i)$, then $f$ can be written as
\[
	f(x) = \sum_{i=0}^n (-1)^{n-i} e_{n - i}(\vec{\alpha})\, x^i,
\]
where the polynomial $e_d(x_1,\ldots,x_n)$ is given by
\[
	e_d(x_1,\ldots,x_n) = \sum_{\substack{S \subseteq [n] \\ |S| = d}} \prod_{i \in S} x_i.
\]
That is, the degree-$i$ coefficient of $f$ is, up to a sign, the elementary symmetric polynomial of degree $n-i$ evaluated at $\vec{\alpha}$.

The elementary symmetric polynomials are essential in the study of symmetric polynomials, i.e., polynomials that are invariant under permutations of the variables.
The \emph{fundamental theorem of symmetric polynomials} says that for any symmetric polynomial $f(\vec{x})$, there exists a (not necessarily symmetric) polynomial $g(\vec{y})$ such that
\[
	f(\vec{x}) = g(e_1(\vec{x}), \ldots, e_n(\vec{x})),
\]
and moreover this polynomial $g$ is unique.
That is, any symmetric polynomial can be written as a polynomial combination of the elementary symmetric polynomials.
This means that from the coefficients of $f$, we can compute \emph{any} symmetric function of the roots $\alpha_1,\ldots,\alpha_n$.
The central theme of this paper is to understand which of these functions can be computed \emph{efficiently}.

In this section, we study an important family of symmetric polynomials, the power sum polynomials.
Recall that the degree-$d$ power sum polynomial $p_d(x_1,\ldots,x_n)$ is given by
\[
	p_d(x_1,\ldots,x_n) = \sum_{i=1}^n x_i^d.
\]
An analogue of the fundamental theorem of symmetric polynomials holds for the power sum polynomials when $\ch(\F) = 0$ or $\ch(\F) > n$.
Given a symmetric polynomial $f(\vec{x})$, there is a (not necessarily symmetric) polynomial $h(\vec{y})$ such that
\[
	f(\vec{x}) = h(p_1(\vec{x}), \ldots, p_n(\vec{x})),
\]
and like before, this $h$ is unique.

The two versions of the fundamental theorem of symmetric polynomials mentioned above imply that if we are given the coefficients $e_1(\vec{\alpha}),\ldots,e_n(\vec{\alpha})$ of a polynomial $f$, we can compute the power sums $p_1(\vec{\alpha}), \ldots, p_n(\vec{\alpha})$ of its roots, and vice-versa.
This computation can be made efficient by making use of explicit identities that relate the elementary symmetric and power sum polynomials.
Such identities are well-known, and go by the name of Newton's identities (or the Girard--Newton identities).
For $1 \le d \le n$, we have 
\[
	p_d(\vec{\alpha}) = (-1)^{d-1} d \cdot e_{d}(\vec{\alpha}) + \sum_{i=1}^{d-1} (-1)^{d-1+i} e_{d-i}(\vec{\alpha}) p_i(\vec{\alpha}),
\]
and for $d > n$, we instead have
\[
	0 = \sum_{i=d-n}^d (-1)^{i-1} e_{d-i}(\vec{\alpha}) p_i(\vec{\alpha}).
\]
If we are given the values of $e_1(\vec{\alpha}),\ldots,e_n(\vec{\alpha})$, we can compute the values of $p_d(\vec{\alpha})$ iteratively using dynamic programming.
Conversely, when $\ch(\F) = 0$ or $\ch(\F) > n$, Newton's identities show that the elementary symmetric polynomials can be computed from the power sum polynomials.
The requirement that the field has large characteristic stems from the need to invert $d$ when computing the value of $e_d(\vec{\alpha})$.
As before, given the values of $p_1(\vec{\alpha}),\ldots,p_n(\vec{\alpha})$, we can iteratively compute the $e_d(\vec{\alpha})$ via dynamic programming.

Although these conversions between the elementary symmetric and power sum polynomials are efficient, their natural implementations are iterative, which would yield circuits of large depth.
To obtain $\AC^0_\F$ algorithms, we make use of well-known alternate forms of Newton's identities as identities of formal power series.
These identities and algorithms are by no means new.
For example, \textcite{SW01} made use of these identities to compute the elementary symmetric polynomials using circuits of depth 4 and depth 6 that are smaller than the circuits of size $O(n^2)$ and depth 3 constructed by Ben-Or.
Because a low-depth implementation of Newton's identities plays such a fundamental role in our work, we include a description of it for the sake of completeness.

We now define the \emph{Newton series} of a polynomial $f(x)$.
This is the formal power series whose coefficients are the power sum polynomials evaluated at the roots of $f$.

\begin{definition}[\cite{BFSS06}]
	Let $f(x) \in \F[x]$ be a univariate polynomial of degree $n$.
	Let $\alpha_1,\ldots,\alpha_n \in \F$ be the roots of $f$, counted with multiplicity.
	The \emph{Newton series} of $f$, denoted $\Newton(f)$, is the formal power series in $\F \llb t \rrb$ given by
	\[
		\Newton(f) \coloneqq \sum_{k=0}^\infty p_k(\vec{\alpha})\, t^k. \qedhere
	\]
\end{definition}

\subsection{From Coefficients to Power Sums}

We will now see how to efficiently convert between the values of $e_1(\vec{\alpha}),\ldots,e_n(\vec{\alpha})$ and $p_1(\vec{\alpha}), \ldots, p_n(\vec{\alpha})$. 
That is, we will convert between the representations of a polynomial $f(x) = \prod_{i=1}^n (x - \alpha_i)$ by its coefficients and by its Newton series $\Newton(f)$.

We first consider the task of computing $\Newton(f)$ up to a specified degree, given the coefficients of $f$.
To do this, we make use of the fact that $\Newton(f)$ can be expressed as a rational function in $t$, where the numerator and denominator can be easily computed from $f$.
Below, we recall the notion of the \emph{reversal} of a polynomial.

\begin{definition}
	Let $f(x) = \sum_{i=0}^n a_i x^i$ be a univariate polynomial of degree $n$ with $a_n \neq 0$.
	The \emph{reversal} of $f$, denoted $\rev(f)$, is the univariate polynomial
	\[
		\rev(f)(x) \coloneqq \sum_{i=0}^n a_{n-i} x^i. \qedhere
	\]
\end{definition}

The Newton series $\Newton(f)$ of a polynomial $f$ can be written as a rational function in terms of the reversal of $f$.

\begin{lemma}[{\cite[Lemma 1]{BFSS06}}] \label{lem:p_k from e_k identity}
	Let $f(x) = \sum_{i=0}^n a_i x^i$ be a univariate polynomial of degree $n$ with $a_n \neq 0$.
	Then $\Newton(f)$ is a rational function in $t$ given by
	\[
		\Newton(f) = \frac{\rev(f')(t)}{\rev(f)(t)}. \qedhere
	\]
\end{lemma}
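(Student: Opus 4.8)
The plan is to evaluate the right-hand side directly by means of the logarithmic derivative of $f$ and then match it coefficient-by-coefficient with the generating function $\sum_k p_k(\vec{\alpha})\, t^k$. First I would pass to the algebraic closure and write $f(x) = a_n \prod_{i=1}^n (x - \alpha_i)$, with roots listed according to multiplicity. Differentiating and dividing gives the partial-fraction identity $\frac{f'(x)}{f(x)} = \sum_{i=1}^n \frac{1}{x - \alpha_i}$; repeated roots cause no trouble, since a root of multiplicity $m$ contributes the term $\frac{m}{x-\alpha}$, which is precisely the contribution of listing that root $m$ times.

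Next I would rewrite the reversals as evaluations at $1/t$. Since $\deg f = n$ with $a_n \neq 0$, we have $\rev(f)(t) = t^n f(1/t)$, and since (under the standing hypothesis that $\ch(\F) = 0$ or $\ch(\F) > n$) the polynomial $f'$ has degree exactly $n-1$, we have $\rev(f')(t) = t^{n-1} f'(1/t)$. Combining these with the partial-fraction identity yields
\[
	\frac{\rev(f')(t)}{\rev(f)(t)} = \frac{t^{n-1} f'(1/t)}{t^n f(1/t)} = \frac{1}{t} \cdot \frac{f'(1/t)}{f(1/t)} = \frac{1}{t} \sum_{i=1}^n \frac{1}{1/t - \alpha_i} = \sum_{i=1}^n \frac{1}{1 - \alpha_i t}.
\]

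Finally I would observe that $\rev(f)(0) = a_n \neq 0$, so $\rev(f)(t)$ is a unit in $\F\llb t \rrb$ and the left-hand side genuinely defines a formal power series; expanding each summand as the geometric series $\frac{1}{1 - \alpha_i t} = \sum_{k \ge 0} \alpha_i^k t^k$ and summing over $i$ gives $\sum_{k \ge 0} \bigl(\sum_{i=1}^n \alpha_i^k\bigr) t^k = \sum_{k \ge 0} p_k(\vec{\alpha})\, t^k = \Newton(f)$, where the $k = 0$ term correctly reproduces $p_0(\vec{\alpha}) = n$. I do not expect any genuine obstacle here; the only care required is the routine bookkeeping that justifies $\rev(f)(t) = t^n f(1/t)$ and $\rev(f')(t) = t^{n-1} f'(1/t)$ (the latter using that $n \neq 0$ in $\F$), together with checking that the partial-fraction identity for $f'/f$ is valid over $\overline{\F}$ with multiplicities.
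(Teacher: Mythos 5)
Your proof is correct, and it is the standard argument for this identity; note that the paper cites this as Lemma 1 of \textcite{BFSS06} and does not include its own proof, so there is no in-paper proof to compare against. The three ingredients — the logarithmic-derivative partial-fraction identity $f'/f = \sum_i \frac{1}{x - \alpha_i}$, the rewriting $\rev(f)(t) = t^n f(1/t)$ and $\rev(f')(t) = t^{n-1} f'(1/t)$, and the geometric-series expansion of $\frac{1}{1-\alpha_i t}$ — are exactly what one expects, and you handle the minor points carefully: repeated roots contribute the correct multiplicity, $\rev(f)(0) = a_n \neq 0$ makes the quotient a legitimate power series, and the $k=0$ term matches $p_0(\vec{\alpha}) = n$. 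You are also right to flag that $\rev(f')(t) = t^{n-1} f'(1/t)$ uses the fact that $\deg f' = n-1$ (equivalently $n \neq 0$ in $\F$); the paper's definition of $\rev$ is only given for polynomials with nonzero leading coefficient, so strictly speaking the lemma's statement already presupposes this, and it is guaranteed by the paper's standing assumption of characteristic zero or characteristic larger than the relevant degrees.
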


Using the above expression for $\Newton(f)$ as a rational function, we can compute $\Newton(f)$ to degree $d$ by expanding $\frac{1}{\rev(f)(t)}$ as a power series in $t$ up to degree $d$, multiplying by $\rev(f')(t)$, and using polynomial interpolation to recover the coefficients of $\Newton(f)$.

\begin{lemma} \label{lem:p_k from e_k algorithm}
	Let $f \in \F[x]$ be a univariate polynomial of degree $n$ and let $d \in \naturals$.
	Then the coefficients of $\Newton(f)$ up to degree $d$ can be computed in $\AC^0_\F$.
\end{lemma}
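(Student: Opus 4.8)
The plan is to combine the rational-function expression for $\Newton(f)$ from \cref{lem:p_k from e_k identity} with a power-series inversion that can be carried out in $\AC^0$ by interpolation. First I would set $g(t) \coloneqq \rev(f)(t)$ and note that since $f$ is monic of degree $n$, the constant term of $g$ is $g(0) = a_n = 1$, so $g$ is invertible as a formal power series and its inverse $1/g(t)$ is well defined in $\F\llb t\rrb$. The entire task reduces to computing the first $d+1$ coefficients of $1/g(t)$, since once we have those we multiply by $\rev(f')(t)$ (a polynomial multiplication, which is in $\AC^0$ by the basic arithmetic lemma) and truncate at degree $d$ to obtain the coefficients of $\Newton(f)$.

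The key step is therefore: given a polynomial $g(t)$ with $g(0) = 1$ and $\deg g = n \le d$, compute the degree-$\le d$ truncation of $1/g(t)$ in $\AC^0$. The idea is to write $g(t) = 1 - h(t)$ where $h(0) = 0$, so that formally $1/g(t) = \sum_{k=0}^{\infty} h(t)^k$, and since $h$ has no constant term, $h(t)^k$ contributes nothing below degree $k$; hence the degree-$\le d$ truncation of $1/g(t)$ equals the degree-$\le d$ truncation of $\sum_{k=0}^{d} h(t)^k$. The polynomial $\sum_{k=0}^d h(t)^k$ has degree at most $dn \le d^2$ in $t$, and I would compute it by introducing a fresh variable $y$, forming $\Phi(t, y) \coloneqq \sum_{k=0}^{d} y^k h(t)^k$ — which is computable by a small low-depth circuit, as each $h(t)^k$ is a product of $k$ copies of $h$ arranged as a single unbounded-fan-in product gate, and the outer sum is another unbounded-fan-in gate — and then applying polynomial interpolation in $y$ (\cref{lem:polynomial interpolation}) at $d+1$ distinct points to extract the coefficient of $y^k$, which is $h(t)^k$. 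Wait — that is not quite the right move; more directly, I would just compute $S(t) \coloneqq \sum_{k=0}^d h(t)^k$ itself as a circuit of constant depth and polynomial size (sum of $d+1$ products, each product of at most $d$ factors of the explicitly-given polynomial $h$), then extract its coefficients of degree $0$ through $d$ in $t$ by one more application of interpolation in a dummy variable, or simply observe that a constant-depth circuit computing a polynomial of polynomially-bounded degree yields each of its coefficients in $\AC^0$ via \cref{lem:polynomial interpolation}. The coefficients of $S(t)$ in degrees $0,\dots,d$ agree with those of $1/g(t)$, which is exactly what we need.

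Assembling the pieces: compute $\rev(f)$ and $\rev(f')$ in $\AC^0$ (coefficient permutation plus the derivative, both trivially in $\AC^0$); compute the degree-$\le d$ truncation of $1/\rev(f)(t)$ by the power-series inversion above; multiply by $\rev(f')(t)$ and truncate at degree $d$; by \cref{lem:p_k from e_k identity} the resulting coefficients are $p_0(\vec\alpha), p_1(\vec\alpha), \dots, p_d(\vec\alpha)$. Each stage is $\AC^0$, and composing a constant number of $\AC^0$ stages stays in $\AC^0$ (the composition lemma for piecewise circuits is not even needed here since there is no branching). The main obstacle is the bookkeeping in the power-series inversion step — making sure that truncating $\sum_{k=0}^{d} h(t)^k$ at degree $d$ genuinely reproduces the power-series inverse, and that the circuit extracting individual coefficients via interpolation has the claimed size and depth, possibly over a small field extension $\K \supseteq \F$ when $|\F| \le d$. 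Everything else is routine manipulation of the identities recalled above.
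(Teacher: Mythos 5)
Your proposal is correct and takes essentially the same approach as the paper: use the rational form of $\Newton(f)$ from \cref{lem:p_k from e_k identity}, expand $1/\rev(f)$ as a truncated geometric series (valid since $\rev(f)$ has nonzero constant term), form a low-depth circuit for the resulting polynomial, and extract the coefficients by interpolation via \cref{lem:polynomial interpolation}. The only cosmetic difference is that the paper directly builds the single polynomial $s(x) = \rev(f')(x)\sum_{k=0}^d(-h(x)/f_n)^k$ and interpolates once, whereas you separate the inversion and the multiplication by $\rev(f')$ into two stages; this does not change the argument.
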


\begin{proof}
	Write $f(x) = \sum_{i=0}^n f_i x^i$.
	Let $g(x) \coloneqq \rev(f')(x)$ and let $h(x) \coloneqq \rev(f)(x) - f_n$.
	Note that $h(0) = 0$.
	Let $p_k$ be the degree-$k$ coefficient of $\Newton(f)$.
	\cref{lem:p_k from e_k identity} implies
	\[
		\sum_{k=0}^\infty p_k x^k = \frac{g(x)}{f_n + h(x)}.
	\]
	Because $h(0) = 0$, we can invert $f_n + h(x)$ in $\F \llb x \rrb$, giving us
	\[
		\sum_{k=0}^\infty p_k x^k = g(x) \cdot \frac{1}{f_n} \sum_{k=0}^\infty \del{\frac{-h(x)}{f_n}}^k.
	\]
	This implies
	\[
		\frac{g(x)}{f_n} \sum_{k=0}^d \del{\frac{-h(x)}{f_n}}^k = \sum_{k=0}^d p_k x^k + x^{d+1} r(x)
	\]
	for some polynomial $r(x) \in \F[x]$.
	Let $s(x) \coloneqq \frac{g(x)}{f_n} \sum_{k=0}^d \del{\frac{-h(x)}{f_n}}^k$.
	The above equality implies that we can recover the desired coefficients $p_0,\ldots,p_d$ of $\Newton(f)$ by interpolating the coefficients of $s(x)$.

	Observe that the coefficients of $g(x)$ and $h(x)$ can be computed from the coefficients of $f(x)$ in a straightforward manner by a circuit of depth 1 and size $O(n)$.
	This, together with the definition of $s(x)$, yields a circuit of size $O(n + d)$ and depth 4 that computes $s(x)$.
	Note that $\deg(s) \le d (n+1)$.
	By \cref{lem:polynomial interpolation}, we can compute the coefficients of $s(x)$ using a circuit of size $O(n d^2)$ and depth 5 as desired.
\end{proof}

\subsection{From Power Sums to Coefficients}

To recover a polynomial from its Newton series, we make use of the following identity.
\textcite{SW01} used this identity to construct smaller low-depth circuits for the elementary symmetric polynomials.

\begin{lemma} \label{lem:e_k from p_k identity}
	Fix $n \in \naturals$.
	Let $\exp(t) \coloneqq \sum_{k=0}^\infty \frac{1}{k!} t^k \in \F \llb t \rrb$ denote the exponential power series.
	Then we have the equality of formal power series
	\[
		\sum_{k=0}^n e_k(\vec{x}) t^k = \exp\del{\sum_{k=1}^\infty \frac{(-1)^{k+1}}{k} p_k(\vec{x}) t^k}. \qedhere
	\]
\end{lemma}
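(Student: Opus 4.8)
The plan is to reduce the claimed power-series identity to the logarithmic form of Newton's identities, which is a classical fact about the generating functions $E(t) := \sum_{k=0}^n e_k(\vec{x})t^k$ and the power sum series. First I would observe that since $f(x) = \prod_{i=1}^n(x - x_i)$ as a polynomial over $\F(\vec{x})$ (treating the $x_i$ as formal variables), we may compute directly: $E(t) = \sum_{k=0}^n e_k(\vec{x})t^k = \prod_{i=1}^n (1 + x_i t)$. The right-hand side factors as a product, so its logarithm is a sum, which is exactly the structure we want to exploit. The technical subtlety is that $\exp$ and $\log$ of formal power series are only defined when the constant term is, respectively, arbitrary or equal to $1$; here $E(0) = e_0(\vec{x}) = 1$ and each factor $1 + x_i t$ has constant term $1$, so $\log E(t)$ is a well-defined element of $\F\llbracket t \rrbracket$ with zero constant term, and $\exp(\log E(t)) = E(t)$. (This uses $\ch(\F) = 0$ or $\ch(\F)$ large, so that $1/k!$ makes sense.)

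The key step is then to compute $\log E(t)$ explicitly. By additivity of $\log$ over products, $\log E(t) = \sum_{i=1}^n \log(1 + x_i t)$. Now I would substitute the standard expansion $\log(1 + u) = \sum_{k=1}^\infty \frac{(-1)^{k+1}}{k} u^k$ with $u = x_i t$, obtaining
\[
	\log E(t) = \sum_{i=1}^n \sum_{k=1}^\infty \frac{(-1)^{k+1}}{k} x_i^k t^k = \sum_{k=1}^\infty \frac{(-1)^{k+1}}{k} \left( \sum_{i=1}^n x_i^k \right) t^k = \sum_{k=1}^\infty \frac{(-1)^{k+1}}{k} p_k(\vec{x}) t^k,
\]
where the interchange of the two sums is legitimate because for each fixed power $t^k$ only finitely many terms (namely $k$ of them, one per $i$ — wait, $n$ of them) contribute, so the double sum is a well-defined formal power series. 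Exponentiating both sides and using $\exp(\log E(t)) = E(t)$ gives
\[
	\sum_{k=0}^n e_k(\vec{x}) t^k = E(t) = \exp\!\left( \sum_{k=1}^\infty \frac{(-1)^{k+1}}{k} p_k(\vec{x}) t^k \right),
\]
which is precisely the claimed identity.

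I do not expect a serious obstacle here; the statement is essentially the classical "exp--log" form of the Girard--Newton relations, and the only points requiring care are (i) verifying that all formal power series operations ($\log$ of a series with constant term $1$, $\exp$ of a series with zero constant term, and their mutual inverseness) are valid, which follows from standard facts about $\F\llbracket t \rrbracket$ under the characteristic hypothesis, and (ii) justifying the rearrangement of the double summation, which is immediate since the relevant sums are finite at each degree. If one prefers an entirely algebraic derivation avoiding $\log$, one can instead differentiate $E(t)$, use $E'(t)/E(t) = \sum_i \frac{x_i}{1 + x_i t} = \sum_{k \ge 1}(-1)^{k+1} p_k(\vec{x}) t^{k-1}$, and recognize that $E(t) = \exp(\int E'(t)/E(t)\,dt)$ — but the product-expansion route above is the cleanest.
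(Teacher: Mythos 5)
Your proof is correct and is the standard argument. Note that the paper does not actually prove this lemma: it is stated as a known identity (with a pointer to \textcite{SW01}, who used it to build small low-depth circuits for the elementary symmetric polynomials), so your write-up supplies a proof that the paper leaves implicit. Your route — writing $E(t) = \sum_k e_k(\vec{x})t^k = \prod_i (1 + x_i t)$, taking $\log$ in $\F[\vec{x}]\llb t \rrb$ (valid since each factor has constant term $1$), expanding via $\log(1+u) = \sum_{k\ge 1}\frac{(-1)^{k+1}}{k}u^k$, exchanging the finite-at-each-degree double sum to recognize $p_k(\vec{x})$, and exponentiating — is exactly the classical exp--log derivation of the Girard--Newton relations. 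The one stray false start (``$k$ of them, one per $i$ --- wait, $n$ of them'') is self-corrected and the final justification is right: at each power $t^k$ there are $n$ contributions. The alternative you sketch via $E'(t)/E(t)$ is the logarithmic-derivative form of \cref{lem:p_k from e_k identity}, and it is equally standard; either route is fine.
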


Note that in a sense, the power sum polynomials correspond to the logarithm of the elementary symmetric polynomials.
We will witness the power of this in the next section.

We now show how to recover the coefficients of a polynomial from its Newton series.
Note that $\Newton(f) = \Newton(\alpha f)$ for any polynomial $f$ and nonzero $\alpha \in \F$, so we cannot hope to recover the leading coefficient of $f$.
This will not be an issue for us, as we restrict our attention to monic polynomials throughout this work.

\begin{lemma} \label{lem:e_k from p_k algorithm}
	Let $\F$ be a field of characteristic zero or characteristic greater than $n$.
	Let $f \in \F[x]$ be a univariate polynomial of degree $n$.
	Given the coefficients of $\Newton(f)$ of degree up to $n$ as input, the coefficients of $f(x)$ can be computed in $\AC^0_\F$.
\end{lemma}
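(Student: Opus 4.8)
The plan is to invert the relation from \cref{lem:e_k from p_k identity} using the power series machinery already developed. We are given the coefficients $p_1,\ldots,p_n$ of $\Newton(f)$ (where $p_k = p_k(\vec{\alpha})$), and we want the coefficients of the monic polynomial $f(x) = \prod_{i=1}^n (x-\alpha_i) = \sum_{i=0}^n (-1)^{n-i} e_{n-i}(\vec{\alpha}) x^i$. So it suffices to compute $e_1(\vec{\alpha}),\ldots,e_n(\vec{\alpha})$ in $\AC^0$ from the $p_k$, since forming the coefficients of $f$ from the $e_k$ is just a sign flip and reindexing, computable in depth $1$.

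First I would form the truncated power series $L(t) \coloneqq \sum_{k=1}^{n} \frac{(-1)^{k+1}}{k} p_k t^k \in \F[t]$; this is a polynomial of degree $n$ whose coefficients are obtained from the inputs $p_k$ by scaling by the constants $\frac{(-1)^{k+1}}{k}$ (here is where $\ch(\F) = 0$ or $\ch(\F) > n$ is used, to invert $k$ for $1 \le k \le n$), so it costs depth $1$ and size $O(n)$. By \cref{lem:e_k from p_k identity}, we have $\sum_{k=0}^n e_k(\vec{\alpha}) t^k \equiv \exp(L(t)) \pmod{t^{n+1}}$, since $L(t)$ agrees with $\sum_{k \ge 1} \frac{(-1)^{k+1}}{k} p_k t^k$ modulo $t^{n+1}$ and $\exp$ is continuous in the $t$-adic topology. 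Next I would compute $\exp(L(t)) \bmod t^{n+1}$ by truncating the exponential series: since $L(0) = 0$, the term $L(t)^j$ is divisible by $t^j$, so $\exp(L(t)) \equiv \sum_{j=0}^{n} \frac{1}{j!} L(t)^j \pmod{t^{n+1}}$. This is a sum of $n+1$ terms, each a constant multiple of a power $L(t)^j$ with $j \le n$; computing all the powers $L(t)^j$ in parallel and summing gives a circuit of size $\poly(n)$ and constant depth (again invoking $\ch(\F) = 0$ or $\ch(\F) > n$ to invert $j!$). Finally, I would apply \cref{lem:polynomial interpolation} to extract the coefficients $e_0,\ldots,e_n$ of this power series from the circuit computing $\sum_{j=0}^n \frac{1}{j!} L(t)^j$ as a polynomial in $t$ (whose degree is at most $n^2$, which is still $\poly(n)$), adding one more layer of depth; then flip signs and reindex to read off the coefficients of $f$.

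The main thing to be careful about — rather than a genuine obstacle — is the bookkeeping for constant depth: the exponential is truncated at the $n$-th term (legitimate because $L(0)=0$), the powers $L(t)^j$ must be computed by a single unbounded-fan-in product gate (or a constant-depth tree) rather than iteratively, and the interpolation step must be charged correctly using \cref{lem:polynomial interpolation} with degree bound $O(n^2)$. I should also note the characteristic hypothesis is used in exactly two places, dividing by $k$ in forming $L(t)$ and dividing by $j!$ in the truncated exponential, and both require only $\ch(\F) = 0$ or $\ch(\F) > n$, matching the statement. The correctness of the whole computation reduces entirely to \cref{lem:e_k from p_k identity}, which is quoted, so no new algebraic identities are needed.
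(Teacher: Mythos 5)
Your proposal is correct and takes essentially the same route as the paper's own proof: form the truncated logarithm $L(t)=\sum_{k=1}^n \frac{(-1)^{k+1}}{k}p_k t^k$ (the paper calls it $g$), apply the truncated exponential $\sum_{j=0}^n \frac{1}{j!}L(t)^j$, interpolate the coefficients via \cref{lem:polynomial interpolation} with the same degree bound $O(n^2)$, and flip signs to read off the coefficients of $f$; the characteristic hypothesis is used in exactly the two places you identify.
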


\begin{proof}
	Let $\alpha_1,\ldots,\alpha_n \in \F$ denote the roots of $f(x)$.
	Recall that $e_k(\vec{\alpha})$ is the coefficient of $x^{n-k}$ in $f(x)$ and $p_k(\vec{\alpha})$ is the coefficient of $t^k$ in $\Newton(f)$.
	\cref{lem:e_k from p_k identity} implies
	\[
		\sum_{k=0}^n e_k(\vec{\alpha}) x^k = \exp\del{\sum_{k=1}^\infty \frac{(-1)^{k+1}}{k} p_k(\vec{\alpha}) x^k}.
	\]
	Let
	\[
		g(x) \coloneqq \sum_{k=1}^n \frac{(-1)^{k+1}}{k} p_k(\vec{\alpha}) x^k.
	\]
	Note that because $\ch(\F) = 0$ or $\ch(\F) > n$, the polynomial $g$ is well-defined.
	The preceding identity implies that
	\[
		\sum_{k=0}^n \frac{g(x)^k}{k!} = \sum_{k=0}^n e_k(\vec{\alpha}) x^k + x^{n+1} h(x),
	\]
	where $h(x) \in \F[x]$ is some polynomial in $x$.
	Let $r(x) \coloneqq \sum_{k=0}^n \frac{g(x)^k}{k!}$.
	(As with $g$, the polynomial $r$ is well-defined because $\ch(\F) = 0$ or $\ch(\F) > n$.)
	By interpolating the coefficients of $r(x)$, we can recover the values $e_1(\vec{\alpha}),\ldots,e_n(\vec{\alpha})$.

	Observe that the coefficients of $g(x)$ are easily determined from the coefficients of $\Newton(f)$.
	This yields an arithmetic circuit of size $O(n)$ and depth 1 that computes $g(x)$.
	The truncated exponential $r(x)$ can then be computed by an arithmetic circuit of size $O(n)$ and depth 3.
	Note that $\deg(r) \le n^2$, so \cref{lem:polynomial interpolation} implies that we can compute the coefficients of $r$ using an arithmetic circuit of size $O(n^3)$ and depth 4.
	In particular, this circuit computes the values $e_1(\vec{\alpha}),\ldots,e_n(\vec{\alpha})$.
	The polynomial $f(x)$ is given by $f(x) = \sum_{i=0}^n (-1)^{n-i} e_{n-i}(\vec{\alpha}) x^i$, so multiplying each $e_i(\vec{\alpha})$ by $(-1)^i$ results in the coefficients of $f(x)$.
\end{proof}

\section{Exact Division and Roots of Perfect Powers} \label{sec:exact division}

In this section, we warm up with two applications of the algorithms described in \cref{sec:newton series}.
We design $\AC^0_\F$ algorithms to compute the quotient $f/g$ when $g$ is promised to divide $f$, and to compute $f^{1/r}$ when $f$ is promised to be an $r$\ts{th} power of a polynomial.
These are toy problems, and the simple algorithms we design are meant to illustrate the utility of representing a polynomial by its Newton series.
The Newton series plays the role of a logarithm, converting multiplication to addition and division to subtraction, which is easy to see from the fact that $\Newton(fg) = \Newton(f) + \Newton(g)$.

We now describe an $\AC^0_\F$ algorithm to compute the quotient $f/g$ when $g$ is promised to divide $f$.

\begin{lemma} \label{lem:exact division}
	Let $\F$ be a field of characteristic zero or characteristic greater than $d$.
	Let $f, g \in \F[x]$ be monic univariate polynomials of degree at most $d$ given by their coefficients.
	Suppose that $g$ divides $f$.
	Then the coefficients of $f/g$ can be computed in $\AC^0_\F$.
\end{lemma}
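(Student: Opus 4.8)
The plan is to use the Newton series as a logarithm, exploiting the identity $\Newton(f/g) = \Newton(f) - \Newton(g)$. Concretely, first I would compute the coefficients of $\Newton(f)$ and $\Newton(g)$ up to degree $d$ using \cref{lem:p_k from e_k algorithm} in $\AC^0$. Subtracting these coefficient-wise (a depth-1 operation) yields the coefficients of $\Newton(f/g)$ up to degree $d$. Since $f/g$ is a monic polynomial of degree at most $d$, its Newton series up to degree $d$ determines it completely, so I would then apply \cref{lem:e_k from p_k algorithm} to recover the coefficients of $f/g$ in $\AC^0$. Composing these three steps via \cref{lem:compose piecewise circuit} (or just directly, since all circuits here are ordinary arithmetic circuits) gives an $\AC^0$ algorithm overall.

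The key point requiring justification is why $\Newton(f/g) = \Newton(f) - \Newton(g)$. If $f = \prod_i (x - \alpha_i)$ and $g = \prod_j (x - \beta_j)$, then since $g \mid f$, the multiset of roots of $g$ is a sub-multiset of the roots of $f$, and the roots of the quotient $q := f/g$ are exactly the remaining roots. Because the power sum functions are additive over disjoint unions of multisets — $p_k$ of the roots of $f$ equals $p_k$ of the roots of $g$ plus $p_k$ of the roots of $q$ — we get $\Newton(f) = \Newton(g) + \Newton(q)$ coefficient by coefficient, hence $\Newton(q) = \Newton(f) - \Newton(g)$. One small subtlety: \cref{lem:p_k from e_k algorithm} as stated requires the input polynomial to have nonzero leading coefficient, which holds here since $f$, $g$, and $q$ are all monic; and \cref{lem:e_k from p_k algorithm} requires $\ch(\F) = 0$ or $\ch(\F) > \deg(q)$, which is implied by our hypothesis $\ch(\F) > d \ge \deg(q)$.

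I do not expect any serious obstacle here; this is a warm-up illustrating the technique. The only mild care needed is bookkeeping about degrees: \cref{lem:p_k from e_k algorithm} needs the truncation degree to be at least the degree of the polynomial, so I would compute all Newton series to degree $d$ uniformly, which is safe since $\deg f, \deg g, \deg q \le d$. The size and depth bounds then follow immediately from stacking the three $\AC^0$ subroutines, each of which is stated to be $\AC^0$ in the cited lemmas, and the composition only increases depth by a constant.
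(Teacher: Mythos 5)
Your proposal is correct and matches the paper's proof essentially verbatim: the paper also computes the power sums of the roots of $f$ and $g$ via \cref{lem:p_k from e_k algorithm}, subtracts to get the power sums of the roots of $f/g$, and then invokes \cref{lem:e_k from p_k algorithm} to recover the coefficients. Your framing via the identity $\Newton(f/g) = \Newton(f) - \Newton(g)$ is just a cleaner way of stating what the paper does by writing out the factorizations explicitly.
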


\begin{proof}
	Suppose $f(x)$ and $g(x)$ factor as
	\begin{align*}
		f(x) &= \prod_{i=1}^n (x - \alpha_i) \prod_{i=1}^{m} (x - \beta_i) \\
		g(x) &= \prod_{i=1}^n (x - \alpha_i),
	\end{align*}
	where $\alpha_1, \ldots, \alpha_n, \beta_1, \ldots, \beta_m \in \F$ and the $\alpha_i$ and $\beta_i$ are not necessarily distinct.
	Applying \cref{lem:p_k from e_k algorithm} to $f(x)$ and $g(x)$, we can compute the power sums $p_k(\vec{\alpha}) + p_k(\vec{\beta})$ and $p_k(\vec{\alpha})$, respectively, for all $k \in [m]$ in $\AC^0_\F$.
	Taking differences, we obtain the power sums $p_k(\vec{\beta})$ for all $k \in [m]$.
	Applying \cref{lem:e_k from p_k algorithm} to the power sums $p_k(\vec{\beta})$ yields the coefficients of $\prod_{i=1}^{m} (x - \beta_i)$ in $\AC^0_\F$ as desired.
\end{proof}

Next, we describe an $\AC^0_\F$ algorithm to compute the $r$\ts{th} root $f^{1/r}$ when $f$ is promised to be an $r$\ts{th} power of a polynomial.

\begin{lemma} \label{lem:perfect power}
	Let $\F$ be a field of characteristic zero or characteristic greater than $d$.
	Let $f \in \F[x]$ be a univariate polynomial of degree at most $d$ given by its coefficients and let $r \in \naturals$.
	Suppose that $f = g^r$ for some $g \in \F[x]$.
	Then the coefficients of $g$ can be computed in $\AC^0_\F$.
\end{lemma}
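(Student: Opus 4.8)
The plan is to mirror the strategy of the exact division lemma (\cref{lem:exact division}), exploiting the fact that the Newton series linearizes multiplication. First I would factor $f(x) = \prod_{i=1}^k (x-\gamma_i)^{r c_i}$ over $\overline{\F}$, so that $g(x) = \prod_{i=1}^k (x-\gamma_i)^{c_i}$ is the polynomial we want to recover. (Here $\deg(f) = r \deg(g) \le d$, so $\deg(g) \le d/r$.) The key identity is that, writing $\vec{\delta}$ for the multiset of roots of $g$ listed with multiplicity, each root $\gamma_i$ of $f$ with multiplicity $r c_i$ contributes $r c_i \gamma_i^j$ to the $j$\ts{th} power sum of the roots of $f$, while it contributes $c_i \gamma_i^j$ to the $j$\ts{th} power sum of the roots of $g$. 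Hence $p_j(\text{roots of } f) = r \cdot p_j(\vec{\delta})$ for every $j \ge 1$.

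The algorithm is then immediate. Apply \cref{lem:p_k from e_k algorithm} to $f$ to compute the coefficients of $\Newton(f)$ up to degree $d$; these are the power sums $p_0, p_1, \ldots, p_d$ of the roots of $f$. For $1 \le j \le d/r$, set $q_j \coloneqq p_j / r$, which is the $j$\ts{th} power sum of the roots of $g$; this division is valid because $\ch(\F) = 0$ or $\ch(\F) > d \ge r$, so $r$ is invertible in $\F$. (We ignore $p_0$, which just records $\deg(f)$, and instead use $q_0 = \deg(g) = \deg(f)/r$, a quantity available from the input sizes.) Feeding the sequence $(q_0, q_1, \ldots, q_{d/r})$ into \cref{lem:e_k from p_k algorithm} recovers the coefficients of $g$, since $g$ is monic of degree $\deg(f)/r \le d$ and the characteristic is either zero or exceeds $d$, hence exceeds $\deg(g)$. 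Composing the two $\AC^0$ circuits from \cref{lem:p_k from e_k algorithm} and \cref{lem:e_k from p_k algorithm}, with the scalar multiplications by $1/r$ in between, gives an $\AC^0$ circuit computing the coefficients of $g$.

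The main obstacle is essentially bookkeeping rather than a genuine difficulty: I need to make sure the characteristic hypotheses line up so that both the division by $r$ and the application of \cref{lem:e_k from p_k algorithm} are legitimate. Since the lemma assumes $\ch(\F) = 0$ or $\ch(\F) > d$, and both $r \le \deg(f) \le d$ and $\deg(g) \le d$, both invocations are covered. One should also note that $g$ is monic: $f$ is monic by our global convention, and a monic polynomial that is a perfect $r$\ts{th} power must be the $r$\ts{th} power of a monic polynomial (comparing leading coefficients in the field, which has no nontrivial $r$\ts{th} roots of unity issues affecting monicity), so the reconstruction via power sums — which only determines $g$ up to scalar — pins down $g$ exactly. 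No zero-testing or piecewise behavior is needed here, since the promise that $f$ is an exact $r$\ts{th} power means the output depends continuously (indeed polynomially, after division elimination) on the coefficients of $f$.
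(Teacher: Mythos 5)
Your proposal is correct and takes essentially the same approach as the paper: both compute the Newton series of $f$, observe that each power sum of the roots of $f$ is exactly $r$ times the corresponding power sum of the roots of $g$, divide by $r$, and feed the result into the power-sums-to-coefficients conversion. The paper writes $g = \prod_i (x - \alpha_i)$ with repeated $\alpha_i$ rather than your distinct-roots-with-multiplicities notation, and is terser about the characteristic check, but these are presentational differences only.
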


\begin{proof}
	Suppose that $g$ factors as 
	\[
		g(x) = \prod_{i=1}^n (x - \alpha_i),
	\]
	where $\alpha_1,\ldots,\alpha_n \in \F$ and the $\alpha_i$ are not necessarily distinct.
	Because $f = g^r$, this implies that $f$ factors as
	\[
		f(x) = \prod_{i=1}^n (x - \alpha_i)^r.
	\]
	Applying \cref{lem:p_k from e_k algorithm} to $f$, we compute in $\AC^0_\F$ the power sums
	\[
		\sum_{i=1}^n r \cdot \alpha_i^k = r \cdot p_k(\vec{\alpha})
	\]
	for each $k \in [n]$.
	Dividing by $r$ yields $p_k(\vec{\alpha})$ for all $k \in [n]$.
	Applying \cref{lem:e_k from p_k algorithm} to the power sums $p_k(\vec{\alpha})$ produces the coefficients of $g$, as desired.
\end{proof}

\section{Computing Symmetric Functions of the Roots of a Polynomial} \label{sec:symmetric functions of roots}

In this section, we return to the topic of computing symmetric functions of the roots of a polynomial $f \in \F[x]$ when $f$ is given by its coefficients.
The coefficients of $f$ are the elementary symmetric functions of its roots. 
As we saw in \cref{sec:newton series}, we can compute the power sums of the roots in $\AC^0_\F$.
In this section, we expand our toolbox, finding more symmetric functions that can be evaluated at the roots of a given polynomial $f$ in $\AC^0_\F$.

\subsection{Polynomial Functions}

Let $f, g \in \F[x]$ be univariate polynomials and let $\alpha_1, \ldots, \alpha_n \in \F$ be the roots of $f$, counted with multiplicity.
In this subsection, we evaluate the elementary symmetric polynomials at the values $g(\alpha_1),\ldots,g(\alpha_n)$.

We start by computing the sum $\sum_{i=1}^n g(\alpha_i)$.

\begin{lemma} \label{lem:sum over roots}
	Let $f, g \in \F[x]$ be univariate polynomials given by their coefficients.
	Suppose that $\alpha_1,\ldots,\alpha_n \in \F$ are the roots of $f$, counted with multiplicity.
	Then the sum $\sum_{i=1}^n g(\alpha_i)$ can be computed in $\AC^0_\F$.
\end{lemma}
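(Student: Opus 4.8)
The plan is to express $\sum_{i=1}^n g(\alpha_i)$ as a linear combination of the power sums $p_0(\vec\alpha), p_1(\vec\alpha), \ldots, p_D(\vec\alpha)$, where $D = \deg(g)$, and then invoke \cref{lem:p_k from e_k algorithm} to obtain those power sums from the coefficients of $f$ in $\AC^0$. Concretely, write $g(x) = \sum_{k=0}^{D} g_k x^k$ in terms of its coefficients $g_k \in \F$. Then
\[
	\sum_{i=1}^n g(\alpha_i) = \sum_{i=1}^n \sum_{k=0}^{D} g_k \alpha_i^k = \sum_{k=0}^{D} g_k \sum_{i=1}^n \alpha_i^k = \sum_{k=0}^{D} g_k\, p_k(\vec\alpha),
\]
using the convention $p_0(\vec\alpha) = n = \deg(f)$. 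So the desired quantity is just an inner product of the coefficient vector of $g$ with the vector of power sums of the roots of $f$.

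First I would apply \cref{lem:p_k from e_k algorithm} to $f$ with threshold $d \coloneqq \max(\deg(f), \deg(g))$ (so that $d \ge n$ as the lemma requires) to compute the coefficients $p_0(\vec\alpha), \ldots, p_D(\vec\alpha)$ of $\Newton(f)$ in $\AC^0$. Next I would form the sum $\sum_{k=0}^{D} g_k p_k(\vec\alpha)$ using a single unbounded-fan-in addition gate sitting on top of $D+1$ product gates, each multiplying the input coefficient $g_k$ by the computed power sum $p_k(\vec\alpha)$; this adds only depth $2$ and size $O(\deg(g))$ on top of the circuit from \cref{lem:p_k from e_k algorithm}. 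Composing the two pieces keeps us in $\AC^0$. I should also note the characteristic hypothesis: \cref{lem:p_k from e_k algorithm} as stated here only needs $a_n \neq 0$ (which holds since $f$ is monic) and does not itself impose a characteristic restriction for computing power sums from coefficients, so no assumption on $\ch(\F)$ is needed for this direction.

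There is essentially no obstacle here — the lemma is a warm-up whose only content is the observation that a polynomial function applied to roots and summed is a linear functional of the power sums. The one point that needs a half-sentence of care is the bookkeeping of $p_0$: if $\deg(g) \ge 0$ and $g$ has a nonzero constant term $g_0$, then the contribution is $g_0 \cdot n$, which is a legitimate field element computable by the circuit (the degree $n$ of $f$ is part of the input description). Everything else is routine.
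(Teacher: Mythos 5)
Your proof is correct and matches the paper's argument essentially verbatim: both express $\sum_i g(\alpha_i)$ as the linear combination $\sum_k g_k p_k(\vec\alpha)$ and invoke \cref{lem:p_k from e_k algorithm} to obtain the power sums from the coefficients of $f$ in $\AC^0$. The extra remarks on the threshold $d \ge n$ and the absence of a characteristic hypothesis are accurate bookkeeping but do not change the substance.
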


\begin{proof}
	Letting $g(x) = \sum_{i=0}^m g_i x^i$, we can rewrite the sum $\sum_{i=1}^n g(\alpha_i)$ as 
	\[
		\sum_{i=1}^n g(\alpha_i) = \sum_{i=1}^n \sum_{j=0}^m g_j \alpha_i^j = \sum_{j=0}^m g_j \sum_{i=1}^n \alpha_i^j = \sum_{j=0}^m g_j p_j(\vec{\alpha}).
	\]
	We can compute the power sums $p_k(\vec{\alpha})$ for $0 \le k \le m$ in $\AC^0_\F$ by applying \cref{lem:p_k from e_k algorithm} to $f(x)$.
	The sum $\sum_{j=0}^m g_j p_j(\vec{\alpha})$ can then be computed by a subcircuit of size $O(m)$ and depth 2.
\end{proof}

This shows that the sum $\sum_{i=1}^n g(\alpha_i)$ is easy to compute.
It would be interesting to find an $\AC^0_\F$ algorithm to compute the product $\prod_{i=1}^n g(\alpha_i)$, since this equals $\res(f,g)$ by \cref{lem:resultant as product}.
We will not only compute the product $\prod_{i=1}^n g(\alpha_i)$, but we will in fact compute all the elementary symmetric polynomials evaluated at $g(\alpha_1),\ldots,g(\alpha_n)$.
To do this, it suffices (by Newton's identities) to compute the power sums $\sum_{i=1}^n g(\alpha_i)^k$, which can be done by a straightforward application of \cref{lem:sum over roots}.

\begin{lemma} \label{lem:esym over roots}
	Let $\F$ be a field of characteristic zero or characteristic greater than $n$.
	Let $f, g \in \F[x]$ be univariate polynomials given by their coefficients.
	Suppose that $\alpha_1,\ldots,\alpha_n \in \F$ are the roots of $f$, counted with multiplicity.
	Then for any $d \in [n]$, the elementary symmetric polynomial $e_d(g(\alpha_1),\ldots,g(\alpha_n))$ can be computed in $\AC^0_\F$.
\end{lemma}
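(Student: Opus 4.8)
The plan is to reduce to Newton's identities. Writing $\beta_i \coloneqq g(\alpha_i)$, the quantity I want is, up to sign, the coefficient of $x^{n-d}$ in the monic degree-$n$ polynomial $P(x) \coloneqq \prod_{i=1}^n (x-\beta_i)$, since that coefficient is $(-1)^d e_d(\beta_1,\ldots,\beta_n)$. By \cref{lem:e_k from p_k algorithm}, applicable because $\ch(\F)=0$ or $\ch(\F)>n$, it suffices to compute the first $n$ coefficients of $\Newton(P)$, i.e.\ the power sums $q_k \coloneqq \sum_{i=1}^n \beta_i^k = \sum_{i=1}^n g(\alpha_i)^k$ for $1 \le k \le n$, in $\AC^0$ (the degree-$0$ coefficient $q_0 = n$ is known outright).

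To compute $q_k$, I would first produce the coefficient vector of the polynomial $g(x)^k$ in $\AC^0$, and then apply the argument behind \cref{lem:sum over roots}: for any $h \in \F[x]$ with coefficients $h_0, h_1, \ldots$, one has $\sum_{i=1}^n h(\alpha_i) = \sum_j h_j\, p_j(\vec\alpha)$, where the power sums $p_j(\vec\alpha)$ of the roots of $f$ are computable in $\AC^0$ by \cref{lem:p_k from e_k algorithm} applied to $f$ (with the degree parameter there set to, say, $n\deg(g)$, which suffices since $\deg(g^k)\le n\deg(g)$). Taking $h=g^k$, this expresses $q_k=\sum_j (g^k)_j\, p_j(\vec\alpha)$ as a depth-$2$ combination of quantities already available in $\AC^0$.

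The one genuinely delicate point is computing the coefficients of $g(x)^k$ in constant depth: the naive iterated product of $k$ copies of $g$ has depth $\Theta(\log k)$, while expanding $g^k$ term by term has size exponential in $k$. I would instead use evaluation and interpolation. Let $m = \deg(g)$ and fix $mk+1$ distinct scalars $\gamma_0,\ldots,\gamma_{mk}$, moving to a field extension if $\F$ is too small (as \cref{lem:polynomial interpolation} permits). Each $g(\gamma_j)=\sum_l g_l\gamma_j^l$ is a depth-$1$ computation and $g(\gamma_j)^k$ is a single unbounded-fan-in product, so the values $g(\gamma_0)^k,\ldots,g(\gamma_{mk})^k$ are obtained in depth $2$ and polynomial size; since $g(x)^k$ has degree $mk$, \cref{lem:polynomial interpolation} then recovers its coefficients by one more layer of linear combinations.

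Putting the pieces together, the circuit computes the coefficient vectors of $g^k$ for each $k\in[n]$, the power sums $p_j(\vec\alpha)$, the $q_k$ from these, and finally the coefficients of $P$ via \cref{lem:e_k from p_k algorithm}; reading off the $x^{n-d}$ coefficient and multiplying by $(-1)^d$ gives $e_d(g(\alpha_1),\ldots,g(\alpha_n))$. Since this is a composition of a constant number of polynomial-size $\AC^0$ circuits, it is itself a polynomial-size $\AC^0$ circuit. No piecewise or branching machinery is needed, as every stage computes an honest polynomial in the inputs; and coinciding values among the $g(\alpha_i)$ cause no trouble, since power sums and elementary symmetric polynomials are insensitive to multiplicity.
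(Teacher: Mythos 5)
Your proposal is correct and follows essentially the same route as the paper's proof: compute the coefficients of $g(x)^k$ by evaluation--interpolation (\cref{lem:polynomial interpolation}), use the argument of \cref{lem:sum over roots} to obtain the power sums $\sum_i g(\alpha_i)^k$ from those coefficients and $p_j(\vec{\alpha})$, and then invoke \cref{lem:e_k from p_k algorithm} to convert back to elementary symmetric polynomials. The extra detail you give about why the coefficients of $g^k$ are computable in constant depth is exactly what \cref{lem:polynomial interpolation} encapsulates.
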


\begin{proof}
	For any $k \in [n]$, the coefficients of $g(x)^k$ can be computed in $\AC^0_\F$ using \cref{lem:polynomial interpolation}.
	By \cref{lem:sum over roots}, we can compute the sum $\sum_{i=1}^n g(\alpha_i)^k$ in $\AC^0_\F$ for each $k \in [n]$.
	Applying \cref{lem:e_k from p_k algorithm} to the power sums $\set{p_k(g(\alpha_1),\ldots,g(\alpha_n)) : k \in [n]}$ yields the elementary symmetric polynomials $\set{e_k(g(\alpha_1),\ldots,g(\alpha_n)) : k \in [n]}$, again in $\AC^0_\F$.
\end{proof}

Later in \cref{sec:operations on roots}, it will be useful for us to compute the product of only the nonzero values of $g(\alpha_i)$.
We can do this by computing all elementary symmetric polynomials $e_k(g(\alpha_1),\ldots,g(\alpha_n))$ and then (piecewise) selecting the largest index $k$ such that $e_k(g(\alpha_1),\ldots,g(\alpha_n)) \neq 0$.

\begin{lemma} \label{lem:esym as multiplication}
	Let $\F$ be a field of characteristic zero or characteristic greater than $n$.
	Let $f, g \in \F[x]$ be univariate polynomials given by their coefficients.
	Suppose that $\alpha_1,\ldots,\alpha_n \in \F$ are the roots of $f$, counted with multiplicity.
	Let $S = \set{i \in [n] : g(\alpha_i) \neq 0}$.
	Then the product $\prod_{i \in S} g(\alpha_i)$ can be computed piecewise in $\AC^0_\F$.
\end{lemma}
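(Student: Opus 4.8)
The plan is to reduce the computation of $\prod_{i \in S} g(\alpha_i)$ to selecting the appropriate elementary symmetric polynomial among the $e_k(g(\alpha_1),\ldots,g(\alpha_n))$. First I would observe the elementary fact that if exactly $|S| = n - z$ of the values $g(\alpha_1),\ldots,g(\alpha_n)$ are nonzero (equivalently, $z$ of them vanish), then
\[
	e_k(g(\alpha_1),\ldots,g(\alpha_n)) = \begin{cases}
		\displaystyle\prod_{i \in S} g(\alpha_i) & \text{if } k = n - z, \\
		0 & \text{if } k > n - z,
	\end{cases}
\]
since every monomial of $e_k$ of degree $k > n - z$ must include at least one zero factor, while $e_{n-z}$ has exactly one nonvanishing monomial, namely the product over $S$. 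In particular, $\prod_{i \in S} g(\alpha_i)$ equals $e_k(g(\alpha_1),\ldots,g(\alpha_n))$ for the \emph{largest} index $k$ with $e_k(g(\alpha_1),\ldots,g(\alpha_n)) \neq 0$, and equals $1$ (the empty product) if all of these are zero, i.e.\ if $g(\alpha_i) = 0$ for every $i$.

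Next I would invoke \cref{lem:esym over roots} to compute all of $e_1(g(\vec\alpha)), \ldots, e_n(g(\vec\alpha))$ in $\AC^0$ by a single circuit (or rather $n$ circuits sharing a common structure), together with $e_0 = 1$. It then remains to select, piecewise, the correct one. This is exactly the situation \cref{def:piecewise circuits} is designed for: I would build test circuits $T_1, T_2, \ldots, T_{n+1}$ and computation circuits $C_1, \ldots, C_{n+1}$ where $T_j$ tests whether $e_{n+1-j}(g(\vec\alpha)) \neq 0$ (with $T_{n+1}$ being the identically-true final branch) and $C_j$ outputs $e_{n+1-j}(g(\vec\alpha))$, with the convention $e_0(g(\vec\alpha)) = 1$ handling the all-zeros case in the last branch. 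Concretely: branch $j$ fires when $e_n(g(\vec\alpha)) = \cdots = e_{n+2-j}(g(\vec\alpha)) = 0$ and $e_{n+1-j}(g(\vec\alpha)) \neq 0$, in which case the output is $e_{n+1-j}(g(\vec\alpha)) = \prod_{i \in S} g(\alpha_i)$ by the computation above. Each $T_j$ and $C_j$ is obtained from the $\AC^0$ circuits of \cref{lem:esym over roots}, so this is a piecewise $\AC^0$ circuit of polynomial size and constant depth.

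The argument has no real obstacle; the only point requiring a moment's care is the combinatorial claim that the top nonvanishing elementary symmetric polynomial in a list of field elements equals the product of the nonzero entries — which is immediate once one notes that a degree-$k$ squarefree monomial in $n$ variables, $k$ exceeding the number of nonzero variables, is forced to pick up a zero factor, and that at the threshold $k = |S|$ there is a unique surviving monomial. One should also double-check that the hypotheses on $\ch(\F)$ are inherited correctly: \cref{lem:esym over roots} requires $\ch(\F) = 0$ or $\ch(\F) > n$, which is already assumed in the statement of the present lemma, so nothing further is needed.
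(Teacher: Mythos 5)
Your proof is correct and is essentially identical to the paper's: both compute all $e_k(g(\vec\alpha))$ via \cref{lem:esym over roots}, use the observation that $e_k$ vanishes for $k > |S|$ while $e_{|S|}$ equals the product over $S$, and then piecewise-select the largest nonvanishing index by using the circuits for $e_k$ as both test and computation circuits. You are in fact slightly more explicit than the paper about the edge case $S = \varnothing$ (handled by $e_0 = 1$) and about verifying the characteristic hypothesis is inherited from \cref{lem:esym over roots}.
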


\begin{proof}
	By relabeling the roots of $f$, we may assume without loss of generality that $S = \set{1,\ldots,d}$.
	Let
	\begin{align*}
		\lambda_1 &\coloneqq g(\alpha_1) \neq 0 \\
		&\vdots \\
		\lambda_d &\coloneqq g(\alpha_d) \neq 0 \\
		\lambda_{d+1} &\coloneqq g(\alpha_{d+1}) = 0 \\
		&\vdots \\
		\lambda_n &\coloneqq g(\alpha_n) = 0,
	\end{align*}
	where $d \in \naturals$ is unknown to us.
	By \cref{lem:esym over roots}, we can compute $e_k(\vec{\lambda})$ for all $k \in [n]$ in $\AC^0$, and we will leverage this to compute $\prod_{i=1}^d \lambda_i$.

	When $k > d$, the degree-$k$ elementary symmetric polynomial evaluated at $\vec{\lambda}$ vanishes.
	To see this, expand the elementary symmetric polynomial as
	\[
		e_k(\lambda_1,\ldots,\lambda_n) = \sum_{\substack{S \subseteq [d] \\ T \subseteq \set{d+1,\ldots,n} \\ |S| + |T| = k}} \prod_{i \in S} \lambda_i \prod_{j \in T} \lambda_j.
	\]
	For $k > d$, every term in the above sum corresponds to a choice of $T \subseteq \set{d+1,\ldots,n}$ that is nonempty.
	Each such term incurs a factor of $\lambda_j = 0$, so the sum simplifies to zero.

	On the other hand, the degree-$d$ elementary symmetric polynomial evaluated at $\vec{\lambda}$ equals $\prod_{i=1}^d \lambda_i$.
	This follows from the fact that every term in the expansion of $e_d(\vec{\lambda})$ corresponding to a nonempty choice of $T \subseteq \set{d+1,\ldots,n}$ is zero, so the sum simplifies to the single nonzero term corresponding to $T = \varnothing$.

	This allows us to compute the product $\prod_{i=1}^d \lambda_i$ if we know the value of $d$.
	We can recover $d$ by finding the largest index $\hat{d}$ at which the degree-$\hat{d}$ elementary symmetric polynomial is nonzero.
	This $\hat{d}$ is precisely $d$, the number of nonzero $\lambda_i$.

	To formalize this algorithm as a piecewise $\AC^0$ circuit, we use a selection gate to output $\sel(e_n(\vec{\lambda}), e_{n-1}(\vec{\lambda}), \ldots, e_1(\vec{\lambda}))$.
\end{proof}

\begin{remark}
	Note that \cref{lem:sum over roots,lem:esym over roots,lem:esym as multiplication} extend to the setting where $g \in \F[x, \vec{y}]$ is a polynomial in many variables.
	In this variant, we want to compute the sum
	\[
		\sum_{i=1}^n g(\alpha_i, \vec{y}),
	\]
	and more generally the elementary symmetric polynomial
	\[
		e_d( g(\alpha_1, \vec{y}), \ldots, g(\alpha_n, \vec{y}) ).
	\]
	To do this, regard $g \in \F[\vec{y}][x]$ as a univariate polynomial in $x$ whose coefficients are polynomials in $\vec{y}$.
	When $g$ is given by its coefficients as a polynomial in $\F[x,\vec{y}]$, it is straightforward to form the coefficients of $g \in \F[\vec{y}][x]$ in $\AC^0_\F$.
	If $g$ is instead given by an $\AC^0_\F$ circuit, then we can obtain $\AC^0_\F$ circuits that compute the coefficients of $g \in \F[\vec{y}][x]$ using \cref{lem:polynomial interpolation}.
	With this modification, the proofs of \cref{lem:sum over roots,lem:esym over roots,lem:esym as multiplication} go through without modification.
	The ability to apply \cref{lem:sum over roots,lem:esym over roots,lem:esym as multiplication} in this setting will be an essential tool that we use throughout our work.
\end{remark}

\subsection{Rational Functions}

\cref{lem:sum over roots,lem:esym over roots} generalize to the setting where we want to evaluate a rational function $g/h$ at the roots of a polynomial $f$.
We start by summing a rational function $g/h$ over the roots of $f$.
Of course, this requires that $h$ is nonzero at the roots of $f$.

\begin{lemma} \label{lem:rational sum over roots}
	Let $\F$ be a field of characteristic zero or characteristic greater than $n$.
	Let $f, g, h \in \F[x]$ be univariate polynomials given by their coefficients.
	Suppose that $\alpha_1,\ldots,\alpha_n \in \F$ are the roots of $f$, counted with multiplicity.
	Assume that $h(\alpha_i) \neq 0$ for all $i \in [n]$.
	Then the sum $\sum_{i=1}^n \frac{g(\alpha_i)}{h(\alpha_i)}$ can be computed in $\AC^0_\F$.
\end{lemma}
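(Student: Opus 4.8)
The plan is to reduce the problem to evaluating an elementary symmetric polynomial at the values of a \emph{bivariate} auxiliary polynomial, exploiting the multivariate form of \cref{lem:esym over roots} recorded in the remark above. Introduce a fresh variable $y$ and set $G(x,y) \coloneqq h(x) + y\, g(x) \in \F[x,y]$. Consider the polynomial in $y$
\[
	P(y) \coloneqq \prod_{i=1}^n G(\alpha_i, y) = \prod_{i=1}^n \bigl( h(\alpha_i) + y\, g(\alpha_i) \bigr),
\]
which has $y$-degree at most $n$. Since $P(y) = e_n\bigl( G(\alpha_1, y), \ldots, G(\alpha_n, y) \bigr)$, the multivariate version of \cref{lem:esym over roots} (with the single auxiliary variable $y$) together with \cref{lem:polynomial interpolation} yields an $\AC^0$ circuit computing the coefficients of $P(y)$.

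Next I would read off the two relevant coefficients. Expanding the product, the constant term is $[y^0]\,P(y) = \prod_{i=1}^n h(\alpha_i)$, which is nonzero by the hypothesis $h(\alpha_i) \neq 0$, while the linear coefficient is
\[
	[y^1]\, P(y) = \sum_{j=1}^n g(\alpha_j) \prod_{i \neq j} h(\alpha_i).
\]
Dividing, every term picks up the missing factor $1/h(\alpha_j)$, so
\[
	\sum_{i=1}^n \frac{g(\alpha_i)}{h(\alpha_i)} = \frac{[y^1]\, P(y)}{[y^0]\, P(y)}.
\]
Equivalently, this ratio is the logarithmic derivative $P'(0)/P(0)$ of $\prod_i\bigl(h(\alpha_i) + y g(\alpha_i)\bigr)$ at $y = 0$, the same ``Newton series as logarithm'' mechanism used in \cref{sec:exact division}: the twist by $g(\alpha_j)$ in the numerator is precisely the first-order term of this product.

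To finish, extract $[y^0]\,P(y)$ and $[y^1]\,P(y)$ by interpolation (\cref{lem:polynomial interpolation}) and perform a single division; since $[y^0]\,P(y) \neq 0$ this is well-defined, and composing the constant-depth subcircuits keeps the whole computation in $\AC^0$. The characteristic hypothesis ($\ch(\F) = 0$ or $\ch(\F) > n$) is exactly what \cref{lem:esym over roots} needs and carries over unchanged. I do not expect a real obstacle here: the only point requiring any care is recognizing the bivariate viewpoint — that $\prod_i\bigl(h(\alpha_i)+y g(\alpha_i)\bigr)$, a computable symmetric function of the $\alpha_i$ by the multivariate \cref{lem:esym over roots}, simultaneously encodes both $\prod_i h(\alpha_i)$ and the desired twisted sum in its two lowest coefficients — after which everything is routine.
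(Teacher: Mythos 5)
Your proof is correct and takes essentially the same approach as the paper: both introduce the auxiliary polynomial $h(x) + y\,g(x)$, apply the multivariate extension of \cref{lem:esym over roots} to compute $\prod_i (h(\alpha_i) + y\,g(\alpha_i))$, and extract the needed quantities by interpolation in $y$. The only cosmetic difference is that you read the denominator $\prod_i h(\alpha_i)$ off as the constant coefficient of the same product, whereas the paper computes it by a separate invocation of \cref{lem:esym over roots} applied to $f$ and $h$.
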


\begin{proof}
	Writing the sum $\sum_{i=1}^n \frac{g(\alpha_i)}{h(\alpha_i)}$ over a common denominator, we have
	\[
		\sum_{i=1}^n \frac{g(\alpha_i)}{h(\alpha_i)} = \frac{\sum_{i=1}^n g(\alpha_i) \prod_{j \neq i} h(\alpha_j)}{\prod_{i=1}^n h(\alpha_i)}.
	\]
	Applying \cref{lem:esym over roots} to $f$ and $h$ allows us to compute $\prod_{i=1}^n h(\alpha_i)$ in $\AC^0_\F$, so we are left with the task of computing $\sum_{i=1}^n g(\alpha_i) \prod_{j \neq i} h(\alpha_j)$.

	Let $y$ be a fresh variable. 
	Observe that when we expand the polynomial
	\[
		r(y) \coloneqq \prod_{i=1}^n (g(\alpha_i) y + h(\alpha_i)),
	\]
	the coefficient of the degree-$1$ term is precisely $\sum_{i=1}^n g(\alpha_i) \prod_{j \neq i} h(\alpha_j)$.
	By applying \cref{lem:esym over roots} to the polynomials $f(x)$ and $g(x) y + h(x)$, we obtain a circuit of constant depth and polynomial size that computes $r(y)$.
	As $\deg(r) = n$, \cref{lem:polynomial interpolation} implies that we can compute the coefficients of $r(y)$ in $\AC^0_\F$. 
	This yields an $\AC^0_\F$ algorithm to compute the sum $\sum_{i=1}^n g(\alpha_i) \prod_{j \neq i} h(\alpha_j)$, and hence an $\AC^0_\F$ algorithm to compute $\sum_{i=1}^n \frac{g(\alpha_i)}{h(\alpha_i)}$.
\end{proof}

We now use \cref{lem:rational sum over roots} to compute any elementary symmetric function of the values $\frac{g(\alpha_1)}{h(\alpha_1)},\ldots,\frac{g(\alpha_n)}{h(\alpha_n)}$.

\begin{lemma} \label{lem:rational esym over roots}
	Let $\F$ be a field of characteristic zero or characteristic greater than $n$.
	Let $f, g, h \in \F[x]$ be univariate polynomials given by their coefficients.
	Suppose that $\alpha_1,\ldots,\alpha_n \in \F$ are the roots of $f$, counted with multiplicity.
	Assume that $h(\alpha_i) \neq 0$ for all $i \in [n]$.
	Then for any $d \in [n]$, the elementary symmetric function
	\[
		e_d \del{\frac{g(\alpha_1)}{h(\alpha_1)}, \ldots, \frac{g(\alpha_n)}{h(\alpha_n)}}
	\]
	can be computed in $\AC^0_\F$.
\end{lemma}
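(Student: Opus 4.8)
The plan is to mirror the reduction used to prove \cref{lem:esym over roots}: reduce the computation of $e_d$ to that of power sums, and reduce the power sums to sums of rational functions over the roots of $f$, which \cref{lem:rational sum over roots} already handles. Write $\lambda_i \coloneqq g(\alpha_i)/h(\alpha_i)$ for $i \in [n]$, and consider the monic degree-$n$ polynomial $\prod_{i=1}^n (x - \lambda_i)$, whose coefficients are (up to sign) the values $e_0(\vec\lambda), \ldots, e_n(\vec\lambda)$ and whose Newton series up to degree $n$ is $\set{p_k(\vec\lambda) : 0 \le k \le n}$ with $p_0(\vec\lambda) = n$. By \cref{lem:e_k from p_k algorithm}, it therefore suffices to compute the power sums $p_k(\vec\lambda) = \sum_{i=1}^n \lambda_i^k$ for all $k \in [n]$ in $\AC^0$: feeding these (together with the known value $p_0(\vec\lambda) = n$) into the $\AC^0$ circuit of \cref{lem:e_k from p_k algorithm} recovers $e_0(\vec\lambda), \ldots, e_n(\vec\lambda)$, and in particular the desired $e_d(\vec\lambda)$. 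This step uses the hypothesis that $\ch(\F) = 0$ or $\ch(\F) > n$, which is exactly the requirement inherited from \cref{lem:e_k from p_k algorithm}.

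It remains to compute each $p_k(\vec\lambda)$. Observe that
\[
	p_k(\vec\lambda) = \sum_{i=1}^n \frac{g(\alpha_i)^k}{h(\alpha_i)^k} = \sum_{i=1}^n \frac{g^k(\alpha_i)}{h^k(\alpha_i)}.
\]
The coefficients of $g^k$ and $h^k$ can be computed in $\AC^0$ from the coefficients of $g$ and $h$ by \cref{lem:polynomial interpolation} (exactly as in the proof of \cref{lem:esym over roots}), and these polynomials have degree at most $n \cdot \max(\deg g, \deg h)$, which is polynomial in the input size. Moreover $h^k(\alpha_i) = h(\alpha_i)^k \neq 0$ for every $i$ by hypothesis. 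Hence \cref{lem:rational sum over roots}, applied to the triple $(f, g^k, h^k)$, computes $p_k(\vec\lambda)$ in $\AC^0$. Running these computations in parallel for $k = 1, \ldots, n$ produces all of $p_1(\vec\lambda), \ldots, p_n(\vec\lambda)$ by a circuit of polynomial size and constant depth; composing this with the $\AC^0$ circuit of \cref{lem:e_k from p_k algorithm} then yields $e_d(\vec\lambda)$, all in $\AC^0$.

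I do not expect a genuine obstacle here: the argument is a direct composition of tools already established, following the same template as \cref{lem:sum over roots} $\to$ \cref{lem:esym over roots}. The only points requiring (routine) care are checking that $g^k$ and $h^k$ remain of polynomial degree so that \cref{lem:polynomial interpolation,lem:rational sum over roots} stay in $\AC^0$, and verifying that the characteristic hypothesis carried over from \cref{lem:e_k from p_k algorithm} matches the one in the statement. For completeness I note an alternative route that avoids power sums: since $\prod_{i=1}^n (x - \lambda_i) = \frac{1}{\prod_i h(\alpha_i)} \prod_{i=1}^n \del{h(\alpha_i) x - g(\alpha_i)}$, one can compute the polynomial $\prod_{i=1}^n \del{y\, h(\alpha_i) - g(\alpha_i)} \in \F[y]$ in $\AC^0$ by applying \cref{lem:esym over roots} in its multivariate form (with the fresh variable $y$) to $f$ and $y\, h(x) - g(x)$, interpolate its coefficients via \cref{lem:polynomial interpolation}, divide by $\prod_i h(\alpha_i)$ (obtained from \cref{lem:esym over roots} with $d = n$), and read off the appropriate coefficient; but the power-sum route above is the cleaner parallel to the preceding lemmas.
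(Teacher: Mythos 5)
Your proof is correct and follows essentially the same route as the paper's: compute $g^k$ and $h^k$ via interpolation, invoke \cref{lem:rational sum over roots} on $(f, g^k, h^k)$ to obtain the power sums $p_k(\vec\lambda)$, and then apply \cref{lem:e_k from p_k algorithm} to recover the elementary symmetric polynomials. The alternative "linear forms" route you sketch at the end is a nice observation but not needed; the power-sum argument matches the paper exactly.
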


\begin{proof}
	For any $k \in [n]$, we can compute the coefficients of $g(x)^k$ and $h(x)^k$ in $\AC^0_\F$ using \cref{lem:polynomial interpolation}.
	Applying \cref{lem:rational sum over roots} to $f(x)$, $g(x)^k$, and $h(x)^k$ computes the sum $\sum_{i=1}^n \frac{g(\alpha_i)^k}{h(\alpha_i)^k}$ in $\AC^0_\F$.
	We then compute the desired elementary symmetric polynomial by invoking \cref{lem:e_k from p_k algorithm} on the power sums $\sum_{i=1}^n \frac{g(\alpha_i)^k}{h(\alpha_i)^k}$.
\end{proof}

\section{The Sylvester and B\'{e}zout Matrices} \label{sec:sylvester and bezout}

In this section, we apply the results of \cref{sec:symmetric functions of roots} to compute the determinant and inverse of the Sylvester and B\'{e}zout matrices of a pair of polynomials in $\AC^0_\F$.
We also extend our division algorithm from \cref{lem:exact division} to handle polynomial division with remainder.

\subsection{The Resultant and Discriminant} \label{subsec:resultant}

We start by designing a constant-depth circuit to compute the resultant of two polynomials.
As the resultant is precisely the determinant of the Sylvester (\cref{def:sylvester matrix}) and B\'{e}zout matrices (\cref{lem:bezout determinant}), this provides an $\AC^0_\F$ algorithm to compute the determinants of matrices of these forms.

To compute the resultant, let $f, g \in \F[x]$ be monic polynomials and let $\alpha_1,\ldots,\alpha_n \in \F$ be the roots of $f$, counted with multiplicity.
By \cref{lem:resultant as product}, we know that
\[
	\res(f,g) = \prod_{i=1}^n g(\alpha_i).
\]
This is precisely the $n$\ts{th} elementary symmetric polynomial evaluated at $(g(\alpha_1),\ldots,g(\alpha_n))$.
Thus, we can compute $\res(f,g)$ using a direct application of \cref{lem:esym over roots}.

\begin{theorem} \label{thm:resultant ac0}
	Let $f, g \in \F[x]$ be univariate polynomials given by their coefficients.
	Then the resultant $\res(f,g)$ can be computed in $\AC^0_\F$.
\end{theorem}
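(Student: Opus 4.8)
The plan is to reduce the resultant to a single symmetric function of the roots of $f$ and then invoke the machinery already developed in \cref{sec:symmetric functions of roots}. By \cref{lem:resultant as product}, if $f$ factors as $f(x) = \prod_{i=1}^n (x - \alpha_i)$ over $\overline{\F}$, then
\[
	\res(f,g) = \prod_{i=1}^n g(\alpha_i) = e_n(g(\alpha_1), \ldots, g(\alpha_n)).
\]
So the resultant is precisely the top elementary symmetric polynomial evaluated at the values of $g$ on the roots of $f$. Since we do not have access to the roots $\alpha_i$ themselves (they may not even lie in $\F$), the point is that \cref{lem:esym over roots} lets us compute $e_d(g(\alpha_1), \ldots, g(\alpha_n))$ in $\AC^0$ directly from the coefficients of $f$ and $g$, for any $d \in [n]$. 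Taking $d = n$ gives $\res(f,g)$ in $\AC^0$.

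First I would fix $f$ to be monic of degree $n$ (the general monic case is the standing convention, and non-monic inputs are handled by the stated remarks about rescaling) and let $\alpha_1, \ldots, \alpha_n$ be its roots counted with multiplicity. Then I would record the identity $\res(f,g) = \prod_{i=1}^n g(\alpha_i)$ from \cref{lem:resultant as product}, observe this equals $e_n(g(\alpha_1), \ldots, g(\alpha_n))$, and apply \cref{lem:esym over roots} with $d = n$ to obtain the $\AC^0$ circuit. I should note the hypothesis on the characteristic: \cref{lem:esym over roots} requires $\ch(\F) = 0$ or $\ch(\F) > n$, which propagates to the statement of the theorem (and matches the paper's general conventions). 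A small point worth mentioning is that \cref{lem:esym over roots}, via \cref{lem:e_k from p_k algorithm} and Newton's identities, handles the passage between power sums and elementary symmetric polynomials, so nothing beyond what is already proved is needed.

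There is essentially no obstacle here: the theorem is a one-line corollary of \cref{lem:esym over roots} once one spots that $\res(f,g)$ is an elementary symmetric polynomial in the $g(\alpha_i)$. If anything, the only thing requiring mild care is bookkeeping around the monic/degree assumptions and the characteristic hypothesis, plus the observation that the roots need not be field elements — but that is exactly the scenario \cref{sec:symmetric functions of roots} was built to handle, since all those lemmas operate purely on coefficients and use the root factorization only in the analysis. I would also remark, as the paper does in the surrounding text, that this immediately yields $\AC^0$ circuits for the determinant of the Sylvester matrix (by \cref{def:sylvester matrix}) and of the B\'{e}zout matrix (by \cref{lem:bezout determinant}), and for the discriminant $\disc(f) = (-1)^{\binom{n}{2}} \res(f, f')$ since $f'$ is computable from $f$ in $\AC^0$.
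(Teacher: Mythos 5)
Your proof is correct and follows exactly the paper's own route: express $\res(f,g) = \prod_i g(\alpha_i)$ via \cref{lem:resultant as product}, recognize this as $e_n(g(\alpha_1),\ldots,g(\alpha_n))$, and apply \cref{lem:esym over roots}. The remarks about the characteristic hypothesis and the downstream corollaries (Sylvester/B\'{e}zout determinants, discriminant) match what the paper observes as well.
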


\begin{proof}
	This is an immediate consequence of \cref{lem:resultant as product} and \cref{lem:esym over roots}.
\end{proof}

As an immediate corollary, we obtain an $\AC^0_\F$ algorithm to compute the discriminant of a single polynomial.

\begin{corollary} \label{cor:discriminant ac0}
	Let $f \in \F[x]$ be a univariate polynomial given by its coefficients.
	Then the discriminant $\disc(f)$ can be computed in $\AC^0_\F$.
\end{corollary}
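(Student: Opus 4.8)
The plan is to reduce the discriminant directly to the resultant and invoke \cref{thm:resultant ac0}. By definition $\disc(f) = (-1)^{\binom{n}{2}} \res(f, f')$, so it suffices to produce the coefficient vector of $f'$ from that of $f$ by an $\AC^0$ circuit, apply the resultant circuit to the pair $(f, f')$, and then scale the result by the constant $(-1)^{\binom{n}{2}} \in \F$.

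First I would form the coefficients of $f'(x)$. Writing $f(x) = \sum_{i=0}^n a_i x^i$, we have $f'(x) = \sum_{i=1}^n i a_i x^{i-1}$, so each coefficient of $f'$ is a scalar multiple of a coefficient of $f$; this is a depth-$1$, size-$O(n)$ subcircuit (indeed, the earlier lemma on basic operations already records that $f^{(r)}$ is computable in $\AC^0$). Since $f$ is monic of degree $n$, the polynomial $f'$ has degree exactly $n-1$ with leading coefficient $n$, which is nonzero because $\ch(\F) = 0$ or $\ch(\F) > n$; this is precisely the standing characteristic hypothesis, and it is also what makes the sign $(-1)^{\binom{n}{2}}$ and the invocation of \cref{thm:resultant ac0} legitimate. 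Feeding $(f, f')$ into the $\AC^0$ resultant circuit of \cref{thm:resultant ac0} yields $\res(f, f')$, and one final multiplication by $(-1)^{\binom{n}{2}}$ (a single scalar input gate) produces $\disc(f)$. Composing these three stages clearly preserves constant depth and polynomial size.

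There is essentially no obstacle: the corollary is an immediate consequence of \cref{thm:resultant ac0} together with the fact that differentiation acts as a diagonal linear map on coefficient vectors. The only point I would double-check is consistency of the characteristic requirement across the reduction -- the resultant circuit routes through \cref{lem:esym over roots} and hence \cref{lem:e_k from p_k algorithm}, which needs $\ch(\F) = 0$ or $\ch(\F) > n$ -- but, as noted, that same bound is exactly what keeps $\deg(f') = n-1$ and makes the definition of $\disc(f)$ meaningful, so nothing further is needed.
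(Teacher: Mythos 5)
Your proof is correct and follows exactly the paper's route: compute $f'$ in $\AC^0$ (a depth-$1$ diagonal map on coefficients), feed $(f,f')$ into the resultant circuit of \cref{thm:resultant ac0}, and scale by the constant $(-1)^{\binom{n}{2}}$. The additional remarks about the characteristic hypothesis are accurate but not a departure from the paper's argument.
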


\begin{proof}
	Because $\disc(f) = (-1)^{\binom{n}{2}} \res(f,f')$, this immediately follows from \cref{thm:resultant ac0}.
\end{proof}

The resultant and discriminant are extremely useful tools in algebra and number theory.
Using well-known applications of the resultant and discriminant, we present three corollaries of \cref{thm:resultant ac0} and \cref{cor:discriminant ac0}.

Our first corollary deals with computing implicit equations of rational plane curves.
A \emph{rational plane curve} is the (Zariski closure of the) image of a rational map 
\begin{align*}
	\gamma : \F &\to \F^2 \\
	t &\mapsto \del{\frac{f(t)}{h(t)}, \frac{g(t)}{h(t)}}
\end{align*}
where $f, g, h \in \F[t]$ are univariate polynomials.
For example, the unit circle has a rational parameterization given by
\[
	t \mapsto \del{\frac{1-t^2}{1+t^2}, \frac{2t}{1+t^2}}.
\]
The careful reader will notice that $(-1, 0)$ is on the unit circle, but is not in the image of this map.
This is addressed by taking the closure of the image in the \emph{Zariski topology}, the standard topology used in algebraic geometry.
We will not define the Zariski topology here, and instead refer the reader interested in the precise details to \cite{CLO15}.

Given a rational parameterization of a plane curve $\Gamma$, it is often useful to find an implicit equation for $\Gamma$.
An \emph{implicit equation} is a polynomial $r \in \F[x,y]$ such that $r(a,b) = 0$ if and only if $(a,b) \in \Gamma$.
Clearly, an implicit equation gives rise to an algorithm that decides if a given point $(a,b)$ lies on the curve $\Gamma$: compute $r(a,b)$ and check if this value equals zero.
In the case of the unit circle, one implicit equation is given by $x^2 + y^2 - 1$.
(Other implicit equations for the unit circle are $(x^2+y^2-1)^n$ for $n \in \naturals$, and these are essentially all possible equations for the unit circle.)

Resultants provide a straightforward method to compute an implicit equation of a rational plane curve.
Given a curve parameterized by
\[
	t \mapsto \del{\frac{f(t)}{h(t)}, \frac{g(t)}{h(t)}},
\]
one can show that the polynomial
\[
	r(x,y) \coloneqq \res_t(x \cdot h(t) - f(t), y \cdot h(t) - g(t)),
\]
where $x \cdot h(t) - f(t)$ and $y \cdot h(t) - g(t)$ are regarded as polynomials in $t$ with coefficients in $\F[x,y]$, is an implicit equation of the plane curve.
By computing this resultant with \cref{thm:resultant ac0} and then interpolating the coefficients of $r(x,y)$ using \cref{lem:polynomial interpolation}, we obtain an $\AC^0_\F$ algorithm to compute an implicit equation of a given rational plane curve.

\begin{corollary} \label{cor:implicit equation}
	Let $f, g, h \in \F[t]$ be univariate polynomials given by their coefficients.
	Let $\Gamma \subseteq \F^2$ be the plane curve corresponding to the map $t \mapsto (f(t)/h(t), g(t)/h(t))$.
	Then the coefficients of an implicit equation $r \in \F[x,y]$ for $\Gamma$ can be computed in $\AC^0_\F$.
\end{corollary}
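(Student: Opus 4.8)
The plan is to implement in $\AC^0$ the classical resultant-based method for implicitization of rational curves (see, e.g., \cite{CLO15}), using our $\AC^0$ resultant algorithm of \cref{thm:resultant ac0} as a black box. As recalled in the discussion preceding the statement, an implicit equation for $\Gamma$ is
\[
	r(x,y) \coloneqq \res_t\bigl(x \cdot h(t) - f(t),\ y \cdot h(t) - g(t)\bigr),
\]
where the two arguments are regarded as univariate polynomials in $t$ with coefficients in $\F[x,y]$. First I would form, by a circuit of depth $1$ and size $O(\deg f + \deg g + \deg h)$, the coefficient sequences in $t$ of $P(t) \coloneqq x\,h(t) - f(t)$ and $Q(t) \coloneqq y\,h(t) - g(t)$; each such coefficient is an affine function of $x$ (resp.\ $y$) whose constant and linear parts are among the given coefficients of $f$, $g$, and $h$.

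Next I would run the $\AC^0$ resultant circuit of \cref{thm:resultant ac0} on $P$ and $Q$ while treating $x$ and $y$ as additional formal input variables. The one point to check is that \cref{thm:resultant ac0} is phrased for polynomials over $\F$, while $P$ and $Q$ have coefficients in $\F[x,y]$; but its proof only uses \cref{lem:esym over roots}, whose multivariate version is recorded in the remark after \cref{lem:esym as multiplication}, so the same constant-depth polynomial-size circuit computes $r(x,y)$ with $x,y$ adjoined to the inputs. Since the resultant of two polynomials of $t$-degrees $d_1 \coloneqq \max(\deg f,\deg h)$ and $d_2 \coloneqq \max(\deg g,\deg h)$ is a fixed polynomial in their coefficients, $r(x,y)$ is a genuine polynomial of total degree at most $d_1 + d_2$, and any divisions introduced along the way (dividing by leading $t$-coefficients) can be eliminated using \cref{thm:ac0 division elimination}.

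Finally I would peel off the coefficients of $r(x,y)$ by applying \cref{lem:polynomial interpolation} once in the variable $x$ and once in the variable $y$, which expresses every coefficient of $r$ as an $\AC^0$-computable function of the coefficients of $f$, $g$, $h$ (passing to a polynomially larger field if $\F$ is too small to interpolate over). Chaining the three stages via \cref{lem:compose piecewise circuit} gives the claimed $\AC^0$ algorithm. The only non-routine ingredient — and the step I expect to be the real obstacle, in the sense that it is not something our machinery proves — is the classical elimination-theory fact that this particular resultant cuts out $\overline{\Gamma}$ (equivalently, $\Gamma$); I would invoke \cite{CLO15} for it rather than reprove it, as the content of the corollary lies entirely in the $\AC^0$ implementation and not in the underlying algebraic geometry.
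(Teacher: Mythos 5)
Your proposal matches the paper's approach exactly: both form the resultant $r(x,y) = \res_t\bigl(x\,h(t) - f(t),\ y\,h(t) - g(t)\bigr)$ with $x,y$ treated as extra input variables, compute it via \cref{thm:resultant ac0} (using the multivariate remark after \cref{lem:esym as multiplication}), and peel off the coefficients with \cref{lem:polynomial interpolation}. The paper states this very briefly in the discussion preceding the corollary; your version simply fills in the routine implementation details (handling the non-monic leading $t$-coefficients via division elimination, composing the stages) which the paper leaves implicit.
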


The second application of \cref{thm:resultant ac0} is again to geometry.
Given the implicit equations of two plane curves $\Gamma_1, \Gamma_2 \in \F^2$, we would like to compute their intersection $\Gamma_1 \cap \Gamma_2$.
Equivalently, we want to solve the system of polynomial equations $f(x,y) = g(x,y) = 0$, where $f$ and $g$ are the implicit equations of $\Gamma_1$ and $\Gamma_2$, respectively.
Using resultants, we can reduce this problem to solving polynomial equations in one variable.
View $f$ and $g$ as elements of $\F[x][y]$ and let
\[
	h(x) \coloneqq \res_y(f(x,y), g(x,y))
\]
be their resultant.
If $h(x) = 0$, then $f$ and $g$ share a common factor, so the intersection $\Gamma_1 \cap \Gamma_2$ is infinite and corresponds to this common factor.

Suppose instead that $h(x) \neq 0$, so that $\Gamma_1 \cap \Gamma_2$ is finite.
Standard properties of resultants imply that if $(a,b) \in \Gamma_1 \cap \Gamma_2$, then $h(a) = 0$.
Thus, to compute $\Gamma_1 \cap \Gamma_2$, it suffices to first compute the roots of $h$, and for each such root $a \in \F$, find common roots of the univariate polynomials $f(a,y)$ and $g(a,y)$.
For more on resultants in elimination theory, see \cite[Chapter 3]{CLO15}.

Our third and final application is to combining the roots of polynomials in nontrivial ways.
Suppose we are given the coefficients of $f(x) = \prod_{i=1}^n (x - \alpha_i)$ and $g(x) = \prod_{i=1}^m (x - \beta_i)$.
Consider the polynomials
\begin{align*}
	(f \oplus g)(x) &\coloneqq \prod_{i, j} (x - (\alpha_i + \beta_j)) \\
	(f \otimes g)(x) &\coloneqq \prod_{i,j} (x - \alpha_i \beta_j),
\end{align*}
which are called the \emph{composed sum} and \emph{composed product} of $f$ and $g$, respectively.

The composed sum and composed product are useful in implementing arithmetic for algebraic numbers.
Recall that a number $\alpha \in \complexes$ is \emph{algebraic} if there is a nonzero polynomial $f \in \rationals[x]$ such that $f(\alpha) = 0$.
For every algebraic number $\alpha \in \complexes$, there is a nonzero polynomial $f \in \rationals[x]$ of minimal degree that vanishes at $\alpha$; such a polynomial is the \emph{minimal polynomial} of $\alpha$.
A natural way to represent algebraic numbers is by their minimal polynomial.
If $\alpha, \beta \in \complexes$ are represented by polynomials $f$ and $g$, respectively, then the sum $\alpha + \beta$ is a root of the composed sum $f \oplus g$.
This implies that the minimal polynomial of $\alpha + \beta$ is an irreducible factor (over $\rationals[x]$) of $f \oplus g$, which we can find by factoring $f \oplus g$.
Likewise, the minimal polynomial of $\alpha \beta$ is an irreducible factor of $f \otimes g$.

Fast algorithms to compute the composed sum and composed product were given by \textcite{BFSS06}, where the key tool was a fast algorithm to convert between the coefficient representation of a polynomial and its Newton series.
Using properties of resultants, it is a straightforward exercise (see, e.g., \cite[Section 3.6, Exercise 19]{CLO15}) to show that
\begin{align*}
	(f \oplus g)(x) &= \res_y(f(y), g(x - y)) \\
	(f \otimes g)(x) &= \res_y(f(y), y^m g(x/y)),
\end{align*}
where $f(y)$, $g(x-y)$, and $y^m g(x/y)$ are viewed as polynomials in $y$ with coefficients in $\F[x]$.
As a corollary of \cref{thm:resultant ac0}, we conclude $\AC^0_\F$ algorithms to compute the composed sum and composed product.
For more applications of the composed sum and composed product, see \cite[Section 5]{BFSS06}.

\begin{corollary} 
	Let $f, g \in \F[x]$ be univariate polynomials given by their coefficients.
	Suppose that $f$ and $g$ factor as $f(x) = \prod_{i=1}^n (x - \alpha_i)$ and $g(x) = \prod_{i=1}^m (x - \beta_i)$, where the $\alpha_i, \beta_i \in \F$ are not necessarily distinct, i.e., $f$ and $g$ are not necessarily squarefree.
	Then the coefficients of the polynomials
	\begin{align*}
		(f \oplus g)(x) &\coloneqq \prod_{i, j} (x - (\alpha_i + \beta_j)) \\
		(f \otimes g)(x) &\coloneqq \prod_{i,j} (x - \alpha_i \beta_j)
	\end{align*}
	can be computed in $\AC^0_\F$.
\end{corollary}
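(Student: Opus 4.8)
The plan is to combine the two resultant identities quoted just before the statement,
\[
	(f \oplus g)(x) = \res_y\del{f(y),\, g(x-y)}, \qquad (f \otimes g)(x) = \res_y\del{f(y),\, y^m g(x/y)},
\]
with the $\AC^0$ resultant computation underlying \cref{thm:resultant ac0}, used in the multivariate form supplied by the remark following \cref{lem:esym as multiplication}. Recall that \cref{thm:resultant ac0} really computes $\res(f,g) = \prod_{i=1}^n g(\alpha_i) = e_n\del{g(\alpha_1),\ldots,g(\alpha_n)}$ by invoking \cref{lem:esym over roots}, and that \cref{lem:esym over roots} goes through verbatim when the polynomial being evaluated carries an auxiliary variable, i.e.\ when it lies in $\F[y,x]$ and is viewed as an element of $\F[x][y]$ with the root-variable $y$ and the parameter $x$. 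Thus, for any $h \in \F[x][y]$ one can compute $\prod_{i=1}^n h(\alpha_i) \in \F[x]$ by a constant-depth, polynomial-size arithmetic circuit in $x$, where $\alpha_1,\ldots,\alpha_n$ are the roots of the univariate $f(y)$.

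For the composed sum, I would first expand $g(x-y)$ as a polynomial of degree $m$ in $y$ whose coefficients lie in $\F[x]$; these $m+1$ coefficients are obtained from the coefficients of $g$ in $\AC^0$, either directly by the binomial theorem or by \cref{lem:polynomial interpolation}. Feeding $f(y)$ and $g(x-y)$ into the multivariate version of \cref{lem:esym over roots} then yields $\prod_{i=1}^n g(x-\alpha_i) = \prod_{i,j}(x - \alpha_i - \beta_j) = (f\oplus g)(x)$ as a circuit in the single variable $x$. Since this polynomial has degree $nm$, which is polynomial in the input size, one more application of \cref{lem:polynomial interpolation} extracts its coefficients, and \cref{thm:ac0 division elimination} converts the result into a division-free $\AC^0$ circuit if desired. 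The composed product is handled identically once $y^m g(x/y)$ is read as the genuine polynomial $\sum_{j=0}^m g_j x^j y^{m-j} \in \F[x][y]$ of degree $m$ in $y$: its coefficients are trivially formed from those of $g$, and the multivariate \cref{lem:esym over roots} applied to $f(y)$ and this polynomial returns $\prod_{i=1}^n \alpha_i^m g(x/\alpha_i) = \prod_{i,j}(x - \alpha_i\beta_j) = (f\otimes g)(x)$, which is finished off exactly as before.

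I expect the only step requiring genuine care is the interpretation of $y^m g(x/y)$ when $f$ has $0$ as a root: the literal substitution $x/\alpha_i$ is then undefined, but the polynomial $\sum_{j=0}^m g_j x^j y^{m-j}$ evaluates at $y = \alpha_i = 0$ to $g_m x^m = x^m = \prod_j (x - 0 \cdot \beta_j)$, so the identity $\prod_i h(\alpha_i) = (f\otimes g)(x)$ still holds; working throughout with the coefficient list of $y^m g(x/y)$ rather than with the rational expression sidesteps the issue entirely. Everything else is a routine composition of constant-depth, polynomial-size circuits. As throughout the paper, the argument needs $\ch(\F) = 0$ or $\ch(\F)$ larger than $\max(\deg f, \deg g)$, and the final interpolation of a degree-$nm$ polynomial may require passing to an extension field, exactly as recorded in \cref{lem:polynomial interpolation}.
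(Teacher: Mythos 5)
Your proposal is correct and follows essentially the same route as the paper: express $(f\oplus g)(x)$ and $(f\otimes g)(x)$ as $\res_y(f(y), g(x-y))$ and $\res_y(f(y), y^m g(x/y))$, compute these resultants in $\AC^0$ via \cref{thm:resultant ac0} (i.e.\ via \cref{lem:esym over roots} applied with the auxiliary variable $x$, as the remark after \cref{lem:esym as multiplication} permits), and then extract coefficients with \cref{lem:polynomial interpolation}. The extra care you take about the degenerate case $\alpha_i = 0$ for the composed product — working with the coefficient list of $y^m g(x/y)$ rather than the rational expression — is a valid and worthwhile clarification that the paper leaves implicit.
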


\subsection{Division with Remainder}

In this subsection, we take a brief detour from Sylvester and B\'{e}zout matrices to extend our division algorithm from \cref{lem:exact division} to handle polynomial division with remainder.
Recall that if $f, g \in \F[x]$ are univariate polynomials, then there are unique polynomials $q, r \in \F[x]$ such that $f = qg + r$ and $\deg(r) < \deg(g)$.
The polynomial $r$ is the \emph{remainder} of $f$ divided by $g$.
We will describe an $\AC^0_\F$ algorithm to compute the remainder $r$, which leads to an algorithm for division with remainder by dividing $f - r$ by $g$ using \cref{lem:exact division}.
The technique we use to compute the remainder $r$ will be useful later in \cref{subsec:sylvester inverse}, where we design an $\AC^0_\F$ algorithm to invert the Sylvester matrix $\Syl(f,g)$.

For the moment, suppose that $g$ is squarefree, i.e., that $g$ has $m$ distinct roots $\beta_1,\ldots,\beta_m \in \overline{\F}$ in the algebraic closure of $\F$, all of multiplicity 1.
This is a mild assumption, and we will sketch how to remove it below.
Assuming $g$ is squarefree, we can explicitly write the remainder $r$ using polynomial interpolation.
Consider the polynomial $\hat{r}(x)$ obtained by interpolating the values of $f$ at the points $\beta_1,\ldots,\beta_m$.
The Lagrange interpolation formula lets us write $\hat{r}$ as
\[
	\hat{r}(x) = \sum_{i=1}^m f(\beta_i) \prod_{j \neq i} \frac{x - \beta_j}{\beta_i - \beta_j}.
\]
By construction, we have $\deg(\hat{r}) \le m-1 < \deg(g)$.
The polynomial $f - \hat{r}$ is zero at each $\beta_i$.
Because $g$ is squarefree, it follows that $g$ divides $f - \hat{r}$.
That is, there is some polynomial $\hat{q}$ such that $f - \hat{r} = \hat{q} g$.
Rearranging, we have $f = \hat{q} g + \hat{r}$, so by uniqueness of the quotient and remainder, we have $q = \hat{q}$ and $r = \hat{r}$.
Thus, to compute the remainder $r$, it suffices to interpolate a polynomial that agrees with $f$ at the roots of $g$, at least when $g$ is squarefree.
We will perform this interpolation by implementing the Lagrange interpolation formula without explicitly computing the roots of $g$, making use of the tools developed in \cref{sec:symmetric functions of roots}.

When $g$ is not squarefree, we can still perform the above interpolation, but $g$ may not divide $f - \hat{r}$.
If $\beta_i \in \F$ is a multiple root of $g$, we would need to interpolate $\hat{r}$ so that $f - \hat{r}$ vanishes at $\beta_i$ to the same order as $g$.
However, we do not know the multiset of multiplicities of the roots of $g$, so it is not clear that this strategy will generalize nicely.

Instead, we make use of the fact that a polynomial $g$ is squarefree if and only if $\disc(g) \neq 0$.
With $\hat{r}$ defined through Lagrange interpolation as above, it is easy to see that $\disc(g) \cdot \hat{r} = \disc(g) \cdot r$.
When $\disc(g) = 0$, this is obvious, and when $\disc(g) \neq 0$, $g$ is squarefree, so the preceding sketch implies $\hat{r} = r$.
Thus, to obtain $r$, we can divide $\disc(g) \cdot \hat{r}$ by $\disc(g)$.
Although the resulting circuit is defined only when $\disc(g) \neq 0$, Strassen's theorem on division elimination (\cref{thm:ac0 division elimination}) implies that we can transform this circuit into an equivalent one that does not use division.
This results in a circuit that correctly computes the remainder $r$ even when $\disc(g) = 0$.

We now design an $\AC^0_\F$ algorithm to compute polynomial remainders.

\begin{lemma} \label{lem:remainder}
	Let $\F$ be a field of characteristic zero or characteristic greater than $d$.
	Let $f, g \in \F[x]$ be univariate polynomials of degree at most $d$ given by their coefficients.
	Let $r \in \F[x]$ be the remainder of $f$ divided by $g$.
	Then the coefficients of $r$ can be computed in $\AC^0_\F$.
\end{lemma}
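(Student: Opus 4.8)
The plan is to exhibit $r$ as the Lagrange interpolant of $f$ through the roots of $g$, to realize this interpolant with the root-symmetric-function machinery of \cref{sec:symmetric functions of roots}, and to use a discriminant-clearing step together with Strassen's division-elimination theorem to dispense with the hypothesis that $g$ is squarefree.

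Write $g$ as monic of degree $m \le d$ with roots $\beta_1,\dots,\beta_m \in \overline{\F}$. Since $g(\beta_i) = 0$ we have $\tfrac{g(x)}{x-\beta_i} = \tfrac{g(x)-g(\beta_i)}{x-\beta_i}$, the specialization at $y = \beta_i$ of $\tfrac{g(x)-g(y)}{x-y} = \sum_{\ell=0}^{m-1} c_\ell(y)\,x^\ell$, whose coefficients $c_\ell \in \F[y]$ are read off from those of $g$ by a depth-$1$ circuit; also $g'(\beta_i) = \prod_{j\ne i}(\beta_i-\beta_j)$. First suppose $g$ is squarefree, so the $\beta_i$ are distinct and each $g'(\beta_i) \ne 0$. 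Then
\[
	\hat r(x) \coloneqq \sum_{i=1}^m \frac{f(\beta_i)}{g'(\beta_i)}\cdot\frac{g(x)}{x-\beta_i} = \sum_{\ell=0}^{m-1} x^\ell \sum_{i=1}^m \frac{f(\beta_i)\,c_\ell(\beta_i)}{g'(\beta_i)}
\]
has degree less than $m$, and $f - \hat r$ vanishes at all $\beta_i$; as $g$ is squarefree, $g \mid f - \hat r$, whence $\hat r = r$ by uniqueness of the remainder. Each inner sum is the sum of the rational function $\tfrac{(f\,c_\ell)(x)}{g'(x)}$ over the roots of $g$, which \cref{lem:rational sum over roots} computes in $\AC^0$ (using $g'(\beta_i)\ne 0$); pairing these with $x^\ell$ yields $r$ in $\AC^0$.

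For arbitrary $g$ the idea is to clear the denominator before it can vanish. Over a common denominator, $\sum_i \tfrac{f(\beta_i) c_\ell(\beta_i)}{g'(\beta_i)} = N_\ell / \prod_i g'(\beta_i)$ with $N_\ell \coloneqq \sum_i f(\beta_i) c_\ell(\beta_i) \prod_{j\ne i} g'(\beta_j)$, which is the coefficient of $y$ in $\prod_{i=1}^m\bigl(f(\beta_i) c_\ell(\beta_i)\,y + g'(\beta_i)\bigr)$; applying \cref{lem:esym over roots} to $g$ and the bivariate polynomial $f(x)c_\ell(x)\,y + g'(x)$ and interpolating in $y$ computes $N_\ell$ in $\AC^0$. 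Since $g$ is monic, $\prod_i g'(\beta_i) = \res(g,g') = (-1)^{\binom{m}{2}}\disc(g)$. The $\AC^0$ circuit $\tfrac{(-1)^{\binom{m}{2}}}{\disc(g)}\sum_{\ell=0}^{m-1} N_\ell\,x^\ell$ computes, as a rational function in the coefficients of $f$ and $g$, the same function as $r(x)$: the two agree wherever $\disc(g)\ne 0$ by the squarefree case, and the hypothesis $\ch\F = 0$ or $\ch\F > d \ge m$ guarantees $\disc(g)$ is not identically zero, so the agreement locus is Zariski dense and the two rational functions coincide. Since $r(x)$ is a polynomial of polynomially bounded degree, Strassen's theorem (\cref{thm:ac0 division elimination}) converts this into a division-free $\AC^0$ circuit for the coefficients of $r$. (The quotient $q$, if wanted, is then $(f-r)/g$ via \cref{lem:exact division}.)

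The main obstacle is exactly the non-squarefree case: the honest interpolation formula divides by $g'(\beta_i)$, which vanishes at repeated roots, so the natural circuit is defined only off the hypersurface $\disc(g)=0$. Pushing the single division by $\prod_i g'(\beta_i) = \pm\disc(g)$ outward so that everything inside is manifestly a root-symmetric polynomial computable by \cref{lem:esym over roots,lem:polynomial interpolation}, arguing by Zariski density that the resulting rational function is the honest (polynomial) remainder, and finally invoking division elimination to obtain a circuit defined everywhere is the heart of the argument; the rest is routine bookkeeping.
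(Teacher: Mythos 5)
Your proposal follows essentially the same route as the paper's proof: express $r$ as the Lagrange interpolant of $f$ through the roots of $g$, multiply through by $\disc(g) = \pm\res(g,g')$ to clear the vanishing denominators, compute the resulting denominator-free expression with the root-symmetric machinery of \cref{lem:esym over roots} together with interpolation, and finally divide by $\disc(g)$ and invoke Strassen's division elimination (\cref{thm:ac0 division elimination}) to make the circuit defined on all of $\disc(g)=0$. The two minor technical differences are cosmetic: you package the Lagrange basis via the divided-difference kernel $\tfrac{g(x)-g(y)}{x-y}$ and extract the $y^1$-coefficient separately for each $x^\ell$, whereas the paper forms a single product $\prod_i\bigl(f(\beta_i)+y(x-\beta_i)g'(\beta_i)\bigr)$ and extracts the $y^{m-1}$-coefficient once; and you argue agreement on the locus $\disc(g)=0$ by Zariski density of $\{\disc(g)\neq 0\}$, whereas the paper verifies directly that both sides vanish when $g$ has a repeated root. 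Both are sound.

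There is one genuine (if small and easily filled) gap. To apply \cref{thm:ac0 division elimination} you must know that the function being computed by your circuit-with-division is a \emph{polynomial in the input variables}, i.e., that each coefficient of $r$ is a polynomial function of the coefficients of $f$ and $g$. Your justification, ``Since $r(x)$ is a polynomial of polynomially bounded degree,'' addresses $r$ as a polynomial in $x$, which is not the relevant sense: a priori the map $(\text{coeffs of }f,g)\mapsto(\text{coeffs of }r)$ could be rational. It is in fact polynomial, but this needs an argument. The paper supplies one: writing $f=qg+r$ as a linear system, the coefficient matrix is the unipotent lower-triangular matrix $\tilde{\Syl}(g,1)$, which has determinant $1$, so its inverse (and hence the coefficients of $q$ and $r$) has entries that are polynomials in the coefficients of $g$; alternatively, monic long division by a polynomial of fixed degree $m$ never divides. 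Adding a sentence of this form closes the gap; the rest of your argument is correct.
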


\begin{proof}
	Let $n \coloneqq \deg(f)$ and $m \coloneqq \deg(g)$.
	Recall that $r \in \F[x]$ is the unique polynomial of degree less than $m$ such that $f - r$ is a multiple of $g$.
	Below, we will show that the coefficients of $\disc(g) r(x)$ can be computed in $\AC^0_\F$ without the use of $\sel$ gates.
	Before doing this, we explain why this yields an $\AC^0_\F$ algorithm to compute the coefficients of $r(x)$.

	We can compute $\disc(g)$ in $\AC^0_\F$ without $\sel$ gates via \cref{cor:discriminant ac0}.
	By dividing $\disc(g) r(x)$ by $\disc(g)$, we obtain an $\AC^0_\F$ algorithm \emph{with division} (but no $\sel$ gates) that computes $r(x)$.
	Because this algorithm divides by $\disc(g)$, its output is only defined when $\disc(g) \neq 0$.
	We would like to apply \cref{thm:ac0 division elimination} to conclude a division-free $\AC^0_\F$ algorithm that computes the remainder $r(x)$, which would allow us to compute $r(x)$ even when $\disc(g) = 0$.
	\footnote{As mentioned in \cref{remark: division elimination}, if $\F$ is finite, we can still eliminate this division if there is a polynomial $g \in \F[x]$ of degree at most $d$ such that $\disc(g)$ is nonzero. Irreducible polynomials have nonzero discriminant, and there are irreducible polynomials in $\F[x]$ of degree $d$ for all $d \in \naturals$ when $\F$ is finite, so the conclusion of \cref{thm:ac0 division elimination} still applies.}
	If we could do this, then the coefficients of $r(x)$ can be computed in $\AC^0_\F$ by applying \cref{lem:polynomial interpolation} to this division-free algorithm.
	To apply \cref{thm:ac0 division elimination}, we need to verify that the coefficients of $r(x)$ are polynomial functions of the coefficients of $f$ and $g$.

	To show that the coefficients of $r(x)$ are polynomial functions of the coefficients of $f$ and $g$, let $q$ be the quotient of $f$ by $g$, i.e., let $q$ be the polynomial such that $f = qg + r$.
	Write $q = \sum_{i=0}^{n-m} q_i x^i$ and $r = \sum_{i=0}^{m-1} r_i x^i$.
	By equating coefficients of powers of $x$ on both sides of $f = qg + r$, we obtain the linear system
	\[
		\begin{pmatrix}
			1 \\
			g_{m-1} & \ddots \\
			\vdots & & 1 \\
			\vdots & & g_{m-1} & 1 \\
			g_0 & & \vdots & & 1\\
			& \ddots & \vdots & & & \ddots \\
			& & g_0 & & & & 1
		\end{pmatrix}
		\begin{pmatrix}
			q_{n-m} \\
			\vdots \\
			q_0 \\
			r_{m-1} \\
			\vdots \\
			r_0 
		\end{pmatrix} 
		=
		\begin{pmatrix}
			1 \\
			f_{n-1} \\
			f_{n-2} \\
			\vdots \\
			f_0
		\end{pmatrix}.
	\]
	Let $\tilde{\Syl}(g,1)$ denote the matrix on the left-hand side above.
	(This matrix corresponds to $\Syl(g,1)$ when $1$ is treated as a polynomial of degree $n-m+1$.)
	The matrix $\tilde{\Syl}(g,1)$ is lower triangular and has ones along the diagonal, so it clearly has determinant 1.
	This implies that the entries of $\tilde{\Syl}(g,1)^{-1}$ are polynomial functions of the coefficients of $g$.
	It follows that
	\[
		\begin{pmatrix}
			q_{n-m} \\
			\vdots \\
			q_0 \\
			r_{m-1} \\
			\vdots \\
			r_0 
		\end{pmatrix} 
		=
		\tilde{\Syl}(g,1)^{-1}
		\begin{pmatrix}
			1 \\
			f_{n-1} \\
			f_{n-2} \\
			\vdots \\
			f_0
		\end{pmatrix}
	\]
	are polynomial functions of the coefficients of $f$ and $g$, as desired.

	It remains to show that we can compute $\disc(g) r(x)$ in $\AC^0_\F$.
	Suppose $\disc(g) \neq 0$, so that $g$ has $m$ simple roots $\beta_1,\ldots,\beta_m \in \F$.
	In this case, we can explicitly write $r(x)$ using Lagrange interpolation as
	\[
		r(x) = \sum_{i=1}^m f(\beta_i) \prod_{j \neq i} \frac{x - \beta_j}{\beta_i - \beta_j}.
	\]
	Combined with \cref{lem:discriminant as product}, we can write $\disc(g) r(x)$ as 
	\begin{align*}
		\disc(g) r(x) &= \del{(-1)^{\binom{m}{2}} \prod_{i\in [m]} \prod_{j \neq i} (\beta_i - \beta_j)} \sum_{i=1}^m f(\beta_i) \prod_{j \neq i} \frac{x - \beta_j}{\beta_i - \beta_j} \\
		&= (-1)^{\binom{m}{2}} \sum_{i=1}^m f(\beta_i) \prod_{j \neq i} (x - \beta_j) \prod_{k \neq j} (\beta_j - \beta_k) \\
		&= (-1)^{\binom{m}{2}} \sum_{i=1}^m f(\beta_i) \prod_{j \neq i} (x - \beta_j) \cdot g'(\beta_j).
	\end{align*}
	Let $y$ be a new variable.
	Observe that the above sum corresponds, up to the sign, to the coefficient of $y^{m-1}$ when the product
	\[
		\prod_{i=1}^m (f(\beta_i) + y (x - \beta_i) g'(\beta_i))
	\]
	is expanded as a polynomial in $y$ with coefficients in $\F[x]$.
	This is precisely the kind of expression that \cref{lem:esym over roots} allows us to compute.

	Let $z$ be a fresh variable.
	Define $h(x,y,z) \in \F[x,y,z]$ by
	\[
		h(x,y,z) \coloneqq f(z) + y (x - z) g'(z).
	\]
	It is clear that we can compute $h(x,y,z)$ in $\AC^0_\F$.
	Applying \cref{lem:esym over roots}, we compute
	\[
		\prod_{i=1}^m h(x,y,\beta_i) = \prod_{i=1}^m (f(\beta_i) + y (x - \beta_i) g'(\beta_i)),
	\]
	where $\beta_1,\ldots,\beta_m \in \F$ are the roots of $g$.
	We can then interpolate the coefficient of $y^{m-1}$ in the above polynomial using \cref{lem:polynomial interpolation}.
	In all, this yields an $\AC^0_\F$ algorithm to compute
	\[
		\rho(x) \coloneqq (-1)^{\binom{m}{2}} \sum_{i=1}^m f(\beta_i) \prod_{j \neq i} (x - \beta_j) \cdot g'(\beta_j).
	\]
	As the analysis above shows, when $\disc(g) \neq 0$, we have the polynomial identity $\rho(x) = \disc(g) r(x)$.
	It remains to show that the identity $\rho(x) = \disc(g) r(x)$ still holds when $\disc(g) = 0$.

	In the case $\disc(g) = 0$, the polynomial $g$ has a double root.
	Without loss of generality, suppose that $\beta_1 = \beta_2$.
	By definition of a double root, we have $g'(\beta_1) = g'(\beta_2) = 0$.
	This implies that the product
	\[
		\prod_{i=1}^m (f(\beta_i) + y (x - \beta_i) g'(\beta_i))
	\]
	has degree at most $m-2$ in $y$, so the coefficient of $y^{m-1}$ is zero, hence $\rho(x) = 0$.
	Because $\disc(g) = 0$, we again have the desired equality $\rho(x) = 0 = \disc(g) r(x)$.
\end{proof}

As a corollary, we obtain an $\AC^0_\F$ algorithm for polynomial division with remainder.

\begin{corollary}
	Let $\F$ be a field of characteristic zero or characteristic greater than $d$.
	Let $f, g \in \F[x]$ be univariate polynomials of degree at most $d$ given by their coefficients.
	Let $q, r \in \F[x]$ be the unique polynomials that satisfy $f = qg + r$ and $\deg(r) < \deg(g)$.
	Then the coefficients of $q$ and $r$ can be computed in $\AC^0_\F$.
\end{corollary}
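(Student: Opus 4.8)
The plan is to reduce polynomial division with remainder to two problems we have already solved: computing the remainder (\cref{lem:remainder}) and exact division (\cref{lem:exact division}). First I would compute the coefficients of the remainder $r$ using the $\AC^0$ circuit of \cref{lem:remainder}. Once $r$ is available, the quotient $q$ is characterized by $q g = f - r$, so $q$ is obtained by \emph{exactly} dividing $f - r$ by $g$, which is precisely what \cref{lem:exact division} does. Chaining these two circuits together yields both $q$ and $r$.

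In more detail, write $n = \deg(f)$ and $m = \deg(g)$. If $n < m$, then $q = 0$ and $r = f$, which is immediate, so I would treat this as a separate (trivial) case and assume $n \ge m$ from now on. In this case I would run the circuit of \cref{lem:remainder} to obtain the coefficients of $r$, then form $f - r$ with a single layer of subtraction gates. Before invoking \cref{lem:exact division} I would verify its hypotheses: since $\deg(r) < m \le n$, the polynomial $f - r$ has the same leading term $x^n$ as $f$, hence is monic of degree $n$ (and in particular nonzero); and $g \mid (f - r)$ by the defining property of the remainder. Feeding $f - r$ and $g$ into the exact-division circuit of \cref{lem:exact division} then produces the coefficients of $q = (f - r)/g$.

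To bound resources, note that the whole construction is a composition of a constant number of $\AC^0$ circuits (the remainder circuit of \cref{lem:remainder}, one layer of subtractions, and the exact-division circuit of \cref{lem:exact division}), together with a fixed choice between the two cases $n < m$ and $n \ge m$ that is determined by the input sizes and so requires no branching. By \cref{lem:compose piecewise circuit} the resulting circuit therefore has constant depth and polynomial size. Since the two cited lemmas are stated as (non-piecewise) $\AC^0$ results, the construction in fact computes $q$ and $r$ without any zero-tests; a fortiori it lies in (piecewise) $\AC^0$, matching the field-characteristic hypothesis $\ch(\F) = 0$ or $\ch(\F) > d$ inherited from \cref{lem:remainder,lem:exact division}.

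The main obstacle here is essentially bookkeeping rather than new ideas: one must make sure that the intermediate polynomial $f - r$ is monic of the expected degree $n$ so that \cref{lem:exact division} applies verbatim, and correctly peel off the degenerate case $\deg(f) < \deg(g)$. All of the genuine content is already contained in \cref{lem:remainder} (the Lagrange-interpolation-plus-discriminant trick for the remainder) and in \cref{lem:exact division} (the Newton-series subtraction for exact division).
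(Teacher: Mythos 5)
Your proposal matches the paper's proof: apply \cref{lem:remainder} to obtain $r$, observe that $f-r$ is a multiple of $g$, and recover $q$ via \cref{lem:exact division}. The additional bookkeeping you supply (checking that $f-r$ is monic of degree $\deg(f)$ so that \cref{lem:exact division} applies, and handling $\deg(f)<\deg(g)$ separately) is sound and simply makes explicit what the paper leaves implicit under its global monicity convention.
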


\begin{proof}
	Applying \cref{lem:remainder} to $f$ and $g$, we compute in $\AC^0_\F$ a polynomial $\hat{r}$ of degree $\deg(\hat{r}) < \deg(g)$ such that $\hat{r} \equiv f \pmod{g}$.
	In particular, $f-\hat{r}$ is a multiple of $g$, so there is a polynomial $\hat{q} \in \F[x]$ such that $f - \hat{r} = \hat{q} g$.
	We can compute $\hat{q}$ in $\AC^0_\F$ using \cref{lem:exact division}.
	Thus, we have $f = \hat{q} g + \hat{r}$ and $\deg(\hat{r}) < \deg(g)$.
	Uniqueness of polynomial division with remainder implies that $\hat{q} = q$ and $\hat{r} = r$ are the quotient and remainder, respectively, of $f$ divided by $g$.
\end{proof}

\subsection{Inverting the Sylvester Matrix} \label{subsec:sylvester inverse}

This subsection describes an $\AC^0_\F$ circuit that computes the adjugate $\adj \Syl(f,g)$ of the Sylvester matrix.
Using the fact that $A^{-1} = \frac{1}{\det A} \adj A$ for any matrix $A$ and that $\res(f,g) = \det \Syl(f,g)$ can be computed in $\AC^0_\F$, this yields an $\AC^0_\F$ algorithm to compute the inverse $\Syl(f,g)^{-1}$ of the Sylvester matrix when $\Syl(f,g)$ is invertible.
As a corollary of this and \cref{lem:sylvester inverse}, we obtain an $\AC^0_\F$ algorithm that computes the B\'{e}zout coefficients of $f$ and $g$ when $\gcd(f,g) = 1$.
The algorithm and its proof of correctness are very similar to \cref{lem:remainder}.
The primary difference is that instead of interpolating $f$ over the roots of $g$, we need to interpolate $1/f$ over the roots of $g$.

\begin{theorem} \label{thm:sylvester adjugate}
	Let $\F$ be a field of characteristic zero or characteristic greater than $d$.
	Let $f, g \in \F[x]$ be univariate polynomials of degree at most $d$ given by their coefficients.
	Then the entries of $\adj \Syl(f,g)$ can be computed in $\AC^0_\F$.
\end{theorem}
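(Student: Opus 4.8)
The plan is to follow the proof of \cref{lem:remainder} almost verbatim, replacing the interpolation of $f$ over the roots of $g$ by an interpolation of $1/f$ over the roots of $g$ (and, symmetrically, of $1/g$ over the roots of $f$). Write $n \coloneqq \deg f$ and $m \coloneqq \deg g$. By \cref{lem:sylvester inverse}, whenever $\res(f,g) \neq 0$ the $\ell$\ts{th} column of $\Syl(f,g)^{-1}$ is the coefficient vector of a pair $(a_\ell, b_\ell)$ with $\deg a_\ell < m$, $\deg b_\ell < n$, and $a_\ell f + b_\ell g = x^{n+m-\ell}$; since $\adj(A) = \det(A)\, A^{-1}$ and $\det \Syl(f,g) = \res(f,g)$, the $\ell$\ts{th} column of $\adj \Syl(f,g)$ is the coefficient vector of $\bigl(\res(f,g)\, a_\ell,\ \res(f,g)\, b_\ell\bigr)$. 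So it suffices to compute the polynomials $\res(f,g)\, a_\ell$ and $\res(f,g)\, b_\ell$ for every $\ell \in [n+m]$ in $\AC^0$; by symmetry I only discuss the $a_\ell$. I also record for later use that the entries of $\adj \Syl(f,g)$ are polynomial functions of the coefficients of $f$ and $g$, since the entries of $\Syl(f,g)$ are and the adjugate of any matrix is a polynomial in its entries.

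For the core computation, assume first that $g$ is squarefree with simple roots $\beta_1, \dots, \beta_m$ and that $\res(f,g) \neq 0$, so $f(\beta_i) \neq 0$ for all $i$. Then $a_\ell$ is the unique polynomial of degree $< m$ with $a_\ell(\beta_i) = \beta_i^{\,n+m-\ell} / f(\beta_i)$, so Lagrange interpolation gives $a_\ell(x) = \sum_{i=1}^m \frac{\beta_i^{\,n+m-\ell}}{f(\beta_i)} \prod_{j \neq i} \frac{x - \beta_j}{\beta_i - \beta_j}$. Using $\res(f,g) = (-1)^{nm} \prod_j f(\beta_j)$ from \cref{lem:resultant as product} to clear the $f(\beta_i)$ in the denominators, together with $\disc(g) = (-1)^{\binom{m}{2}} \prod_k g'(\beta_k)$ and $g'(\beta_i) = \prod_{j \neq i}(\beta_i - \beta_j)$ to clear the interpolation denominators exactly as in \cref{lem:remainder}, I obtain the identity
\[
	\disc(g)\, \res(f,g)\, a_\ell(x) = (-1)^{nm + \binom{m}{2}} \sum_{i=1}^m \beta_i^{\,n+m-\ell} \prod_{j \neq i} \bigl[ f(\beta_j)(x - \beta_j) g'(\beta_j) \bigr].
\]
The right-hand side is $(-1)^{nm+\binom{m}{2}}$ times the coefficient of $y^{m-1}$ in $\prod_{i=1}^m \bigl( \beta_i^{\,n+m-\ell} + y\, f(\beta_i)(x - \beta_i) g'(\beta_i) \bigr)$, which is exactly a product over the roots of $g$ of the sort that \cref{lem:esym over roots} (in the multivariate form explained in the remark following \cref{lem:esym as multiplication}) computes in $\AC^0$: apply it to the polynomial $g(z)$ and the auxiliary polynomial $h(x,y,z) \coloneqq z^{\,n+m-\ell} + y\, f(z)(x - z) g'(z)$, and extract the coefficient of $y^{m-1}$ via \cref{lem:polynomial interpolation}. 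Call the resulting polynomial $\widehat{a}_\ell(x)$.

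Now $\res(f,g)$ is computable in $\AC^0$ by \cref{thm:resultant ac0} and $\disc(g)$ by \cref{cor:discriminant ac0}, so $\pm\, \widehat{a}_\ell(x) / \disc(g)$ is an $\AC^0$ algorithm \emph{with division} that outputs $\res(f,g)\, a_\ell(x)$ whenever $\disc(g) \neq 0$ and $\res(f,g) \neq 0$. The identity $\widehat{a}_\ell = \pm\, \disc(g) \cdot (\text{the }a\text{-part of the }\ell\text{\ts{th} column of }\adj \Syl(f,g))$ is an equality of polynomials in the coefficients of $f$ and $g$ that holds on the Zariski-dense open set $\set{ \disc(g) \neq 0 } \cap \set{ \res(f,g) \neq 0 }$, hence it holds identically; consequently $\widehat{a}_\ell / \disc(g)$ is a genuine polynomial in the coefficients, namely the corresponding adjugate entry, and \cref{thm:ac0 division elimination} converts the division-by-$\disc(g)$ circuit into a division-free $\AC^0$ circuit that is correct on \emph{all} inputs, in particular when $g$ has a repeated root or $\gcd(f,g) \neq 1$. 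Feeding this circuit into \cref{lem:polynomial interpolation} reads off the coefficients, and repeating the whole argument with the roles of $f$ and $g$ interchanged (interpolating $1/g$ over the roots of $f$ and clearing by $\disc(f)$ instead) yields the $b$-parts, completing the computation of $\adj \Syl(f,g)$.

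The step I expect to be the main obstacle is the justification in the last paragraph that $\widehat{a}_\ell / \disc(g)$ is honestly a polynomial, so that \cref{thm:ac0 division elimination} applies and the resulting circuit is correct on \emph{every} input rather than only on the generic locus where $g$ is squarefree and coprime to $f$; this rests on the adjugate entries being polynomials in the matrix entries and on the squarefree-and-coprime locus being Zariski-dense. Everything else — the Lagrange formula, the denominator clearing, and the coefficient extraction through \cref{lem:esym over roots} — is a routine rerun of the bookkeeping already carried out in \cref{lem:remainder}.
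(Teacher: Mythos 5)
Your proposal is correct and follows essentially the same route as the paper's proof: interpret the $\ell$\ts{th} column of $\adj \Syl(f,g)$ via \cref{lem:sylvester inverse}, interpolate $1/f$ at the roots of $g$ via Lagrange's formula, clear denominators by $\disc(g)$ (noting $g'(\beta_i) = \prod_{j\neq i}(\beta_i-\beta_j)$), package the sum as a coefficient of $y^{m-1}$ in a product over roots of $g$ handled by \cref{lem:esym over roots} plus \cref{lem:polynomial interpolation}, and finish with \cref{thm:ac0 division elimination}. The one place you diverge is how you justify that the circuit is correct off the generic locus: you argue by Zariski density using the observation that adjugate entries are polynomials in the matrix entries, whereas the paper explicitly checks that when $\disc(g) = 0$ the constructed polynomial $\widehat{a}_\ell$ vanishes (because a repeated root $\beta_1 = \beta_2$ forces $g'(\beta_1) = g'(\beta_2) = 0$, dropping the $y$-degree of the product below $m-1$). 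Both justifications are valid — yours is cleaner and avoids the case analysis, while the paper's is more elementary and self-contained; in fact your phrasing is slightly more careful in that it explicitly restricts to the locus $\disc(g) \neq 0$ \emph{and} $\res(f,g) \neq 0$ before writing the Lagrange formula, whereas the paper only states the former.
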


\begin{proof}
	Let $n \coloneqq \deg(f)$ and $m \coloneqq \deg(g)$.
	Recall \cref{lem:sylvester inverse}: for $\ell \in [n+m]$, the entries in the $\ell$\ts{th} column of $\adj \Syl(f,g)$ correspond to the coefficients of polynomials $a_{\ell}, b_{\ell} \in \F[x]$ such that
	\[
		a_{\ell}(x) f(x) + b_{\ell}(x) g(x) = \res(f,g) x^{n+m-\ell},
	\]
	where $\deg(a_{\ell}) < m$ and $\deg(b_{\ell}) < n$.
	Below, we will show that the coefficients of $\disc(g) a_{\ell}(x)$ and $\disc(f) b_{\ell}(x)$ can be computed in $\AC^0_\F$ without $\sel$ gates.
	We can also compute $\disc(f)$ and $\disc(g)$ in $\AC^0_\F$ without $\sel$ gates using \cref{cor:discriminant ac0}.
	This yields an $\AC^0_\F$ algorithm with division that computes the coefficients of $a_{\ell}$ and $b_{\ell}$.
	Because the coefficients of $a_\ell$ and $b_\ell$ are the entries of the $\ell$\ts{th} column of $\adj \Syl(f,g)$, the coefficients of $a_\ell$ and $b_\ell$ are polynomial functions in the coefficients of $f$ and $g$.
	This allows us to apply \cref{thm:ac0 division elimination}, which yields the claimed $\AC^0_\F$ algorithm to compute the entries of $\adj \Syl(f,g)$.
	\footnote{When $\F$ is finite, by \cref{remark: division elimination}, we can still eliminate this division if there are polynomials $f, g \in \F[x]$ of degree at most $d$ such that $\disc(f)$ and $\disc(g)$ are nonzero. The existence of such polynomials follows from the fact that irreducible polynomials have nonzero discriminant and that $\F[x]$ contains irreducible polynomials of every degree $d \in \naturals$ when $\F$ is finite.}

	It remains to show that for every $\ell$, we can compute the coefficients of the polynomials $\disc(g) a_{\ell}(x)$ and $\disc(f) b_{\ell}(x)$.
	We start by computing $\disc(g) a_{\ell}(x)$.
	By exchanging the roles of $f$ and $g$ in the algorithm below, we can likewise compute $\disc(f) b_{\ell}(x)$.

	Suppose $\disc(g) \neq 0$, so that $g$ has $m$ simple roots $\beta_1,\ldots,\beta_m$.
	In this case, we can explicitly write $a_{\ell}(x)$ using Lagrange interpolation as
	\[
		a_{\ell}(x) = \res(f, g) \sum_{i=1}^m \frac{\beta_i^{n+m-\ell}}{f(\beta_i)} \prod_{j \neq i} \frac{x - \beta_j}{\beta_i - \beta_j}.
	\]
	Using \cref{lem:resultant as product} and \cref{lem:discriminant as product} to expand $\res(f,g)$ and $\disc(g)$, we have
	\begin{align*}
		\disc(g) a_{\ell}(x) &= \del{(-1)^{\binom{m}{2}} \prod_{i \in [m]} \prod_{j \neq i} (\beta_i - \beta_j)} \del{(-1)^{nm} \prod_{i=1}^m f(\beta_i)} \sum_{i=1}^m \frac{\beta_i^{n+m-\ell}}{f(\beta_i)} \prod_{j \neq i} \frac{x - \beta_j}{\beta_i - \beta_j} \\
		&= (-1)^{nm + \binom{m}{2}} \sum_{i=1}^m \beta_i^{n+m-\ell} \prod_{j \neq i} f(\beta_j) (x - \beta_j) \prod_{k \neq j} (\beta_j - \beta_k) \\
		&= (-1)^{nm + \binom{m}{2}} \sum_{i=1}^m \beta_i^{n+m-\ell} \prod_{j \neq i} f(\beta_j) g'(\beta_j) (x - \beta_j).
	\end{align*}
	Let $y$ be a new variable.
	Observe that the summation above is the coefficient of $y^{m-1}$ when the product
	\[
		\prod_{i=1}^m (\beta_i^{n+m-\ell} + y f(\beta_i) g'(\beta_i) (x - \beta_i))
	\]
	is expanded as a polynomial in $y$ with coefficients in $x$.
	We will compute this product by an application of \cref{lem:esym over roots}.

	Let $z$ be a fresh variable.
	Define $h(x,y,z) \in \F[x,y,z]$ by
	\[
		h(x,y,z) \coloneqq z^{n + m - \ell} + y (x - z) f(z) g'(z).
	\]
	It is clear from the definition that $h$ can be computed in $\AC^0_\F$.
	Using \cref{lem:esym over roots}, we compute
	\[
		\prod_{i=1}^m h(x,y,\beta_i) = \prod_{i=1}^m (\beta_i^{n + m - \ell} + y (x - \beta_i) f(\beta_i) g'(\beta_i))
	\]
	in $\AC^0_\F$, where $\beta_1,\ldots,\beta_m \in \F$ are the (not necessarily distinct) roots of $g$.
	Interpolating the coefficient of $y^{m-1}$ using \cref{lem:polynomial interpolation} and multiplying by $(-1)^{nm + \binom{m}{2}}$, we have an $\AC^0_\F$ algorithm to compute
	\[
		\hat{a}(x) \coloneqq (-1)^{nm + \binom{m}{2}} \sum_{i=1}^m \beta_i^{n+m-\ell} \prod_{j \neq i} f(\beta_j) g'(\beta_j) (x - \beta_j).
	\]
	As the analysis above shows, when $\disc(g) \neq 0$, we have the equality $\hat{a}(x) = \disc(g) a_\ell(x)$.
	It remains to show that this equality holds when $\disc(g) = 0$.

	If $\disc(g) = 0$, then $g$ has a double root.
	Without loss of generality, suppose that $\beta_1 = \beta_2$.
	By definition of a double root, we have $g'(\beta_1) = g'(\beta_2)$.
	This implies that the product
	\[
		\prod_{i=1}^m (\beta_i^{n + m - \ell} + y (x - \beta_i) f(\beta_i) g'(\beta_i))
	\]
	has degree at most $m-2$ in $y$, so $\hat{a}(x) = 0 = \disc(g) a_\ell(x)$ as desired.
\end{proof}

Recall that for coprime polynomials $f, g \in \F[x]$, the coefficients of the B\'{e}zout coefficients $a, b \in \F[x]$ appear in the last column of $\Syl(f,g)^{-1}$.
Using the fact that $\Syl(f,g)^{-1} = \frac{1}{\res(f,g)}\adj\Syl(f,g)$ and that both $\res(f,g)$ and $\adj \Syl(f,g)$ can be computed in $\AC^0_\F$, we see that $\Syl(f,g)^{-1}$ can be computed in $\AC^0_\F$, provided that $\Syl(f,g)^{-1}$ exists.
This yields an $\AC^0_\F$ algorithm to compute the B\'{e}zout coefficients of two coprime polynomials, which we record in the following corollary.
We will remove the requirement that $f$ and $g$ are coprime later in \cref{sec:gcd and lcm}.

\begin{corollary} \label{cor:bezout coprime}
	Let $\F$ be a field of characteristic zero or characteristic greater than $d$.
	Let $f, g \in \F[x]$ be univariate polynomials of degree at most $d$ given by their coefficients.
	Suppose that $\gcd(f,g) = 1$.
	Then the B\'{e}zout coefficients of $f$ and $g$ can be computed in $\AC^0_\F$.
\end{corollary}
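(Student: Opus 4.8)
The plan is to read the B\'{e}zout coefficients off the last column of $\Syl(f,g)^{-1}$, using the machinery already developed. Since $\gcd(f,g) = 1$, \cref{lem:resultant properties} gives $\res(f,g) \neq 0$ and tells us that $\Syl(f,g)$ is invertible. Combining this with the adjugate identity $\adj\del{\Syl(f,g)} \cdot \Syl(f,g) = \det\del{\Syl(f,g)} \cdot I = \res(f,g) \cdot I$, we obtain
\[
	\Syl(f,g)^{-1} = \frac{1}{\res(f,g)} \adj\del{\Syl(f,g)}.
\]

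Next I would assemble the circuit. By \cref{thm:resultant ac0} the scalar $\res(f,g)$ is computable in $\AC^0$, and by \cref{thm:sylvester adjugate} every entry of $\adj\del{\Syl(f,g)}$ is computable in $\AC^0$; running these circuits in parallel and appending one division gate per entry yields an $\AC^0$ circuit (using division) that computes every entry of $\Syl(f,g)^{-1}$. This circuit is correct whenever $\res(f,g) \neq 0$, which is guaranteed by the hypothesis $\gcd(f,g) = 1$. Writing $n = \deg(f)$ and $m = \deg(g)$, \cref{lem:sylvester inverse} with $\ell = n+m$ says that the last column of $\Syl(f,g)^{-1}$ holds the coefficients of polynomials $a, b \in \F[x]$ satisfying $\deg(a) < m$, $\deg(b) < n$, and $a f + b g = x^{n+m-\ell} = x^0 = 1$. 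Since $\deg\del{\gcd(f,g)} = 0$, these are exactly conditions (1)--(3) of \cref{def:bezout coefficients}, so $a$ and $b$ are the B\'{e}zout coefficients of $f$ and $g$; composing the circuit for $\Syl(f,g)^{-1}$ with the projection onto its last column finishes the construction.

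As all the ingredients are already in place, there is really no obstacle here beyond bookkeeping. The one point requiring care is aligning the normalization in \cref{lem:sylvester inverse} (whose $\ell$-th column solves $a f + b g = x^{n+m-\ell}$) with the normalization $a f + b g = 1$ in \cref{def:bezout coefficients}, which is precisely what pins down the choice $\ell = n+m$. If a division-free circuit were instead desired, one could imitate the proof of \cref{thm:sylvester adjugate} specialized to $\ell = n+m$, compute $\disc(g)\, a(x)$ and $\disc(f)\, b(x)$ directly, and then invoke \cref{thm:ac0 division elimination}; but since arithmetic circuits are permitted division gates, this is unnecessary.
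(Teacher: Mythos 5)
Your proof is correct and takes exactly the approach the paper itself uses: the paper's own explanation before the corollary reads off the Bézout coefficients from the last column of $\Syl(f,g)^{-1}$, computed via $\Syl(f,g)^{-1} = \frac{1}{\res(f,g)}\adj\Syl(f,g)$ with \cref{thm:resultant ac0} and \cref{thm:sylvester adjugate}, and your careful check that $\ell = n+m$ aligns \cref{lem:sylvester inverse} with the $d=0$ case of \cref{def:bezout coefficients} is a welcome explicitness that the paper leaves implicit.
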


As a second corollary of \cref{thm:sylvester adjugate}, we obtain an $\AC^0_\F$ algorithm to invert triangular Toeplitz matrices.
This follows from the simple fact that a triangular Toeplitz matrix can be embedded as a block of the Sylvester matrix for a particular pair of polynomials.
A different $\AC^0_\F$ algorithm for this problem was described by \textcite{Bini84}.

\begin{corollary}
	Let $\F$ be a field of characteristic zero or characteristic greater than $n$.
	Let $A \in \F^{n \times n}$ be a triangular Toeplitz matrix.
	Then the inverse $A^{-1}$ can be computed in $\AC^0_\F$.
\end{corollary}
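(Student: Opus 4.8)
The plan is to embed the triangular Toeplitz matrix as a block of a Sylvester matrix and then invoke the $\AC^0$ algorithm for $\adj\Syl$ from \cref{thm:sylvester adjugate}.

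First I would reduce to the lower-triangular case: if $A$ is upper triangular Toeplitz, then $A^{\top}$ is lower triangular Toeplitz and $A^{-1} = ((A^{\top})^{-1})^{\top}$, and transposition costs nothing. So assume $A \in \F^{n \times n}$ is lower triangular Toeplitz, so that $A_{i,j} = \alpha_{i-j}$ for scalars $\alpha_0, \ldots, \alpha_{n-1} \in \F$ (with $\alpha_k = 0$ for $k < 0$), and $A$ is invertible exactly when $\alpha_0 \neq 0$, which we assume. Attach to $A$ the polynomial $a(x) \coloneqq \sum_{k=0}^{n-1} \alpha_k x^k$. A direct computation shows that $A \mapsto a(x)$ carries matrix multiplication of lower triangular Toeplitz matrices to multiplication in $\F[x]/(x^n)$; consequently $A^{-1}$ is the lower triangular Toeplitz matrix attached to the unique polynomial $c(x)$ of degree $< n$ satisfying $a(x)\, c(x) \equiv 1 \pmod{x^n}$. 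In particular the coefficients of $c$ form the first column of $A^{-1}$, and the remaining entries of $A^{-1}$ are shifted copies of these, obtained for free.

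It remains to produce $c$ in $\AC^0$. Set $f(x) \coloneqq x^n + a(x)$ and $g(x) \coloneqq x^n$, both monic of degree $n$; since $f(0) = \alpha_0 \neq 0$ we have $\gcd(f, g) = 1$, so $\Syl(f,g)$ is invertible. By \cref{lem:sylvester inverse}, the last column of $\Syl(f,g)^{-1}$ encodes polynomials $p, q \in \F[x]$ with $\deg p < n$, $\deg q < n$, and $p f + q g = x^0 = 1$; reducing modulo $x^n$ (where $g \equiv 0$ and $f \equiv a$) gives $p(x)\, a(x) \equiv 1 \pmod{x^n}$, so $p = c$ by uniqueness. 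Equivalently, $c$ is just the first B\'ezout coefficient of $f$ and $g$. Using $\Syl(f,g)^{-1} = \frac{1}{\res(f,g)} \adj\Syl(f,g)$ together with the $\AC^0$ circuits for $\res(f,g)$ and $\adj \Syl(f,g)$ from \cref{thm:resultant ac0,thm:sylvester adjugate} (equivalently, the B\'ezout-coefficient algorithm of \cref{cor:bezout coprime}), we read off the coefficients of $c$, hence all entries of $A^{-1}$, by an $\AC^0$ circuit; the single division by $\res(f,g)$ is harmless since $\res(f,g) \neq 0$ precisely when $A$ is invertible. The degree bound involved is $n$, which matches the characteristic hypothesis in the statement.

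This argument is essentially a chain of routine identifications, so no genuine obstacle arises; the only thing to watch is the bookkeeping---matching the exponent $x^{n+m-\ell}$ indexing the columns of $\Syl(f,g)^{-1}$ in \cref{lem:sylvester inverse} with the constant term $x^0$ we need, and matching the coefficient vector of $c$ with the first column of the Toeplitz matrix $A^{-1}$. Everything of substance is already carried out in \cref{thm:sylvester adjugate}.
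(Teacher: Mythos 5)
Your proof is correct and relies on the same key ingredient as the paper, namely \cref{thm:sylvester adjugate}, but the route from that theorem back to $A^{-1}$ is genuinely different. The paper works with the upper triangular case, takes $f(x) = \sum_{i=0}^{n-1} a_i x^i$ and $g(x) = x^{n-1}$, observes that $A$ sits directly as a block of $\Syl(f,g)$, and reads off $A^{-1}$ as the corresponding block of $\Syl(f,g)^{-1}$. You instead pass through the ring isomorphism between lower triangular Toeplitz matrices and $\F[x]/(x^n)$, reduce the task to finding the inverse power series $c(x)$ with $a(x)c(x) \equiv 1 \pmod{x^n}$, and recover $c$ as a B\'ezout coefficient of the pair $f = x^n + a(x)$, $g = x^n$ via \cref{lem:sylvester inverse} (equivalently, via \cref{cor:bezout coprime}). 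Your version has a small advantage in bookkeeping: $f$ and $g$ are manifestly monic of degree exactly $n$ regardless of whether $\alpha_{n-1}$ vanishes, so the dimensions of $\Syl(f,g)$ are unambiguously $2n \times 2n$, and the characteristic hypothesis $\ch(\F) > n$ matches the degree bound in \cref{thm:sylvester adjugate} exactly. The paper's version has the advantage of being purely matrix-structural---no appeal to the ring $\F[x]/(x^n)$ or to uniqueness of B\'ezout coefficients---and reads off all $n^2$ entries of $A^{-1}$ at once rather than extracting the first column and propagating it by the Toeplitz structure. Both are fine; the substance is indeed all in \cref{thm:sylvester adjugate}.
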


\begin{proof}
	Write
	\[
		A = \begin{pmatrix}
			a_0 & a_1 & a_2 & \cdots & a_{n-1} \\
			0 & a_0 & a_1 & \cdots & a_{n-2} \\
			0 & 0 & a_0 & \cdots & a_{n-3} \\
			\vdots & \vdots & \vdots & \ddots & \vdots \\
			0 & 0 & 0 & \cdots & a_0
		\end{pmatrix}.
	\]
	Let $f(x) \coloneqq x^n + \sum_{i=0}^{n-1} a_i x^i$ and let $g(x) \coloneqq x^{n}$.
	Observe that the Sylvester matrix $\Syl(f,g)$ is a $2n \times 2n$ matrix that decomposes into $n \times n$ blocks as
	\[
		\Syl(f,g) =
		\begin{pmatrix}
			B & I_n \\
			A & 0
		\end{pmatrix},
	\]
	where $B$ is an $n \times n$ matrix corresponding to the coefficients of $f$ and $I_n$ is the $n \times n$ identity matrix.
	The inverse of $\Syl(f,g)$ has the block decomposition
	\[
		\Syl(f,g)^{-1} = 
		\begin{pmatrix}
			0 & A^{-1} \\
			I_n & -B A^{-1}
		\end{pmatrix}.
	\]
	In particular, the upper-right $n \times n$ block of $\Syl(f,g)^{-1}$ corresponds to $A^{-1}$.
	Thus, we can compute $A^{-1}$ in $\AC^0_\F$ by inverting $\Syl(f,g)$ using \cref{thm:sylvester adjugate}.
\end{proof}

\subsection{Inverting the B\'{e}zout Matrix}

In this subsection, we design an algorithm to compute the inverse of the B\'{e}zout matrix $\Bez_n(f,g)$ of two polynomials.
This is an easy corollary of the fact that we can compute the B\'{e}zout coefficients of coprime polynomials in $\AC^0_\F$.

\begin{theorem} \label{thm:bezout inverse}
	Let $\F$ be a field of characteristic zero or characteristic greater than $d$.
	Let $f, g \in \F[x]$ be univariate polynomials of degree at most $d$ given by their coefficients.
	Let $\delta = \max(\deg(f), \deg(g))$.
	Suppose that $\gcd(f,g) = 1$, so that the matrix $\Bez_\delta(f,g)$ is invertible.
	Then the inverse $\Bez_\delta(f,g)^{-1}$ can be computed in $\AC^0_\F$.
\end{theorem}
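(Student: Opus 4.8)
The plan is to reduce everything to \cref{prop:bezout inverse}, which already furnishes a closed form for $\Bez_n(f,g)^{-1}$ as a Hankel matrix whose entries are consecutive coefficients of the power series $h(x) = \frac{x^n p(1/x)}{x^n f(1/x)}$, where $p$ is any polynomial of degree at most $n-1$ with $pg \equiv 1 \pmod f$. Thus it suffices to produce such a $p$ in $\AC^0$ and then to expand a rational function $N(x)/D(x)$ with $D(0)\neq 0$ into its power series up to a polynomially bounded degree, again in $\AC^0$. The first task is immediate: $p$ is just a B\'{e}zout coefficient, so \cref{cor:bezout coprime} applied to $f$ and $g$ (legitimate since $\gcd(f,g)=1$) computes in $\AC^0$ polynomials $a,b$ with $af+bg=1$, $\deg b < \deg f - \deg\gcd(f,g) = \deg f$, and we set $p \coloneqq b$; then $pg\equiv 1 \pmod f$ and $\deg p \le \deg f - 1$. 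For the second task we reuse the truncated-geometric-series trick from the proof of \cref{lem:p_k from e_k algorithm}.

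Before invoking \cref{prop:bezout inverse} we must meet its degree convention, that $f$ be monic of degree $n$ with $\deg g \le n$. Since the inputs are monic, at least one of $f,g$ has degree exactly $\delta$. If $\deg f = \delta$ we take $n \coloneqq \delta$ and proceed. If instead $\deg f < \delta = \deg g$, we use the antisymmetry $\Bez_\delta(f,g) = -\Bez_\delta(g,f)$, which is immediate from \cref{def:bezout matrix} because $g(x)f(y)-g(y)f(x) = -\del{f(x)g(y)-f(y)g(x)}$; since also $\res(g,f) = \pm \res(f,g) \neq 0$, this reduces the problem to computing $\Bez_\delta(g,f)^{-1}$, with $g$ now monic of degree $\delta$, and then negating. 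So from here on assume $f$ is monic of degree $n = \delta$ and $\deg g \le n$, with $p = b$ as above.

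Now set $N(x) \coloneqq x^n p(1/x)$ and $D(x) \coloneqq x^n f(1/x)$; their coefficient vectors are reindexings of those of $p$ and $f$, and $D(0)$ is the leading coefficient of $f$, namely $1$. Writing $D(x) = 1 + \tilde D(x)$ with $\tilde D(0) = 0$, the fact that $\tilde D$ has no constant term makes the polynomial $s(x) \coloneqq N(x)\sum_{k=0}^{2n-1}(-\tilde D(x))^k$ agree with $h(x)$ modulo $x^{2n}$. Exactly as in \cref{lem:p_k from e_k algorithm}, $s$ has degree $O(n^2)$ and is computed by a polynomial-size constant-depth circuit: each power $\tilde D(x)^k$ is a single unbounded-fan-in product gate, the $2n$ of them run in parallel, then one sum gate and one final product with $N(x)$. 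Applying \cref{lem:polynomial interpolation} to $s$ extracts $h_0, h_1, \dots, h_{2n-1}$ in $\AC^0$, and assembling $h_1,\dots,h_{2n-1}$ into the Hankel matrix of \cref{prop:bezout inverse} yields $\Bez_n(f,g)^{-1} = \Bez_\delta(f,g)^{-1}$ (negated in the swapped case). I do not expect a real obstacle here, as every step applies a tool already developed; the only places requiring care are matching the degree conventions of \cref{prop:bezout inverse} (hence the possible $f \leftrightarrow g$ swap) and confirming that \cref{cor:bezout coprime} returns $b$ with $\deg b < \deg f$, which holds precisely because $\gcd(f,g)=1$.
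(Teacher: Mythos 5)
Your proof is correct and follows essentially the same route as the paper's: reduce to \cref{prop:bezout inverse}, obtain $p$ via \cref{cor:bezout coprime}, then read off the Hankel matrix from the power series expansion. The paper's proof is considerably terser --- it states ``without loss of generality, suppose that $\delta = \deg(f)$'' and stops after identifying $p$ as a B\'{e}zout coefficient, leaving both the reduction to the case $\deg(f) = \delta$ and the $\AC^0$ computation of the power series coefficients $h_1,\ldots,h_{2n-1}$ implicit. You spell out exactly these two points: the antisymmetry $\Bez_\delta(f,g) = -\Bez_\delta(g,f)$ justifying the swap when $\deg(f) < \deg(g)$, and the truncated-geometric-series expansion of $N(x)/D(x)$ (valid since $D(0) = 1$) followed by interpolation via \cref{lem:polynomial interpolation}. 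Both steps are sound and are genuinely needed for a self-contained argument; your version is simply a fuller rendering of the paper's sketch.
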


\begin{proof}
	Without loss of generality, suppose that $\delta = \deg(f)$.
	By \cref{prop:bezout inverse}, to invert $\Bez_\delta(f,g)$, it suffices to compute a polynomial $p \in \F[x]$ of degree at most $n-1$ such that $p(x) g(x) \equiv 1 \pmod{f(x)}$.
	The desired polynomial $p$ is precisely the B\'{e}zout coefficient of $g$, which we can compute in $\AC^0_\F$ using \cref{cor:bezout coprime}.
\end{proof}

\section{Operations on Roots} \label{sec:operations on roots}

In this section, we develop tools to manipulate the factors of univariate polynomials without having explicit access to the factorization of a polynomial.
These results are the technical highlight of this work and will be essential for our $\AC^0_\F$ algorithm for the GCD.

\subsection{Filtering}

All of the algorithms we have seen so far proceed by computing symmetric functions of the roots of two univariate polynomials $f$ and $g$.
To compute the GCD, we also need a means of comparing the (multisets of) roots of polynomials.
The algorithm of \cref{thm:resultant ac0} for the resultant tells us if $f$ and $g$ share a common root, but the resultant itself does not tell us any more about how the roots of $f$ and $g$ compare.
We can learn more by inspecting other elementary symmetric functions of the evaluations of $g$ at the roots of $f$ (and vice-versa).
Through a careful application of \cref{lem:esym as multiplication}, we can filter the common roots of $f$ and $g$ out of $f$ (and likewise, out of $g$), as we now illustrate.

Let 
\begin{align*}
	f(x) &\coloneqq \prod_{i=1}^n (x - \alpha_i) \prod_{i=1}^d (x - \gamma_i) \\
	g(x) &\coloneqq \prod_{i=1}^m (x - \beta_i) \prod_{i=1}^d (x - \gamma_i)
\end{align*}
be two squarefree polynomials, where $\alpha_1,\ldots,\alpha_n,\beta_1,\ldots,\beta_m,\gamma_1,\ldots,\gamma_d$ are $n+m+d$ distinct field elements.
Although the values of $n$, $m$, and $d$ are known to us for the purpose of analysis, these values are unknown to the algorithm.
Our goal is to compute $\prod_{i=1}^d (x - \gamma_i)$, which allows us to distinguish the shared roots of $f$ and $g$ from the set of all roots of $f$.
Note that this will allow us to compute the GCD when $f$ and $g$ are squarefree, which represents serious progress towards a general algorithm for the GCD!

Let $y$ be a fresh variable and consider the polynomial 
\[
	h(x,y) \coloneqq (y - x)\, g(x),
\]
viewed as an element of $\F(y)[x]$.
The evaluations of $h(x,y)$ at the roots of $f(x)$ are given by
\begin{align*}
	\lambda_1 &\coloneqq h(\alpha_1, y) = (y - \alpha_1)\, g(\alpha_1) \neq 0 \\
	&\vdots \\
	\lambda_n &\coloneqq h(\alpha_n, y) = (y - \alpha_n)\, g(\alpha_n) \neq 0 \\
	\lambda_{n + 1} &\coloneqq h(\gamma_1, y) = (y - \gamma_1)\, g(\gamma_1) = 0 \\
	&\vdots \\
	\lambda_{n + d} &\coloneqq h(\gamma_d, y) = (y - \gamma_d)\, g(\gamma_d) = 0.
\end{align*}
By taking $h$ to be a multiple of $g$, we ensure that $h(x,y)$ vanishes whenever we substitute a root of $g$ for $x$.
Applying \cref{lem:esym as multiplication}, we can compute
\[
	\prod_{i=1}^n \lambda_i = \prod_{i=1}^n (y - \alpha_i) g(\alpha_i)
\]
piecewise in $\AC^0_\F$.
Normalizing this polynomial to have leading coefficient 1 yields $\prod_{i=1}^n (y - \alpha_i)$, which is precisely the product of factors of $f$ that are not shared with $g$.
From here, we can obtain $\prod_{i=1}^d (y - \gamma_i)$ as the quotient
\[
	\prod_{i=1}^d (y - \gamma_i) = \frac{f(y)}{\prod_{i=1}^n (y - \alpha_i)},
\]
which can be computed in $\AC^0_\F$ using \cref{lem:exact division}.

When the polynomials $f$ and $g$ are not squarefree, the algorithm sketched above separates the factors of $f$ into two sets: those that correspond to roots of $g$, and those that are not roots of $g$.
The factors of $f$ maintain their multiplicity in the output of this algorithm.
We can use this to compute a polynomial $h$ such that $h$ and $\gcd(f,g)$ have the same set of irreducible factors, but possibly with different multiplicities.
Later in \cref{sec:gcd and lcm}, we will see how to compute the GCD with the correct multiplicities.

We now formalize the result of the preceding sketch.

\begin{lemma} \label{lem:filter}
	Let $\F$ be a field of characteristic zero or characteristic greater than $d$.
	Let $f, g \in \F[x]$ be univariate polynomials of degree at most $d$ given by their coefficients.
	Suppose that $f$ factors as $\prod_{i=1}^n (x - \alpha_i)^{a_i}$.
	Then the coefficients of the polynomials
	\begin{align*}
		f_{g=0}(x) & \coloneqq \prod_{i : g(\alpha_i) = 0} (x - \alpha_i)^{a_i} \\
		f_{g \neq 0}(x) & \coloneqq \prod_{i : g(\alpha_i) \neq 0} (x - \alpha_i)^{a_i}
	\end{align*}
	can be computed piecewise in $\AC^0_\F$.
\end{lemma}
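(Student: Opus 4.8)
The plan is to compute $f_{g \neq 0}$ first and then recover $f_{g=0} = f/f_{g\neq 0}$ by exact division. To compute $f_{g \neq 0}$, the idea is to realize the linear factor $(x - \alpha_i)$ attached to each surviving root as an \emph{evaluation} of an auxiliary polynomial, so that the product over the surviving roots is produced by the filtering lemma \cref{lem:esym as multiplication}. Introduce a fresh variable $y$ and set
\[
	h(x, y) \coloneqq (y - x)\, g(x),
\]
regarded as a univariate polynomial in $x$ with coefficients in $\F[y]$; its coefficients are read off trivially from those of $g$. For any root $\rho$ of $f$ we have $h(\rho, y) = (y - \rho)\, g(\rho)$, which is nonzero as an element of $\F[y]$ precisely when $g(\rho) \neq 0$. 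Now apply the multivariate form of \cref{lem:esym as multiplication} (as in the Remark following that lemma) to the pair $f$ and $h$. Since that lemma ranges over the roots of $f$ \emph{counted with multiplicity}, it computes, piecewise in $\AC^0$, the coefficients in $y$ of
\[
	P(y) = \prod_{i \,:\, g(\alpha_i) \neq 0} \bigl( (y - \alpha_i)\, g(\alpha_i) \bigr)^{a_i} = c \cdot \prod_{i \,:\, g(\alpha_i) \neq 0} (y - \alpha_i)^{a_i} = c \cdot f_{g \neq 0}(y),
\]
where $c \coloneqq \prod_{i \,:\, g(\alpha_i) \neq 0} g(\alpha_i)^{a_i}$ is a nonzero element of $\F$ (indeed, the leading $y$-coefficient of $P$, since $f_{g \neq 0}$ is monic).

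Next, compute the scalar $c$ directly by applying \cref{lem:esym as multiplication} to the pair $f$ and $g$: counting roots of $f$ with multiplicity, this yields $\prod_{i \,:\, g(\alpha_i)\neq 0} g(\alpha_i)^{a_i} = c$, piecewise in $\AC^0$. Dividing the coefficients of $P$ by $c$ and renaming $y$ back to $x$ gives the coefficients of $f_{g \neq 0}(x)$, a monic polynomial of degree at most $d$. Since every root $\alpha_i$ of $f$ contributes, with its full multiplicity $a_i$, either to $f_{g=0}$ or to $f_{g\neq 0}$, we have $f = f_{g=0} \cdot f_{g\neq 0}$; in particular $f_{g\neq 0}$ divides $f$ and both are monic of degree at most $d$. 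Invoking the exact-division algorithm \cref{lem:exact division} (applicable since $\ch(\F) = 0$ or $\ch(\F) > d$) then produces the coefficients of $f_{g=0}(x) = f(x)/f_{g\neq 0}(x)$ piecewise in $\AC^0$. Composing these finitely many piecewise $\AC^0$ stages via \cref{lem:compose piecewise circuit} keeps the total size polynomial and the depth constant, which completes the construction.

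The one point that needs care is the bookkeeping of multiplicities: one must verify that, because \cref{lem:esym as multiplication} already enumerates the roots of $f$ with multiplicity, each surviving factor $(y - \alpha_i)$ enters $P$ to exactly the power $a_i$ it has in $f$, and that the extraneous scalar picked up is precisely $c = \prod_{i \,:\, g(\alpha_i) \neq 0} g(\alpha_i)^{a_i}$, a nonzero field element that we either recompute independently or extract as the leading $y$-coefficient of $P$. Everything else is routine: the auxiliary variable $y$ merely carries the factor $(y - \alpha_i)$ through the filtering lemma, the coefficient extraction in $y$ (over a field extension if $\F$ is too small, as in \cref{lem:polynomial interpolation}) is standard, and the final step is an exact division we have already solved. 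I expect no substantial obstacle here beyond assembling these ingredients; the real work was done in establishing \cref{lem:esym as multiplication}, which is used as a black box.
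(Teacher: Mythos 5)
Your proposal is correct and follows essentially the same route as the paper: introduce a fresh variable $y$, set $h(x,y) \coloneqq (y-x)g(x)$ so that $h(\alpha_i,y)$ vanishes exactly when $g(\alpha_i)=0$, feed $f$ and $h$ into \cref{lem:esym as multiplication} to produce $\prod_{i : g(\alpha_i)\neq 0} h(\alpha_i,y)^{a_i}$, interpolate in $y$ and normalize by the leading coefficient to recover $f_{g\neq 0}$, then obtain $f_{g=0}=f/f_{g\neq 0}$ via \cref{lem:exact division}. The only cosmetic deviation is that you offer a second, equally valid way to compute the normalizing scalar $c$ (a separate application of \cref{lem:esym as multiplication} to $f$ and $g$), whereas the paper simply reads $c$ off as the leading $y$-coefficient of the interpolated product.
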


\begin{proof}
	Let $y$ be a fresh variable.
	Define $h(x,y) \in \F[x,y]$ as
	\[
		h(x,y) \coloneqq (y - x)\, g(x).
	\]
	For a root $\alpha_i$ of $f$, the polynomial $y - \alpha_i$ is clearly nonzero, so $h(\alpha_i, y) = 0$ if and only if $g(\alpha_i) = 0$.
	By \cref{lem:esym as multiplication}, we can compute the polynomial
	\[
		r(y) \coloneqq \prod_{i : g(\alpha_i) \neq 0} h(\alpha_i, y)^{a_i} = \prod_{i : g(\alpha_i) \neq 0} (y - \alpha_i)^{a_i} g(\alpha_i)^{a_i}
	\]
	piecewise in $\AC^0_\F$.
	The leading coefficient of $r(y)$ is given by $\prod_{i : g(\alpha_i) \neq 0} g(\alpha_i)^{a_i}$, which we can compute piecewise in $\AC^0_\F$ using \cref{lem:leading coefficient}.
	By normalizing $r(y)$ to have leading coefficient 1, we obtain 
	\[
		\prod_{i : g(\alpha_i) \neq 0} (y - \alpha_i)^{a_i} = f_{g \neq 0}(y).
	\]
	We then compute $f_{g=0}(x)$ in $\AC^0_\F$ using the identity $f_{g=0}(x) = \frac{f(x)}{f_{g \neq 0}(x)}$ and \cref{lem:exact division}.
\end{proof}

The preceding lemma easily generalizes to the case where we have multiple polynomials $g_1,\ldots,g_m \in \F[x]$ and want to extract from $f$ all factors that correspond to common roots of $g_1,\ldots,g_m$.
As we saw in the proof of \cref{lem:filter}, multiplying $(y - x)$ by $g(x)$ suppressed the common roots of $f$ and $g$.
To handle multiple polynomials $g_1,\ldots,g_m$, we need to find a polynomial in $x$ whose roots are exactly the common roots of the $g_i$.
If we add a fresh variable $z$, then this is easy to do with the polynomial
\[
	g(x,z) \coloneqq g_1(x) + z g_2(x) + \cdots + z^{m-1} g_m(x).
\]
We now extend \cref{lem:filter} to filter out the common roots of multiple polynomials from a given polynomial $f$.
The proof is a straightforward generalization of the proof of \cref{lem:filter} using $g(x,z)$ in place of $g(x)$.

\begin{lemma} \label{lem:multifilter}
	Let $\F$ be a field of characteristic zero or characteristic greater than $d$.
	Let $f, g_1, \ldots, g_m \in \F[x]$ be univariate polynomials of degree at most $d$ given by their coefficients.
	Let 
	\[
		V \coloneqq \set{\alpha \in \F : g_1(\alpha) = \cdots = g_m(\alpha) = 0}
	\]
	be the set of common roots of $g_1,\ldots,g_m$.
	Suppose that $f$ factors as $\prod_{i=1}^n (x - \alpha_i)^{a_i}$.
	Then the coefficients of the polynomials
	\begin{align*}
		f_{\in V}(x) &\coloneqq \prod_{i : \alpha_i \in V} (x - \alpha_i)^{a_i} \\
		f_{\notin V}(x) &\coloneqq \prod_{i : \alpha_i \notin V} (x - \alpha_i)^{a_i}
	\end{align*}
	can be computed piecewise in $\AC^0_\F$.
\end{lemma}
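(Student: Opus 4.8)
The plan is to mimic the proof of \cref{lem:filter} essentially verbatim, replacing the single polynomial $g(x)$ by one polynomial that simultaneously encodes $g_1,\ldots,g_m$. Introduce a fresh variable $z$ and set
\[
	g(x,z) \coloneqq g_1(x) + z\, g_2(x) + \cdots + z^{m-1} g_m(x) \in \F[x,z].
\]
The key point is that for a field element $\alpha$, the univariate polynomial $g(\alpha, z) \in \F[z]$ is the zero polynomial if and only if $g_1(\alpha) = \cdots = g_m(\alpha) = 0$, i.e., exactly when $\alpha \in V$. So $g(x,z)$ plays, over $\F[z]$, the role that $g(x)$ played in \cref{lem:filter}.

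Next I would introduce a second fresh variable $y$ and define $h(x,y,z) \coloneqq (y - x)\, g(x,z)$, viewed as an element of $\F[y,z][x]$. For a root $\alpha_i$ of $f$ the factor $y - \alpha_i$ is a nonzero element of $\F[y,z]$, so (as $\F[y,z]$ is a domain) $h(\alpha_i, y, z)$ is nonzero in $\F[y,z]$ exactly when $\alpha_i \notin V$. Applying the multivariate form of \cref{lem:esym as multiplication} described in the remark following it, with $f$ as the polynomial whose roots we range over, we compute piecewise in $\AC^0$ the polynomial
\[
	r(y,z) \coloneqq \prod_{i \,:\, \alpha_i \notin V} (y - \alpha_i)^{a_i}\, g(\alpha_i, z)^{a_i},
\]
the exponents $a_i$ arising because each distinct root $\alpha_i$ of $f$ occurs with multiplicity $a_i$. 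Crucially, $r(y,z)$ factors as $r(y,z) = f_{\notin V}(y) \cdot c(z)$, where $c(z) \coloneqq \prod_{i : \alpha_i \notin V} g(\alpha_i, z)^{a_i}$ is a nonzero element of $\F[z]$ (a product of nonzero polynomials). Since the $y$-degree of $r$ is at most $d$ and its $z$-degree is at most $(m-1)d$, the total degree of $r$ is polynomial in the input size, so all coefficients of $r(y,z)$ can be recovered in $\AC^0$ via \cref{lem:polynomial interpolation}, possibly passing to a polynomially large extension of $\F$.

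It remains to divide out the spurious factor $c(z)$. Write $r(y,z) = \sum_k r_k(z)\, y^k$; since $f_{\notin V}$ is monic of some degree $K$, every $r_k(z)$ equals $\lambda_k c(z)$, where the scalars $\lambda_k \in \F$ are exactly the coefficients of $f_{\notin V}$ and $\lambda_K = 1$. Having all the $r_k(z)$ in hand as lists of field elements, we recover the $\lambda_k$ by a bounded piecewise computation: select (as in the proof of \cref{lem:esym as multiplication}) the largest $K$ and an exponent $\ell$ such that the coefficient of $y^K z^\ell$ in $r$ is nonzero, and set $\lambda_k$ to be the ratio of the coefficient of $y^k z^\ell$ to that of $y^K z^\ell$ — a division by a single nonzero field element. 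This produces the coefficients of $f_{\notin V}(x)$ in $\AC^0$, and then $f_{\in V}(x) = f(x)/f_{\notin V}(x)$ is computed in $\AC^0$ using \cref{lem:exact division}, since $f_{\notin V} \mid f$. Throughout we use only $\ch(\F) = 0$ or $\ch(\F) > d$, which is what is required by \cref{lem:esym as multiplication,lem:exact division}.

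The main obstacle, beyond the largely notational task of carrying the extra variable $z$ through the argument of \cref{lem:filter}, is this normalization step: in \cref{lem:filter} the leftover factor was a single field element, while here it is the polynomial $c(z)$. The fix is the observation that $r(y,z)$ splits as a product of a polynomial in $y$ and a polynomial in $z$, so $c(z)$ can be stripped off coefficient by coefficient. One should also verify that all intermediate objects stay of polynomial degree — which they do, since $m$ and $d$ are polynomial in the input size — so that each interpolation remains in $\AC^0$.
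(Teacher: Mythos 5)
Your proposal is correct and follows the same key construction as the paper: the auxiliary polynomial $g(x,z) = \sum_{i} z^{i-1} g_i(x)$, the polynomial $h(x,y,z) = (y-x)\,g(x,z)$, an application of the multivariate form of \cref{lem:esym as multiplication} to obtain $r(y,z)$, and a final exact division to get $f_{\in V}$. The one place you diverge is the normalization step: the paper works with $r \in \F[y][z]$, first extracting the leading $z$-coefficient (a polynomial $s(y)$, tracked via the indices $j_i = \max\{j : g_j(\alpha_i)\neq 0\}$) and then normalizing $s$ by its leading $y$-coefficient, whereas you observe directly that $r(y,z) = f_{\notin V}(y)\cdot c(z)$ splits as a product of a $y$-polynomial and a $z$-polynomial, so every $y$-slice $r_k(z)$ is a scalar multiple $\lambda_k c(z)$ and you can recover $\lambda_k$ by dividing the coefficient of $y^k z^\ell$ by that of $y^K z^\ell$ for a suitably chosen $(K,\ell)$. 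Both routes require the same two levels of piecewise branching (to locate the right monomial), but your product-structure observation makes the argument cleaner and dispenses with the $j_i$ bookkeeping; the end result is the same.
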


\begin{proof}
	Let $y$ and $z$ be fresh variables.
	Define $g(x,z) \in \F[x,z]$ as
	\[
		g(x,z) \coloneqq \sum_{i=1}^m z^{i-1} g_i(x)
	\]
	and define $h(x,y,z) \in \F[x,y,z]$ by
	\[
		h(x,y,z) \coloneqq (y - x) g(x,z).
	\]
	For a root $\alpha_i$ of $f$, the polynomial $y - \alpha_i$ is nonzero, so $h(\alpha_i, y, z) = 0$ if and only if $g(\alpha_i,z) = 0$, which occurs exactly when $g_1(\alpha_i) = \cdots = g_m(\alpha_i) = 0$.
	By \cref{lem:esym as multiplication}, we can compute the polynomial
	\[
		r(y,z) \coloneqq \prod_{i : \alpha_i \notin V} h(\alpha_i, y, z)^{a_i} = \prod_{i : \alpha_i \notin V} (y - \alpha_i)^{a_i} g(\alpha_i, z)^{a_i}
	\]
	piecewise in $\AC^0_\F$.

	View $r(y,z)$ as a polynomial in $y$ with coefficients in $\F[z]$.
	The leading coefficient of the term $(y - \alpha_i)^{a_i} g(\alpha_i, z)^{a_i}$ is $g(\alpha_i, z)^{a_i}$.
	Leading coefficients are homomorphic with respect to multiplication, so the leading coefficient of $r(y,z)$ (as a polynomial in $\F[z][y]$) is given by
	\[
		s(z) \coloneqq \prod_{i : \alpha_i \notin V} g(\alpha_i, z)^{a_i}.
	\]
	Because $\deg(r(y,z)) \le dm$, we can compute the leading coefficient of $r(y,z)$ piecewise in $\AC^0_\F$ via \cref{lem:leading coefficient}.
	Dividing $r(y,z)$ by $s(z)$ yields
	\[
		\frac{r(y,z)}{s(z)} = \prod_{i : \alpha_i \notin V} (y - \alpha_i)^{a_i} = f_{\notin V}(y),
	\]
	which we can computed in $\AC^0_\F$ using \cref{lem:exact division}.
	We can then use the identity $f_{\in V}(x) = \frac{f(x)}{f_{\notin V}(x)}$ and \cref{lem:exact division} to compute $f_{\in V}$ in $\AC^0_\F$.
\end{proof}

\subsection{Thresholding} \label{subsec:threshold}

In the previous subsection, we saw how to distinguish the common roots of two polynomials from the roots of a single polynomial.
To compute the GCD, we also need to distinguish between roots of a polynomial of different multiplicities.
As we shall see, this can be done as a straightforward application of \cref{lem:multifilter}.

Let $f, g \in \F[x]$ be univariate polynomials and suppose that $f$ and $g$ factor as
\begin{align*}
	f(x) &= \prod_{i=1}^n (x - \alpha_i)^{a_i} \\
	g(x) &= \prod_{i=1}^n (x - \alpha_i)^{b_i}.
\end{align*}
Here, we have switched notation from previous sections, and now take $\alpha_1,\ldots,\alpha_n \in \F$ to be the union of the roots of $f$ and $g$, as this will be convenient for us here.
Because the $\alpha_i$ are the union of the roots of $f$ and $g$, some of the $a_i$ or $b_i$ may be zero. 

Recall that a point $\alpha \in \F$ is a root of $f$ of multiplicity $r$ if and only if $f$ and its first $r-1$ derivatives vanish at $\alpha$, i.e.,
\[
	f(\alpha) = f'(\alpha) = \cdots = f^{(r-1)}(\alpha) = 0.
\]
This suggests that we can distinguish between factors of $f$ of different multiplicities by invoking \cref{lem:multifilter} with $f, f', \ldots, f^{(r-1)}$ as the filter polynomials.
More generally, we can filter the roots of $f$ by their multiplicity in $g$.

For a parameter $r \in \naturals$, let $f_{g < r}$ be the polynomial defined as 
\[
	f_{g < r}(x) \coloneqq \prod_{i : b_i < r} (x - \alpha_i)^{a_i}.
\]
That is, $f_{g < r}$ selects from $f$ the factors $(x - \alpha_i)$ that occur with multiplicity less than $r$ in $g$, but preserves the multiplicity $a_i$ with which $(x - \alpha_i)$ occurs in $f$.
The next lemma describes an $\AC^0_\F$ algorithm that piecewise computes the coefficients of $f_{g < r}$ given the coefficients of $f$ and $g$ and the value of $r$.

\begin{lemma} \label{lem:threshold}
	Let $\F$ be a field of characteristic zero or characteristic greater than $d$.
	Let $f, g \in \F[x]$ be univariate polynomials of degree at most $d$ given by their coefficients.
	Suppose that $f$ and $g$ factor as $f(x) = \prod_{i=1}^n (x - \alpha_i)^{a_i}$ and $g(x) = \prod_{i=1}^n (x - \alpha_i)^{b_i}$, where $\alpha_i \in \F$ and $a_i, b_i \in \naturals$.
	Let $r \in \naturals$.
	Then the coefficients of the polynomials 
	\begin{align*}
		f_{g \ge r}(x) \coloneqq \prod_{i : b_i \ge r} (x - \alpha_i)^{a_i} \\
		f_{g < r}(x) \coloneqq \prod_{i : b_i < r} (x - \alpha_i)^{a_i}
	\end{align*}
	can be computed piecewise in $\AC^0_\F$.
\end{lemma}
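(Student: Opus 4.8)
The plan is to realize \cref{lem:threshold} as a direct corollary of the multi-polynomial filtering lemma, \cref{lem:multifilter}, once the condition ``$b_i \ge r$'' is re-expressed as the simultaneous vanishing of a short list of polynomials at $\alpha_i$. The key elementary fact is that, over a field of characteristic $0$ or characteristic exceeding $\deg(g)$, a point $\alpha$ is a root of $g$ of multiplicity at least $r$ if and only if
\[
  g(\alpha) = g'(\alpha) = \cdots = g^{(r-1)}(\alpha) = 0.
\]
Indeed, writing $g = (x-\alpha)^s h(x)$ with $h(\alpha) \neq 0$, Leibniz's rule gives $g^{(j)}(\alpha) = 0$ for all $j < s$, while $g^{(s)}(\alpha) = s!\,h(\alpha) \neq 0$ since $s \le \deg(g) < \ch(\F)$ (or $\ch(\F) = 0$). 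As $\deg(g) \le d$, the characteristic hypothesis of the present lemma is exactly what this needs.

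First I would form the polynomials $g, g', \ldots, g^{(r-1)}$; their coefficients are obtained from those of $g$ in $\AC^0$. Keeping all of them would be wasteful and, if $r$ is large, size-prohibitive, so I would retain only the first $\min(r, d+1)$ of these: every derivative of order exceeding $\deg(g)$ vanishes identically, and $g^{(\deg g)}$ is a nonzero constant with empty zero set, so truncating the list changes neither its common zero set nor the correctness of the multiplicity characterization. Then I would set $V$ to be the common zero set of this truncated list and invoke \cref{lem:multifilter} with $f$ and these filter polynomials. By the displayed characterization, $V = \set{\alpha_i : b_i \ge r}$ — a point with $b_i = 0$ is not a root of $g$, hence not in $V$, and a point with $0 < b_i < r$ fails the vanishing condition already at order $b_i \le r-1$ — so \cref{lem:multifilter} returns precisely
\[
  f_{\in V}(x) = \prod_{i : b_i \ge r} (x-\alpha_i)^{a_i} = f_{g \ge r}(x), \qquad f_{\notin V}(x) = f_{g < r}(x),
\]
where factors with $a_i = 0$ contribute the trivial factor $1$ and may be ignored when matching this with the factorization of $f$ over its actual (distinct) roots. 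Composing the $\AC^0$ preprocessing (forming the derivatives) with the piecewise $\AC^0$ circuit of \cref{lem:multifilter} via \cref{lem:compose piecewise circuit} keeps the whole computation in piecewise $\AC^0$.

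I do not expect a genuine obstacle: the statement is essentially a repackaging of \cref{lem:multifilter}. The only points requiring care are (i) the characteristic assumption, which enters precisely to validate the derivative test for multiplicity up to order $\deg(g)$, and (ii) capping the number of filter polynomials at $O(d)$ regardless of how large $r$ is, so that the resulting circuit has polynomial size. The edge case $r > \deg(g)$ (where $f_{g \ge r} = 1$ and $f_{g < r} = f$) and, if $\naturals$ is taken to include $0$, the case $r = 0$ (where the filter list is empty and $V$ is everything) are both handled automatically by the empty/constant behavior of the filter list.
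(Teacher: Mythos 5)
Your proposal is correct and follows essentially the same route as the paper: characterize multiplicity-at-least-$r$ roots of $g$ as common zeros of $g, g', \ldots, g^{(r-1)}$ (valid under the characteristic hypothesis), then invoke \cref{lem:multifilter} with these as filter polynomials. The only cosmetic difference is how the case $r > \deg(g)$ is dispatched — the paper handles it by an explicit up-front case split, while you truncate the derivative list and observe that $g^{(\deg g)}$ being a nonzero constant makes $V$ empty; both are sound and amount to the same circuit.
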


\begin{proof}
	Note that if $r > \deg(g)$, then $f_{g \ge r} = 1$ and $f_{g < r} = f$, so we may assume without loss of generality that $r \le \deg(g) < \ch(\F)$.
	Define $V_{\ge r} \subseteq \F$ to be the roots of $g$ of multiplicity at least $r$.
	Because $r < \ch(\F)$, we can express $V_{\ge r}$ as
	\begin{align*}
		V_{\ge r} &\coloneqq \set{\alpha \in \F : (x - \alpha)^r \mid g(x)} \\
		&= \set{\alpha \in \F : g(\alpha) = g^{(1)}(\alpha) = \cdots = g^{(r-1)}(\alpha) = 0}.
	\end{align*}
	Note that $f_{g \ge r}(x)$ and $f_{g < r}(x)$ satisfy
	\begin{align*}
		f_{g \ge r}(x) &= \prod_{i : \alpha_i \in V_{\ge r}} (x - \alpha_i)^{a_i} \\
		f_{g < r}(x) &= \prod_{i : \alpha_i \notin V_{\ge r}} (x - \alpha_i)^{a_i}.
	\end{align*}
	We can compute the coefficients of $g^{(1)}(x), \ldots, g^{(r-1)}(x)$ in $\AC^0_\F$.
	A direct application of \cref{lem:multifilter} computes the coefficients of $f_{g \ge r}(x)$ and $f_{g < r}(x)$ piecewise in $\AC^0_\F$.
\end{proof}

Suppose we are instead given multiple polynomials $f, g_1, \ldots, g_m \in \F[x]$ that factor as 
\begin{align*}
	f(x) &= \prod_{i=1}^n (x - \alpha_i)^{a_i} \\
	g_1(x) &= \prod_{i=1}^n (x - \alpha_i)^{b_{1,i}} \\
	&\vdots \\
	g_m(x) &= \prod_{i=1}^n (x - \alpha_i)^{b_{m,i}}.
\end{align*}
\cref{lem:threshold} allows us to extract factors from $f$ based on their multiplicity in one of the $g_i$.
Just as \cref{lem:multifilter} generalized \cref{lem:filter} to allow for multiple filter polynomials, we can generalize \cref{lem:threshold} to filter the roots of $f$ according to their multiplicities in multiple polynomials, as the following lemma shows.

\begin{lemma} \label{lem:multithreshold}
	Let $\F$ be a field of characteristic zero or characteristic greater than $d$.
	Let $f, g_1, \ldots, g_m \in \F[x]$ be univariate polynomials of degree at most $d$ given by their coefficients.
	Suppose that $f$ and $g_j$ factor as $f(x) = \prod_{i=1}^n (x - \alpha_i)^{a_i}$ and $g_j(x) = \prod_{i=1}^n (x - \alpha_i)^{b_{j,i}}$, where $\alpha_i \in \F$ and $a_i, b_{j,i} \in \naturals$.
	Let $r_1,\ldots,r_m \in \naturals$.
	Then the coefficients of the polynomials 
	\begin{align*}
		f_{\vec{g} \ge \vec{r}}(x) \coloneqq \prod_{i : b_{1,i} \ge r_1 \land \cdots \land b_{m,i} \ge r_m} (x - \alpha_i)^{a_i} \\
		f_{\vec{g} \not\ge \vec{r}}(x) \coloneqq \prod_{i : b_{1,i} < r_1 \lor \cdots \lor b_{m,i} < r_m} (x - \alpha_i)^{a_i}
	\end{align*}
	can be computed piecewise in $\AC^0_\F$.
\end{lemma}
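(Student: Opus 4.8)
The plan is to reduce to \cref{lem:multifilter} in essentially the same way that \cref{lem:threshold} does, but now using the derivatives of all of $g_1,\ldots,g_m$ simultaneously as filter polynomials. First I would dispose of degenerate thresholds: if $r_j > \deg(g_j)$ for some $j$, then no root $\alpha_i$ of $f$ can satisfy $b_{j,i} \ge r_j$, so $f_{\vec{g} \ge \vec{r}} = 1$ and $f_{\vec{g} \not\ge \vec{r}} = f$; and if $r_j = 0$, the constraint $b_{j,i} \ge r_j$ is vacuous and $g_j$ may simply be discarded. Since the inputs are monic, both $\deg(g_j)$ and the given value $r_j$ are known to the algorithm, so this case split is deterministic and requires no zero-testing. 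After this reduction we may assume $1 \le r_j \le \deg(g_j) \le d$ for every surviving $j$, and in particular $r_j < \ch(\F)$ whenever $\ch(\F) > 0$.

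Next, recall that over a field of characteristic $0$ or characteristic greater than $d$, we have $(x - \alpha)^r \mid g_j(x)$ if and only if $g_j(\alpha) = g_j^{(1)}(\alpha) = \cdots = g_j^{(r-1)}(\alpha) = 0$. Hence for a root $\alpha_i$ of $f$, the conditions $b_{1,i} \ge r_1, \ldots, b_{m,i} \ge r_m$ all hold precisely when $\alpha_i$ is a common zero of the collection
\[
	\mathcal{G} \coloneqq \set{g_j^{(s)}(x) : j \in [m],\ 0 \le s \le r_j - 1}.
\]
This is a family of at most $\sum_{j=1}^m r_j \le dm$ polynomials, each of degree at most $d$, whose coefficients are all computable from the coefficients of $g_1,\ldots,g_m$ in $\AC^0$ (as derivatives are). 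Writing $V \coloneqq \set{\alpha \in \F : h(\alpha) = 0 \text{ for all } h \in \mathcal{G}}$ for their common zero set, the discussion above shows
\[
	f_{\vec{g} \ge \vec{r}}(x) = \prod_{i : \alpha_i \in V} (x - \alpha_i)^{a_i}, \qquad f_{\vec{g} \not\ge \vec{r}}(x) = \prod_{i : \alpha_i \notin V} (x - \alpha_i)^{a_i}.
\]
(If $\mathcal{G}$ is empty then $V = \F$ and $f_{\vec{g}\ge\vec{r}} = f$, which is consistent.)

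Finally, I would invoke \cref{lem:multifilter} with $f$ as the polynomial to be split and the members of $\mathcal{G}$ as the filter polynomials; it outputs exactly $f_{\in V} = f_{\vec{g} \ge \vec{r}}$ and $f_{\notin V} = f_{\vec{g} \not\ge \vec{r}}$, piecewise in $\AC^0$. Composing this with the $\AC^0$ preprocessing that forms $\mathcal{G}$ via \cref{lem:compose piecewise circuit} yields the claimed circuit; the size remains polynomial and the depth constant since $|\mathcal{G}| \le dm$ and each polynomial in $\mathcal{G}$ has degree at most $d$. The only genuinely delicate point — and the reason the first step insists on $r_j \le \deg(g_j) \le d < \ch(\F)$ — is that the derivative test for divisibility by $(x-\alpha)^r$ fails in small positive characteristic; everything else is routine bookkeeping of sizes and depths.
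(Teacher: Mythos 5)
Your proof is correct and takes essentially the same approach as the paper: dispose of the degenerate thresholds, characterize the common high-multiplicity roots as the common zero set of the derivatives $g_j^{(s)}$ for $j \in [m]$, $0 \le s \le r_j - 1$, and then apply \cref{lem:multifilter} with those derivatives as the filter polynomials. The extra remarks about $r_j = 0$ and the empty $\mathcal{G}$ are harmless bookkeeping that the paper leaves implicit.
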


\begin{proof}
	Note that if $r_i > \deg(g_i)$ for some $i \in [m]$, then $f_{\vec{g} \ge \vec{r}} = 1$ and $f_{\vec{g} \not \ge \vec{r}} = f$, so we may assume without loss of generality that $r_i \le \deg(g_i) < \ch(\F)$ for all $i \in [m]$.

	Define $V_{\ge \vec{r}} \subseteq \F$ to be the set of $\alpha \in \F$ that occur with multiplicity at least $r_i$ in $g_i$ for all $i \in [m]$.
	Because $r_i < \ch(\F)$ for all $i \in [m]$, we can write $V_{\ge \vec{r}}$ as
	\begin{align*}
		V_{\ge \vec{r}} &\coloneqq \bigcap_{i=1}^m \set{\alpha \in \F : (x - \alpha)^{r_i} \mid g_i(x)} \\
		&= \bigcap_{i=1}^m \set{\alpha \in \F : g_i(\alpha) = g_i^{(1)}(\alpha) = \cdots = g_i^{(r_i-1)}(\alpha) = 0} \\
		&= \set{\alpha \in \F : g_i^{(j)}(\alpha) = 0 \ \forall i \in [m], j \in \set{0,1,\ldots,r_i-1}}.
	\end{align*}
	Note that $f_{\vec{g} \ge \vec{r}}(x)$ and $f_{\vec{g} \not\ge \vec{r}}(x)$ satisfy
	\begin{align*}
		f_{\vec{g} \ge \vec{r}}(x) &= \prod_{i : \alpha_i \in V_{\ge \vec{r}}} (x - \alpha_i)^{a_i} \\
		f_{\vec{g} \not\ge \vec{r}}(x) &= \prod_{i : \alpha_i \notin V_{\ge \vec{r}}} (x - \alpha_i)^{a_i}.
	\end{align*}
	We can compute the coefficients of $g_i^{(j)}(x)$ for all $i \in [m]$ and $j \in \set{0,1,\ldots,r_i-1}$ in $\AC^0_\F$.
	A direct application of \cref{lem:multifilter} computes the coefficients of $f_{\vec{g} \ge \vec{r}}(x)$ and $f_{\vec{g} \not\ge \vec{r}}(x)$ piecewise in $\AC^0_\F$.
\end{proof}

\begin{remark}
	Although \cref{lem:multithreshold} holds for an arbitrary threshold vector $\vec{r} \in \naturals^m$, our applications only make use of the case where all entries of $\vec{r}$ are equal.
\end{remark}

\subsection{Squarefree Decomposition} \label{subsec:squarefree}

We conclude this section with an application of \cref{lem:threshold} to computing the squarefree decomposition of a univariate polynomial $f \in \F[x]$.
Typically, the squarefree decomposition is computed by reduction to the GCD.
If $f$ factors as $f = \prod_{i=1}^n (x - \alpha_i)^{a_i}$, then it is easy to show that
\[
	\gcd(f, f', \ldots, f^{(r)}) = \prod_{i=1}^n (x - \alpha_i)^{\max(a_i - r, 0)}.
\]
Denoting the squarefree decomposition of $f$ by $(f_1,\ldots,f_m)$, when $r < \deg(f)$, one can write $f_r$ as
\[
	f_r = \frac{\gcd(f, f', \ldots, f^{(r-1)}) \cdot \gcd(f, f', \ldots, f^{(r+1)})}{\gcd(f, f', \ldots, f^{(r)})^2},
\]
and otherwise $f_{\deg(f)} = \gcd(f, f', \ldots, f^{(\deg(f)-1)})$.
This reduction, combined with an $\NC^2_\F$ algorithm for the GCD of many polynomials, leads to the $\NC^2_\F$ algorithm of \textcite{vonzurGathen84} for the squarefree decomposition.
In his survey, \textcite{vonzurGathen86survey} asked if there is an $\NC^1_\F$ algorithm to compute the squarefree decomposition.
As an application of \cref{lem:threshold}, we will show that the squarefree decomposition can in fact be computed piecewise in $\AC^0_\F$.

We note that while the squarefree decomposition is usually obtained by reducing to the GCD, this is not the route we take.
Our techniques seem to naturally proceed in the reverse direction, instead computing the GCD by reducing to the squarefree decomposition.

\begin{lemma} \label{lem:squarefree decomp}
	Let $\F$ be a field of characteristic zero or characteristic greater than $d$.
	Let $f \in \F[x]$ be a univariate polynomial of degree $d$ given by its coefficients.
	Then the squarefree decomposition of $f$ can be computed piecewise in $\AC^0_\F$.
\end{lemma}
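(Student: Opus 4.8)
The plan is to derive the squarefree decomposition directly from the thresholding primitive of \cref{lem:threshold}, rather than through the GCD. Write $f(x) = \prod_{i=1}^n (x-\alpha_i)^{a_i}$ with the $\alpha_i \in \overline{\F}$ distinct, so that by definition the squarefree decomposition $(f_1,\ldots,f_m)$ satisfies $f_r = \prod_{i \,:\, a_i = r} (x - \alpha_i)$ and $m = \max_i a_i \le d$. The key observation is that each $f_r$ can be isolated by two threshold operations, followed by an exact division and an $r$-th root extraction.

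First, for each $r \in \{1,2,\ldots,d+1\}$, apply \cref{lem:threshold} with $g \coloneqq f$ to compute, piecewise in $\AC^0$, the monic polynomial
\[
	f_{f \ge r}(x) = \prod_{i \,:\, a_i \ge r} (x - \alpha_i)^{a_i},
\]
noting that $f_{f \ge d+1} = 1$ since every $a_i \le d$. Next, for each $r \in \{1,\ldots,d\}$, observe the identity
\[
	\frac{f_{f \ge r}(x)}{f_{f \ge r+1}(x)} = \prod_{i \,:\, a_i = r} (x - \alpha_i)^{r};
\]
since $f_{f \ge r+1}$ divides $f_{f \ge r}$, this quotient can be computed in $\AC^0$ by \cref{lem:exact division}, and we call it $h_r(x)$. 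By construction $h_r = f_r^{\,r}$ with $f_r$ monic, so \cref{lem:perfect power} lets us compute $f_r = h_r^{1/r}$ in $\AC^0$. Outputting $(f_1,\ldots,f_d)$ then gives the squarefree decomposition, with $f_r = 1$ for $r > m$; to produce exactly $(f_1,\ldots,f_m)$ one takes $m$ to be the largest $r$ with $f_r \ne 1$, selected by a final piecewise branch on the computed coefficients of the $f_r$.

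For the complexity bound, computing each $f_r$ is a composition of a constant number of piecewise $\AC^0$ circuits — two invocations of \cref{lem:threshold}, one of \cref{lem:exact division}, and one of \cref{lem:perfect power}, each applied to polynomials of degree at most $d$ — so \cref{lem:compose piecewise circuit} shows each $f_r$ is computed piecewise in $\AC^0$. The circuits for distinct values of $r$ are independent and evaluated in parallel, so the whole construction has polynomial size and constant depth. The hypothesis $\ch(\F) = 0$ or $\ch(\F) > d$ is exactly what is needed to apply \cref{lem:threshold} and \cref{lem:perfect power} on degree-$\le d$ inputs.

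I do not expect a serious obstacle here: the lemma is essentially an assembly of tools already built, which is why the authors emphasize that their approach computes the GCD \emph{from} the squarefree decomposition rather than the reverse. The only points needing (brief) care are verifying the two displayed factorization identities — that dividing consecutive thresholded polynomials cleanly extracts precisely the factors of multiplicity $r$, raised to the power $r$ — and observing that the pipeline of primitives stays within $\AC^0$, which is immediate from \cref{lem:compose piecewise circuit} because the number of composition steps is a fixed constant, independent of $d$.
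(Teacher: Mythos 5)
Your proof is correct and takes essentially the same approach as the paper: both isolate the multiplicity-$r$ factors via a pipeline of \cref{lem:threshold}, \cref{lem:exact division}, and \cref{lem:perfect power}. The only difference is cosmetic — you threshold from above with $f_{f \ge r}/f_{f \ge r+1}$, while the paper thresholds from below with $f_{\le r}/f_{\le r-1}$; the two are entirely symmetric and yield the identical quotient $\prod_{i : a_i = r}(x - \alpha_i)^r$.
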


\begin{proof}
	Suppose that $f$ factors as
	\[
		f(x) = \prod_{i=1}^n (x - \alpha_i)^{a_i}.
	\]
	Let $d \coloneqq \deg(f)$.
	For all $r \in [d]$, we can compute 
	\[
		f_{\le r}(x) \coloneqq \prod_{i : a_i \le r}(x - \alpha_i)^{a_i}
	\]
	in $\AC^0_\F$ piecewise by applying \cref{lem:threshold} to $f$, using $f$ itself as the threshold polynomial with threshold $r + 1$.
	We then compute, for all $r \in [d]$, the polynomials
	\[
		f_{= r}(x) \coloneqq \prod_{i : a_i = r} (x - \alpha_i)^{a_i} = \frac{f_{\le r}(x)}{f_{\le r - 1}(x)}
	\]
	in $\AC^0_\F$ using \cref{lem:exact division}.
	Observe that if $(f_1,\ldots,f_d)$ is the squarefree decomposition of $f$, then the $f_i$ satisfy $f_r(x)^r = f_{= r}(x)$.
	We can obtain $f_r(x)$ in $\AC^0_\F$ by applying \cref{lem:perfect power} to $f_{=r}(x)$.
\end{proof}

As a corollary of \cref{lem:squarefree decomp}, we can also compute the squarefree part of a univariate polynomial $f \in \F[x]$ piecewise in $\AC^0_\F$.

\begin{corollary} \label{cor:squarefree part}
	Let $\F$ be a field of characteristic zero or characteristic greater than $d$.
	Let $f \in \F[x]$ be a univariate polynomial of degree $d$ given by its coefficients.
	Then the squarefree part of $f$ can be computed piecewise in $\AC^0_\F$.
\end{corollary}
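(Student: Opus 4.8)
The plan is to read this off directly from \cref{lem:squarefree decomp}. First recall the relationship between the two definitions: if $f = \prod_{i=1}^n (x - \alpha_i)^{a_i}$ with the $\alpha_i$ distinct, then the squarefree part of $f$ is $\prod_{i=1}^n (x - \alpha_i)$, and if $(f_1,\ldots,f_m)$ is the squarefree decomposition of $f$, then each linear factor $(x-\alpha_i)$ occurs, with multiplicity exactly one, in $f_{a_i}$ and in no other $f_r$. Hence the squarefree part of $f$ is exactly $\prod_{r=1}^m f_r$, and since $f_r = 1$ for $r > m$ we may just as well write this as $\prod_{r=1}^d f_r$, where $d = \deg(f)$.

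So the first step is to invoke \cref{lem:squarefree decomp} to compute the squarefree decomposition $(f_1,\ldots,f_d)$ piecewise in $\AC^0$, padding with trailing $1$'s as needed. The second step is to form the product $\prod_{r=1}^d f_r(x)$. This is iterated multiplication of $d$ polynomials, each of degree at most $d$, which stays in $\AC^0$ by a standard evaluation--interpolation argument: the product has degree at most $d^2$, so one evaluates each $f_r$ at $d^2 + 1$ fixed points (a depth-$2$, polynomial-size subcircuit), multiplies the evaluations pointwise with a single unbounded-fan-in product gate at each point, and recovers the coefficients of the product via \cref{lem:polynomial interpolation}, all in constant depth and polynomial size. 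Finally, one composes the piecewise circuit of \cref{lem:squarefree decomp} with this (non-piecewise) multiplication circuit using \cref{lem:compose piecewise circuit}, which yields a piecewise $\AC^0$ circuit outputting the squarefree part of $f$.

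There is essentially no obstacle here. The only two points worth a sentence are the identification of $\prod_{r} f_r$ with the squarefree part, which is just a comparison of the irreducible-factorization and squarefree-decomposition definitions, and the observation that iterated polynomial multiplication remains in $\AC^0$ when the number of factors is polynomial in the input size; neither is difficult.
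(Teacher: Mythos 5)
Your proposal is correct and follows the same route as the paper: invoke \cref{lem:squarefree decomp} to get the squarefree decomposition piecewise in $\AC^0$, identify the squarefree part with the product $\prod_r f_r$, and compute that product by evaluation--interpolation via \cref{lem:polynomial interpolation}. You spell out the iterated-multiplication step and the composition via \cref{lem:compose piecewise circuit} a bit more explicitly than the paper does, but the argument is the same.
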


\begin{proof}
	By \cref{lem:squarefree decomp}, the squarefree decomposition $(f_1,\ldots,f_m)$ of $f$ can be computed piecewise in $\AC^0_\F$.
	The squarefree part of $f$ is given by $\prod_{i=1}^m f_i$.
	We can compute the coefficients of $\prod_{i=1}^m f_i$ in $\AC^0_\F$ by \cref{lem:polynomial interpolation}.
\end{proof}

\section{Greatest Common Divisor and Least Common Multiple} \label{sec:gcd and lcm}

In this section, we describe piecewise $\AC^0_\F$ algorithms to compute the greatest common divisor and least common multiple of $m$ univariate polynomials $f_1,\ldots,f_m \in \F[x]$.
We also describe a piecewise $\AC^0_\F$ algorithm to compute the B\'{e}zout coefficients of a pair of polynomials $f_1, f_2 \in \F[x]$.
Our algorithms for the greatest common divisor and least common multiple turn out to be straightforward applications of \cref{lem:multithreshold} and \cref{cor:squarefree part}.
To compute the B\'{e}zout coefficients, we reduce to the case where $f_1$ and $f_2$ are coprime by dividing out $\gcd(f_1, f_2)$ using \cref{lem:exact division}, at which point we can apply \cref{cor:bezout coprime}.

\subsection{Two Polynomials}

In this subsection, we compute the greatest common divisor, least common multiple, and B\'{e}zout coefficients of a pair of polynomials, all in (piecewise) $\AC^0_\F$.
We start with the greatest common divisor.

\begin{theorem} \label{thm:gcd}
	Let $\F$ be a field of characteristic zero or characteristic greater than $2d$.
	Let $f, g \in \F[x]$ be univariate polynomials of degree at most $d$ given by their coefficients.
	Then their greatest common divisor $\gcd(f,g)$ can be computed piecewise in $\AC^0_\F$.
\end{theorem}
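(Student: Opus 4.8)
The plan is to exhibit $\gcd(f,g)$ as a product of a polynomial number of ``threshold slices,'' each of which we already know how to compute. Write $f = \prod_i (x-\gamma_i)^{a_i}$ and $g = \prod_i (x-\gamma_i)^{b_i}$, where the $\gamma_i$ run over the union of the roots of $f$ and $g$ (so some $a_i$ or $b_i$ may be $0$), and recall that $\gcd(f,g) = \prod_i (x-\gamma_i)^{\min(a_i,b_i)}$. The key elementary observation is that $\min(a_i,b_i) = \#\set{r \ge 1 : a_i \ge r \text{ and } b_i \ge r}$, which lets us write
\[
	\gcd(f,g) = \prod_{r=1}^{d} h_r(x), \qquad h_r(x) \coloneqq \prod_{i : a_i \ge r \text{ and } b_i \ge r} (x - \gamma_i),
\]
where it suffices to let $r$ range over $[d]$ since $\min(a_i,b_i) \le \deg f \le d$. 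So it is enough to produce each squarefree polynomial $h_r$ in (piecewise) $\AC^0$ and then form the product.

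First I would compute $F \coloneqq f \cdot g = \prod_i (x-\gamma_i)^{a_i+b_i}$ in $\AC^0$; its roots are exactly the $\gamma_i$ and $\deg F \le 2d$, which is precisely why the hypothesis requires $\ch(\F) > 2d$, as every sub-lemma invoked below is applied to polynomials of degree at most $2d$. Next, for each $r \in [d]$ I would apply \cref{lem:multithreshold} with base polynomial $F$, filter polynomials $g_1 = f$ and $g_2 = g$, and threshold vector $\vec r = (r,r)$, obtaining piecewise in $\AC^0$ the polynomial $F_{\vec g \ge \vec r}(x) = \prod_{i : a_i \ge r,\, b_i \ge r} (x - \gamma_i)^{a_i+b_i}$. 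Taking its squarefree part via \cref{cor:squarefree part} (again piecewise in $\AC^0$) yields exactly $h_r(x)$. Finally, $\gcd(f,g) = \prod_{r=1}^d h_r(x)$ has degree at most $d$, so I would recover its coefficients from those of the $h_r$ in $\AC^0$ using \cref{lem:polynomial interpolation} (evaluate each $h_r$ at enough points, multiply the evaluations with an unbounded-fan-in product gate, and interpolate), a non-piecewise $\AC^0$ step layered on top of the previous ones. Monicity is automatic throughout, matching the convention that $\gcd$ is monic.

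The only delicate point is the bookkeeping: assembling the $\AC^0$ pre-processing ($F = fg$), the $d$ parallel piecewise computations of the $h_r$, and the final $\AC^0$ product into one piecewise $\AC^0$ circuit. This is handled by repeated use of \cref{lem:compose piecewise circuit}; since the dependency structure is a composition of a constant number of layers, the depth remains $O(1)$ and the size remains polynomial. I do not expect any genuine mathematical obstacle here — the content is entirely the $\min$-as-a-count identity plus \cref{lem:multithreshold,cor:squarefree part}. (One can compute $\lcm(f,g)$ in the same way using $\max(a_i,b_i) = \#\set{r \ge 1 : a_i \ge r \text{ or } b_i \ge r}$, and then $\gcd$ also follows from the identity $\gcd(f,g)\cdot\lcm(f,g) = fg$; but the direct route above is no harder.)
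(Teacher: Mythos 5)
Your proposal is correct and takes essentially the same route as the paper: write $\min(a_i,b_i)$ as $\#\{r : a_i \ge r \text{ and } b_i \ge r\}$, so that $\gcd(f,g)$ is a product of $d$ threshold slices, each extractable via \cref{lem:multithreshold}. The only difference is the order of two commuting steps: the paper first forms the squarefree part $s$ of $fg$ via \cref{cor:squarefree part} (a single application) and then applies \cref{lem:multithreshold} to $s$ for each $r$, so the slices come out squarefree immediately; you instead apply \cref{lem:multithreshold} to $F = fg$ for each $r$ and then take the squarefree part of each slice ($d$ applications of \cref{cor:squarefree part}). Both are valid and give the same polynomials, with the same $\AC^0$ bounds up to polynomial factors; the paper's ordering is marginally more economical but the mathematics is identical.
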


\begin{proof}
	Suppose that $f$ and $g$ factor as 
	\begin{align*}
		f(x) &= \prod_{i=1}^n (x - \alpha_i)^{a_i} \\
		g(x) &= \prod_{i=1}^n (x - \alpha_i)^{b_i},
	\end{align*}
	where $\alpha_1,\ldots,\alpha_n \in \F$ and $a_i, b_i \in \naturals$.
	Note that some of the $a_i$ and $b_i$ may be zero.
	Let $s(x)$ be the squarefree part of $f(x) g(x)$, i.e.,
	\[
		s(x) \coloneqq \prod_{i=1}^n (x - \alpha_i).
	\]
	We may compute $s(x)$ piecewise in $\AC^0_\F$ using \cref{cor:squarefree part}.

	For a parameter $r \in \naturals$, let
	\[
		s_{\ge r}(x) \coloneqq \prod_{i : a_i \ge r \land b_i \ge r} (x - \alpha_i).
	\]
	We can compute $s_{\ge r}(x)$ piecewise in $\AC^0_\F$ for all $1 \le r \le \min(\deg(f), \deg(g))$ by applying \cref{lem:multithreshold} to $s(x)$, $f(x)$, and $g(x)$.
	Observe that
	\begin{align*}
		\gcd(f(x), g(x)) &= \prod_{i=1}^n (x - \alpha_i)^{\min(a_i, b_i)} \\
		&= \prod_{r=1}^{\min(\deg(f), \deg(g))} \prod_{i : \min(a_i, b_i) \ge r} (x - \alpha_i) \\
		&= \prod_{r=1}^{\min(\deg(f), \deg(g))} s_{\ge r}(x),
	\end{align*}
	so we can compute $\gcd(f,g)$ piecewise in $\AC^0_\F$.
	We can then interpolate the coefficients of $\gcd(f,g)$ in $\AC^0_\F$ using \cref{lem:polynomial interpolation}.
\end{proof}

As an easy corollary of \cref{thm:gcd}, we obtain an $\AC^0_\F$ algorithm to compute the least common multiple of two polynomials.

\begin{corollary} \label{cor:lcm}
	Let $\F$ be a field of characteristic zero or characteristic greater than $2d$.
	Let $f, g \in \F[x]$ be univariate polynomials of degree at most $d$ given by their coefficients.
	Then $\lcm(f,g)$ can be computed in piecewise $\AC^0_\F$.
\end{corollary}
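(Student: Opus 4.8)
The plan is to reduce to the already-established $\AC^0$ algorithm for the gcd via the classical identity $\gcd(f,g)\cdot\lcm(f,g) = f\cdot g$, which holds exactly for monic $f,g$ and rearranges to $\lcm(f,g) = (f\cdot g)/\gcd(f,g)$. First I would form the product $f(x)g(x)$, a polynomial of degree at most $2d$, which is computable in $\AC^0$ by the basic lemma on polynomial multiplication. Next I would invoke \cref{thm:gcd} to compute $\gcd(f,g)$ piecewise in $\AC^0$; this is exactly why the hypothesis asks for $\ch(\F) > 2d$ rather than $\ch(\F) > d$. Since $\gcd(f,g)$ divides $fg$ and $fg$ has degree at most $2d < \ch(\F)$, I can then apply \cref{lem:exact division} to divide $fg$ by $\gcd(f,g)$, obtaining the coefficients of $\lcm(f,g)$. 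Chaining the piecewise circuit for the gcd with the circuit for exact division using \cref{lem:compose piecewise circuit} yields a single piecewise $\AC^0$ circuit, as claimed.

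Alternatively --- and this is closer in spirit to the proof of \cref{thm:gcd} --- one can avoid the gcd entirely. Writing $f = \prod_{i=1}^n (x-\alpha_i)^{a_i}$ and $g = \prod_{i=1}^n (x-\alpha_i)^{b_i}$ over the union of their roots, we have $\lcm(f,g) = \prod_i (x-\alpha_i)^{\max(a_i,b_i)}$; by inclusion--exclusion, $\max(a_i,b_i)$ counts the $r \ge 1$ with $a_i \ge r$ or $b_i \ge r$, which is $a_i + b_i - \min(a_i,b_i)$. Taking the squarefree part $s$ of $fg$ via \cref{cor:squarefree part} and applying \cref{lem:multithreshold} three times to $s$ (against $f$, against $g$, and against both simultaneously), one obtains $\lcm(f,g) = \prod_{r=1}^{d} \bigl(s_{f\ge r}\cdot s_{g\ge r}\,/\,s_{f\ge r,\, g\ge r}\bigr)$, a product of $O(d)$ factors each computable piecewise in $\AC^0$, followed by a final coefficient extraction via \cref{lem:polynomial interpolation}. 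I would present the first route, since it is shorter.

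I do not expect any real obstacle here: the substantive work --- manipulating root multiplicities without access to the roots --- has already been done in \cref{sec:operations on roots} and \cref{thm:gcd}. The only points requiring minor care are (i) confirming the characteristic bound is large enough for the degree-$2d$ exact division step, which the hypothesis $\ch(\F) > 2d$ guarantees, and (ii) the routine bookkeeping of composing piecewise circuits, handled by \cref{lem:compose piecewise circuit}.
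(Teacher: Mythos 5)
Your first route is exactly the paper's proof: compute $\gcd(f,g)$ via \cref{thm:gcd}, compute $fg$ directly, and divide using \cref{lem:exact division} through the identity $\lcm(f,g) = fg/\gcd(f,g)$. Since you would present that route, your argument matches the paper's; the alternative you sketch via thresholding is a valid but unnecessary detour.
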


\begin{proof}
	We can compute $\gcd(f(x), g(x))$ piecewise in $\AC^0_\F$ using \cref{thm:gcd}.
	The coefficients of the product $f(x) g(x)$ can be computed in $\AC^0_\F$ using \cref{lem:polynomial interpolation}.
	Using the identity
	\[
		\lcm(f(x), g(x)) = \frac{f(x) g(x)}{\gcd(f(x), g(x))},
	\]
	we can compute $\lcm(f(x), g(x))$ piecewise in $\AC^0_\F$ using \cref{lem:exact division}.
\end{proof}

By combining \cref{thm:gcd} with \cref{cor:bezout coprime}, we obtain an $\AC^0_\F$ algorithm to compute the B\'{e}zout coefficients of any pair of polynomials.

\begin{corollary} \label{cor:bezout general}
	Let $\F$ be a field of characteristic zero or characteristic greater than $2d$.
	Let $f, g \in \F[x]$ be univariate polynomials of degree at most $d$ given by their coefficients.
	Then the B\'{e}zout coefficients of $f$ and $g$ can be computed piecewise in $\AC^0_\F$.
\end{corollary}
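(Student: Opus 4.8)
The plan is to reduce to the coprime case, which is already handled by \cref{cor:bezout coprime}. First I would compute $h \coloneqq \gcd(f,g)$ piecewise in $\AC^0$ using \cref{thm:gcd}; note this is exactly the point where the hypothesis $\ch(\F) = 0$ or $\ch(\F) > 2d$ is needed, and the same bound subsumes the requirements of the other tools below. Next, since $h \mid f$ and $h \mid g$, I would use the exact division algorithm of \cref{lem:exact division} to compute $\tilde{f} \coloneqq f/h$ and $\tilde{g} \coloneqq g/h$ in $\AC^0$; these are monic of degree at most $d$ and satisfy $\gcd(\tilde{f}, \tilde{g}) = 1$. Applying \cref{cor:bezout coprime} to the coprime pair $\tilde{f}, \tilde{g}$ yields in $\AC^0$ polynomials $a, b \in \F[x]$ with $\deg(a) < \deg(\tilde{g})$, $\deg(b) < \deg(\tilde{f})$, and $a \tilde{f} + b \tilde{g} = 1$.

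The key verification is that $(a,b)$ produced this way is in fact the Bézout pair of $f$ and $g$ in the sense of \cref{def:bezout coefficients}. Multiplying $a\tilde{f} + b\tilde{g} = 1$ through by $h$ gives $af + bg = h = \gcd(f,g)$, which is condition (3). For the degree bounds, write $d_0 \coloneqq \deg(\gcd(f,g)) = \deg(h)$; then $\deg(\tilde{g}) = \deg(g) - d_0$ and $\deg(\tilde{f}) = \deg(f) - d_0$, so $\deg(a) < \deg(g) - d_0$ and $\deg(b) < \deg(f) - d_0$, which are exactly conditions (1) and (2). By the uniqueness asserted in \cref{def:bezout coefficients}, $(a,b)$ must coincide with the Bézout coefficients of $f$ and $g$. (One can also invoke the remark after \cref{def:bezout coefficients} that Bézout coefficients are unchanged under dividing $f$ and $g$ by a common factor, which makes this identification immediate.)

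Finally I would assemble the complexity bound. Each of the three stages---\cref{thm:gcd}, \cref{lem:exact division} (applied twice, in parallel, to $(f,h)$ and $(g,h)$), and \cref{cor:bezout coprime}---is a piecewise $\AC^0$ computation, and the outputs of one stage feed into the next; composing them via \cref{lem:compose piecewise circuit} a constant number of times keeps the size polynomial and the depth constant, so the whole procedure is piecewise $\AC^0$.

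I do not expect a genuine obstacle here: the statement is essentially a corollary obtained by chaining \cref{thm:gcd}, \cref{lem:exact division}, and \cref{cor:bezout coprime}. The only point requiring care is the bookkeeping of degree bounds to match \cref{def:bezout coefficients} exactly (in particular handling the unknown value of $d_0 = \deg\gcd(f,g)$, which the piecewise model accommodates automatically), together with confirming that $\ch(\F) > 2d$ is enough for every subroutine invoked.
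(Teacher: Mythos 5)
Your proposal matches the paper's proof exactly: divide out $\gcd(f,g)$ via \cref{thm:gcd} and \cref{lem:exact division}, apply \cref{cor:bezout coprime} to the resulting coprime pair, and observe that the B\'ezout coefficients are unchanged (a fact the paper records right after \cref{def:bezout coefficients}). Your extra degree bookkeeping and the explicit appeal to \cref{lem:compose piecewise circuit} are just spelled-out details the paper leaves implicit.
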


\begin{proof}
	Using \cref{thm:gcd} and \cref{lem:exact division}, we can compute $\hat{f}(x) \coloneqq \frac{f(x)}{\gcd(f(x), g(x))}$ and $\hat{g}(x) \coloneqq \frac{g(x)}{\gcd(f(x), g(x))}$ piecewise in $\AC^0_\F$.
	By construction, we have $\gcd(\hat{f}(x), \hat{g}(x)) = 1$.
	Observe that the B\'{e}zout coefficients of $\hat{f}(x)$ and $\hat{g}(x)$ are equal to those of $f(x)$ and $g(x)$.
	We may compute the B\'{e}zout coefficients of $\hat{f}(x)$ and $\hat{g}(x)$ in $\AC^0_\F$ using \cref{cor:bezout coprime}.
\end{proof}

\subsection{Multiple Polynomials}

We now extend the results of the previous subsection to compute greatest common divisors and least common multiples of many polynomials.
The proof of \cref{thm:gcd} easily generalizes to compute the GCD of many polynomials.

\begin{theorem} \label{thm:multigcd}
	Let $\F$ be a field of characteristic zero or characteristic greater than $md$.
	Let $f_1, \ldots, f_m \in \F[x]$ be univariate polynomials of degree at most $d$ given by their coefficients.
	Then their greatest common divisor $\gcd(f_1,\ldots,f_m)$ can be computed piecewise in $\AC^0_\F$.
\end{theorem}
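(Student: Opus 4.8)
The plan is to mimic the proof of \cref{thm:gcd} essentially verbatim, replacing the pair $(f,g)$ by the tuple $(f_1,\ldots,f_m)$ and using the ``multi'' versions of our filtering and thresholding tools. First I would write each $f_j$ in the common factorization
\[
	f_j(x) = \prod_{i=1}^n (x - \alpha_i)^{a_{j,i}},
\]
where $\alpha_1,\ldots,\alpha_n \in \F$ is the union of the roots of all the $f_j$ (so some exponents $a_{j,i}$ may be zero), and observe that
\[
	\gcd(f_1,\ldots,f_m) = \prod_{i=1}^n (x - \alpha_i)^{\min_j a_{j,i}}.
\]
Next I would let $s(x) \coloneqq \prod_{i=1}^n (x - \alpha_i)$ be the squarefree part of the product $f_1 \cdots f_m$, which can be computed piecewise in $\AC^0$ by \cref{cor:squarefree part} (after forming the product $\prod_j f_j$ in $\AC^0$ via \cref{lem:polynomial interpolation}). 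Here the requirement on the characteristic appears: $\deg(f_1 \cdots f_m) \le md$, so we need $\ch(\F) = 0$ or $\ch(\F) > md$ for \cref{cor:squarefree part} to apply, which matches the hypothesis of the theorem.

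The heart of the argument is the telescoping-by-threshold identity. For a parameter $r \in \naturals$, set
\[
	s_{\ge r}(x) \coloneqq \prod_{i : a_{1,i} \ge r \,\wedge\, \cdots \,\wedge\, a_{m,i} \ge r} (x - \alpha_i),
\]
which is precisely the polynomial $s_{\vec{f} \ge \vec{r}}$ produced by \cref{lem:multithreshold} applied to $s(x)$ and $f_1,\ldots,f_m$ with the constant threshold vector $\vec{r} = (r,\ldots,r)$ (this is exactly the special case noted in the remark following \cref{lem:multithreshold}). Thus each $s_{\ge r}(x)$ can be computed piecewise in $\AC^0$ for all $1 \le r \le \min_j \deg(f_j)$. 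Since $\min(a_{1,i},\ldots,a_{m,i}) \ge r$ iff $a_{j,i} \ge r$ for every $j$, the factor $(x-\alpha_i)$ contributes to $s_{\ge r}$ for exactly $\min_j a_{j,i}$ many values of $r$, so
\[
	\gcd(f_1,\ldots,f_m) = \prod_{r=1}^{\min_j \deg(f_j)} \prod_{i : \min_j a_{j,i} \ge r} (x - \alpha_i) = \prod_{r=1}^{\min_j \deg(f_j)} s_{\ge r}(x).
\]
This is a product of at most $d$ polynomials each computed piecewise in $\AC^0$, so by composing the piecewise circuits (\cref{lem:compose piecewise circuit}) and then interpolating the resulting coefficients with \cref{lem:polynomial interpolation}, we obtain $\gcd(f_1,\ldots,f_m)$ piecewise in $\AC^0$.

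The only real subtlety — and the step I expect to be mildly delicate rather than genuinely hard — is bookkeeping the piecewise composition: \cref{lem:multithreshold} is itself a piecewise $\AC^0$ statement (it branches on the degrees of the intermediate polynomials), and we invoke it $O(d)$ times, then multiply the outputs, so I would appeal to \cref{lem:compose piecewise circuit} (and the obvious fact that a product of a constant-or-polynomial number of piecewise-$\AC^0$-computable polynomials is again piecewise $\AC^0$ via interpolation) to keep the total size polynomial and the depth constant. I would also note, as in \cref{thm:gcd}, the harmless edge case: if $r$ exceeds $\min_j \deg(f_j)$ the corresponding factor is trivial, so restricting the product to $1 \le r \le \min_j \deg(f_j)$ loses nothing. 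No new ideas beyond those in \cref{thm:gcd} are needed; this is purely a matter of substituting the $m$-ary tools for the binary ones and tracking the characteristic bound $md$.
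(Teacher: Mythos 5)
Your proposal matches the paper's proof of \cref{thm:multigcd} essentially verbatim: compute the squarefree part $s$ of $\prod_j f_j$ via \cref{cor:squarefree part}, apply \cref{lem:multithreshold} with uniform threshold vectors to get each $s_{\ge r}$, and multiply them together via the telescoping identity $\gcd(f_1,\ldots,f_m) = \prod_r s_{\ge r}$ before interpolating. Your additional remarks on the characteristic bound $md$ and the piecewise-composition bookkeeping are correct and consistent with the paper's treatment.
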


\begin{proof}
	Suppose that $f_1, \ldots, f_m$ factor as
	\begin{align*}
		f_1(x) &= \prod_{i=1}^n (x - \alpha_i)^{a_{1,i}} \\
		&\vdots \\
		f_m(x) &= \prod_{i=1}^n (x - \alpha_i)^{a_{m,i}},
	\end{align*}
	where $\alpha_1,\ldots,\alpha_n \in \F$ and $a_{j,i} \in \naturals$.
	Note that some of the $a_{j,i}$ may be zero.
	Let $s(x)$ be the squarefree part of $f_1(x) \cdots f_m(x)$, i.e., 
	\[
		s(x) \coloneqq \prod_{i=1}^n (x - \alpha_i).
	\]
	We may compute $s(x)$ piecewise in $\AC^0_\F$ using \cref{cor:squarefree part}.

	For a parameter $r \in \naturals$, let
	\[
		s_{\ge r}(x) \coloneqq \prod_{i : a_{1,i} \ge r \land \cdots \land a_{m,i} \ge r} (x - \alpha_i).
	\]
	Let $\delta \coloneqq \min_{i \in [m]} \deg(f_i)$.
	We can compute the coefficients of $s_{\ge r}(x)$ in $\AC^0_\F$ for all $1 \le r \le \delta$ by applying \cref{lem:multithreshold} to $s(x)$ and $f_1(x), \ldots, f_m(x)$.

	Observe that
	\begin{align*}
		\gcd(f_1(x), \ldots, f_m(x)) &= \prod_{i=1}^n (x - \alpha_i)^{\min(a_{1,i}, \ldots, a_{m,i})} \\
		&= \prod_{r=1}^{\delta} \prod_{i : \min(a_{1,i}, \ldots, a_{m,i}) \ge r} (x - \alpha_i) \\
		&= \prod_{r=1}^{\delta} s_{\ge r}(x),
	\end{align*}
	so we can compute $\gcd(f_1(x), \ldots, f_m(x))$ piecewise in $\AC^0_\F$.
	We can then interpolate the coefficients of $\gcd(f_1(x), \ldots, f_m(x))$ in $\AC^0_\F$ using \cref{lem:polynomial interpolation}.
\end{proof}

Using \cref{thm:multigcd}, we can also compute the least common multiple of multiple polynomials $f_1(x), \ldots, f_m(x)$ in $\AC^0_\F$.
Although the identity $f(x) g(x) = \gcd(f(x), g(x)) \lcm(f(x), g(x))$ no longer holds for $m \ge 3$ polynomials, a suitable analogue of this identity will allow us to reduce the task of computing $\lcm(f_1,\ldots,f_m)$ to computing $\gcd(f_1,\ldots,f_m)$.
The following elementary lemma relates the GCD and LCM of $m \ge 3$ polynomials.

\begin{lemma} \label{lem:multilcm to multigcd}
	Let $\F$ be any field.
	Let $f_1, \ldots, f_m \in \F[x]$ be univariate polynomials.
	Let $p \in \F[x]$ be a polynomial such that $f_i \mid p$ for all $i \in [m]$.
	Then
	\[
		\lcm(f_1,\ldots,f_m) = \frac{p}{\gcd \del{\frac{p}{f_1}, \ldots, \frac{p}{f_m}}}.
	\]
\end{lemma}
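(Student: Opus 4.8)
The plan is to pass to the algebraic closure $\overline{\F}$ and verify the identity one linear factor at a time. For a field element $\gamma \in \overline{\F}$ and a nonzero polynomial $q \in \overline{\F}[x]$, let $v_\gamma(q)$ denote the multiplicity of $\gamma$ as a root of $q$ (with $v_\gamma(q) = 0$ if $q(\gamma) \neq 0$). Two monic polynomials are equal if and only if $v_\gamma$ agrees on them for every $\gamma \in \overline{\F}$, so it suffices to check the equality of multiplicities. I will use the standard facts that $v_\gamma$ is additive over products, that $q \mid q'$ if and only if $v_\gamma(q) \le v_\gamma(q')$ for all $\gamma$, and that $v_\gamma(\gcd(q_1,\ldots,q_m)) = \min_i v_\gamma(q_i)$ and $v_\gamma(\lcm(q_1,\ldots,q_m)) = \max_i v_\gamma(q_i)$.

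First I would check that the right-hand side is a well-defined monic polynomial. Since $f_i \mid p$, each quotient $p/f_i$ is a polynomial, and moreover $p/f_i \mid p$ (as $p = f_i \cdot (p/f_i)$); hence $\gcd(p/f_1,\ldots,p/f_m)$ divides $p$, so the quotient on the right is a genuine polynomial, and it is monic under our convention that $\gcd$ is taken monic and (after clearing leading coefficients) $p$ may be taken monic as well.

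Next, fix $\gamma \in \overline{\F}$ and abbreviate $c \coloneqq v_\gamma(p)$ and $a_i \coloneqq v_\gamma(f_i)$. The hypothesis $f_i \mid p$ gives $a_i \le c$, so $v_\gamma(p/f_i) = c - a_i \ge 0$. Therefore
\[
	v_\gamma\del{\gcd\del{\tfrac{p}{f_1},\ldots,\tfrac{p}{f_m}}} = \min_{i \in [m]} (c - a_i) = c - \max_{i \in [m]} a_i,
\]
and consequently
\[
	v_\gamma\del{\frac{p}{\gcd(p/f_1,\ldots,p/f_m)}} = c - \del{c - \max_{i \in [m]} a_i} = \max_{i \in [m]} a_i = v_\gamma(\lcm(f_1,\ldots,f_m)).
\]
Since this holds for every $\gamma \in \overline{\F}$ and both sides are monic, the two polynomials coincide. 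I do not anticipate a genuine obstacle here; the only points requiring a word of care are the well-definedness of the right-hand side as a polynomial and the bookkeeping of leading coefficients, both of which are handled above, and the argument uses no assumption on $\ch(\F)$, matching the statement.
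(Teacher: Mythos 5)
Your proof is correct. The paper states this lemma without proof, calling it ``elementary,'' so there is no proof in the paper to compare against; the valuation-counting argument you give (pass to $\overline{\F}$, check $v_\gamma$ of both sides, and use $v_\gamma(\gcd) = \min$, $v_\gamma(\lcm) = \max$ together with $\min_i(c - a_i) = c - \max_i a_i$) is the natural and standard way to verify it, and your attention to well-definedness and monicity is appropriate given the paper's conventions.
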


Using \cref{lem:multilcm to multigcd}, we now compute the LCM of many polynomials in $\AC^0_\F$.

\begin{corollary} \label{cor:multilcm}
	Let $\F$ be a field of characteristic zero or characteristic greater than $m^2 d$.
	Let $f_1, \ldots, f_m \in \F[x]$ be univariate polynomials of degree at most $d$ given by their coefficients.
	Then their least common multiple $\lcm(f_1,\ldots,f_m)$ can be computed piecewise in $\AC^0_\F$.
\end{corollary}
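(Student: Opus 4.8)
The plan is to apply \cref{lem:multilcm to multigcd} with the auxiliary polynomial $p \coloneqq f_1 f_2 \cdots f_m$. Since each $f_i$ is monic, $f_i \mid p$ for every $i \in [m]$, so the hypothesis of \cref{lem:multilcm to multigcd} is met and we have
\[
	\lcm(f_1,\ldots,f_m) = \frac{p}{\gcd\del{\frac{p}{f_1}, \ldots, \frac{p}{f_m}}}.
\]
Thus it suffices to compute $p$, then each quotient $p/f_i$, then their GCD, then one final division. The product $p$ has degree at most $md$ and can be computed in $\AC^0$ by iterated multiplication (or, more carefully, via \cref{lem:polynomial interpolation}).

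The next step is exact division: for each $i \in [m]$, the quotient $p/f_i$ is an exact division of a polynomial of degree at most $md$, so \cref{lem:exact division} computes its coefficients in $\AC^0$; this requires $\ch(\F) = 0$ or $\ch(\F) > md$, which is implied by our hypothesis. Now the $m$ polynomials $p/f_1,\ldots,p/f_m$ each have degree at most $md$, so applying \cref{thm:multigcd} to these $m$ polynomials with degree bound $md$ computes $g \coloneqq \gcd(p/f_1,\ldots,p/f_m)$ piecewise in $\AC^0$, provided $\ch(\F) = 0$ or $\ch(\F) > m \cdot (md) = m^2 d^2$ --- precisely the characteristic bound in the statement. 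Finally, $g \mid p$ (immediate from \cref{lem:multilcm to multigcd}, since $p/g = \lcm(f_1,\ldots,f_m)$ is a polynomial), so one more application of \cref{lem:exact division} computes $p/g = \lcm(f_1,\ldots,f_m)$ in $\AC^0$. Assembling the stages into a single piecewise $\AC^0$ circuit is done by composing the (piecewise) $\AC^0$ circuits for each stage via \cref{lem:compose piecewise circuit}, which preserves polynomial size and constant depth over a constant number of compositions.

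I do not expect a genuine obstacle here; this is a short corollary. The only points requiring care are bookkeeping: tracking the degree bounds so that the characteristic requirement $m^2 d^2$ actually suffices for the inner invocation of \cref{thm:multigcd} (the $p/f_i$ have degree up to $md$, not $d$, which is why the bound is $m^2 d^2$ rather than $m d$), and confirming that $g \mid p$ so that the concluding division is genuinely exact and \cref{lem:exact division} applies.
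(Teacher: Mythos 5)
Your proof is correct and takes a genuinely different (and arguably cleaner) route than the paper's. The paper also invokes \cref{lem:multilcm to multigcd}, but chooses the auxiliary common multiple $p$ differently: it sets $p \coloneqq s^d$, where $s$ is the squarefree part of $f_1 \cdots f_m$ (computed via \cref{cor:squarefree part}), whereas you use $p \coloneqq f_1 \cdots f_m$ directly. Both satisfy $f_i \mid p$, so both are legitimate inputs to \cref{lem:multilcm to multigcd}, but your choice skips the squarefree-part step entirely and gives a sharper degree bound: $\deg(\prod_i f_i) \le md$ versus $\deg(s^d) \le md^2$. This difference is not cosmetic --- when the paper applies \cref{thm:multigcd} to the $m$ quotients $s^d/f_i$ of degree up to $md^2$, the characteristic requirement of that theorem becomes $\ch(\F) > m \cdot md^2 = m^2 d^3$, which exceeds the bound $m^2 d^2$ stated in the corollary; your choice of $p$ gives quotients of degree at most $md$, for which \cref{thm:multigcd} needs only $\ch(\F) > m^2 d^2$, exactly matching the statement. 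In short, your argument is both simpler and the one that actually certifies the stated characteristic hypothesis. One small nit: the step ``since each $f_i$ is monic, $f_i \mid p$'' does not need monicity for the divisibility ($f_i \mid \prod_j f_j$ holds trivially); monicity is instead what lets you invoke \cref{lem:exact division}, whose statement assumes monic inputs.
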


\begin{proof}
	By \cref{lem:multilcm to multigcd}, we have
	\[
		\lcm(f_1,\ldots,f_m) = \frac{f_1 \cdots f_m}{\gcd\del{\frac{f_1 \cdots f_m}{f_1}, \ldots, \frac{f_1 \cdots f_m}{f_m}}}.
	\]
	The coefficients of $p \coloneqq f_1 \cdots f_m$ can be computed in $\AC^0_\F$ using \cref{lem:polynomial interpolation}.
	Using \cref{lem:exact division}, we can compute the coefficients of $p / f_1, \ldots, p / f_m$ in $\AC^0_\F$.
	It is clear that $\deg(p / f_i) \le m d$.
	Because $\ch(\F) > m^2 d$, we can compute $\gcd\del{\frac{p}{f_1}, \ldots, \frac{p}{f_m}}$ piecewise in $\AC^0_\F$ using \cref{thm:multigcd}.
	We can then compute the LCM by taking the quotient as above, which can be done in $\AC^0_\F$ using \cref{lem:exact division}.
\end{proof}

\section{Arbitrary Functions of Root Multiplicities} \label{sec:arbitrary functions}

The GCD and LCM manipulate the multiplicities of the roots of two polynomials $f$ and $g$ in specific ways: the GCD assigns a root $\alpha$ its minimum multiplicity in $f$ and $g$, while the LCM assigns $\alpha$ its maximum multiplicity.
What other functions can we apply to the root multiplicities in $\AC^0_\F$?
We discuss this first in the case of two polynomials, where \emph{any} function of the root multiplicities can be implemented in $\AC^0_\F$, and then handle the case of many polynomials, where again any function (reasonably encoded) can be implemented in $\AC^0_\F$.

\subsection{Two Polynomials}

We now consider a generalization of the problem of computing greatest common divisors and least common multiples.
Suppose we have two polynomials $f, g \in \F[x]$, which factor as 
\begin{align*}
	f(x) &= \prod_{i=1}^n (x - \alpha_i)^{a_i} \\
	g(x) &= \prod_{i=1}^n (x - \alpha_i)^{b_i},
\end{align*}
where $\alpha_1, \ldots, \alpha_n \in \F$ are the union of the sets of roots of $f$ and $g$ and $a_i, b_i \in \naturals$, where some of the $a_i$ or $b_i$ may be zero.
The greatest common divisor and least common multiple are given by
\begin{align*}
	\gcd(f(x), g(x)) &= \prod_{i=1}^n (x - \alpha_i)^{\min(a_i, b_i)} \\
	\lcm(f(x), g(x)) &= \prod_{i=1}^n (x - \alpha_i)^{\max(a_i, b_i)}.
\end{align*}
That is, to form the greatest common divisor, if a factor $(x - \alpha_i)$ appears with multiplicity $a_i$ in $f$ and multiplicity $b_i$ in $g$, the same factor should appear with multiplicity $\min(a_i, b_i)$ in $\gcd(f, g)$.
The least common multiple replaces $\min(a_i, b_i)$ with $\max(a_i, b_i)$.
As we saw in \cref{sec:gcd and lcm}, we can implement these minimum and maximum operations without directly factorizing $f$ and $g$.

What other functions of these multiplicities can we implement in (piecewise) $\AC^0_\F$?
Addition is easy, as we have
\[
	f(x) \cdot g(x) = \prod_{i=1}^n (x - \alpha_i)^{a_i + b_i}.
\]
What about products of multiplicities?
That is, can we efficiently compute the coefficients of the polynomial
\[
	(f \diamond g)(x) \coloneqq \prod_{i=1}^n (x - \alpha_i)^{a_i b_i}?
\]
It is not obvious that the coefficients of $f \diamond g$ can even be expressed algebraically in terms of the coefficients of $f$ and $g$, much less that there is an efficient algorithm for this problem.
However, a modification of our GCD algorithm from \cref{thm:gcd} enables us to compute $f \diamond g$, and indeed solve a more general problem.

Suppose we are given an \emph{arbitrary} function $P : \naturals \times \naturals \to \naturals$ and we would like to compute the polynomial
\[
	(f \diamond_P g)(x) \coloneqq \prod_{i = 1}^n (x - \alpha_i)^{P(a_i, b_i)}.
\]
By combining our algorithms for the squarefree decomposition and the GCD, we can design a piecewise $\AC^0_\F$ algorithm for $f \diamond_P g$ in a rather straightforward manner.

We start with the case where the function $P : \naturals \times \naturals \to \naturals$ is a delta function.
That is, we consider functions of the form $\delta_{i,j} : \naturals \times \naturals \to \naturals$ given by
\[
	\delta_{i,j}(a,b) \coloneqq \begin{cases}
		1 & \text{if $a = i$ and $b = j$} \\
		0 & \text{otherwise.}
	\end{cases}
\]
Every function $P : \naturals \times \naturals \to \naturals$ can be expressed as a sum of delta functions, so handling delta functions will allow us to easily handle arbitrary functions.

\begin{lemma} \label{lem:delta function}
	Let $\F$ be a field of characteristic zero or characteristic greater than $2d$.
	Let $f, g \in \F[x]$ be univariate polynomials of degree at most $d$ given by their coefficients.
	Suppose that $f$ and $g$ factor as 
	\begin{align*}
		f(x) &= \prod_{i=1}^n (x - \alpha_i)^{a_i} \\
		g(x) &= \prod_{i=1}^n (x - \alpha_i)^{b_i},
	\end{align*}
	where $\alpha_1, \ldots, \alpha_n \in \F$ and $a_i, b_i \in \naturals$, where some of the $a_i$ or $b_i$ may be zero.
	Then the coefficients of the polynomial 
	\[
		(f \diamond_{\delta_{i,j}} g)(x) \coloneqq \prod_{k : a_k = i, b_k = j} (x - \alpha_k)
	\]
	can be computed piecewise in $\AC^0_\F$.
\end{lemma}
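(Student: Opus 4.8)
The plan is to select the roots $\alpha_k$ with $a_k = i$ and $b_k = j$ in two stages: first isolate the factors $(x-\alpha_k)$ with $a_k = i$, and then, among those, isolate the ones with $b_k = j$. The key observation is that an equality such as $a_k = i$ can be detected by combining two threshold filters, since $a_k = i$ holds precisely when $a_k \ge i$ \emph{and} $a_k < i+1$; that is, $(x-\alpha_k)$ should survive the ``$\ge i$'' filter but be removed by the ``$\ge i+1$'' filter.

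Concretely, I would first compute the squarefree part $s(x) = \prod_{k=1}^n (x - \alpha_k)$ of the product $f(x)g(x)$ using \cref{cor:squarefree part}; this is computable piecewise in $\AC^0$, and $s$ has degree at most $2d$, which is why the hypothesis $\ch(\F) = 0$ or $\ch(\F) > 2d$ is exactly what is needed for the invocations below. Viewing $s$ and $f$ as polynomials over the common root set $\set{\alpha_1,\dots,\alpha_n}$ (on which $s$ has all multiplicities equal to $1$, while $f$ has multiplicities $a_k$, some possibly $0$), \cref{lem:threshold} applied to $s$ and $f$ with thresholds $i$ and $i+1$ yields, piecewise in $\AC^0$, the polynomials $s_{f \ge i}(x) = \prod_{k : a_k \ge i}(x - \alpha_k)$ and $s_{f \ge i+1}(x) = \prod_{k : a_k \ge i+1}(x - \alpha_k)$. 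Since $s$ is squarefree, $s_{f \ge i+1}$ divides $s_{f \ge i}$, so \cref{lem:exact division} lets us compute the quotient
\[
	s_{f = i}(x) \coloneqq \frac{s_{f \ge i}(x)}{s_{f \ge i+1}(x)} = \prod_{k : a_k = i} (x - \alpha_k)
\]
piecewise in $\AC^0$.

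Now I would repeat the same trick with $g$ in place of $f$, this time using $s_{f=i}$ as the source polynomial. Regarding $s_{f=i}$ and $g$ over the union of their root sets (on which $s_{f=i}$, being squarefree, has multiplicity $1$ exactly at the $\alpha_k$ with $a_k = i$ and multiplicity $0$ elsewhere), \cref{lem:threshold} with thresholds $j$ and $j+1$ produces $\prod_{k : a_k = i,\, b_k \ge j}(x - \alpha_k)$ and $\prod_{k : a_k = i,\, b_k \ge j+1}(x - \alpha_k)$, and dividing the former by the latter via \cref{lem:exact division} gives exactly
\[
	\prod_{k : a_k = i,\, b_k = j} (x - \alpha_k) = (f \diamond_{\delta_{i,j}} g)(x).
\]
The whole procedure is a composition of a constant number of piecewise $\AC^0$ subroutines, so it is itself piecewise $\AC^0$ by \cref{lem:compose piecewise circuit} (and the output coefficients can be extracted via \cref{lem:polynomial interpolation} if one prefers the coefficient representation throughout).

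I do not expect a genuine obstacle: the argument is a short composition of already-established primitives. The only points requiring a little care are (i) checking that the intermediate polynomials $s$ and $s_{f=i}$ are legitimate inputs to \cref{lem:threshold}, i.e.\ that each shares a root set with the relevant filter polynomial in the sense the lemma requires (which holds because $s$ and $s_{f=i}$ are squarefree divisors of $fg$, hence their roots lie among those of $s$, and one simply passes to the union of root sets); (ii) tracking that all degrees involved stay at most $2d$, so that $\ch(\F) > 2d$ suffices for every invocation; and (iii) the boundary cases $i = 0$ or $j = 0$, where the ``$\ge 0$'' filter is the identity map and the argument still goes through verbatim.
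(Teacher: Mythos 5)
Your proof is correct, but it takes a genuinely different route from the paper's. The paper computes the squarefree decompositions $(f_1,\ldots,f_d)$ and $(g_1,\ldots,g_d)$ of $f$ and $g$ separately via \cref{lem:squarefree decomp}, observes that $(f\diamond_{\delta_{i,j}}g) = \gcd(f_i, g_j)$, and then invokes \cref{thm:gcd}; the entire argument is two lines once those two high-level primitives are in hand. You instead start from the squarefree part $s$ of $fg$ and isolate the desired roots by two rounds of thresholding (\cref{lem:threshold}) plus exact division, implementing the predicate ``multiplicity exactly $i$'' as the difference of the ``$\ge i$'' and ``$\ge i{+}1$'' filters. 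This is more hands-on and unpacks what the paper's appeal to squarefree decomposition and GCD is really doing under the hood (both of those tools are themselves built on \cref{lem:threshold} and \cref{lem:multithreshold}); it avoids computing a general GCD and instead only manipulates squarefree divisors of $s$, which is arguably more elementary. The paper's version is more concise and makes the conceptual parallel with \cref{lem:multi delta function} (where GCD of many squarefree components is exactly the right primitive) more transparent. Both approaches track degrees and characteristic correctly, and your three cautionary remarks at the end---writing both inputs to \cref{lem:threshold} over the union of root sets, keeping all degrees $\le 2d$, and the $i=0$ or $j=0$ boundary cases---are the right things to check and all go through.
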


\begin{proof}
	Without loss of generality, let $d = \max(\deg(f), \deg(g))$.
	Let $(f_1,\ldots,f_d)$ and $(g_1,\ldots,g_d)$ be the squarefree decompositions of $f$ and $g$, respectively.
	Recall that, by definition of the squarefree decomposition, we have $f = \prod_i f_i^i$, each polynomial $f_i$ is squarefree, and $\gcd(f_i, f_j) = 1$ when $i \neq j$, and likewise for $g$.
	For notational ease, let $s$ be the squarefree part of $fg$ and define $f_0 \coloneqq s/f$ and $g_0 \coloneqq s/g$.

	Observe that for $i, j \in \set{0, 1, \ldots, d}$, we have
	\[
		\gcd(f_i(x), g_j(x)) = \prod_{k : a_k = i, b_k = j} (x - \alpha_k) = (f \diamond_{\delta_{i,j}} g)(x).
	\]
	Thus, we can compute $f \diamond_{\delta_{i,j}} g$ piecewise in $\AC^0_\F$ by first computing the squarefree decompositions of $f$, $g$, and $fg$ using \cref{lem:squarefree decomp}, and then computing $\gcd(f_i, g_j)$ via \cref{thm:gcd}.
\end{proof}

We now compute $f \diamond_P g$ for an arbitrary function $P : \naturals \times \naturals \to \naturals$ by writing $P$ as a sum of delta functions.

\begin{theorem} \label{thm:arbitrary function}
	Let $\F$ be a field of characteristic zero or characteristic greater than $2d$.
	Let $f, g \in \F[x]$ be univariate polynomials of degree at most $d$ given by their coefficients.
	Let $P : \naturals \times \naturals \to \naturals$ be an arbitrary function.
	Suppose that $f$ and $g$ factor as 
	\begin{align*}
		f(x) &= \prod_{i=1}^n (x - \alpha_i)^{a_i} \\
		g(x) &= \prod_{i=1}^n (x - \alpha_i)^{b_i},
	\end{align*}
	where $\alpha_1, \ldots, \alpha_n \in \F$ and $a_i, b_i \in \naturals$, where some of the $a_i$ or $b_i$ may be zero.
	Then the coefficients of the polynomial 
	\[
		(f \diamond_P g)(x) \coloneqq \prod_{i=1}^n (x - \alpha_i)^{P(a_i, b_i)}
	\]
	can be computed piecewise in $\AC^0_\F$, where the size of the circuit is measured with respect to both $d$ and the maximum possible degree of $f \diamond_P g$.
\end{theorem}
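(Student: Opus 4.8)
The plan is to reduce the statement to the delta-function case already established in \cref{lem:delta function}. Since $\deg(f), \deg(g) \le d$, every root multiplicity satisfies $a_i, b_i \le d$, so all the pairs $(a_i, b_i)$ lie in the finite grid $\set{0,1,\ldots,d}^2$. On that grid $P$ is a finite sum of delta functions, $P = \sum_{0 \le i,j \le d} P(i,j)\,\delta_{i,j}$, and multiplicities add when polynomials are multiplied, so
\[
	(f \diamond_P g)(x) = \prod_{i=0}^{d} \prod_{j=0}^{d} \bigl( (f \diamond_{\delta_{i,j}} g)(x) \bigr)^{P(i,j)} .
\]
Indeed, each $\alpha_k$ in our list has a unique type $(a_k, b_k)$ and is picked out by exactly one factor of this double product, which contributes $(x - \alpha_k)^{P(a_k, b_k)}$; the $(i,j) = (0,0)$ factor is the empty product since every $\alpha_k$ is a genuine root of $f$ or of $g$.

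First I would apply \cref{lem:delta function} to compute, for each of the $(d+1)^2$ pairs $(i,j)$, the polynomial $h_{i,j} \coloneqq f \diamond_{\delta_{i,j}} g$ piecewise in $\AC^0$; this is where the hypothesis $\ch(\F) = 0$ or $\ch(\F) > 2d$ enters. Writing $N_{i,j} \coloneqq P(i,j)$ and $D \coloneqq \sum_{i,j} N_{i,j}\deg(h_{i,j})$, note that $D$ is exactly the degree of $f \diamond_P g$ (hence part of the output size) and is bounded by $n \cdot \max_{0 \le i,j \le d} P(i,j)$, which is polynomial in the input and output size. I would then form $\prod_{i,j} h_{i,j}^{N_{i,j}}$ by the evaluate--power--multiply--interpolate pattern used throughout the paper (e.g.\ to build $g(x)^k$): evaluate every $h_{i,j}$ at $D+1$ distinct field elements (passing to a large enough extension of $\F$ if necessary, as in \cref{lem:polynomial interpolation}), raise each evaluation to the power $N_{i,j}$ with a single unbounded-fan-in product gate, multiply the resulting $(d+1)^2$ evaluation vectors pointwise with one more unbounded product gate, and interpolate to recover the coefficients of $\prod_{i,j} h_{i,j}^{N_{i,j}} = f \diamond_P g$. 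This post-processing is division-free, of constant depth, and of size polynomial in $D$ and $d$.

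It remains to assemble everything into a single piecewise $\AC^0$ circuit, and this is the step I expect to require the most care: the only discontinuous ingredients are the invocations of \cref{lem:delta function}, but there are $\Theta(d^2)$ of them, and naively composing that many piecewise subcircuits via \cref{lem:compose piecewise circuit} would multiply their numbers of pieces and destroy the size bound. I would handle this exactly as in the proof of \cref{thm:gcd}, which already combines $\Theta(d)$ piecewise computations: all the branching inside \cref{lem:delta function} stems from the (piecewise) squarefree decompositions of $f$ and of $g$ together with the GCD subroutine, so one arranges for the whole table $(h_{i,j})_{0 \le i,j \le d}$ to be produced by a single piecewise $\AC^0$ circuit with polynomially many pieces, and then the fixed division-free post-processing above is attached with one application of \cref{lem:compose piecewise circuit}. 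Verifying that the resulting circuit has polynomial size and constant depth, and that every intermediate polynomial has degree polynomial in the input and output size, is then routine bookkeeping.
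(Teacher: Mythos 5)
Your proposal is correct and takes essentially the same approach as the paper: decompose $P$ into delta functions on the grid $\set{0,\ldots,d}^2$, compute each $f \diamond_{\delta_{i,j}} g$ via \cref{lem:delta function}, take the product $\prod_{i,j} (f \diamond_{\delta_{i,j}} g)^{P(i,j)}$, and interpolate. The extra care you take with the piecewise composition of $\Theta(d^2)$ subroutines is a genuine subtlety that the paper's proof (and its proof of \cref{thm:gcd}) treats as implicit bookkeeping; your observation that the branching should be consolidated so the whole table $(h_{i,j})$ is produced by one piecewise circuit before the deterministic product-and-interpolate post-processing is the right way to resolve it.
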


\begin{proof}
	Using \cref{lem:delta function}, we can compute
	\[
		(f \diamond_{\delta_{i,j}} g)(x) = \prod_{k : a_k = i, b_k = j} (x - \alpha_i)
	\]
	piecewise in $\AC^0_\F$.
	We can then write $f \diamond_P g$ as
	\begin{align*}
		(f \diamond_P g)(x) &= \prod_{i=0}^d \prod_{j=0}^d \prod_{k : a_k = i, b_k = j} (x - \alpha_k)^{P(i,j)} \\
		&= \prod_{i=0}^d \prod_{j=0}^d (f \diamond_{\delta_{i,j}} g)(x)^{P(i,j)}.
	\end{align*}
	This results in a piecewise $\AC^0_\F$ circuit that computes $f \diamond_P g$.
	Applying \cref{lem:polynomial interpolation} yields a piecewise $\AC^0_\F$ circuit that computes the coefficients of $f \diamond_P g$.
\end{proof}

\subsection{Multiple Polynomials}

Just as the greatest common divisor and least common multiple can be defined for a set of more than two polynomials, we can extend the setting and result of \cref{thm:arbitrary function} to more polynomials.
Suppose we are given polynomials $f_1, \ldots, f_m \in \F[x]$ where $f_i$ factors as
\[
	f_i(x) = \prod_{j=1}^n (x - \alpha_j)^{a_{i,j}},
\]
where $\alpha_1,\ldots,\alpha_n \in \F$ are the union of the set of roots of $f_1,\ldots,f_m$.
For a function $P : \naturals^m \to \naturals$, we can define
\[
	\diamond_P(f_1,\ldots,f_m) \coloneqq \prod_{j=1}^n (x - \alpha_j)^{P(a_{1,j}, \ldots, a_{m,j})}.
\]
As we will see, an appropriate generalization of \cref{thm:arbitrary function} allows us to compute $\diamond_P(f_1,\ldots,f_m)$ in $\AC^0_\F$.

Here, we have two results, depending on how the function $P : \naturals^m \to \naturals$ is encoded.
Our first result works with the dense representation, where the function $P$ is specified as a list of input-output pairs.
If the polynomials $f_1,\ldots,f_m$ have degree bounded by $d$, then the polynomial $\diamond_P(f_1,\ldots,f_m)$ depends only on the restricted function $P : \set{0,1,\ldots,d}^m \to \naturals$, which can be represented by a list of $(d+1)^m$ values.
Denote by $|P|_d$ the number of nonzero outputs of the restriction $P : \set{0,1,\ldots,d}^m \to \naturals$.
We will construct a circuit of constant depth and size polynomial in $m$ and $|P|_d$ that computes $\diamond_P(f_1,\ldots,f_m)$.

Our second result handles a more succinct representation of the function $P : \naturals^m \to \naturals$.
Namely, we consider functions represented by a special kind of circuit over the integers we call a ``tropical threshold circuit.'' (See \cref{def:tropical threshold circuit}.)
Again, we show that $\diamond_P(f_1,\ldots,f_m)$ can be computed efficiently with respect to the size of the given tropical threshold circuit that computes $P$.
When the circuit representing $P$ is itself of constant depth, we again compute $\diamond_P(f_1,\ldots,f_m)$ in $\AC^0_\F$.

We start by computing $\diamond_P(f_1,\ldots,f_m)$ when $P : \naturals^m \to \naturals$ is given in the sparse representation.
As in the case of two polynomials, we start with delta functions.

\begin{lemma} \label{lem:multi delta function}
	Let $\F$ be a field of characteristic zero or characteristic greater than $2md$.
	Let $f_1, \ldots, f_m \in \F[x]$ be univariate polynomials of degree at most $d$ given by their coefficients.
	Suppose that $f_i$ factors as
	\[
		f_i(x) = \prod_{j=1}^n (x - \alpha_j)^{a_{i,j}}
	\]
	where $\alpha_1, \ldots, \alpha_n \in \F$ and $a_{i,j} \in \naturals$, where some of the $a_{i,j}$ may be zero.
	Then for any $i_1, \ldots, i_m \in \set{0,1,\ldots,d}$, the coefficients of the polynomial 
	\[
		\diamond_{\delta_{i_1,\ldots,i_m}}(f_1,\ldots,f_m) \coloneqq \prod_{k : a_{1,k} = i_1 \land \cdots \land a_{m,k} = i_m} (x - \alpha_k)
	\]
	can be computed piecewise in $\AC^0_\F$.
\end{lemma}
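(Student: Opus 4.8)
The plan is to adapt the two-polynomial argument of \cref{lem:delta function}, but \emph{without} computing a squarefree decomposition of each $f_\ell$ separately: doing that and then taking an $m$-fold GCD would string together $m$ piecewise subroutines, blowing up both the depth and the number of pieces. Instead, I would express the condition ``$\alpha_k$ occurs with multiplicity exactly $i_\ell$ in $f_\ell$ for every $\ell \in [m]$'' as a single conjunction of vanishing conditions followed by a single conjunction of non-vanishing conditions, and dispatch these to one call of \cref{lem:multifilter} and one call of \cref{lem:filter}, respectively.

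Concretely, recall that since $\ch(\F) > md \ge i_\ell$, a point $\alpha$ occurs with multiplicity at least $i_\ell$ in $f_\ell$ iff $f_\ell(\alpha) = f_\ell^{(1)}(\alpha) = \cdots = f_\ell^{(i_\ell - 1)}(\alpha) = 0$, and, given that, it occurs with multiplicity \emph{exactly} $i_\ell$ iff in addition $f_\ell^{(i_\ell)}(\alpha) \neq 0$. So I would first compute $s(x) \coloneqq \prod_{k=1}^n (x - \alpha_k)$, the squarefree part of $f_1(x) \cdots f_m(x)$, using \cref{cor:squarefree part} (the product has degree at most $md < \ch(\F)$). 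Next, applying \cref{lem:multifilter} with $f \coloneqq s$ and with the collection of filter polynomials $\set{f_\ell^{(j)} : \ell \in [m],\ 0 \le j \le i_\ell - 1}$ (at most $md$ polynomials, all of degree at most $d$), I would obtain piecewise in $\AC^0$ the polynomial
\[
	s_{\ge \vec{i}}(x) \coloneqq \prod_{k \,:\, a_{\ell,k} \ge i_\ell \text{ for all } \ell} (x - \alpha_k),
\]
which is the ``$f_{\in V}$'' output of \cref{lem:multifilter} for this choice of filters. Finally, I would form $G(x) \coloneqq \prod_{\ell=1}^m f_\ell^{(i_\ell)}(x)$ (in $\AC^0$ via \cref{lem:polynomial interpolation}) and apply \cref{lem:filter} to $s_{\ge \vec{i}}$ and $G$; since $G(\alpha_k) \neq 0$ iff $f_\ell^{(i_\ell)}(\alpha_k) \neq 0$ for every $\ell$, the output $(s_{\ge \vec{i}})_{G \neq 0}$ is exactly
\[
	\prod_{k \,:\, a_{\ell,k} = i_\ell \text{ for all } \ell} (x - \alpha_k) = \diamond_{\delta_{i_1,\ldots,i_m}}(f_1,\ldots,f_m)(x).
\]
This is a composition of a constant number of (piecewise) $\AC^0$ stages, so \cref{lem:compose piecewise circuit} keeps the whole computation piecewise $\AC^0$.

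A couple of details need checking. The degenerate case $i_1 = \cdots = i_m = 0$ can be treated separately, outputting $1$: no $\alpha_k$ has all multiplicities zero, since the $\alpha_k$ are by assumption the union of the roots of $f_1, \ldots, f_m$, so $\prod_{k : a_{\ell,k} = 0 \text{ for all } \ell}(x - \alpha_k) = 1$. (Alternatively, one checks that the construction above still returns $1$ in this case, as then $G = f_1 \cdots f_m$ vanishes at every $\alpha_k$.) One also needs every polynomial produced along the way---$s$, each $f_\ell^{(j)}$, $s_{\ge \vec{i}}$, and $G$---to have degree at most $md$, so that the hypothesis $\ch(\F) > md$ suffices for each invocation of \cref{cor:squarefree part}, \cref{lem:multifilter}, and \cref{lem:filter}; this is immediate, as each such polynomial divides, or is a product of at most $m$ factors of, polynomials of degree at most $d$.

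I expect the main obstacle to be exactly the point that forces the detour above: the naive generalization of \cref{lem:delta function} would compute the squarefree decomposition of each $f_\ell$ and then intersect the relevant components with an $m$-ary GCD, and this does not stay in $\AC^0$ once $m$ grows. The genuinely new idea is to replace that by the ``$=0$'' conjunction plus ``$\neq 0$'' conjunction reformulation of the exact-multiplicity condition, each realized by a \emph{single} filtering step, using the identity $G(\alpha) \neq 0 \iff f_\ell^{(i_\ell)}(\alpha) \neq 0$ for all $\ell$ when $G = \prod_\ell f_\ell^{(i_\ell)}$. Everything else is routine bookkeeping on top of the filtering machinery of \cref{sec:operations on roots}.
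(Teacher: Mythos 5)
Your proof is correct, and it takes a genuinely different route from the paper's. The paper mirrors the two-polynomial case (\cref{lem:delta function}): it computes the squarefree decomposition of each $f_\ell$ via \cref{lem:squarefree decomp} (with the convention $f_{\ell,0} \coloneqq 1$), observes that $\diamond_{\delta_{i_1,\ldots,i_m}}(f_1,\ldots,f_m) = \gcd(f_{1,i_1},\ldots,f_{m,i_m})$, and finishes with a single call to \cref{thm:multigcd}. You instead compute the squarefree part $s$ of $\prod_\ell f_\ell$ via \cref{cor:squarefree part}, apply \cref{lem:multifilter} to $s$ with filters $\set{f_\ell^{(j)} : \ell \in [m],\ 0 \le j < i_\ell}$ to retain exactly the roots with $a_{\ell,k} \ge i_\ell$ for every $\ell$, and then apply \cref{lem:filter} with $G \coloneqq \prod_\ell f_\ell^{(i_\ell)}$ to cut these down to exact equality. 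The central point --- that for a root $\alpha_k$ already known to have $a_{\ell,k} \ge i_\ell$ for all $\ell$, one has $G(\alpha_k) \neq 0$ iff $a_{\ell,k} = i_\ell$ for all $\ell$ --- is correct, and the degree and characteristic bookkeeping (everything of degree at most $md < \ch(\F)$) and the edge case $\vec{i} = \vec{0}$ all check out. On your stated motivation: the paper's $m$ squarefree decompositions run in parallel on the disjoint inputs $f_1,\ldots,f_m$, so there is no depth penalty; and any worry about multiplication of pieces under parallel composition is a legitimate subtlety of the piecewise model, but it already touches \cref{cor:squarefree part} (which internally makes $d$ parallel calls to \cref{lem:threshold}), and your route uses that corollary too, so the concern does not cleanly separate the two arguments. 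Regardless, your pipeline is a more direct, constant-length chain of filtering steps replacing $m$ decompositions plus an $m$-ary GCD, and the encoding of ``exact multiplicity $= \vec{i}$'' as ``$\ge$-threshold at $\vec{i}$, then nonvanishing of $\prod_\ell f_\ell^{(i_\ell)}$'' is a nice reformulation that the paper does not make explicit at this point.
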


\begin{proof}
	Without loss of generality, let $d = \max_{i \in [m]} \deg(f_i)$.
	For each $i \in [m]$, let $(f_{i,1}, \ldots, f_{i,d})$ be the squarefree decomposition of $f_i$.
	For notational ease, let $s$ be the squarefree part of $f_1 \cdots f_m$ and define $f_{i,0} \coloneqq s/f_i$.

	Observe that 
	\[
		\gcd(f_{1, i_1}(x), \ldots, f_{m, i_m}(x)) = \prod_{k : a_{1, k} = i_1 \land \cdots \land a_{m, k} = i_m} (x - \alpha_k) = \diamond_{\delta_{i_1,\ldots,i_m}}(f_1,\ldots,f_m).
	\]
	Thus, we can compute $\diamond_{\delta_{i_1,\ldots,i_m}}(f_1,\ldots,f_m)$ piecewise in $\AC^0_\F$ by first computing the squarefree decompositions of $f_i$ for all $i \in [m]$ using \cref{lem:squarefree decomp} and then computing $\gcd(f_{1,i_1},\ldots,f_{m,i_m})$ via \cref{thm:multigcd}.
\end{proof}

We now extend \cref{lem:multi delta function} to compute $\diamond_P(f_1,\ldots,f_m)$ for arbitrary functions $P : \naturals^m \to \naturals$.

\begin{theorem} \label{thm:multi arbitrary function}
	Let $\F$ be a field of characteristic zero or characteristic greater than $md$.
	Let $f_1, \ldots, f_m \in \F[x]$ be univariate polynomials of degree at most $d$ given by their coefficients.
	Let $P : \naturals^m \to \naturals$ be a function given in the dense representation.
	Suppose that $f_i$ factors as
	\[
		f_i(x) = \prod_{j=1}^n (x - \alpha_j)^{a_{i,j}}
	\]
	where $\alpha_1, \ldots, \alpha_n \in \F$ and $a_{i,j} \in \naturals$, where some of the $a_{i,j}$ may be zero.
	Then the coefficients of the polynomial 
	\[
		\diamond_P(f_1,\ldots,f_m) \coloneqq \prod_{j=1}^n (x - \alpha_j)^{P(a_{1, j}, \ldots, a_{m, j})}
	\]
	can be computed piecewise in $\AC^0_\F$.
\end{theorem}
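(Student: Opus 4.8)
The plan is to imitate the proof of \cref{thm:arbitrary function}, replacing the two-variable delta-function lemma with its multivariate counterpart \cref{lem:multi delta function}. First I would note that since each $f_i$ has degree at most $d$, every exponent $a_{i,j}$ lies in $\set{0,1,\ldots,d}$, so $\diamond_P(f_1,\ldots,f_m)$ depends only on the restriction of $P$ to the cube $\set{0,1,\ldots,d}^m$. On this finite domain $P$ decomposes as a sum of delta functions,
\[
	P(a_1,\ldots,a_m) = \sum_{(i_1,\ldots,i_m) \in \set{0,\ldots,d}^m} P(i_1,\ldots,i_m)\, \delta_{i_1,\ldots,i_m}(a_1,\ldots,a_m)
\]
for all $(a_1,\ldots,a_m) \in \set{0,\ldots,d}^m$, and only the $|P|_d$ tuples with $P(i_1,\ldots,i_m) \neq 0$ contribute.

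Next I would translate this additive decomposition of $P$ into a multiplicative decomposition of $\diamond_P(f_1,\ldots,f_m)$. Grouping the linear factors $(x - \alpha_j)$ of the $f_i$ according to the tuple of multiplicities $(a_{1,j},\ldots,a_{m,j})$, exactly as in \cref{thm:arbitrary function}, gives
\[
	\diamond_P(f_1,\ldots,f_m)(x) = \prod_{\substack{(i_1,\ldots,i_m) \in \set{0,\ldots,d}^m \\ P(i_1,\ldots,i_m) \neq 0}} \diamond_{\delta_{i_1,\ldots,i_m}}(f_1,\ldots,f_m)(x)^{P(i_1,\ldots,i_m)}.
\]
Each factor $\diamond_{\delta_{i_1,\ldots,i_m}}(f_1,\ldots,f_m)$ is computed piecewise in $\AC^0$ by \cref{lem:multi delta function}; concatenating these piecewise circuits (computing the squarefree decompositions of the $f_i$ only once and sharing them across all of them) and taking the indicated product of powers yields a piecewise $\AC^0$ circuit for $\diamond_P(f_1,\ldots,f_m)$, after which \cref{lem:polynomial interpolation} recovers its coefficients.

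The only slightly delicate point is the size bookkeeping: a priori $\set{0,\ldots,d}^m$ has $(d+1)^m$ elements, so one must ensure the circuit is polynomial in the intended parameters. This is handled by iterating only over the $|P|_d$ tuples on which $P$ is nonzero (the remaining delta functions contribute a factor raised to the zeroth power and may be dropped) and by reusing the single computation of the squarefree decompositions of $f_1,\ldots,f_m$; the exponents $P(i_1,\ldots,i_m)$ are bounded by $\deg \diamond_P(f_1,\ldots,f_m)$, which is already part of the output size. With these observations the total size is polynomial in $m$, $d$, $|P|_d$, and the output degree, and the depth is constant. I do not expect any genuine obstacle beyond this accounting, since \cref{lem:multi delta function} already carries the technical weight of the argument.
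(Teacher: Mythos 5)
Your proposal is correct and follows essentially the same approach as the paper: decompose $\diamond_P$ into a product of powers of $\diamond_{\delta_{i_1,\ldots,i_m}}(f_1,\ldots,f_m)$ via \cref{lem:multi delta function}, restrict to the $|P|_d$ nonzero tuples for the size bound, and recover the coefficients with \cref{lem:polynomial interpolation}. The explicit sum-of-delta-functions decomposition of $P$ and the observation about sharing the squarefree decompositions are harmless elaborations of the paper's argument.
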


\begin{proof}
	Using \cref{lem:multi delta function}, we can compute
	\[
		\diamond_{\delta_{i_1,\ldots,i_m}}(f_1,\ldots,f_m) = \prod_{k : a_{1,k} = i_1 \land \cdots \land a_{m,k} = i_m} (x - \alpha_k)
	\]
	piecewise in $\AC^0_\F$.
	We can then write $\diamond_P(f_1,\ldots,f_m)$ as
	\begin{align*}
		\diamond_P(f_1,\ldots,f_m)(x) &= \prod_{i_1=0}^d \cdots \prod_{i_m=0}^d \prod_{k : a_{1,k} = i_1 \land \cdots \land a_{m,k} = i_m} (x - \alpha_k)^{P(i_1,\ldots,i_m)} \\
		&= \prod_{i_1=0}^d \cdots \prod_{i_m=0}^d \diamond_{\delta_{i_1,\ldots,i_m}}(f_1,\ldots,f_m)^{P(i_1,\ldots,i_m)}.
	\end{align*}
	This yields a piecewise circuit of constant depth and size $m^{O(1)} |P|_d$ that computes $\diamond_P(f_1,\ldots,f_m)$, where $|P|_d$ denotes the number of tuples $(i_1, \ldots, i_m) \in \set{0,1,\ldots,d}^m$ for which $P(i_1, \ldots, i_m)$ is nonzero.
	Applying \cref{lem:polynomial interpolation} yields a piecewise $\AC^0_\F$ circuit that computes the coefficients of $\diamond_P(f_1,\ldots,f_m)$.
\end{proof}

For constant $m$, \cref{thm:multi arbitrary function} yields a circuit for $\diamond_P(f_1,\ldots,f_m)$ of size that is polynomially-bounded in the description length of $f_1,\ldots,f_m$.
When $m$ is large, for which functions $P : \naturals^m \to \naturals$ can we compute $\diamond_P(f_1,\ldots,f_m)$ more efficiently than \cref{thm:multi arbitrary function}?
We can do this when $P$ is one of the $m$-ary addition, maximum, minimum, or threshold functions.
By an $m$-ary threshold function, we mean a Boolean function of the form
\[
	\Thr_{\vec{r}}(a_1,\ldots,a_m) = \begin{cases}
		1 & \text{if $a_i \ge r_i$ for all $i \in [m]$}, \\
		0 & \text{otherwise,}
	\end{cases}
\]
where $\vec{r} \in \naturals^m$ is a fixed vector of thresholds, or its negation $\neg\Thr_{\vec{r}}$.

To compute $\diamond_P(f_1,\ldots,f_m)$ when $P$ is the addition function, we use the basic fact that 
\[
	\diamond_+(f_1,\ldots,f_m) = \prod_{i=1}^m f_i.
\]
For the maximum and minimum functions, this is a straightforward application of the identities
\begin{align*}
	\diamond_{\max}(f_1,\ldots,f_m) &= \lcm(f_1,\ldots,f_m) \\
	\diamond_{\min}(f_1,\ldots,f_m) &= \gcd(f_1,\ldots,f_m)
\end{align*}
combined with \cref{thm:multigcd} and \cref{cor:multilcm}.
When $P$ is a threshold function $\Thr_{\vec{r}}$ or its negation, the polynomial $\diamond_P(f_1,\ldots,f_m)$ corresponds to one of the outputs of \cref{lem:multithreshold}.

Naturally, if the function $P : \naturals^m \to \naturals$ has a succinct description in terms of these basic functions, then we can also hope to compute $\diamond_P(f_1,\ldots,f_m)$ efficiently.
We formalize this notion of complexity using tropical threshold circuits, defined below.
The adjective ``tropical'' refers to tropical geometry \cite{MS15}, a form of algebraic geometry done over the tropical semiring $(\reals \cup \set{+\infty}, \min, +)$ where addition and multiplication are replaced by minimum and addition, respectively.

\begin{definition} \label{def:tropical threshold circuit}
	A \emph{tropical threshold circuit} is a directed acyclic graph.
	Vertices of in-degree zero are called \emph{input gates} and each are labeled by some variable $x_i$.
	Vertices of positive in-degree are called \emph{internal gates} and are labeled by an element of $\set{+, c \times, \min, \max, \Thr_{\vec{r}}, \neg \Thr_{\vec{r}}}$.
	Vertices of out-degree zero are called \emph{output gates}.
	Each gate computes a function $\naturals^m \to \naturals$ in the natural way.
	The \emph{size} of the circuit is the number of gates in the circuit and the sum of all constants appearing in $c \times$ gates.
	The \emph{depth} of the circuit is the length of the longest path from an input gate to an output gate.
\end{definition}

We now state a version of \cref{thm:multi arbitrary function} that shows we can compute $\diamond_P(f_1,\ldots,f_m)$ in size and depth proportional to the size and depth of a tropical threshold circuit that computes the function $P : \naturals^m \to \naturals$.

\begin{theorem} \label{thm:ckt multi arbitrary function}
	Let $\F$ be a field of characteristic zero.
	Let $f_1, \ldots, f_m \in \F[x]$ be univariate polynomials of degree at most $d$ given by their coefficients.
	Let $P : \naturals^m \to \naturals$ be a function computed by a tropical threshold circuit of size $s$ and depth $\Delta$.
	Suppose that $f_i$ factors as
	\[
		f_i(x) = \prod_{j=1}^n (x - \alpha_j)^{a_{i,j}}
	\]
	where $\alpha_1, \ldots, \alpha_n \in \F$ and $a_{i,j} \in \naturals$, where some of the $a_{i,j}$ may be zero.
	Then the coefficients of the polynomial 
	\[
		\diamond_P(f_1,\ldots,f_m) \coloneqq \prod_{j=1}^n (x - \alpha_j)^{P(a_{1, j}, \ldots, a_{m, j})}
	\]
	can be computed piecewise by a circuit of depth $O(\Delta)$ and size $(smd)^{O(1)}$.
\end{theorem}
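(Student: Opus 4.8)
The plan is to evaluate the tropical threshold circuit $C$ computing $P$ gate by gate, maintaining at every gate $v$ the polynomial
\[
	h_v(x) \coloneqq \prod_{j=1}^n (x - \alpha_j)^{g_v(a_{1,j},\ldots,a_{m,j})},
\]
where $g_v : \naturals^m \to \naturals$ is the integer function computed at $v$. At an input gate $v$ labeled $x_i$ we have $g_v(\vec b) = b_i$, so $h_v = f_i$, which is part of the input. At the designated output gate $v^\star$ we have $g_{v^\star} = P$, so $h_{v^\star} = \diamond_P(f_1,\ldots,f_m)$ is exactly the polynomial we want, and \cref{lem:polynomial interpolation} then recovers its coefficients. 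Thus it suffices to show that $h_v$ can be obtained, in constant depth and size polynomial in $m$, $d$, and the output degree, from the polynomials $h_u$ at the in-neighbours $u$ of $v$. As a preprocessing step I would also compute once the squarefree part $s(x) = \prod_{j=1}^n (x - \alpha_j)$ of $f_1 \cdots f_m$ via \cref{cor:squarefree part}; this ``universe'' polynomial records all the roots, each with multiplicity one, and is needed for the threshold gates.

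The gadget for each gate type reads directly off the lemmas of \cref{sec:gcd and lcm} and \cref{sec:operations on roots}. For a $+$-gate, $g_v = \sum_i g_{u_i}$, so $h_v = \prod_i h_{u_i}$ (polynomial multiplication). For a $c\times$-gate with input $u$, $g_v = c \cdot g_u$, so $h_v = h_u^{\,c}$, realized as an explicit product of $c$ copies of $h_u$ (size proportional to $c$, which is charged to the size of $C$). For a $\min$-gate, $h_v = \gcd(h_{u_1},\ldots,h_{u_k})$, computed piecewise in constant depth by \cref{thm:multigcd}; for a $\max$-gate, $h_v = \lcm(h_{u_1},\ldots,h_{u_k})$ by \cref{cor:multilcm}. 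For a $\Thr_{\vec r}$-gate the exponent $g_v(\vec b)$ is $1$ when $g_{u_i}(\vec b) \ge r_i$ for all $i$ and $0$ otherwise, so
\[
	h_v(x) = \prod_{\,j \,:\, g_{u_i}(a_{\cdot,j}) \,\ge\, r_i \ \forall i\,} (x - \alpha_j),
\]
which is precisely the polynomial $s_{\vec g \ge \vec r}$ obtained by applying \cref{lem:multithreshold} with $f \coloneqq s$ and filter polynomials $h_{u_1},\ldots,h_{u_k}$; a $\neg\Thr_{\vec r}$-gate yields the complementary output $s_{\vec g \not\ge \vec r}$ of the same lemma. Correctness of each gadget is immediate from the definitions, and a straightforward induction over the gates of $C$ in topological order establishes $h_v(x) = \prod_j (x - \alpha_j)^{g_v(a_{\cdot,j})}$ for all $v$. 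Since $\F$ has characteristic zero, the characteristic hypotheses in \cref{thm:multigcd,cor:multilcm,lem:multithreshold} are vacuous, no matter how large the degrees of the intermediate polynomials $h_v$ grow.

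The depth bound $O(\Delta)$ is then immediate, since each gadget has constant depth and $C$ has depth $\Delta$. The size bound is where I expect the real difficulty. A naive implementation composes the $\Theta(s)$ gadgets along $C$ using \cref{lem:compose piecewise circuit}, but that lemma incurs a polynomial blow-up at \emph{each} composition, and iterating it along a path of length $\Delta$ produces a circuit of size roughly $p^{2^{\Delta}}$, which already fails to be polynomial once $\Delta$ is superconstant. To achieve size $(smd)^{O(1)}$ one must instead build a single piecewise circuit directly and show that the number of ``pieces'' stays polynomially bounded. The route I would pursue is to consolidate all of the discontinuous behaviour up front: every branching decision in the construction is ultimately governed by the partition of the at most $md$ distinct roots of $f_1 \cdots f_m$ according to their multiplicity vectors $(a_{1,j},\ldots,a_{m,j})$, and at most $md$ distinct such vectors actually occur. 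I would expose this partition in piecewise $\AC^0$ with only polynomially many pieces (running the squarefree-decomposition machinery of \cref{lem:squarefree decomp} on each $f_i$ and combining factors as in \cref{lem:multi delta function} to identify the occurring vectors $\vec b$ and the corresponding squarefree pieces $q_{\vec b}$), after which the remainder is division-free and branching-free: evaluate the integer circuit $C$ on each of the $\le md$ occurring vectors $\vec b$ to obtain the exponents $P(\vec b)$ — this is $O(smd)$ integer operations in depth $O(\Delta)$ — and form the product $\prod_{\vec b} q_{\vec b}(x)^{P(\vec b)}$. Verifying that this consolidation can be carried out with the claimed size and depth, and reconciling it with the gate-by-gate correctness argument above, is the technical heart of the proof; the rest is bookkeeping.
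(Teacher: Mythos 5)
Your gate-by-gate simulation is exactly the paper's strategy: the paper also translates each tropical gate into the corresponding polynomial-algebra gadget (products for $+$, explicit powering for $c\times$, \cref{thm:multigcd} for $\min$, \cref{cor:multilcm} for $\max$, and \cref{lem:multithreshold} applied to the squarefree part of $\prod_i f_i$ for $\Thr_{\vec r}$ and $\neg\Thr_{\vec r}$), and then argues by induction on $\Delta$, peeling off the output gate. Your choice of gadgets, the use of the squarefree part as the ``universe'' polynomial, and the observation that characteristic zero removes all the field-size hypotheses are all the same as in the paper.

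Where you go beyond the paper is in scrutinizing the size bound. The paper's inductive step simply asserts that simulating the output gate costs ``$(md)^{O(1)}$ additional gates,'' i.e., it treats the piecewise composition as additive. You are right that if one takes \cref{def:piecewise circuits} and \cref{lem:compose piecewise circuit} at face value, the number of pieces multiplies across compositions, and iterating along depth $\Delta$ gives size on the order of $(smd)^{2^{O(\Delta)}}$, not $(smd)^{O(1)}$. The paper does not address this: its accounting is implicitly the PRAM/shared-DAG view mentioned in the introduction, where each test and each gate of the underlying DAG is charged once and the branch is resolved dynamically, rather than the literal piecewise definition in which every computation circuit and test circuit is counted separately per piece. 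So your concern identifies a genuine mismatch between the informal model the proof uses and the formal one the paper defines.

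Your proposed fix — consolidate all branching by exposing, up front, the partition of the at most $md$ roots by multiplicity vector, and then proceed branching-free — is the right instinct, but as sketched it also has a gap. Identifying the occurring vectors $\vec b \in \set{0,\ldots,d}^m$ by invoking \cref{lem:multi delta function} would enumerate $(d+1)^m$ candidate vectors, which is exponential in $m$; you need a way to expose only the (at most $md$) occurring cells of the partition and certify the rest are empty with a polynomial number of pieces, and that is not obviously available from the tools in the paper. So the ``technical heart'' you flag at the end is more substantial than your final sentence implies; the paper sidesteps it by working in the informal PRAM sense, and a fully formal proof in the piecewise model would require an argument of the kind you are gesturing at but have not completed.
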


\begin{proof}
	We proceed by induction on $\Delta$, simulating the tropical threshold circuit that computes $P$.
	When $\Delta = 1$, the function $P$ must be computed by a single gate of a tropical threshold circuit.
	The following case analysis shows that for each gate type in a tropical threshold circuit, there is a piecewise arithmetic circuit of size $(md)^{O(1)}$ and depth $O(1)$ that computes $\diamond_P(f_1,\ldots,f_m)$.
	\begin{itemize}
		\item
			If $P$ is computed by an addition gate, we have
			\[
				\diamond_P(f_1,\ldots,f_m) = \prod_{i=1}^m f_i,
			\]
			which can clearly be computed by a circuit of size $(md)^{O(1)}$ and depth $O(1)$.
		\item
			If $P$ is computed by a $c \times$ gate, then
			\[
				\diamond_P(f) = f^c,
			\]
			which is easily computed by a circuit of size $(md)^{O(1)} + s$ and depth $O(1)$.
			Here, we use the fact $c \le s$ by definition, so implementing the powering operation $f \mapsto f^c$ can be done with at most $s$ wires.
		\item
			If $P$ is computed by a minimum gate, we have
			\[
				\diamond_P(f_1,\ldots,f_m) = \gcd(f_1,\ldots,f_m),
			\]
			which can be computed piecewise by a circuit of size $(md)^{O(1)}$ and depth $O(1)$ using \cref{thm:multigcd}.
			Likewise, if $P$ is computed by a maximum gate, we have
			\[
				\diamond_P(f_1,\ldots,f_m) = \lcm(f_1,\ldots,f_m),
			\]
			which can be computed in the claimed size and depth using \cref{cor:multilcm}.
		\item
			Suppose $P$ is computed by a threshold gate $\Thr_{\vec{r}}$ or its negation $\neg \Thr_{\vec{r}}$.
			Let $h(x)$ be the squarefree part of $f_1(x) \cdots f_m(x)$.
			We can compute $h$ piecewise with a circuit of size $(md)^{O(1)}$ and depth $O(1)$ using \cref{cor:squarefree part}.
			By applying \cref{lem:multithreshold} to $h$ and $f_1, \ldots, f_m$ with threshold vector $\vec{r}$, we obtain $\diamond_{\Thr_{\vec{r}}}(f_1,\ldots,f_m)$ and $\diamond_{\neg \Thr_{\vec{r}}}(f_1,\ldots,f_m)$ piecewise using a circuit of size $(md)^{O(1)}$ and depth $O(1)$.
	\end{itemize}
	Overall, we obtain a circuit of size $(md)^{O(1)}$ and depth $c$ for $\diamond_P(f_1,\ldots,f_m)$, where $c \in \naturals$ is some universal constant.

	Suppose now that $\Delta \ge 2$.
	Let $P_1,\ldots,P_s$ be the functions computed by the children of the output gate in the tropical threshold circuit computing $P$.
	By induction, we can compute $g_i \coloneqq \diamond_{P_i}(f_1,\ldots,f_m)$ in size $(smd)^{O(1)}$ and depth $c \cdot (\Delta - 1)$.
	Let $P_{\text{out}}$ be the function labeling the output gate of the circuit that computes $P$.
	By the same analysis as in the base case, we can compute
	\[
		\diamond_{P_{\text{out}}}(g_1,\ldots,g_s) = \diamond_P(f_1,\ldots,f_m)
	\]
	using $(md)^{O(1)}$ additional gates and increasing the depth by $c$.
	This yields a circuit that computes $\diamond_P(f_1,\ldots,f_m)$ of size $(smd)^{O(1)}$ and depth $c \cdot \Delta = O(\Delta)$ as claimed.
\end{proof}

\section{Extensions to Multivariate Polynomials} \label{sec:multivariate}

Suppose we are given an arithmetic circuit of size $s$ that computes a multivariate polynomial $f \in \F[\vec{x}]$.
What can we say, if anything, about the complexity of the factors of $f$?
A landmark result of \textcite{Kaltofen89} shows that every irreducible factor of $f$ can be computed by an arithmetic circuit of size $\poly(s, \deg(f))$.
This result is also constructive: there is a randomized polynomial-time algorithm that receives a circuit for $f$ as input and outputs circuits for the irreducible factors of $f$.
This was later extended to the black-box model by \textcite{KT90}, where the factorization algorithm is given access to an evaluation oracle for $f$ and must implement an evaluation oracle for each of the irreducible factors of $f$.
Kaltofen's algorithm raises a natural question: what other classes of circuits are closed under factorization?
By suitably modifying Kaltofen's algorithm, \textcite{ST21c} showed that $\VBP$, the class of polynomials computable by polynomial-size arithmetic branching programs, is closed under factorization.
It is an open question whether $\VNC^1$ or $\VAC^0$ are closed under factorization, but partial results are known for both classes \cite{DSY09, Oliveira16, CKS19a, DSS22}.
The fact that factors of small circuits can themselves be computed by small circuits is not only interesting in its own right, but it also plays a key role in the algebraic hardness versus randomness paradigm \cite{KI04}.
To establish hardness-to-pseudorandomness results for weaker circuit classes (where we have more hope of proving unconditional lower bounds), it would be enough to show that these weaker classes are closed under factorization.

In this section, we provide evidence that $\VAC^0$ and $\VNC^1$ are closed under factorization by showing that $\VAC^0$ and $\VNC^1$ are closed under the related operations of squarefree decomposition and GCD.
For this, we need to extend some of our results from the univariate to the multivariate setting.
By applying \cref{lem:squarefree decomp} in a straightforward manner, we show that for every polynomial $f \in \VAC^0$, all elements $(f_1,\ldots,f_m)$ of the squarefree decomposition of $f$ can themselves be computed in $\VAC^0$. 
Likewise, when $f \in \VNC^1$, all elements of the squarefree decomposition of $f$ can be computed in $\VNC^1$.
A similar application of \cref{thm:gcd} shows that $\VAC^0$ and $\VNC^1$ are closed under taking greatest common divisors.
Just like Kaltofen's theorem on factorization, these results are algorithmic, where the corresponding algorithms run in randomized polynomial time.
In the case of $\VAC^0$, these algorithms can be derandomized in subexponential time by using the deterministic subexponential-time polynomial identity testing algorithm of \cite{LST21a} for $\VAC^0$ circuits.

Before moving to the details, a word on why it is reasonable for algorithms for univariate polynomials to be applicable in the multivariate setting.
Thanks to Gauss's Lemma (\cref{lem:gauss}), questions related to multivariate factorization in $\F[\vec{x},y] \cong \F[\vec{x}][y]$ are often reducible to questions about factorization in $\F(\vec{x})[y]$.
That is, we can regard a multivariate polynomial $f(\vec{x}, y) \in \F[\vec{x}, y]$ as a univariate polynomial in $y$ whose coefficients come from the field $\K \coloneqq \F(\vec{x})$, and under certain assumptions on $f$, enlarging the ring of coefficients from $\F[\vec{x}]$ to $\F(\vec{x})$ does not affect the factorization of $f$.
In this setting, it is natural to apply our earlier univariate algorithms to solve factorization problems.
Although the field $\K$ contains elements of very high complexity, \cref{lem:polynomial interpolation} implies that when the input $f$ is represented as a small arithmetic circuit, the $\K$-coefficients of $f$ are also representable as small arithmetic circuits.
This allows us to apply our machinery from the univariate setting without a large blow-up in complexity.

Of course, we need to ensure that the answer to our problem over $\K[y]$ is the same as the answer over $\F[\vec{x}][y]$.
This is not always the case.
For example, over $\F[x_1, x_2][y]$, we have
\[
	\gcd(x_1 x_2, x_1) = x_1,
\]
whereas over $\F(x_1, x_2)[y]$, we have
\[
	\gcd(x_1 x_2, x_1) = 1.
\]
As we will see, Gauss's Lemma guarantees that if the polynomial $f \in \F[\vec{x}][y]$ is monic in $y$, then the factorizations of $f$ into irreducibles over $\F[\vec{x}][y]$ and over $\F(\vec{x})[y]$ are the same.
In particular, if we were to take the GCD in $\K[y]$ of two monic polynomials, we would obtain the same result as the GCD in $\F[\vec{x}][y]$.
Although our input polynomials are not guaranteed to be monic, we will see that there is a simple, standard way to reduce to the monic case.

\subsection{Preliminaries on Polynomial Factorization and Identity Testing}

We now recall some preliminary material on polynomial factorization and polynomial identity testing.
This material is standard, and the reader familiar with these problems can safely skip ahead to the next subsection, returning here as necessary.

\subsubsection{Gauss's Lemma}

We first state Gauss's Lemma, which allows us to pass from $\F[\vec{x}][y]$ to $\F(\vec{x})[y]$ when studying questions related to polynomial factorization.
The version of Gauss's Lemma we state here suffices for our purposes; for a more general statement of the lemma, see \textcite[Corollary 6.10]{vzGG13}.

\begin{lemma}[{Gauss's Lemma \cite[Corollary 6.10]{vzGG13}}] \label{lem:gauss}
	Let $f \in \F[\vec{x}, y]$ be monic in $y$.
	Then $f$ is irreducible in $\F(\vec{x})[y]$ if and only if $f$ is irreducible in $\F[\vec{x}][y] \cong \F[\vec{x},y]$.
\end{lemma}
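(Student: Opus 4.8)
The plan is to give the standard proof of Gauss's Lemma, which turns on the notions of \emph{content} and \emph{primitivity}. First I would recall that $\F[\vec{x}]$ is a unique factorization domain whose fraction field is $\F(\vec{x})$, and for a nonzero $p \in \F[\vec{x}][y]$ define its content $\operatorname{cont}(p) \in \F[\vec{x}]$ as the $\gcd$ of its coefficients (well defined up to nonzero scalars), calling $p$ \emph{primitive} when $\operatorname{cont}(p)$ is a unit. Since $f$ is monic in $y$, its leading coefficient is $1$, so $f$ is primitive; I may also assume $\deg_y f \ge 1$, as the degree-zero case is vacuous. The one nontrivial input is the multiplicativity of content — the ``primitive'' form of Gauss's Lemma, that a product of primitive polynomials over a UFD is primitive — which I would establish (or simply cite) by the usual reduction-modulo-$\pi$ argument: if an irreducible $\pi \in \F[\vec{x}]$ divided $\operatorname{cont}(pq)$ for primitive $p, q$, then reducing coefficients modulo $\pi$ would make $\bar p\,\bar q = 0$ in the integral domain $(\F[\vec{x}]/(\pi))[y]$, forcing $\bar p = 0$ or $\bar q = 0$ and hence $\pi \mid \operatorname{cont}(p)$ or $\pi \mid \operatorname{cont}(q)$, contradicting primitivity. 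Equivalently, $\operatorname{cont}(pq) = \operatorname{cont}(p)\operatorname{cont}(q)$ up to units.

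With that in hand, for the forward direction I would argue that a factorization $f = gh$ in $\F(\vec{x})[y]$ into factors of positive $y$-degree can be cleared of denominators and ``primitivized'': choosing nonzero $a,b \in \F[\vec{x}]$ with $ag, bh \in \F[\vec{x}][y]$ and writing $ag = \operatorname{cont}(ag)\,\tilde g$, $bh = \operatorname{cont}(bh)\,\tilde h$ with $\tilde g, \tilde h$ primitive of the same $y$-degrees as $g, h$, one gets $abf = \operatorname{cont}(ag)\operatorname{cont}(bh)\,\tilde g\tilde h$; comparing contents (using that $f$ and $\tilde g\tilde h$ are primitive) and cancelling in the UFD $\F[\vec{x}]$ yields $f = u\,\tilde g\tilde h$ with $u \in \F^{\times}$, a genuine nontrivial factorization in $\F[\vec{x}][y]$, contradicting irreducibility there. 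For the converse I would use that the units of $\F[\vec{x}][y]$ are exactly $\F^{\times}$, so a factorization $f = gh$ into non-units in $\F[\vec{x}][y]$ either already has both factors of positive $y$-degree — hence is a nontrivial factorization over $\F(\vec{x})[y]$ — or has, say, $g \in \F[\vec{x}]$ dividing every coefficient of $f$ and in particular its monic leading coefficient $1$, making $g$ a unit, a contradiction.

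The main obstacle, such as it is, is the content-multiplicativity lemma; everything else is bookkeeping about clearing denominators and tracking $y$-degrees. Since this is a textbook fact, in the paper I would simply cite \cite[Corollary 6.10]{vzGG13} rather than reproduce the argument, but the sketch above is precisely the proof being invoked.
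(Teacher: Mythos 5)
Your proof is correct, and it is indeed the standard content/primitivity proof of Gauss's Lemma. As you anticipated, the paper does not prove this statement at all; it cites it directly to \cite[Corollary~6.10]{vzGG13}, so there is no in-paper argument to compare against. Your sketch fills in exactly what is being invoked: content multiplicativity via reduction modulo an irreducible $\pi \in \F[\vec{x}]$, the clearing-of-denominators-and-primitivize step for the hard direction, and the observation that a degree-zero factor in $\F[\vec{x}][y]$ of a polynomial that is monic in $y$ must divide the leading coefficient $1$ and hence be a unit for the easy direction. One small point of emphasis worth keeping in mind when you actually cite and use the result: the two ambient rings have different unit groups ($\F(\vec{x})^\times$ versus $\F^\times$), so ``irreducible'' means something slightly different on each side; your converse argument correctly addresses exactly this asymmetry, which is the only place a careless reading could go wrong.
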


If a polynomial $f \in \F[\vec{x}, y]$ is monic in $y$, then every irreducible factor of $f$ must also be monic in $y$, and so Gauss's Lemma implies that the irreducible factors of $f$ over $\F[\vec{x}][y]$ are also irreducible over $\F(\vec{x})[y]$.
In particular, for polynomials that are monic in $y$, their factorization into irreducibles is the same in $\F[\vec{x}][y]$ and $\F(\vec{x})[y]$.

\begin{corollary} \label{cor:monic preserves factorization}
	Let $f \in \F[\vec{x}, y]$ be monic in $y$.
	Then the factorization of $f$ into irreducibles is the same over $\F(\vec{x})[y]$ and $\F[\vec{x}][y]$.
\end{corollary}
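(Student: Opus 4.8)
The plan is to derive the corollary directly from Gauss's Lemma (\cref{lem:gauss}) together with unique factorization in the UFDs $\F[\vec{x}][y]$ and $\F(\vec{x})[y]$; the only real work is a normalization argument showing that monicity of $f$ in $y$ forces its irreducible factors to already be (unit multiples of) monic polynomials in $y$.

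First I would factor $f = c \prod_{i=1}^k g_i$ into irreducibles in the UFD $\F[\vec{x}][y]$, where $c \in \F^{\times}$. Writing $\ell_i \in \F[\vec{x}]$ for the leading coefficient of $g_i$ viewed as a polynomial in $y$, the hypothesis that $f$ is monic in $y$ gives $c \prod_i \ell_i = 1$, so each $\ell_i$ is a unit of the UFD $\F[\vec{x}]$, i.e.\ $\ell_i \in \F^{\times}$. In particular no $g_i$ can be a nonconstant element of $\F[\vec{x}]$, since such an element is not a unit of $\F[\vec{x}]$; hence every $g_i$ has positive degree in $y$. Absorbing the constants $\ell_i$ (and $c$) into the $g_i$, I may assume $f = \prod_{i=1}^k g_i$ with each $g_i$ monic in $y$, of positive $y$-degree, and irreducible in $\F[\vec{x}][y]$.

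Next I would apply \cref{lem:gauss} to each $g_i$: since $g_i$ is monic in $y$ and irreducible in $\F[\vec{x}][y]$, it is irreducible in $\F(\vec{x})[y]$. Hence $f = \prod_{i=1}^k g_i$ is a factorization of $f$ into irreducibles over $\F(\vec{x})[y]$. For the reverse inclusion I would invoke uniqueness of factorization: given any factorization of $f$ into irreducibles over $\F(\vec{x})[y]$, dividing each factor by its leading coefficient in $y$ (a unit of $\F(\vec{x})$) yields a factorization of $f$ into monic irreducibles of $\F(\vec{x})[y]$, which by unique factorization in $\F(\vec{x})[y]$ agrees with $\prod_{i=1}^k g_i$ up to reordering. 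Thus the multiset of monic irreducible $y$-factors of $f$ is the same whether computed in $\F[\vec{x}][y]$ or in $\F(\vec{x})[y]$, including multiplicities, which is exactly the assertion of the corollary.

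I do not expect a serious obstacle here; the statement is essentially a bookkeeping consequence of Gauss's Lemma. The one point requiring a little care is the normalization step — checking that monicity of $f$ makes every irreducible factor a unit multiple of a monic polynomial in $y$ (equivalently, that $f$ has trivial content as an element of $\F[\vec{x}][y]$) — and then using unique factorization in each ring to match the two lists of factors with the correct multiplicities.
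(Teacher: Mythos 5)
Your proof is correct and follows essentially the same approach the paper intends: the paper's (brief) justification is exactly to observe that monicity of $f$ in $y$ forces the irreducible factors over $\F[\vec{x}][y]$ to be monic (up to units) in $y$, and then to apply Gauss's Lemma (\cref{lem:gauss}) factor-by-factor. Your write-up just fills in the normalization and unique-factorization details that the paper elides.
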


\cref{cor:monic preserves factorization} is useful for reducing multivariate factorization problems to univariate ones.
If a polynomial $f \in \F[\vec{x},y]$ is monic in $y$, then we lose no information about the factorization of $f$ by viewing $f$ as a univariate polynomial in $y$ with coefficients in the field $\F(\vec{x})$.
A standard technique to reduce to the monic case is to apply a random change of variables to a polynomial $f \in \F[\vec{x}]$.

\begin{lemma} \label{lem:transform to monic}
	Let $f \in \F[\vec{x}]$ and let $\vec{\alpha} \in \F^n$.
	Let $d = \deg(f)$ and let $f_d \in \F[\vec{x}]$ be the top-degree homogeneous component of $f$.
	Let $y$ be a fresh variable and define
	\[
		\hat{f}(\vec{x},y) \coloneqq f(\vec{x} + y \cdot \vec{\alpha}) = f(x_1 + y \cdot \alpha_1, \ldots, x_n + y \cdot \alpha_n).
	\]
	Then the following hold.
	\begin{enumerate}
		\item
			If $f_d(\vec{\alpha}) \neq 0$, then $\frac{1}{f_d(\vec{\alpha})} \hat{f}(\vec{x}, y)$ is monic in $y$.
		\item
			$g(\vec{x})$ is an irreducible factor of $f(\vec{x})$ in $\F[\vec{x}]$ if and only if $g(\vec{x} + y \cdot \vec{\alpha})$ is an irreducible factor of $\hat{f}(\vec{x},y)$ in $\F[\vec{x}, y]$.
	\end{enumerate}
\end{lemma}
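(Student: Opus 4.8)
The plan is to prove the two claims separately: part~1 by a direct degree computation on the homogeneous components of $f$, and part~2 by observing that $\vec{x} \mapsto \vec{x} + y \cdot \vec{\alpha}$ is a ring automorphism of $\F[\vec{x}, y]$ and then transferring irreducibility and divisibility between $\F[\vec{x}]$ and $\F[\vec{x}, y]$.

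For part~1, I would write $f = \sum_{i=0}^d f_i$, where $f_i$ is the homogeneous component of $f$ of degree $i$, so $f_d$ is the top-degree component. Since $f_i$ is homogeneous of degree $i$, substituting $\vec{x} \mapsto \vec{x} + y \cdot \vec{\alpha}$ yields a polynomial of total degree $i$ in $\F[\vec{x}, y]$, hence of $y$-degree at most $i$; moreover setting $\vec{x} = 0$ gives $f_i(y \cdot \vec{\alpha}) = y^i f_i(\vec{\alpha})$, so the coefficient of $y^i$ in $f_i(\vec{x} + y \cdot \vec{\alpha})$ is the constant $f_i(\vec{\alpha})$. Summing over $i$, the $y$-degree of $\hat{f}$ is at most $d$, and only the top component can contribute a $y^d$ term, so the coefficient of $y^d$ in $\hat{f}$ is exactly $f_d(\vec{\alpha})$. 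When $f_d(\vec{\alpha}) \neq 0$ this is a nonzero field element, so $\hat{f}$ has $y$-degree exactly $d$ and $\frac{1}{f_d(\vec{\alpha})} \hat{f}$ is monic of degree $d$ in $y$.

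For part~2, I would let $\sigma : \F[\vec{x}, y] \to \F[\vec{x}, y]$ be the $\F$-algebra homomorphism fixing $y$ and sending $x_i \mapsto x_i + y \alpha_i$; this is an automorphism with inverse $x_i \mapsto x_i - y \alpha_i$, and $\hat{f} = \sigma(f)$ when $f$ is regarded as an element of $\F[\vec{x}, y]$ not involving $y$. Automorphisms preserve units, irreducibility, and divisibility, so $g(\vec{x})$ divides $f(\vec{x})$ in $\F[\vec{x}, y]$ if and only if $g(\vec{x} + y \cdot \vec{\alpha}) = \sigma(g)$ divides $\hat{f} = \sigma(f)$ in $\F[\vec{x}, y]$, and $\sigma(g)$ is irreducible in $\F[\vec{x}, y]$ if and only if $g(\vec{x})$ is. It then remains to transfer these statements between $\F[\vec{x}, y]$ and $\F[\vec{x}]$ for polynomials lying in $\F[\vec{x}]$: since $\F[\vec{x}]$ is a UFD, a prime element of $\F[\vec{x}]$ stays prime (hence irreducible) in $\F[\vec{x}][y] = \F[\vec{x}, y]$, while a reducible element of $\F[\vec{x}]$ stays reducible there; and if $f = g h$ with $f, g \in \F[\vec{x}]$ and $h \in \F[\vec{x}, y]$, then $h = f/g$ lies in $\F(\vec{x}) \cap \F[\vec{x}, y] = \F[\vec{x}]$, so divisibility in $\F[\vec{x}, y]$ restricts to divisibility in $\F[\vec{x}]$. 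Chaining these equivalences gives both directions of part~2.

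I do not anticipate a serious obstacle. The only points requiring care are (i) confirming that the $y^d$-coefficient of $\hat{f}$ is exactly $f_d(\vec{\alpha})$ rather than a lower-degree artifact, which the homogeneous decomposition settles cleanly, and (ii) being precise that irreducibility and divisibility among elements of $\F[\vec{x}]$ are unaffected by passing to $\F[\vec{x}, y]$, which follows from $\F[\vec{x}]$ being a UFD together with the observation that a rational function of $\vec{x}$ alone that is polynomial in $(\vec{x}, y)$ is in fact polynomial in $\vec{x}$.
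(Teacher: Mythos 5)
Your proof is correct and follows essentially the same route as the paper: for part 1 you track the $y^d$ coefficient of $\hat{f}$, and for part 2 you exploit the invertibility of the substitution $(\vec{x},y)\mapsto(\vec{x}+y\vec{\alpha},y)$. The minor differences are organizational and, if anything, make your argument cleaner. In part 1 the paper expands $f$ monomial by monomial and applies the binomial theorem; you decompose $f$ into homogeneous components $f=\sum_i f_i$ and observe that only $f_d$ can contribute a $y^d$ term, which is the same computation but better packaged. In part 2 the paper writes out the factorization argument explicitly and somewhat tersely; you abstract it as an automorphism $\sigma$ of $\F[\vec{x},y]$ preserving irreducibility and divisibility, and then carefully justify the passage between $\F[\vec{x}]$ and $\F[\vec{x},y]$ via the UFD facts (primes of $\F[\vec{x}]$ remain prime in $\F[\vec{x},y]$; a $y$-degree-zero polynomial factors only through $y$-degree-zero polynomials). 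This last step is actually slightly more careful than the paper, which implicitly uses that $g\in\F[\vec{x}]$ irreducible remains irreducible in $\F[\vec{x},y]$ without flagging it. No gap in your argument.
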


\begin{proof}
	Write $f$ as a sum of monomials
	\[
		f(\vec{x}) = \sum_{\vec{e} \in \naturals^n} c_{\vec{e}} \vec{x}^{\vec{e}}.
	\]
	Expand $\hat{f}(\vec{x},y)$ using the binomial theorem as
	\[
		\hat{f}(\vec{x}, y) = \sum_{\vec{e} \in \naturals^n} c_{\vec{e}} \cdot (\vec{x} + y \cdot \vec{\alpha})^{\vec{e}} = \sum_{\vec{e} \in \naturals^n} c_{\vec{e}} \sum_{\vec{a} + \vec{b} = \vec{e}} \binom{\vec{e}}{\vec{a}} \vec{x}^{\vec{a}} \vec{\alpha}^{\vec{b}} y^{\norm{\vec{b}}_1},
	\]
	where $\binom{\vec{e}}{\vec{a}} \coloneqq \prod_{i=1}^n \binom{e_i}{a_i}$.
	A term in the above summation is divisible by $y^d$ if and only if $\norm{\vec{b}}_1 \ge d$.
	Each term in the above sum satisfies $\norm{\vec{b}}_1 \le \norm{\vec{e}}_1$ and $\norm{\vec{e}}_1 \le \deg(f) = d$, so the terms divisible by $y^d$ are those for which $\norm{\vec{b}}_1 = \norm{\vec{e}}_1$, which implies $\vec{b} = \vec{e}$ and consequently $\vec{a} = \vec{0}$.
	Thus, the coefficient of $y^d$ in the expansion of $\hat{f}(\vec{x},y)$ is given by
	\[
		\sum_{\vec{e} \in \naturals^n} c_{\vec{e}} \vec{\alpha}^{\vec{e}} = f_d(\vec{\alpha}).
	\]
	Hence $\frac{1}{f_d(\vec{\alpha})} \hat{f}(\vec{x},y)$ is monic in $y$ if $f_d(\vec{\alpha}) \neq 0$.

	The second claim follows from the fact that the map $(\vec{x}, y) \mapsto (\vec{x} + y \cdot \vec{\alpha}, y)$ is invertible with inverse $(\vec{x}, y) \mapsto (\vec{x} - y \cdot \vec{\alpha}, y)$.
	That is, suppose $f(\vec{x}) = g(\vec{x}) h(\vec{x})$ where $g$ is irreducible and $h$ is nonzero.
	Then it is clear that 
	\[
		f(\vec{x} + y \cdot \vec{\alpha}) = g(\vec{x} + y \cdot \vec{\alpha}) \cdot h(\vec{x} + y \cdot \vec{\alpha}),
	\]
	so $g(\vec{x} + y \cdot \vec{\alpha})$ is a factor of $f(\vec{x} + y \cdot \vec{\alpha})$.
	To see that $g(\vec{x} + y \cdot \vec{\alpha})$ is irreducible, suppose that $g(\vec{x} + y \cdot \vec{\alpha})$ factors as
	\[
		g(\vec{x} + y \cdot \vec{\alpha}) = r(\vec{x}, y) \cdot s(\vec{x}, y).
	\]
	Then we have
	\[
		g(\vec{x}) = r(\vec{x} - y \cdot \vec{\alpha}, y) \cdot s(\vec{x} - y \cdot \vec{\alpha}, y).
	\]
	Because $g$ is irreducible, one of the above factors factors must be an element of $\F$.
	Without loss of generality, we have $s(\vec{x} - y \cdot \vec{\alpha}, y) \in \F$, so it follows that $s(\vec{x}, y) \in \F$, which implies that $g(\vec{x} + y \cdot \vec{\alpha})$ is irreducible.
	Thus, if $g(\vec{x})$ is an irreducible factor of $f(\vec{x})$, then $g(\vec{x} + y \cdot \vec{\alpha})$ is an irreducible factor of $\hat{f}(\vec{x}, y)$.

	A symmetric argument shows that if $\hat{g}(\vec{x}, y)$ is an irreducible factor of $\hat{f}(\vec{x},y)$, then $\hat{g}(\vec{x} - y \cdot \vec{\alpha}, y)$ is an irreducible factor of $f(\vec{x})$.
	Because $f$ does not depend on $y$, this implies that there is some polynomial $g(\vec{x})$ such that $g(\vec{x}) = \hat{g}(\vec{x} - y \cdot \vec{\alpha}, y)$, so $\hat{g}$ has the form $\hat{g}(\vec{x}, y) = g(\vec{x} + y \cdot \vec{\alpha})$ as claimed.
\end{proof}

\subsubsection{Polynomial Identity Testing}

Polynomial identity testing (abbreviated as PIT) is the algorithmic problem of deciding if a given arithmetic circuit $C$ computes the identically zero polynomial.
In arithmetic complexity, polynomial identity testing is the central example of a problem that can be solved efficiently with the use of randomness, but for which an efficient deterministic algorithm is not known.
The following lemma, often referred to as the Schwartz--Zippel lemma, provides the basis for an efficient randomized algorithm to solve PIT.

\begin{lemma}[\cite{Schwartz80,Zippel79}] \label{lem:schwartz zippel}
	Let $f \in \F[\vec{x}]$ be a nonzero polynomial of degree $d$.
	Then for any finite $S \subseteq \F$, we have
	\[
		\Pr_{\vec{\alpha} \in S^n}[f(\vec{\alpha}) = 0] \le \frac{d}{|S|}.
	\]
\end{lemma}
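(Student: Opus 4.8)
The plan is to prove this by induction on the number $n$ of variables, following the classical argument. If $|S|=0$ the statement is vacuous, and if $d=0$ then $f$ is a nonzero constant and the left-hand side is $0 = d/|S|$, so I may assume $|S|\ge 1$ and $d\ge 1$ in what follows. For the base case $n=1$, I would invoke the elementary fact that a nonzero univariate polynomial over a field of degree $d$ has at most $d$ roots in $\F$; hence at most $d$ of the $|S|$ choices of $\alpha_1$ make $f(\alpha_1)=0$, giving the bound $d/|S|$.

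For the inductive step I would single out one variable, say $x_n$, and write
\[
	f = \sum_{i=0}^{k} f_i(x_1,\ldots,x_{n-1})\, x_n^i,
\]
where $k\ge 0$ is the largest index with $f_k$ not identically zero. The key degree bookkeeping is that every monomial of $f$ in which $x_n$ appears to the power $k$ has total degree at most $d$, so its restriction to the first $n-1$ variables has degree at most $d-k$; thus $\deg(f_k)\le d-k$. I would then split the event $\{f(\vec\alpha)=0\}$ according to whether the leading coefficient $f_k(\alpha_1,\ldots,\alpha_{n-1})$ vanishes. By the inductive hypothesis applied to the nonzero $(n-1)$-variate polynomial $f_k$ of degree at most $d-k$, we have $\Pr[f_k(\alpha_1,\ldots,\alpha_{n-1})=0]\le (d-k)/|S|$. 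On the other hand, conditioned on a choice of $(\alpha_1,\ldots,\alpha_{n-1})$ with $f_k(\alpha_1,\ldots,\alpha_{n-1})\ne 0$, the restriction $f(\alpha_1,\ldots,\alpha_{n-1},x_n)$ is a nonzero univariate polynomial in $x_n$ of degree exactly $k$, so the base case gives $\Pr_{\alpha_n}[f(\vec\alpha)=0]\le k/|S|$. Combining these two estimates through the law of total probability,
\[
	\Pr_{\vec\alpha\in S^n}[f(\vec\alpha)=0] \le \frac{d-k}{|S|} + \frac{k}{|S|} = \frac{d}{|S|},
\]
which closes the induction.

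The only point requiring care — rather than a genuine obstacle — is the degree accounting in the inductive step: one must verify $\deg(f_k)\le d-k$ as above, and one must condition on the \emph{leading} coefficient being nonzero (not merely some coefficient), since that is exactly what guarantees the restricted univariate polynomial has degree $k$ and is nonzero. Everything else is a routine union bound.
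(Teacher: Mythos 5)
The paper does not supply a proof of this lemma; it is quoted as a standard result with citations to Schwartz and Zippel. Your proof is the classical induction-on-$n$ argument for the Schwartz--Zippel lemma and is correct: the base case from the bound on roots of a univariate polynomial, the decomposition $f=\sum_i f_i x_n^i$ with $k=\deg_{x_n}f$, the degree bookkeeping $\deg f_k\le d-k$, and the split on whether the leading coefficient $f_k$ vanishes (union bound / total probability) all check out, and the edge cases ($k=0$, $d=0$) are handled consistently. One cosmetic remark: you invoke ``the law of total probability'' and then close with ``a routine union bound''; the estimate is really the union bound $\Pr[A]\le\Pr[B]+\Pr[A\mid \neg B]$, which is what you actually write, so the second description is the accurate one.
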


To test if a circuit $C$ of degree $d$ computes the zero polynomial, choose a set $S \subseteq \F$ of size $2d$ and evaluate $C$ at a randomly chosen point $\vec{\alpha} \in S^n$. 
The algorithm reports that $C$ computes the zero polynomial if and only if $C(\vec{\alpha}) = 0$.
It is clear that this algorithm has one-sided error (it is always correct when $C$ computes the zero polynomial), and the Schwartz--Zippel lemma implies that this algorithm has error probability at most $1/2$, so PIT is in $\coRP$.

The design of deterministic algorithms for polynomial identity testing is a large, active area of research, with close connections to arithmetic circuit lower bounds \cite{KI04, KS19}.
Our focus here is not on the design of new algorithms for PIT, but rather on applications of existing and future algorithms.
For an introduction to identity testing, we refer the reader to \textcite{SY10}.

There is a search-to-decision reduction for PIT, analogous to the search-to-decision reduction for Boolean satisfiability.
Suppose $f(\vec{x})$ is a nonzero polynomial of degree $d$.
\cref{lem:schwartz zippel} implies that for any set $S = \set{\alpha_1,\ldots,\alpha_{d+1}} \subseteq \F$, there is a point $\vec{\beta} \in S^n$ such that $f(\vec{\beta}) \neq 0$.
In particular, for some $i \in [d+1]$, the polynomial $f(\alpha_i, x_2, \ldots, x_n)$ is nonzero, and we can find such an $i$ by solving PIT for the polynomials $f(\alpha_i, x_2, \ldots, x_n)$.
This produces $\beta_1 \in \F$ such that $f(\beta_1, x_2,\ldots,x_n) \neq 0$. 
By recursion, we can find the remaining values $\beta_2,\ldots,\beta_n \in \F$ such that $f(\vec{\beta}) \neq 0$.
We record this search-to-decision reduction for polynomial identity testing in the following lemma.

\begin{lemma} \label{lem:pit search to decision}
	Let $\mathcal{C}$ be a class of arithmetic circuits closed under substitution.
	Given a $\mathcal{C}$-circuit $C(x_1,\ldots,x_n)$ of degree $d$, we can either determine that $C$ computes the zero polynomial or find point $\vec{\beta} \in \F^n$ such that $C(\vec{\beta}) \neq 0$ using $O(nd)$ calls to an oracle that solves PIT for $\mathcal{C}$-circuits.
\end{lemma}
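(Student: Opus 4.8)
The plan is to carry out the standard variable-by-variable search-to-decision reduction, using closure under substitution to keep every intermediate circuit inside $\mathcal{C}$. First I would call the PIT oracle once on $C$ itself; if it reports that $C$ computes the zero polynomial we are done, so from now on assume $C$ computes a nonzero polynomial. The goal is then to determine the coordinates $\beta_1,\ldots,\beta_n$ of a witness point one at a time, decreasing the number of live variables by one at each stage while maintaining the invariant that the current restricted circuit is a nonzero $\mathcal{C}$-circuit of degree at most $d$.

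For the inductive step, fix a set $S = \set{\alpha_1,\ldots,\alpha_{d+1}} \subseteq \F$ of $d+1$ distinct field elements. Suppose $\beta_1,\ldots,\beta_{i-1}$ have already been chosen so that $C_i(x_i,\ldots,x_n) \coloneqq C(\beta_1,\ldots,\beta_{i-1},x_i,\ldots,x_n)$ is nonzero; this circuit lies in $\mathcal{C}$ since $\mathcal{C}$ is closed under substitution, and $\deg(C_i) \le d$. If $C_i(\alpha_j,x_{i+1},\ldots,x_n) \equiv 0$ for every $j \in [d+1]$, then $C_i$ would vanish on all of $S \times \F^{n-i}$, and in particular on $S^{n-i+1}$; since $|S| = d+1 > d \ge \deg(C_i)$, this contradicts \cref{lem:schwartz zippel}. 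Hence, by calling the PIT oracle on each of the $d+1$ circuits $C_i(\alpha_j,x_{i+1},\ldots,x_n) \in \mathcal{C}$, we can find an index $j$ for which this restriction is nonzero and set $\beta_i \coloneqq \alpha_j$. After $n$ stages this produces $\vec{\beta}$ with $C(\vec{\beta}) \neq 0$. The call count is one initial test plus at most $d+1$ tests in each of the $n$ stages, for a total of $1 + n(d+1) = O(nd)$ oracle calls, as claimed.

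The only genuine subtlety is that $\F$ must contain at least $d+1$ distinct elements, since over a small field a nonzero polynomial can vanish identically on $\F^n$, in which case no witness exists. As elsewhere in this work (cf.\ \cref{lem:polynomial interpolation}), the remedy is to pass to an extension $\K \supseteq \F$ with $|\K| \ge d+1$ and run the argument over $\K$, yielding a point $\vec{\beta} \in \K^n$; the restricted circuits $C_i(\alpha_j,\cdot)$ remain $\mathcal{C}$-circuits over $\K$. I do not expect a real obstacle here: the entire content is that substitution keeps the restrictions in $\mathcal{C}$, that \cref{lem:schwartz zippel} guarantees a surviving coordinate at each step, and the elementary bookkeeping of the call count above.
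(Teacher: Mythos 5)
Your proposal is correct and takes essentially the same route as the paper, which presents the variable-by-variable Schwartz--Zippel-based reduction in the paragraph preceding the lemma and then states the lemma without a separate formal proof. Your additions---making the invariant explicit, tallying the $1 + n(d+1) = O(nd)$ oracle calls, and flagging the need to pass to an extension when $|\F| \le d$---are sound and only sharpen the paper's sketch rather than diverging from it.
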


The breakthrough lower bounds for $\AC^0_\F$ circuits by \textcite{LST21a} resulted in a deterministic subexponential-time PIT algorithm for $\AC^0_\F$ circuits.
We will make use of this algorithm later to derandomize our multivariate algorithms in subexponential time.

\begin{theorem}[\cite{LST21a}] \label{thm:ac0 pit}
	Let $\F$ be a field of characteristic zero.
	For all $\eps > 0$, there is a deterministic algorithm that receives as input an $n$-variate arithmetic circuit $C$ of size $s$ and depth $\Delta \le o(\log \log \log n)$ and decides if $C$ computes the zero polynomial in time $(s^{\Delta + 1} n)^{O(n^\eps)}$.
\end{theorem}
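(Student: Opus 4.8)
This is the subexponential-time derandomization theorem of \textcite{LST21a}, so here I only outline the proof strategy rather than supply a self-contained argument; nothing earlier in the paper helps. The plan follows the hardness-to-randomness paradigm and has two parts: first, establish a super-polynomial lower bound against constant-depth arithmetic circuits for an explicit low-degree polynomial; second, feed that lower bound into a Nisan--Wigderson-style generator to obtain a small hitting set, which the algorithm then enumerates. The delicate point throughout is that every step must be ``depth-aware'': the lower bound must hold against circuits of depth a fixed constant larger than the depth $\Delta$ of the instances we wish to test, and the generator's reconstruction argument must stay inside the bounded-depth world.

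For the lower bound, take as the hard polynomial the iterated matrix multiplication polynomial $\mathrm{IMM}$ with a slowly growing number $d$ of matrices of dimension roughly $n^\delta$, which is a set-multilinear polynomial of degree $d$ on $n^{O(1)}$ variables. First I would prove the structural reduction at the heart of the argument: any depth-$\Delta$ circuit of size $s$ computing a degree-$d$ polynomial can be converted, with only a $d^{O(d)} \cdot \mathrm{poly}(s)$ blowup in size and no increase in depth, into a set-multilinear circuit with a controlled degree pattern (homogenize, then balance the degrees across the product gates). Next I would introduce the complexity measure used by \textcite{LST21a}: the dimension of a carefully chosen space of partial derivatives after a random set-multilinear restriction, a ``relative rank'' measure. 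One then checks two things: that this measure is multiplicatively small at every product gate of a small set-multilinear circuit, so that it is at most $s \cdot n^{o(d)}$ for the whole circuit; and that it is essentially full-dimensional for $\mathrm{IMM}$, which is a direct linear-algebraic computation exploiting the block structure of a product of generic matrices. Comparing the two bounds yields $s \ge n^{\Omega(d^{1/2^{O(\Delta)}})}$, which is super-polynomial once $d = \omega(1)$ and can be pushed to $s \ge 2^{(\text{number of variables})^{\Omega(1)}}$ by choosing the matrix dimension and $d$ suitably.

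For the second part I would use the standard translation of such hardness into a pseudorandom generator, as in \textcite{KI04}: given an explicit polynomial $g$ on $\ell$ variables that requires depth-$(\Delta + c)$ circuits of size above some threshold, together with a combinatorial design on the seed variables, define $G : \F^k \to \F^n$ by letting each output coordinate be a copy of $g$ composed with the design. If a nonzero $n$-variate polynomial $C$ (of degree $\mathrm{poly}(s)$) computed by a depth-$\Delta$ circuit of size $s$ vanished on the entire image of $G$, then a hybrid argument would produce a depth-$(\Delta + c)$ circuit of size roughly $\mathrm{poly}(s, n)$ for $g$, contradicting the lower bound once the threshold is set correctly. Choosing parameters---matrix dimension $n^{\eps/2}$, hence $\ell = n^{O(\eps)}$, and the number $d$ of matrices---so that the required hardness is exactly what the lower bound provides makes the seed length $k = O(n^\eps)$. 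The algorithm then enumerates all seeds, evaluates $G$ at each (each evaluation being a handful of small matrix products, hence $\mathrm{poly}(n)$ time) and evaluates $C$ there; the cost of evaluating $C$ and the relevant degree bounds contribute the $s^{\Delta+1}$ factor, for total time $(s^{\Delta+1} n)^{O(n^\eps)}$. It reports that $C$ is identically zero if and only if $C$ vanishes at every $G(\mathrm{seed})$.

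The main obstacle is clearly the lower bound, i.e.\ the LST theorem proper, and within it the structural reduction to low-degree set-multilinear form with only a $d^{O(d)}$-type (rather than $2^{\Omega(n)}$-type) blowup, together with simultaneously arranging that the relative-rank measure is large for $\mathrm{IMM}$ and provably small for \emph{every} small constant-depth circuit. A secondary subtlety, easy to overlook, is the bookkeeping that keeps the hardness-to-randomness reduction inside the bounded-depth world and balances the three parameters $(d, n^\delta, \eps)$ so that the hardness delivered by the lower bound matches the hardness demanded by the generator; fortunately the LST lower bound holds for every constant depth, so there is enough room to close the loop.
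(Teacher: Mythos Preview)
The paper does not prove this statement at all: it is quoted as a black box from \textcite{LST21a}, with no accompanying argument, and is used only to instantiate the PIT oracle in \cref{thm:multivariate squarefree decomp,thm:multivariate gcd and lcm}. So there is nothing to compare your proposal against; you correctly recognized this up front when you wrote that ``nothing earlier in the paper helps.'' Your outline of the LST strategy (set-multilinearization with a $d^{O(d)}$ blowup, the relative-rank measure, the $\mathrm{IMM}$ lower bound, and the Kabanets--Impagliazzo hardness-to-randomness conversion) is a faithful high-level sketch of how that theorem is actually proved, but for the purposes of this paper a one-line citation is all that is expected.
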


\subsection{Multivariate Algorithms from Univariate Algorithms}

The algorithms of \cref{sec:operations on roots,sec:gcd and lcm} for the squarefree decomposition, GCD, and LCM were all stated and proved for univariate polynomials over a field $\F$.
If we have a multivariate polynomial $f \in \F[\vec{x}]$, we can always regard $f$ as a univariate in one variable, say $x_n$, whose coefficients are from the field $\K = \F(x_1,\ldots,x_{n-1})$.
Our earlier univariate algorithms make no assumptions about the underlying field, other than requiring the characteristic to be sufficiently large. 
Because $\ch(\K) = \ch(\F)$, we can apply these univariate algorithms to $f \in \K[x_n]$ without any modification to the algorithm.
This raises two questions that must be addressed.
The first is an issue of representation: the field $\K = \F(x_1, \ldots, x_{n-1})$ is more complicated than $\F$, and we need a succinct way to encode the $\K$-coefficients of a polynomial $f \in \F[x_1,\ldots,x_n]$.
The second deals with correctness: why does solving a problem such as the GCD over $\F(x_1, \ldots, x_{n-1})[x_n]$ yield the solution to the same problem over $\F[x_1,\ldots,x_{n-1}][x_n] \cong \F[x_1, \ldots, x_n]$?

The issue of representation is easily dealt with.
The natural way to represent the input $f$ is via an arithmetic circuit.
Writing
\[
	f(\vec{x}) = \sum_{i=0}^d f_i(x_1, \ldots, x_{n-1}) x_n^i,
\]
we see that the $\K$-coefficients of $f$ are exactly the polynomials $f_0,\ldots,f_d$.
An application of \cref{lem:polynomial interpolation} shows that the coefficients $f_i$ can be computed by circuits whose size and depth is comparable to that of the circuit for $f$.

To deal with correctness, \cref{cor:monic preserves factorization} guarantees that our reduction to the univariate case will produce the correct answer, provided the polynomial $f$ is monic in some variable.
We achieve this using \cref{lem:transform to monic}, which provides a linear change of variables that ensure $f$ is monic in a fresh variable $y$.
Because the resulting change of variables is invertible, we will be able to easily recover the solution that corresponds to the original input $f$.

We now implement the argument sketched above to compute the squarefree decomposition of a multivariate polynomial represented by an arithmetic circuit.
We focus on the classes $\AC^0_\F$ and $\NC^1_\F$, as it was previously known that $\NC^2_\F$ and $\VBP$ were closed under squarefree decomposition.

\begin{theorem} \label{thm:multivariate squarefree decomp}
	Let $\F$ be a field of characteristic zero or characteristic greater than $d$.
	Let $\mathcal{C} \in \set{\VAC^0, \VNC^1}$.
	Let $\mathcal{O}$ be an oracle that solves polynomial identity testing for $\mathcal{C}$-circuits.
	There is a deterministic, polynomial-time algorithm with oracle access to $\mathcal{O}$ that does the following.
	\begin{enumerate}
		\item
			The algorithm receives as input a $\mathcal{C}$-circuit that computes a polynomial $f \in \F[\vec{x}]$ of degree $d$.
		\item
			The algorithm outputs a collection of $\mathcal{C}$-circuits $C_1,\ldots,C_m$ such that $C_i$ computes $f_i$, where $(f_1, \ldots, f_m)$ is the squarefree decomposition of $f$.
	\end{enumerate}
\end{theorem}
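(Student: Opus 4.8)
The plan is to reduce the problem to the univariate squarefree decomposition of \cref{lem:squarefree decomp}, run over the field $\K \coloneqq \F(\vec x)$ (after introducing a fresh variable $y$), and then to use the oracle $\mathcal{O}$ to resolve the piecewise branches produced by that algorithm and recover an honest circuit.

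First I would make the input monic in a fresh variable. Extract the top-degree homogeneous component $f_d$ of $f$ by interpolating $f(t\cdot\vec x)$ as a polynomial in $t$ and reading off the coefficient of $t^d$; by \cref{lem:polynomial interpolation} this is again a $\mathcal{C}$-circuit of size $\poly(s)$. Since $\AC^0$ and $\NC^1$ are closed under substitution, \cref{lem:pit search to decision} together with $\mathcal{O}$ lets us find, deterministically, a point $\vec\alpha\in\F^n$ with $f_d(\vec\alpha)\neq 0$. Set
\[
	\hat f(\vec x, y) \coloneqq \frac{1}{f_d(\vec\alpha)}\,f(x_1 + y\alpha_1,\dots,x_n + y\alpha_n).
\]
By \cref{lem:transform to monic}, $\hat f$ is monic in $y$ of degree exactly $d$, and $g(\vec x)$ is an irreducible factor of $f$ if and only if $g(\vec x + y\vec\alpha)$ is one of $\hat f$; hence the squarefree decomposition $(\hat f_1,\dots,\hat f_m)$ of $\hat f$ over $\F[\vec x][y]$ consists of scalar multiples of the shifted components $f_i(\vec x + y\vec\alpha)$, so each $f_i(\vec x)$ is recovered (up to the scalar implicit in the chosen normalization) by substituting $y=0$ into $\hat f_i$. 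The circuit for $\hat f$ has size $\poly(s)$ and the same depth up to an additive constant, so by \cref{lem:polynomial interpolation} the coefficients $\hat f_0,\dots,\hat f_d\in\F[\vec x]$ of $\hat f$, viewed as a polynomial in $y$, are computed by $\mathcal{C}$-circuits.

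Next I would run the algorithm of \cref{lem:squarefree decomp} over $\K=\F(\vec x)$ applied to $\hat f\in\K[y]$; this is valid because $\ch(\K)=\ch(\F)$ is either $0$ or greater than $d$, and it produces a piecewise \emph{constant-depth} circuit computing $(\hat f_1,\dots,\hat f_m)$. Because $\hat f$ is monic in $y$, Gauss's Lemma (\cref{lem:gauss}, \cref{cor:monic preserves factorization}) ensures that the squarefree decomposition over $\K[y]$ equals the one over $\F[\vec x][y]$, so the outputs are genuine polynomials in $\F[\vec x]$ of degree at most $d$. Substituting the $\mathcal{C}$-circuits for $\hat f_0,\dots,\hat f_d$ for the input gates of this constant-depth piecewise circuit keeps us in $\mathcal{C}$ (a constant-depth, unbounded fan-in circuit composed with $\mathcal{C}$-circuits is again in $\mathcal{C}$). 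It remains to remove the piecewise structure: by \cref{def:piecewise circuits} there are only polynomially many test circuits $T_1,\dots,T_k$, each of which is (up to nonzero scalar constants) a division-free $\mathcal{C}$-circuit over $\F$, so one call of $\mathcal{O}$ per test decides whether $T_i$ vanishes identically. Walking through the nested tests $T_1\neq 0$, $T_1=0\wedge T_2\neq 0,\dots$ according to these answers selects a single computation circuit $C_j$, which agrees with $\hat f_i$ on a Zariski-dense set and hence equals $\hat f_i$ as a rational function. Finally, substitute $y=0$ (and, if a division-free circuit is wanted, invoke \cref{thm:ac0 division elimination}) to obtain the desired $\mathcal{C}$-circuits $C_1,\dots,C_m$ for $f_1,\dots,f_m$; all steps run in deterministic polynomial time given $\mathcal{O}$.

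The step I expect to be the main obstacle is the last paragraph: one must verify that the test circuits appearing inside \cref{lem:squarefree decomp} (which originate in the piecewise selections of \cref{lem:esym as multiplication} and \cref{lem:threshold}) really do become $\mathcal{C}$-circuits over $\F$ once composed with the coefficient circuits for $\hat f$, so that $\mathcal{O}$ is applicable, and that resolving each branch at the generic point is legitimate as an identity of rational functions rather than a mere pointwise equality. Everything else --- the size and depth accounting, closure of $\AC^0$ and $\NC^1$ under substitution, and the constant-depth composition bound --- is routine.
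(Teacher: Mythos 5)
Your proposal is correct and follows essentially the same route as the paper: make $f$ monic in a fresh variable $y$ via a linear shift found by a PIT search-to-decision, interpolate the $y$-coefficients, run the univariate squarefree-decomposition algorithm (\cref{lem:squarefree decomp}) over $\K=\F(\vec{x})$, and resolve the piecewise branches with the PIT oracle $\mathcal{O}$ --- and the concern you flag at the end, that the test circuits become $\mathcal{C}$-circuits after substitution, is handled in the paper exactly as you anticipate, by noting each is a composition of a $\mathcal{C}$-circuit with an $\AC^0$ circuit. The only cosmetic difference is that you recover $f_i$ from $\hat f_i$ by substituting $y=0$, whereas the paper applies the inverse change of variables $(\vec{x},y)\mapsto(\vec{x}-y\cdot\vec{\alpha},y)$; both are valid since the result is independent of $y$.
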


\begin{proof}
	Let $f_{\mathrm{top}} \in \F[\vec{x}]$ be the top-degree homogeneous component of $f$.
	Applying \cref{lem:polynomial interpolation} to $f$, we obtain a $\mathcal{C}$-circuit that computes $f_{\mathrm{top}}$.
	By \cref{lem:pit search to decision}, we can find a point $\vec{\alpha} \in \F^n$ such that $f_{\mathrm{top}}(\vec{\alpha}) \neq 0$ using $O(nd)$ calls to the PIT oracle $\mathcal{O}$.
	Let $y$ be a fresh variable and define 
	\[
		\hat{f}(\vec{x},y) \coloneqq \frac{1}{f_{\mathrm{top}}(\vec{\alpha})} f(\vec{x} + y \cdot \vec{\alpha}).
	\]
	By \cref{lem:transform to monic}, the polynomial $\hat{f}$ is monic and the irreducible factors of $\hat{f}$ are in one-to-one correspondence with those of $f$.

	The correspondence between the irreducible factors of $f$ and $\hat{f}$ implies that the elements of the squarefree decomposition of $f$ and $\hat{f}$ are in one-to-one correspondence.
	To see this, write the factorization of $f$ into irreducibles as 
	\[
		f = \prod_{i=1}^r f_i^{d_i},
	\]
	where each $f_i \in \F[\vec{x}]$ is irreducible and $\gcd(f_i, f_j) = 1$ for $i \neq j$.
	Because the factors of $f$ and $\hat{f}$ are in one-to-one correspondence, the factorization of $\hat{f}$ into irreducibles is given by
	\[
		\hat{f} = \prod_{i=1}^r f_i(\vec{x} + y \cdot \vec{\alpha})^{d_i}.
	\]
	The squarefree decomposition of $\hat{f}$ consists of the polynomials
	\[
		\hat{g}_j \coloneqq \prod_{i : d_i = j} f_i(\vec{x} + y \cdot \vec{\alpha})
	\]
	for $j \in [d]$.
	Applying the change of variables $(\vec{x}, y) \mapsto (\vec{x} - y \cdot \vec{\alpha}, y)$ to $\hat{g}_j$, we obtain
	\[
		g_j \coloneqq \prod_{i : d_i = j} f_i(\vec{x}),
	\]
	which is precisely the $j$\ts{th} element of the squarefree decomposition of $f$.
	Thus, if we can compute the squarefree decomposition of $\hat{f}$, then the change of variables $(\vec{x}, y) \mapsto (\vec{x} - y \cdot \vec{\alpha}, y)$ yields the squarefree decomposition of $f$.

	It remains to compute the squarefree decomposition of $\hat{f}$.
	Write $\hat{f}(\vec{x}, y) = \sum_{i=0}^d \hat{f}_i(\vec{x}) y^i$.
	By \cref{lem:polynomial interpolation}, each $\hat{f}_i$ can be computed by a $\mathcal{C}$-circuit.
	Regarding $\hat{f}(\vec{x}, y) \in \F(\vec{x})[y]$ as a polynomial in $y$ with coefficients in $\F(\vec{x})$, the coefficients of $\hat{f}$ are precisely $\hat{f}_0, \hat{f}_1, \ldots, \hat{f}_d$.
	Applying the piecewise $\AC^0_\F$ algorithm (recall \cref{def:piecewise circuits}) of \cref{lem:squarefree decomp} to $\hat{f}$, using $\hat{f}_0, \hat{f}_1, \ldots, \hat{f}_d$ as the coefficients of $\hat{f}$, we obtain a piecewise arithmetic circuit that correctly computes the squarefree decomposition of $\hat{f}$.
	This circuit is the composition of a $\mathcal{C}$-circuit and an $\AC^0_\F$ circuit, which results in a $\mathcal{C}$-circuit.

	It remains to remove the selection gates from this circuit to obtain a standard arithmetic circuit that computes the squarefree decomposition of $\hat{f}$.
	We use the PIT oracle $\mathcal{O}$ to remove the selection gates one by one, proceeding in topological order.
	Let $v$ be the topologically first selection gate and let $w_1, \ldots, w_t$ be its children.
	Because $v$ is the first selection gate in topological order, the subcircuits rooted at $w_1, \ldots, w_t$ contain no selection gates and hence are standard arithmetic circuits.
	For $i \in [t]$, let $g_{i}(\vec{x})$ be the polynomial computed by gate $w_i$.
	Using the PIT oracle $\mathcal{O}$, we can compute the set $\set{i \in [t] : g_i(\vec{x}) \neq 0}$, as well as $i^\star \coloneqq \min\set{i \in [t] : g_i(\vec{x}) \neq 0}$ if this set is nonempty.
	The semantics of the selection gate imply that $v$ computes either $g_{i^\star}(\vec{x})$ if $i^\star$ is defined or the constant $0$ if $i^\star$ is not defined.
	By replacing $v$ with the gate $w_{i^\star}$ or zero as appropriate, we eliminate one selection gate while preserving the correctness of the circuit.
	Note that this procedure also preserves the size, depth, and fan-in of the circuit.
	Continuing iteratively, we eliminate all selection gates, resulting in a standard arithmetic circuit that outputs the squarefree decomposition of $\hat{f}$.
\end{proof}

Using an essentially identical argument, we can design an algorithm to compute the GCD and LCM of multiple polynomials given as arithmetic circuits.
As above, we focus on the classes $\VAC^0$ and $\VNC^1$, since $\VNC^2$ and $\VBP$ were known to be closed under taking GCD's and LCM's.

\begin{theorem} \label{thm:multivariate gcd and lcm}
	Let $\F$ be a field of characteristic zero or characteristic greater than $m^2 d$.
	Let $\mathcal{C} \in \set{\VAC^0, \VNC^1}$.
	Let $\mathcal{O}$ be an oracle that solves polynomial identity testing for $\mathcal{C}$-circuits.
	There is a deterministic, polynomial-time algorithm with oracle access to $\mathcal{O}$ that does the following.
	\begin{enumerate}
		\item
			The algorithm receives as input $m$ $\mathcal{C}$-circuits that compute polynomials $f_1,\ldots,f_m \in \F[\vec{x}]$ of degree at most $d$.
		\item
			The algorithm outputs $\mathcal{C}$-circuits $C_{\gcd}$ and $C_{\lcm}$ that compute $\gcd(f_1,\ldots,f_m)$ and $\lcm(f_1,\ldots,f_m)$, respectively.
	\end{enumerate}
\end{theorem}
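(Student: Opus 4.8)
The plan is to follow the proof of \cref{thm:multivariate squarefree decomp} almost verbatim, replacing the appeal to \cref{lem:squarefree decomp} by appeals to \cref{thm:multigcd} and \cref{cor:multilcm}. The one new wrinkle is that the reduction to the monic case must be performed \emph{simultaneously} for all $m$ inputs. Concretely, let $f_{i,\mathrm{top}}$ be the top-degree homogeneous component of $f_i$; by \cref{lem:polynomial interpolation} I obtain $\mathcal{C}$-circuits for each $f_{i,\mathrm{top}}$ and hence for the product $\prod_{i=1}^m f_{i,\mathrm{top}}$, and then \cref{lem:pit search to decision} with the oracle $\mathcal{O}$ yields, after $O(nmd)$ oracle calls, a point $\vec\alpha \in \F^n$ with $f_{i,\mathrm{top}}(\vec\alpha) \neq 0$ for every $i$. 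With a fresh variable $y$, set $\hat f_i(\vec x, y) \coloneqq \frac{1}{f_{i,\mathrm{top}}(\vec\alpha)} f_i(\vec x + y\cdot\vec\alpha)$; by \cref{lem:transform to monic} each $\hat f_i$ is monic in $y$, and the substitution $(\vec x, y)\mapsto(\vec x + y\cdot\vec\alpha, y)$ is invertible.

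Next I check the reduction is faithful. Since each $\hat f_i$ is monic in $y$, \cref{cor:monic preserves factorization} says its factorization into irreducibles is the same over $\F(\vec x)[y]$ and over $\F[\vec x][y] \cong \F[\vec x,y]$. Because the (monic) GCD and LCM of a family are obtained by taking, for each shared irreducible factor, the minimum and maximum of its multiplicities, and because the scalar normalizations $1/f_{i,\mathrm{top}}(\vec\alpha)$ do not affect monic GCD/LCM, it follows that $\gcd$ and $\lcm$ of $\hat f_1,\ldots,\hat f_m$ taken in $\F(\vec x)[y]$ agree with $\gcd(f_1,\ldots,f_m)$ and $\lcm(f_1,\ldots,f_m)$ after the same substitution and normalization. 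So it suffices to compute the GCD and LCM of $\hat f_1,\ldots,\hat f_m$, regarded as monic univariate polynomials of degree $\le d$ over $\K \coloneqq \F(\vec x)$, and then undo the substitution $(\vec x, y)\mapsto(\vec x - y\cdot\vec\alpha, y)$; the outcome is independent of $y$, so one may finally set $y = 0$.

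For the univariate computation, write $\hat f_i(\vec x, y) = \sum_{j=0}^d \hat f_{i,j}(\vec x) y^j$; by \cref{lem:polynomial interpolation} each $\hat f_{i,j}$ is computable by a $\mathcal{C}$-circuit, and these are precisely the $\K$-coefficients of $\hat f_i$. The hypothesis $\ch(\F) > m^2 d^2$ meets the characteristic requirements of both \cref{thm:multigcd} and \cref{cor:multilcm}, so feeding the $\hat f_{i,j}$ into those piecewise $\AC^0$ algorithms produces a collection of multi-output computation circuits together with test circuits, one branch of which correctly outputs the coefficients of $\gcd(\hat f_1,\ldots,\hat f_m)$ (resp.\ $\lcm$). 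Exactly as in the proof of \cref{thm:multivariate squarefree decomp}, each computation and test circuit is the composition of an $\AC^0$ circuit with the $\mathcal{C}$-circuits for the $\hat f_{i,j}$, and this composition again lies in $\mathcal{C}$ for $\mathcal{C} \in \set{\AC^0, \NC^1}$; using $\mathcal{O}$ to evaluate the test circuits in order selects the correct branch with polynomially many oracle calls. Composing with the inverse linear substitution (which $\mathcal{C}$ is trivially closed under) yields the desired $\mathcal{C}$-circuits $C_{\gcd}$ and $C_{\lcm}$.

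The step I expect to need the most care is verifying faithfulness of the reduction — i.e., that replacing $\F[\vec x]$ by the rational function field $\F(\vec x)$ does not change the GCD or LCM. This is exactly why the simultaneous monic normalization via \cref{lem:transform to monic} together with \cref{cor:monic preserves factorization} is essential; without monicity the conclusion genuinely fails, as the example $\gcd(x_1 x_2, x_1)$ in the preceding discussion shows. The remaining bookkeeping — tracking that the piecewise structure survives composition with $\mathcal{C}$-circuits, and that the branch selection costs only polynomially many PIT queries — is identical to that in \cref{thm:multivariate squarefree decomp}.
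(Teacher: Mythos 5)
Your proposal is correct and matches the paper's proof essentially verbatim: the paper also reduces to the proof of the multivariate squarefree decomposition theorem, finds a common shift $\vec\alpha$ by running PIT on the product $\prod_i f_{i,\mathrm{top}}$, normalizes each $\hat f_i$ to be monic in a fresh variable $y$ via \cref{lem:transform to monic}, and then invokes the univariate piecewise algorithms of \cref{thm:multigcd} and \cref{cor:multilcm} over $\K = \F(\vec x)$. Your additional remarks on faithfulness via \cref{cor:monic preserves factorization} and the final inverse substitution are exactly the implicit steps the paper defers to the squarefree-decomposition proof.
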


\begin{proof}
	The proof is essentially the same as \cref{thm:multivariate squarefree decomp}, replacing the use of \cref{lem:squarefree decomp} with either \cref{thm:multigcd} or \cref{cor:multilcm} as appropriate.
	The only difference is the fact that we need to find a change of variables that makes all of the polynomials $f_1, \ldots, f_m$ simultaneously monic in a fresh variable $y$.

	To find this change of variables, let $d_i \coloneqq \deg(f_i)$.
	For $i \in [m]$, let $f_{i, \mathrm{top}} \in \F[\vec{x}]$ be the top-degree homogeneous component of $f_i$.
	Applying \cref{lem:polynomial interpolation} to $f_i$, we obtain a $\mathcal{C}$-circuit that computes $f_{i, \mathrm{top}}$.
	It follows that the product $\prod_{i=1}^m f_{i, \mathrm{top}}$ can be computed by a $\mathcal{C}$-circuit.
	\cref{lem:pit search to decision} implies that we can find a point $\vec{\alpha} \in \F^n$ such that $\prod_{i=1}^m f_{i, \mathrm{top}}(\vec{\alpha}) \neq 0$ using $O(nmd)$ calls to the PIT oracle $\mathcal{O}$.
	This implies that $f_{i, \mathrm{top}}(\vec{\alpha}) \neq 0$ for each $i \in [m]$.

	Let $y$ be a fresh variable and define
	\[
		\hat{f}_i(\vec{x}, y) \coloneqq \frac{1}{f_{i, \mathrm{top}}(\vec{\alpha})} f_i(\vec{x} + y \cdot \vec{\alpha}).
	\]
	Because $f_{i, \mathrm{top}}(\vec{\alpha}) \neq 0$ for each $i \in [m]$, \cref{lem:transform to monic} implies that $\hat{f}_i$ is monic in $y$ for each $i \in [m]$.
	At this point, the proof proceeds in the same manner as the proof of \cref{thm:multivariate squarefree decomp}.
\end{proof}

Finally, we observe that in the case of $\VAC^0$ circuits, the preceding algorithms for the multivariate squarefree decomposition, GCD, and LCM can all be derandomized in subexponential time.
This is done by implementing the PIT oracles in \cref{thm:multivariate squarefree decomp,thm:multivariate gcd and lcm} using \cref{thm:ac0 pit}.

\begin{corollary}
	In the case $\mathcal{C} = \VAC^0$, the algorithms of \cref{thm:multivariate squarefree decomp,thm:multivariate gcd and lcm} can be implemented in deterministic subexponential time.
\end{corollary}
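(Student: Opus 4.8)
The plan is to take the deterministic polynomial-time algorithms of \cref{thm:multivariate squarefree decomp,thm:multivariate gcd and lcm} verbatim and simply instantiate the PIT oracle $\mathcal{O}$ with the deterministic algorithm of \cref{thm:ac0 pit}, then check that the overall running time remains subexponential. Since \cref{thm:ac0 pit} requires characteristic zero, I would state the corollary over fields of characteristic zero; the rest of the hypotheses of \cref{thm:multivariate squarefree decomp,thm:multivariate gcd and lcm} are then automatically satisfied.

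First I would isolate exactly which circuits PIT is invoked on inside those two proofs. There are only two sources of oracle calls. The first is the search-to-decision step of \cref{lem:pit search to decision}, applied to the circuit computing the top-degree homogeneous component $f_{\mathrm{top}}$ (respectively $\prod_i f_{i,\mathrm{top}}$), which is obtained from the input $\AC^0$ circuit by \cref{lem:polynomial interpolation}; this circuit is $\AC^0$, and the restrictions $f_{\mathrm{top}}(\alpha_i, x_2,\ldots,x_n)$ queried along the way are gotten by substituting field constants, which increases neither size nor depth. The second is the selection of the correct piece of the piecewise $\AC^0$ algorithm of \cref{lem:squarefree decomp} (respectively \cref{thm:multigcd}, \cref{cor:multilcm}): here PIT is run on each test circuit, which by \cref{lem:compose piecewise circuit} is the composition of an $\AC^0$ test circuit of the univariate piecewise algorithm with the $\AC^0$ circuits computing the coefficients $\hat f_i \in \F[\vec{x}]$. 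In every case the circuit handed to PIT has size $n^{O(1)}$, constant depth, and — invoking Strassen's division elimination (\cref{thm:ac0 division elimination}) exactly as the underlying univariate lemmas already do — may be taken division-free, so it is a legitimate input for \cref{thm:ac0 pit}.

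Next I would check the parameters. A constant is $o(\log\log\log n)$, so for all sufficiently large $n$ the depth bound $\Delta \le o(\log\log\log n)$ of \cref{thm:ac0 pit} is met, while the size $s$ of every queried circuit is $n^{O(1)}$. Fixing any $\eps > 0$, \cref{thm:ac0 pit} solves each such PIT instance in time $(s^{\Delta+1} n)^{O(n^\eps)} = 2^{O(n^\eps \log n)} \le 2^{n^{2\eps}}$ for $n$ large, which is subexponential; since $\eps$ is arbitrary this gives, for every $\delta > 0$, a running time of $2^{n^\delta}$. The ambient algorithms of \cref{thm:multivariate squarefree decomp,thm:multivariate gcd and lcm} are deterministic, run in polynomial time apart from the oracle, and make only polynomially many oracle calls, so replacing each oracle call by this routine yields a deterministic algorithm running in time $n^{O(1)} \cdot 2^{O(n^\eps \log n)} = 2^{O(n^\eps \log n)}$, still subexponential.

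The main obstacle is the bookkeeping in the second step: one must confirm that after composing the piecewise univariate machinery with the input circuit and with the coefficient circuits produced by \cref{lem:polynomial interpolation}, every circuit on which a zero-test is performed really is a polynomial-size, constant-depth, division-free arithmetic circuit over $\F$ in the variables $\vec{x}$ — in particular that \cref{lem:polynomial interpolation} adds only $O(1)$ to the depth and a polynomial factor to the size, and that \cref{lem:compose piecewise circuit} keeps the depth constant. Once this is verified, the corollary follows immediately.
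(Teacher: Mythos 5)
Your proposal takes exactly the paper's approach: the paper's entire proof is the one sentence that the PIT oracle in \cref{thm:multivariate squarefree decomp,thm:multivariate gcd and lcm} is implemented using \cref{thm:ac0 pit}. Your additional bookkeeping---identifying which circuits PIT is invoked on, verifying they remain constant-depth and polynomial-size after composition and interpolation, and checking the $(s^{\Delta+1}n)^{O(n^\eps)}$ bound is subexponential---is correct and simply spells out what the paper leaves implicit.
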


\section{Conclusions and Open Problems} \label{sec:conclusion}

In this work, we introduced techniques for manipulating evaluations of symmetric polynomials $\set{e_k(\alpha_1,\ldots,\alpha_n) : k \in [n]}$ and $\set{e_k(\beta_1,\ldots,\beta_m) : k \in [m]}$ without having explicit access to the $\alpha_i$ and $\beta_j$.
We have seen how to change the points at which the $e_i$ are evaluated (\cref{sec:symmetric functions of roots}) and how to implement combinatorial operations like set difference, intersection, and union on the sets $\set{\alpha_1,\ldots,\alpha_n}$ and $\set{\beta_1,\ldots,\beta_m}$ (\cref{sec:operations on roots,sec:gcd and lcm}).
We also saw how to incorporate multiplicities into these set operations when $\set{\alpha_1,\ldots,\alpha_n}$ and $\set{\beta_1,\ldots,\beta_m}$ are regarded as multisets.
These techniques naturally led to efficient algorithms for symmetric functions (such as the discriminant) and bi-symmetric functions (such as the resultant and the GCD).
How far can these techniques be pushed, and what are their limitations?

A natural question along these lines is to understand which symmetric polynomials can be computed in $\AC^0_\F$.
\textcite{CKLMSS23} showed that certain Schur polynomials are at least as hard as the determinant, hence are not in $\AC^0_\F$.
\textcite{CLS22} showed that some families of monomial symmetric polynomials are $\VNP$-hard, i.e., a polynomial-size circuit for them would imply that the permanent has polynomial-size circuits.
As the permanent is known to not be in $\AC^0_\F$, the same lower bound holds for the monomial symmetric polynomials.
Can we hope to classify which families of symmetric polynomials are in $\AC^0_\F$?

A related question deals with the complexity of the fundamental theorem of symmetric polynomials.
If $f(x_1,\ldots,x_n)$ is symmetric, then there is a unique $g$ such that $f(\vec{x}) = g(e_1(\vec{x}), \ldots, e_n(\vec{x}))$.
How do the complexity of $f$ and $g$ relate to one another?
\textcite{BJ19} showed that $f$ and $g$ have comparable circuit complexity.
A variation on this for arithmetic branching programs is a key ingredient in the result of \textcite{CKLMSS23} mentioned above.
What can be said about the complexity of $f$ and $g$ when they are written as formulas, or as low-depth circuits?

Many problems in polynomial algebra can be solved in $\NC^2_\F$ by reduction to linear algebra.
As we have seen, these algorithms can be improved all the way to $\AC^0_\F$ for the GCD and related problems.
However, this only scratches the surface of linear algebra's applications to polynomial algebra.
The fact that these natural $\NC^2_\F$ algorithms can be improved suggests that there is more to say about the parallel complexity of algebraic problems.

One interesting problem left open by this work is to determine the parallel complexity of the extended Euclidean scheme.
The \emph{extended Euclidean scheme} of two polynomials $f, g \in \F[x]$ consists of the sequence of remainders and B\'{e}zout coefficients produced during the execution of the Euclidean algorithm on input $f$ and $g$.
Linear algebra provides an $\NC^2_\F$ algorithm to compute the extended Euclidean scheme \cite{vonzurGathen84}.
Can the techniques of this paper be used to design circuits of lower depth?
The extended Euclidean scheme has many applications, including rational interpolation, Pad\'{e} approximation, continued fraction expansion, counting real roots of polynomials in $\reals[x]$ (as a consequence of Sturm's theorem), and solving Toeplitz and Hankel systems of linear equations (see \cite[Chapters 4 and 5]{vzGG13} and \cite{BGY80} for details).
As a step towards understanding the complexity of the extended Euclidean scheme, can we find faster parallel algorithms for any of these related problems?

\printbibliography

\end{document}